\documentclass[11pt, a4paper]{article}
\usepackage[T1]{fontenc}
\usepackage{a4wide, url}
\usepackage[english]{babel}
\usepackage{graphicx}
\usepackage{array}
\usepackage{natbib}
\usepackage[textfont=sc,labelfont=bf]{caption}
\usepackage{setspace}
\usepackage{multirow}
\usepackage{amsthm}
\usepackage{amsmath}
\usepackage{amsfonts}
\usepackage{amssymb}
\usepackage{bigints}
\usepackage{pgf} 
\usepackage{bm}
\usepackage{paralist}
\usepackage{pdflscape}
\usepackage{xfrac}
\usepackage[linesnumbered,ruled,vlined]{algorithm2e}
\SetKwInput{KwInput}{Input}                
\SetKwInput{KwOutput}{Output}              
\usepackage{enumitem,kantlipsum}
\usepackage{xr}
\usepackage{yfonts}
\externaldocument{MB_PCA_QML_appendix_r2}

\usepackage{xcolor}

\usepackage{comment}
\excludecomment{ext}\excludecomment{comment}

\usepackage[top=1in,bottom=1in,left=.75in,right=.75in]{geometry}

\usepackage{tabu}

\newcolumntype{L}[1]{>{\raggedright\let\newline\\\arraybackslash\hspace{0pt}}m{#1}}
\newcolumntype{C}[1]{>{\centering\let\newline\\\arraybackslash\hspace{0pt}}m{#1}}
\newcolumntype{R}[1]{>{\raggedleft\let\newline\\\arraybackslash\hspace{0pt}}m{#1}}

\setlength{\abovecaptionskip}{0pt}	 	
\DeclareMathAlphabet\mathbfcal{OMS}{cmsy}{b}{n}

\newcommand{\mbf}{\mathbf}
\newcommand{\beq}{\begin{equation}}
\newcommand{\eeq}{\end{equation}}
\newcommand{\bea}{\begin{eqnarray}}
\newcommand{\eea}{\end{eqnarray}}
\newcommand{\ba}{\begin{array}}
\newcommand{\ea}{\end{array}}
\newcommand{\bit}{\begin{itemize}}
\newcommand{\eit}{\end{itemize}}
\newcommand{\ben}{\begin{enumerate}} 
\newcommand{\een}{\end{enumerate}}
\newcommand{\bpm}{\begin{pmatrix}}
\newcommand{\epm}{\end{pmatrix}}
\newcommand{\bbm}{\begin{bmatrix}}
\newcommand{\ebm}{\end{bmatrix}}

\renewcommand{\l}{\left}
\renewcommand{\r}{\right}

\newcommand{\E}[0]{\mathbb{E}}
\newcommand{\Var}[0]{\mathbb{V}\mathrm{ar}}
\newcommand{\Cov}[0]{\mathbb{C}\mathrm{ov}}

\newcommand{\nn}{\nonumber}
\newcommand{\wh}{\widehat}
\newcommand{\wt}{\widetilde}

\newtheorem{ass}{Assumption}
\newtheorem{theorem}{Theorem}

\newtheorem{prop}{Proposition}
\newtheorem{rem}{Remark}
\newtheorem{lem}{Lemma}
\newtheorem{cor}{Corollary}

\graphicspath{{Figures/Submission/}}
\onehalfspacing

\title{\textsc{\large 	
Asymptotic equivalence of
Principal Components \linebreak and  Quasi Maximum Likelihood estimators \linebreak  in Large Approximate Factor Models}}
\date{ }

\begin{document}
\maketitle

\begin{center}\vspace{-1.5cm}
Matteo Barigozzi$^*$ \\[.1cm] 
%
\small This version: \today
\end{center}

\begin{abstract}
This paper investigates the properties of Quasi Maximum Likelihood estimation of an approximate factor model for an $n$-dimensional vector of stationary time series.
We prove that the factor loadings estimated by Quasi Maximum Likelihood are asymptotically equivalent, as $n\to\infty$,  to those estimated via Principal Components. Both estimators are, in turn, also asymptotically equivalent, as $n\to\infty$, to the unfeasible Ordinary Least Squares estimator we would have if the factors were observed. 
We also show that the usual sandwich form of the asymptotic covariance matrix of the Quasi Maximum Likelihood estimator is asymptotically equivalent to the simpler asymptotic covariance matrix of the unfeasible Ordinary Least Squares. All these results hold in the general case in which the idiosyncratic components are cross-sectionally heteroskedastic, as well as serially and cross-sectionally weakly correlated. The intuition behind these results is that as $n\to\infty$ the factors can be considered as observed, thus showing that factor models enjoy a blessing of dimensionality. 
\vspace{0.5cm}

\noindent \textit{Keywords:} 
Approximate Dynamic Factor Model; Principal Component Analysis; Quasi Maximum Likelihood.

\end{abstract}

\renewcommand{\thefootnote}{$\ast$} 
\thispagestyle{empty}

\footnotetext{ Universit\`a di Bologna, matteo.barigozzi@unibo.it} 

\renewcommand{\thefootnote}{\arabic{footnote}}

\section{Introduction}

Factor models are one of the major dimension reduction techniques used to analyze large panels of time series. Some of their most successful applications are, among many others, in finance (Chamberlain and Rothschild, 1983, \citealp{connor2006common}, \citealp{ait2017using}, \citealp{kim2019factor}),  and macroeconomics (Stock and Watson, 2002b,
\citealp{BBE05}, \citealp{FHLR05}, \citealp{de2008forecasting}, \citealp{Nowcasting}).
 
 Let us assume to observe an $n$-dimensional zero-mean stochastic process over $T$ periods: $\{x_{it},\, i=1,\ldots , n,\, t=1,\ldots, T\}$, such that
\begin{align}
x_{it}&=\bm\lambda_i^\prime \mbf F_t + \xi_{it}, \quad i=1,\ldots , n,\quad t=1,\ldots, T,\label{eq:SDFM1R}
\end{align}
where $\bm\lambda_i=(\lambda_{i1}\cdots\lambda_{ir})^\prime$ and $\mbf F_t=(F_{1t}\cdots F_{rt})^\prime$ are $r$-dimensional  unobserved vectors, called loadings and factors, respectively, and with $r<\min(n,T)$. We also call $\xi_{it}$ the idiosyncratic component and 
$\chi_{it}=\bm \lambda_i^\prime \mbf F_t$ the common component of the $i$th observed variable at time $t$. Both the factors and the idiosyncratic components are allowed to be serially correlated. Furthermore, the idiosyncratic components are also allowed to be cross-sectionally correlated. We call such model an approximate factor model. This is the class of factor models we consider in this paper and studied, e.g., by \citet{Bai03}. It is a restricted version of the generalized dynamic factor model originally proposed by \citet{FHLR00}, where the factors are loaded with lags (the loadings are  linear filters) and not just contemporaneously. Another popular alternative, not considered in this paper, is the factor model studied, e.g., by \citet{lam2012factor}, where the idiosyncratic components are instead assumed to be serially uncorrelated.

There are two main ways to estimate the factor loadings in \eqref{eq:SDFM1R}. First, by Principal Component (PC) analysis \citep{stockwatson02JASA,Bai03,FLM13}, and, second, by Quasi Maximum Likelihood (QML) estimation \citep{baili12,baili16}. In both cases, we can first estimate the loadings and then estimate the factors by linear projection, possibly weighted by the idiosyncratic variances, of the observables onto the estimated loadings. The PC estimator of the loadings is easily implementable since it just requires to compute the $r$ leading eigenvectors and eigenvalues of the sample covariance matrix of the observables. The QML estimator of the loadings does not have a closed form solution, so numerical maximization is required.
It is typically defined as the maximizer of a mis-specified log-likelihood where the  idiosyncratic components are treated as if they were uncorrelated even if the true ones are correlated. 

Although QML is the classical way to estimate a factor model, dating back more than fifty years ago \citep[and references therein]{lawleymaxwell71,MBQML}, in the recent years, PC analysis has gained popularity, given its non-parametric nature and ease of implementation. Nevertheless, QML estimation still retains an important role in factor analysis 
since for many reasons. For example, it allows to easily impose constraints on the loadings \citep[see, e.g.,the applications in][]{CGM16,DCGF2021}, and it fully addresses idiosyncratic cross-sectional heteroskedasticity, while the PC method does not \citep[see the comments in][]{baili12,baili16}.

In this paper, we compare the PC and QML estimators of the loadings and we show that, under a minimal unified set of standard assumptions and identifying constraints, the two estimators are asymptotically equivalent as $n\to\infty$ (see Theorem \ref{th:QML}). 
It is well known that both estimators of the loadings are $\min(n,\sqrt T)$-consistent as $n,T\to\infty$ and are also asymptotically normal if we assume $n^{-1}\sqrt T\to 0$. Standard references for these results are, among others, \citet[Theorem 2]{Bai03} for PC, and \citet[Theorem 1]{baili16} for QML. Proving the asymptotic properties of the PC estimator is long but straightforward. However, things are much more complicated for the QML estimator, essentially because no closed form expression exists for this estimator. Moreover, the existing proofs for the two estimators do not make use of the same assumptions nor of the same identification constraints needed to uniquely identify the loadings. 

Our main result has some important implications. First, if $n^{-1}\sqrt T\to 0$, as $n,T\to\infty$, then the QML estimator has the same asymptotic distribution as the unfeasible OLS estimator we would obtain if the factors were observed. Therefore, we are able to prove asymptotic normality of the QML estimator in an indirect and easy way (see Theorem \ref{cor:QMLcons}). Our approach provides a much simpler alternative to the approach used by \citet[Theorem 1]{baili16}, and it does not require to study also the properties of the QML estimators of the idiosyncratic variances. Second, it nests the case of spherical idiosyncratic components \citep{tippingbishop99}, in which case the equivalence of PC and QML estimator is often quoted, but, to the best of our knowledge, it has never been formally proved, at least under the standard minimal set of assumptions used in this paper. Related to these two  points, we stress that our result is, in fact, more general since it holds without the need of considering QML estimation based on a mis-specified  log-likelihood with a diagonal idiosyncratic covariance. Third, and last, our result paves the way towards studying the asymptotic properties of the QML estimator of a factor model where we also explicitly model the dynamics of the factors, and for which only partial results exist (\citealp{DGRqml}, \citealp{baili16}).

The theoretical analysis of approximate factor models requires a double asymptotic framework. Indeed, only if $n,T\to\infty$ we can consistently estimate the eigenvectors of the covariance matrix of the data which are needed for PC estimation, and we can control for the mis-specifications introduced in the log-likelihood when considering QML estimation. This means that QML estimation of approximate factor models does not fall into the framework of classical QML estimation, where $n$ is fixed. 

In a second contribution, we reconcile the asymptotic distribution of the QML estimator of the loadings, derived in this paper, with the results of classical likelihood-based inference. In particular, we show that, when evaluated in the true value of the parameters or in the value of the QML estimator, the first and second derivatives of the factor model log-likelihood we maximize are asymptotically, as $n,T\to\infty$, equivalent to the first and second derivatives of the log-likelihood we would have if the factors were observed (see Theorem \ref{prop:hessiani}). This result also allows us to compute simple estimators of the asymptotic covariance matrix of the estimated loadings without the need of computing the derivatives  of the log-likelihood which have rather long and  complex expressions.


The paper is organized as follows. In Section \ref{sec:sdfm} we present the model and all assumptions. In Section \ref{sec:PC} we reconsider the PC estimator of the loadings and in Theorem \ref{th:CLTL} we derive its asymptotic properties using a new approach equivalent to the one proposed by \citet{Bai03} but which is more convenient for the present work.  In Section \ref{sec:PCQML} we review the existing results on the QML estimator of the loadings by \citet{baili16}. In Section \ref{sec:PCQML0} we present our first contribution in Theorem \ref{th:QML} where we prove that PC and QML estimators of the loadings are asymptotically equivalent. 
In Section \ref{sec:fisher} we present our second contribution in Theorem \ref{prop:hessiani}, where we prove the asymptotically equivalence of the first and second derivatives of the factor model log-likelihood and the log-likelihood of a model with observed factors. 
In Section \ref{sec:fact} we briefly discuss estimation of factors.
In Section \ref{sec:MC} we provide simulation results confirming our theoretical results.  In Section \ref{sec:conc} we conclude.
In the Supplementary Material we prove all other main and auxiliary theoretical results.


%

\section{Model and Assumptions}\label{sec:sdfm}


Given the $n$-dimensional vector  $\mbf x_t=(x_{1t}\cdots x_{nt})^\prime$,  we can also write model \eqref{eq:SDFM1R} in vector notation: 
\[
\mbf x_t = \bm\Lambda\mbf F_t +\bm \xi_t,\quad t=1,\ldots, T,
\]
where $\bm\xi_{t}=(\xi_{1t}\cdots\xi_{nt})^\prime$ is the $n$-dimensional vector of  idiosyncratic components and  $\bm\Lambda=(\bm\lambda_1\cdots\bm\lambda_n)^\prime$ is the $n\times r$ matrix of factor loadings. We call $\bm\chi_t=\bm\Lambda\mbf F_t$ the vector of common components.

Moreover, we can collected all observations into the $T\times n$ matrix $\bm X=(\mbf x_1\cdots\mbf x_T)^\prime$, and we can write model \eqref{eq:SDFM1R} also in matrix notation:
\beq\label{eq:SDFM1Rmat}
\bm X =\bm F\bm\Lambda^\prime +\bm\Xi,
\eeq
where $\bm \Xi=(\bm\xi_1\cdots\bm \xi_T)^\prime$ is a $T\times n$ matrix of idiosyncratic components, and $\bm F=(\mbf F_1\cdots\mbf F_T)^\prime$ is the $T\times r$ matrix of factors. 

The following assumptions are similar to those made by  \citet[]{Bai03} for PC estimation while slightly differ from those in \citet{baili16} for QML estimation. Throughout, we  highlight the main differences or similarities. 

We start by characterizing the common component by means of the following assumption.

\begin{ass}[\textsc{common component}]\label{ass:common} $\,$
\begin{compactenum}[(a)]
	\item  $\lim_{n\to\infty}\Vert n^{-1}\bm\Lambda^\prime\bm\Lambda-\bm\Sigma_{\Lambda}\Vert=0$, where $\bm\Sigma_{\Lambda}$ is $r\times r$ positive definite, and, for all $i\in\mathbb N$, 
	$\Vert\bm\lambda_i\Vert\le M_\Lambda$ for some finite positive real $M_\Lambda$ independent of $i$.  
\item For all $t\in\mathbb Z$, $\E[\mbf F_{t}]=\mbf 0_r$ and $\bm\Gamma^F=\E[\mbf F_t\mbf F_t^\prime]$ is $r\times r$ positive definite and $\Vert\bm\Gamma^F\Vert\le M_F$ for some finite positive real $M_F$ independent of $t$.

	\item 
	\begin{inparaenum} 
	\item [(i)] For all $t\in\mathbb Z$, $\E[\Vert \mbf F_t\Vert^4]\le K_F$ for some finite positive real $K_F$ independent of $t$; \\
	\item	[(ii)] for all $i,j=1,\ldots, r$, all $s=1,\ldots, T$, and all $T\in\mathbb N$,
\[
\E\l[\l\vert\frac 1{\sqrt{T}}\sum_{t=1}^T\l\{F_{is}F_{jt}-\E[F_{is}F_{jt}]\r\} \r\vert^2\r]
\le C_F
\]
for some finite positive real $C_F$ independent of $i$, $j$, $s$, and $T$.
	\end{inparaenum}
	\item There exists an integer $N$ such that for all $n> N$, $r$ is a finite positive integer, independent of $n$. 

\end{compactenum}
\end{ass}

Part (a) is standard \citep[Assumption B]{Bai03}. It implies that, asymptotically, as $n\to\infty$, the loadings matrix has asymptotically maximum column rank $r$, and that 
for any given $n\in\mathbb N$, each factor has a finite contribution to each of the $n$ observed series (upper bound on $\Vert\bm\lambda_i\Vert$). A similar requirement is in \citet[Assumption B]{baili16}.
We only consider non-random factor loadings for simplicity and in agreement with classical factor analysis where the loadings are the parameters of the model (\citealp[see, e.g.,][]{lawleymaxwell71}).

Part (b) assumes that the factors have zero mean and have a finite full-rank covariance matrix $\bm\Gamma^F$, so they are non-degenerate. In part (c-i) we assume finite 4th order moments of the factors.  These are standard requirements \citep[Assumption A]{Bai03}. 
Part (c-ii) is very general, it implies that the sample covariance matrix of the factors is a $\sqrt T$-consistent estimator of its population counterpart $\bm\Gamma^F$, which has full rank because of part (b). It is immediate to see that it is equivalent to asking for 4th order summable cross-cumulants which a necessary and sufficient condition for consistent estimation (\citealp[pp. 209-211]{hannan}).
This approach is high-level in that it does not make any specific assumption on the dynamics of $\{\mbf F_t\}$. Obviously this implies the usual assumption of convergence in probability  made in this literature: $\mathrm P\text{-}\lim_{T\to\infty}\Vert T^{-1}{\bm F^\prime\bm F}-\bm\Gamma^F\Vert=0$ \citep[Assumption A]{Bai03}. Nothing would change in our proofs if we directly assumed this latter condition instead of part (c-ii). Notice that \citet[Assumption A]{baili16} treat the factors as being deterministic, but essentially make the same assumption as our parts (b) and (c).

Part (d) implies the existence of a finite number of factors. In particular, the number of common factors, $r$, is identified only as $n\to\infty$. Here $N$ is the minimum number of series we need to be able to identify $r$ so that $r\le N$. Hereafter, when we say ``for all $n\in\mathbb N$'' we always mean that $n>N$ so that $r$ can be identified. In practice, we must always work with $n$ such that $r<n$. Moreover, because PC estimation is based on eigenvalues of an $n\times n$ matrix estimated using $T$ observations, then we must also have samples of size $T$ such that $r<T$. Therefore, sometimes it is directly assumed that $r<\min(n,T)$.

To characterize the idiosyncratic component, we make the following assumptions.

\begin{ass}[\textsc{idiosyncratic component}]\label{ass:idio}
  $\,$
\begin{compactenum}[(a)]
	\item For all $i\in\mathbb N$ and all $t\in\mathbb Z$, $\E[\xi_{it}]= 0$ and $\sigma_i^2=\E[\xi_{it}^2]$ is such that $C_\xi\le \sigma_i^2\le C_\xi^\prime$ for some finite positive reals $C_\xi$ and $C_\xi^\prime$  independent of  $i$ and $t$. 
	
	\item For all $i,j\in\mathbb N$, all $t\in\mathbb Z$, and all $k\in\mathbb Z$, $\vert \E[\xi_{it}\xi_{j,t-k}]\vert\le \rho^{\vert k\vert} M_{ij}$, where $\rho$ and $M_{ij}$ are finite positive reals independent of $t$ such that $0\le \rho <1$, $\sum_{j=1,j\ne i}^n M_{ij}\le M_\xi$, and $\sum_{i=1, i\ne j}^n M_{ij}\le M_\xi$ for some finite positive real $M_{\xi}$ independent of  $i$, $j$, and $n$. 
	
\item 
\begin{inparaenum}
\item [(i)]  For all $i=1,\ldots, n$, all $t=1,\ldots, T$, and all $n,T\in\mathbb N$, $\E[\xi_{it}^4]\le Q_\xi$ for some finite positive real $Q_\xi$ independent of $i$ and $t$;\\
\item [(ii)] for all $j=1,\ldots, n$, all $s=1,\ldots, T$, and all $n,T\in\mathbb N$,
\[
\E\l[\l\vert\frac 1{\sqrt{nT}} \sum_{i=1}^n\sum_{t=1}^T\l\{\xi_{is}\xi_{jt}-\E[\xi_{is}\xi_{jt}]\r\} \r\vert^2\r]
\le K_\xi
\]
for some finite positive real $K_\xi$ independent of $j$, $s$, $n$, and $T$.
\end{inparaenum}
 \end{compactenum}
 \end{ass}

By part (a), we have that the idiosyncratic components have zero mean. This, jointly with Assumption \ref{ass:common}(b)  by which $\E[\mbf F_t]=\mbf 0_r$, implies that we are implicitly assuming that each observed series has zero mean, i.e., $\E[ x_{it}]=0$ (this is without loss of generality), and strictly positive variance, i.e., $\mathbb V\text{ar}(x_{it})>0$, for all $i\in\mathbb N$ (\citealp[Assumption C.2]{baili16}). For the case of non-zero mean see Remark \ref{rem:media}.

Part (b) has a twofold purpose. First, it limits the degree of serial correlation of the idiosyncratic components by imposing geometric decay of their autocovariances. Second, it also limits the degree of cross-sectional correlation between idiosyncratic components, which is usually assumed in approximate factor models. 
In particular,  by setting $k=0$, it follows also that for all $i\in\mathbb N$, $\sigma_i^2 \le M_\xi$. Thus, all idiosyncratic components have finite variance. This jointly with Assumptions \ref{ass:common}(a) and \ref{ass:common}(b) implies that 
 each observed time series has finite variance, i.e.,  $\mathbb V\text{ar}(x_{it})<\infty$, for all $i\in\mathbb N$. In Lemma \ref{lem:Gxi} we show that part (b) implies all usual conditions on second order moments typically found in the literature (\citealp[Assumptions C.2, C.3, C.4, and E]{Bai03}, and \citealp[Assumptions C.3, C.4, and E.1]{baili16}). 

Part (c-i) assumes finite 4th order moments of the idiosyncratic components. Part (c-ii) gives summability conditions across the cross-section and time dimensions for the 4th order cumulants of $\{\xi_{it}\}$. Jointly with Assumptions \ref{ass:common}(a) and \ref{ass:idio}(a) it nests the requirement in \citet[Assumption E.2]{baili16}. It implies that the sample (auto)covariances between $\{\xi_{it}\}$ and $\{\xi_{jt}\}$ are $\sqrt T$-consistent estimators of their population counterparts. In particular, by choosing $s=t$ we see that we can consistently estimate the $(i,j)$th entry of the idiosyncratic covariance matrix $\bm\Gamma^\xi$. Notice that, contrary to the existing literature (\citealp[Assumptions C.1 and C.5]{Bai03}, and \citealp[Assumptions C.1 and C.5]{baili16}), there is no need to ask for finite 8th order moments and cumulants.

We then make  a series of  identifying assumptions.
\begin{ass}[\textsc{independence}]
\label{ass:ind} The processes
$\{\xi_{it},\, i\in\mathbb N,\, t\in\mathbb Z\}$ and $\{F_{jt},\, j=1,\ldots, r,\, t\in\mathbb Z\}$ are mutually independent. 
\end{ass}

This assumption obviously implies that the factors and, therefore, the common components are independent of the idiosyncratic components at all leads and lags and across all units. This is compatible for example with a macroeconomic interpretation of factor models, according to which the factors driving the common component are independent of the idiosyncratic components representing measurement errors or local dynamics. This assumption is made for simplicity and could be easily relaxed \citep[Assumption D]{Bai03}. 

 Let $\bm\Gamma^\chi=\E[\bm\chi_t\bm\chi_t^\prime]=\bm\Lambda\bm\Gamma^F \bm\Lambda^\prime$ with $r$ largest eigenvalues $\mu_j^\chi$, $j=1,\ldots, r$, sorted in decreasing order. In Lemma \ref{lem:Gxi}(iv) we prove that,
for all $j=1,\ldots,r$, 
\beq\label{eq:lindiv}
\underline C_j\!\le \lim\inf_{n\to\infty} \frac{ \mu_{j}^\chi}n\le\lim\sup_{n\to\infty} \frac{\mu_{j}^\chi}n\le\! \overline C_j,
\eeq
where $\underline C_j$ and $ \overline C_j$ are finite positive reals. This means we consider only strong factors, i.e., fully pervasive. Furthermore in Lemma \ref{lem:Gxi}(v) we prove that the largest eigenvalue of $\bm\Gamma^\xi$ is such that
\beq\label{eq:evalidio}
 \sup_{n\in\mathbb N}\mu_{1}^\xi  \le C_\xi^\prime+M_\xi,
 \eeq
 where $C_\xi^\prime$ and $M_\xi$ are defined in Assumptions \ref{ass:idio}(a) and \ref{ass:idio}(b), respectively. 
Because of Weyl's inequality (see, e.g., \citealp{MK04}, Theorem 1), conditions \eqref{eq:lindiv}-\eqref{eq:evalidio} and Assumption \ref{ass:ind} imply the eigengap in the eigenvalues $\mu_j^x$, $j=1,\ldots,n$, of $\bm\Gamma^x=\E[\mbf x_t\mbf x_t^\prime]=\bm\Gamma^\chi+\bm\Gamma^\xi$ (see Lemma \ref{lem:Gxi}(vi)):
\begin{align}
&\underline C_r\le \lim\inf_{n\to\infty} \frac{ \mu_{r}^x}n\le\lim\sup_{n\to\infty} \frac{\mu_{r}^x}n\le \overline C_r\;\;\text{ and }\;\;\sup_{n\in\mathbb N}\mu_{r+1}^x \le C^\prime_\xi+M_{\xi}.\nn
\end{align}
This property allows us to identify, asymptotically, as $n\to\infty$, the number of factors $r$ (see, e.g., \citealp{baing02}, \citealp{onatski10}, \citealp{trapani2018randomized}, among many others). And, therefore, as $n\to\infty$ we also identify the common and idiosyncratic components.

\begin{ass}[\textsc{distinct eigenvalues}]
\label{ass:eval}~For all $n\in\mathbb N$ and all $i=1,\ldots, r-1$, $\mu_i^\chi > \mu_{i+1}^\chi$.
\end{ass}

This  assumption is needed in order to identify the eigenvectors in PC estimation.
Note that it implies that $\overline C_j<\underline C_{j-1}$, $j=2,\ldots, r$, in \eqref{eq:lindiv} and that the $r$ eigenvalues of $\bm\Sigma_\Lambda\bm\Gamma^F$ are distinct \citep[Assumption G]{Bai03}. Indeed, these coincide with the non-zero eigenvalues of $\lim_{n\to\infty}n^{-1}{\bm\Gamma^\chi}$ which are given by $\lim_{n\to\infty}n^{-1}{\mbf M^\chi}=(\bm\Sigma_\Lambda)^{1/2}\bm\Gamma^F(\bm\Sigma_\Lambda)^{1/2}$. 

The factors and the loadings can be identified by means of the following assumption.

\begin{ass}[\textsc{Identification}]\label{ass:ident}
$\,$
\begin{inparaenum}[(a)]
\item For all $n\in\mathbb N$, $n^{-1}{\bm\Lambda^\prime\bm\Lambda}$ diagonal.
\item For all $T\in\mathbb N$, $T^{-1}{\bm F^\prime\bm F}=\mbf I_r$.
\end{inparaenum}
\end{ass}

This assumption a standard requirement in PC based exploratory factor analysis (see, e.g, \citealp[identification constraints PC1]{baing13}). It has some important and usueful implications in terms of identification. In particular, in Proposition \ref{prop:K00} we prove that, for all $n\in\mathbb N$, $n^{-1}{\bm\Lambda^\prime\bm\Lambda} = n^{-1}{\mbf M^\chi}$ 
and 
$\bm\Lambda=\mbf V^\chi(\mbf M^\chi)^{1/2}$, where $\mbf M^\chi$ is the $r\times r$ diagonal matrix of the nonzero eigenvalues of $\bm\Gamma^\chi$ (sorted in decreasing order) and $\mbf V^\chi$ is the $n\times r$ matrix having as columns the corresponding normalized eigenvectors.
Notice that part (a) differs from \citet[Condition IC3]{baili16}, where it is assumed that $n^{-1}{\bm\Lambda^\prime(\bm\Sigma^\xi)^{-1}\bm\Lambda}$ is diagonal, for all $n\in\mathbb N$, with $\bm\Sigma^\xi$ being the diagonal matrix with entries the idiosyncratic variances $\sigma_i^2$, $i=1,\ldots, n$. This is a common requirement in QML based exploratory factor analysis. One of the aims of this paper is to reconsider QML estimation under Assumption \ref{ass:ident}(a).
Part (b) is instead also assumed in \citet[Condition IC3]{baili16} but for deterministic factors.

Clearly since Assumption \ref{ass:ident} concerns only the product $\bm\Lambda^\prime\bm\Lambda$, it allows us to identify the loadings only up to right multiplication by a diagonal matrix with entries $\pm 1$, i.e., the columns of $\bm\Lambda$ are identified only up to a sign. We can fix such sign by means of the following assumption.

\begin{ass}[\textsc{Global identification}]\label{ass:sign}
For all $j=1,\ldots, r$, one of the two following conditions holds:
\begin{inparaenum}[(a)]
\item $\lambda_{1j}> 0$; or
\item $F_{j1}> 0$. 
\end{inparaenum}
\end{ass}
Since loadings are considered as deterministic and are estimated as eigenvectors, part (a) is more natural and can be easily imposed just by setting the sign of the  eigenvectors of the sample covariance matrix $\wh{\bm\Gamma}^x$ accordingly. As a consequence of Assumption \ref{ass:sign} both the loadings and the factors are globally identified \citep[see also the comments in][Remark 1]{baing13}.

Finally, in order to derive the asymptotic distribution of the considered estimators of the loadings it is common to assume the following assumption (\citealp[Assumption F.4]{Bai03}, and \citealp[Assumption F.1]{baili16}).

\begin{ass}[\textsc{Central limit theorem}]
\label{ass:CLT}
For all $i\in\mathbb N$, as $T\to\infty$,
	\[
	\frac 1{\sqrt T}\sum_{t=1}^T
	 \mbf F_t \xi_{it}
	\to_d
	\mathcal N\l(\mbf 0_r, \bm\Phi_i
	\r),
	\]
	where $\bm\Phi_i=\lim_{T\to\infty}T^{-1}\sum_{t,s=1}^T
	\E_{}[\mbf F_t\mbf F_s^\prime\xi_{it}\xi_{is}]$.
\end{ass}

There are many ways to derive this condition from more primitive assumptions, for example, by assuming strong mixing factors and idiosyncratic components both with finite $4+\epsilon$ moments. 

\begin{rem}\label{rem:baili}
\upshape{
As discussed above, all assumptions for PC estimation by \citet{Bai03} are implied or equivalent to ours. Regarding QML estimation and the assumptions in \citet{baili16}, we do not need their Assumption D, which requires the QML estimators of the idiosyncratic variances, $\sigma_i^2$, $i=1,\ldots, n$, to be finite and strictly positive, and we do not require the moment condition E.3, which is needed only for estimation of $\sigma_i^2$ and it is not necessary to prove our results. Assumptions E.4, E.5, E.6, F.2, and F.3 in \citet{baili16} are needed only for studying the behavior of the estimated factors so are not needed here. All other assumptions in \citet{baili16} are equivalent or nested into ours, with the crucial exception of their identifying condition on the loadings which, as noted above, differs from ours.}
\end{rem}

\begin{rem}\label{rem:media}
\upshape{
Allowing for data with non-zero mean simply amounts to adding a constant term, $\alpha_i\ne 0$, to \eqref{eq:SDFM1R}, so that $x_{it}=\alpha_i+\bm\lambda_i^\prime\mbf F_t+\xi_{it}$. All estimators described in the following retain the same properties even in this case. Indeed, estimation of such model by PCs simply requires to center data first, i.e., to work with $x_{it}-\bar {x}_i$ where $\bar x_i=T^{-1}\sum_{t=1}^T x_{it}$ is clearly a consistent estimator of $\alpha_i$. As for QML estimation, it is straightforward to see that $\bar x_i$ is precisely the QML estimator of $\alpha_i$ and thus it is enough to work with the likelihood of the centered data \citep[Section 2]{baili12}.
 }
\end{rem}

\section{Principal Component Analysis}\label{sec:PC}

The PC estimators of $\bm\Lambda$ and $\bm F$ are the solutions of the following minimization:
\begin{align}
\min_{\underline{\bm\Lambda},\underline{\bm F}} \frac 1{nT} \text{tr}\l\{
\l(\bm X-\underline{\bm F}\,\underline{\bm\Lambda}^\prime \r)
\l(\bm X-\underline{\bm F}\,\underline{\bm\Lambda}^\prime \r)^\prime
\r\}=\min_{\underline{\bm\Lambda},\underline{\bm F}} \frac 1{nT} \text{tr}\l\{
\l(\bm X-\underline{\bm F}\,\underline{\bm\Lambda}^\prime \r)^\prime
\l(\bm X-\underline{\bm F}\,\underline{\bm\Lambda}^\prime \r)
\r\},\label{eq:minimizza}
\end{align}
where $\underline{\bm\Lambda}$ and $\underline{\bm F}$  indicate generic values of the loadings and the factors, respectively, and it is intended that they also satisfy Assumptions \ref{ass:common}, \ref{ass:eval}, and \ref{ass:ident}. Given the identifying constraints in Assumption \ref{ass:ident}, the solution to \eqref{eq:minimizza} can be found in two steps. There are two equivalent ways to do that:
\begin{inparaenum}
\item [(A)] based on the $n\times n$ matrix $\bm X^\prime\bm X$ solve first for loadings and then get the factors by projecting $\bm X$ onto the estimated loadings;
\item [(B)] based on the $T\times T$ matrix $\bm X\bm X^\prime$ solve first for the factors and then get the loadings by projecting $\bm X$ onto the estimated factors.
\end{inparaenum}

Although the majority of the literature on factor models considers approach B, thus estimating the factors as normalized eigenvectors, and it derives the theory accordingly  \citep[see, e.g.,][]{Bai03}, in the rest of the paper we follow approach A. The main reason for this choice is that approach A does not require to estimate the factors first, which is convenient given that our focus  is on estimation of the loadings. Notice that approach A is also the classical one (see, e.g., \citealp[Chapter 4]{lawleymaxwell71}, \citealp[Chapter 9.3]{mardia1979multivariate}, \citealp[Chapter 7.2]{jolliffe2002principal}). 
Nevertheless, which approach to choose is just a matter of taste and it has no theoretical or practical implications. Indeed, numerically both approaches give the same results and all the following theory can be equivalently derived under approach B. Incidentally, by choosing approach A, we also contribute to PC literature with new proofs, alternative to those in \citet{Bai03}. 

More in detail, consider the $n\times n$ sample covariance matrix (recall that $\E[\mbf x_t]=\mbf 0_n$  by assumption) $\wh{\bm\Gamma}^x= T^{-1}{\bm X^\prime\bm X}$, having  its $r$ largest eigenvalues collected in the $r\times r$ diagonal matrix $\wh{\mbf M}^x$ (sorted in descending order) with corresponding normalized eigenvectors as columns of  the $n\times r$ matrix $\wh{\mbf V}^x$.
For any given $\underline{\bm\Lambda}$ the solution of \eqref{eq:minimizza} for $\underline{\bm F}$ is just the linear projection $\underline{\bm F}=\bm X\underline{\bm\Lambda}(\underline{\bm\Lambda}^\prime\underline{\bm\Lambda})^{-1}$. By substituting this expression in \eqref{eq:minimizza} we have
\begin{align}
\min_{\underline{\bm\Lambda}}\frac 1{nT} \text{tr}\l\{
\bm X\l(\mbf I_n-\underline{\bm\Lambda}(\underline{\bm\Lambda}^\prime\underline{\bm\Lambda})^{-1}\underline{\bm\Lambda}^\prime \r)\bm X^\prime
\r\}&=
\max_{\underline{\bm\Lambda}}\frac 1{nT} \text{tr}\l\{
\bm X^\prime\bm X \underline{\bm\Lambda}(\underline{\bm\Lambda}^\prime\underline{\bm\Lambda})^{-1}\underline{\bm\Lambda}^\prime 
\r\}\nn\\
&=\max_{\underline{\bm\Lambda}}\frac 1{n} \text{tr}
\l\{
(\underline{\bm\Lambda}^\prime\underline{\bm\Lambda})^{-1/2}
\underline{\bm\Lambda}^\prime
\frac{\bm X^\prime\bm X}{T}
\underline{\bm\Lambda}
(\underline{\bm\Lambda}^\prime\underline{\bm\Lambda})^{-1/2}
\r\}
\end{align}
Now since by construction each column of $ \underline{\bm\Lambda}
(\underline{\bm\Lambda}^\prime\underline{\bm\Lambda})^{-1/2}$ is normalized (since we assumed ${\underline{\bm\Lambda}^\prime\underline{\bm\Lambda}}$ to be diagonal), then the above maximizaton,  once solved, should return the $r$ largest eigenvalues of  $\wh{\bm\Gamma}^x$ divided by $n$, i.e., it must give $n^{-1}{\wh{\mbf M}^x}$. In other words, our estimator $\wh{\bm\Lambda}$ must be such that $\wh{\bm\Lambda}
(\wh{\bm\Lambda}^\prime\wh{\bm\Lambda})^{-1/2}$ is the matrix of normalized eigenvectors corresponding the $r$ largest eigenvalues of $(nT)^{-1}{\bm X^\prime\bm X}$, i.e., such that:
\beq\label{eq:evecXX}
(\wh{\bm\Lambda}^\prime\wh{\bm\Lambda})^{-1/2}
\wh{\bm\Lambda}^\prime
\frac{\bm X^\prime\bm X}{nT}
\wh{\bm\Lambda}
(\wh{\bm\Lambda}^\prime\wh{\bm\Lambda})^{-1/2}=\frac{\wh{\mbf M}^x}{n},
\eeq
but also, by definition of eigenvectors, 
\beq\label{eq:evecXX2}
\wh{\mbf V}^{x\prime}
\frac{\bm X^\prime\bm X}{nT}
\wh{\mbf V}^x =\frac{\wh{\mbf M}^x}{n}.
\eeq
Therefore, since we must have $\bm\Lambda^\prime\bm\Lambda$ diagonal with distinct entries by Assumption \ref{ass:ident}(a), from \eqref{eq:evecXX} and \eqref{eq:evecXX2},
\beq\label{eq:estL}
\wh{\bm\Lambda}=\wh{\mbf V}^x(\wh{\mbf M}^x)^{1/2},
\eeq
which is such that $n^{-1}{\wh{\bm\Lambda}^\prime \wh{\bm\Lambda}}=n^{-1}{\wh{\mbf M}^x}$ is diagonal. The factors are then estimated as the linear projections: $\wh{\bm F}=\bm X\wh{\bm\Lambda}(\wh{\bm\Lambda}^\prime\wh{\bm\Lambda})^{-1}=\bm X\wh{\mbf V}^x(\wh{\mbf M}^x)^{-1/2}$, which are the normalized PCs of $\bm X$, such that $T^{-1}{\wh{\bm F}^\prime\wh{\bm F}}=\mbf I_r$. The latter, however, are not needed in the following.

Letting $\wh{\bm\lambda}_i^\prime$ be the $i$th row of the PC estimator of the loadings in \eqref{eq:estL}, we have the following asymptotic results.

\begin{theorem}[]\label{th:CLTL}
Under Assumptions \ref{ass:common} through \ref{ass:CLT}, as  $n,T\to\infty$, 
\begin{compactenum}[(a)]
\item $\min(n,\sqrt T)\Vert\wh{\bm\lambda}_i-{\bm\lambda}_i\Vert=O_{\mathrm P}(1)$, for any given $i=1,\ldots,n$, and, if $n^{-1}\sqrt T\to0$, 
$
\sqrt T(\wh{\bm\lambda}_i-{\bm\lambda}_i) 
\to_d\mathcal N\l(\mbf 0_r, \bm\Phi_i\r),
$
where $\bm\Phi_i$ is defined in Assumption \ref{ass:CLT};
\item $\min(n,\sqrt T)\Vert n^{-1/2}({\wh{\bm\Lambda}-\bm\Lambda})\Vert=O_{\mathrm P}(1)$.
\end{compactenum}
\end{theorem}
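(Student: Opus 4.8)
\emph{Proof proposal.} The plan is to treat everything as a perturbation of the population eigen-structure of $\bm\Gamma^\chi$. By \eqref{eq:estL}, $\wh{\bm\Lambda}=\wh{\mbf V}^x(\wh{\mbf M}^x)^{1/2}$, where $(\wh{\mbf M}^x,\wh{\mbf V}^x)$ are the leading $r$ eigenpairs of $\wh{\bm\Gamma}^x=T^{-1}\bm X^\prime\bm X$, whereas by Proposition \ref{prop:K00} the target is $\bm\Lambda=\mbf V^\chi(\mbf M^\chi)^{1/2}$, with $(\mbf M^\chi,\mbf V^\chi)$ the leading $r$ eigenpairs of $\bm\Gamma^\chi=\bm\Lambda\bm\Lambda^\prime$ (the identity $\bm\Gamma^\chi=\bm\Lambda\bm\Lambda^\prime$ holding because $T^{-1}\bm F^\prime\bm F=\mbf I_r$ by Assumption \ref{ass:ident}(b)). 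Substituting \eqref{eq:SDFM1Rmat} and Assumption \ref{ass:ident}(b) gives $\wh{\bm\Gamma}^x=\bm\Gamma^\chi+\bm E$ with $\bm E=T^{-1}\bm\Lambda\bm F^\prime\bm\Xi+T^{-1}\bm\Xi^\prime\bm F\bm\Lambda^\prime+T^{-1}\bm\Xi^\prime\bm\Xi$. The first step is to bound $\Vert n^{-1}\bm E\Vert$: writing $T^{-1}\bm\Xi^\prime\bm\Xi=\bm\Gamma^\xi+(T^{-1}\bm\Xi^\prime\bm\Xi-\bm\Gamma^\xi)$ and using Lemma \ref{lem:Gxi} for $\Vert\bm\Gamma^\xi\Vert=O(1)$, the $\sqrt T$-consistency of sample idiosyncratic second moments implied by Assumption \ref{ass:idio}(c) to bound $\Vert T^{-1}\bm\Xi^\prime\bm\Xi-\bm\Gamma^\xi\Vert$ (via its Frobenius norm) by $O_{\mathrm P}(nT^{-1/2})$, Assumption \ref{ass:common}(a) for $\Vert\bm\Lambda\Vert=O(\sqrt n)$, and Assumption \ref{ass:CLT} (or merely second moments of $\mbf F_t\xi_{it}$) for the two cross terms, one gets $\Vert n^{-1}\bm E\Vert=O_{\mathrm P}(\delta_{nT}^{-1})$ with $\delta_{nT}:=\min(n,\sqrt T)$.

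Next comes the eigenvalue/eigenvector perturbation. Weyl's inequality gives $\vert n^{-1}\wh\mu_j^x-n^{-1}\mu_j^\chi\vert\le\Vert n^{-1}\bm E\Vert=O_{\mathrm P}(\delta_{nT}^{-1})$ for all $j$; together with \eqref{eq:lindiv}--\eqref{eq:evalidio} and Lemma \ref{lem:Gxi}(vi) this produces an asymptotic eigengap of order one both between $\mu_r^x$ and $\mu_{r+1}^x$ and --- by the distinct-eigenvalue Assumption \ref{ass:eval}, which makes $\overline C_j<\underline C_{j-1}$ in \eqref{eq:lindiv} --- between any two consecutive eigenvalues among the first $r$. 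A Davis--Kahan ($\sin\theta$) bound applied one eigenvector at a time then yields $\Vert\wh{\mbf v}^x_j-\mbf v^\chi_j\Vert=O_{\mathrm P}(\delta_{nT}^{-1})$, $j=1,\dots,r$, the sign being fixed by Assumption \ref{ass:sign}; hence $\Vert\wh{\mbf V}^x-\mbf V^\chi\Vert=O_{\mathrm P}(\delta_{nT}^{-1})$. Crucially, no rotation matrix enters: because the population eigenvalues are distinct, the perturbed eigenvectors are close to the population ones \emph{themselves}, which is exactly why $\wh{\bm\Lambda}$ estimates $\bm\Lambda$ and not $\bm\Lambda$ times a rotation. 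Part (b) is then immediate from $\wh{\bm\Lambda}-\bm\Lambda=(\wh{\mbf V}^x-\mbf V^\chi)(\wh{\mbf M}^x)^{1/2}+\mbf V^\chi\{(\wh{\mbf M}^x)^{1/2}-(\mbf M^\chi)^{1/2}\}$, $\Vert n^{-1/2}(\wh{\mbf M}^x)^{1/2}\Vert=O_{\mathrm P}(1)$, and, since both matrices are diagonal with entries of exact order $n$, $\Vert(\wh{\mbf M}^x)^{1/2}-(\mbf M^\chi)^{1/2}\Vert=O_{\mathrm P}(n^{1/2}\delta_{nT}^{-1})$, so that $\Vert n^{-1/2}(\wh{\bm\Lambda}-\bm\Lambda)\Vert=O_{\mathrm P}(\delta_{nT}^{-1})$.

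For part (a) I would use the identity $\wh{\bm\Lambda}=T^{-1}\bm X^\prime\wh{\bm F}$ (from $\wh{\bm\Gamma}^x\wh{\mbf V}^x=\wh{\mbf V}^x\wh{\mbf M}^x$ and $\wh{\bm F}=\bm X\wh{\mbf V}^x(\wh{\mbf M}^x)^{-1/2}$) together with $T^{-1}\wh{\bm F}^\prime\wh{\bm F}=\mbf I_r$, which for fixed $i$ gives
\[
\wh{\bm\lambda}_i-\bm\lambda_i=\frac1T\sum_{t=1}^T\mbf F_t\xi_{it}+\Big(\frac{\wh{\bm F}^\prime\bm F}{T}-\mbf I_r\Big)\bm\lambda_i+\frac1T\sum_{t=1}^T(\wh{\mbf F}_t-\mbf F_t)\xi_{it}.
\]
From part (b), $\wh{\bm F}=\bm F+\bm\Xi\bm\Lambda(\bm\Lambda^\prime\bm\Lambda)^{-1}+(\text{lower order})$, so $\wh{\mbf F}_t-\mbf F_t$ has leading term $(n^{-1}\bm\Lambda^\prime\bm\Lambda)^{-1}n^{-1}\bm\Lambda^\prime\bm\xi_t$; substituting, the second term on the right is $O_{\mathrm P}((nT)^{-1/2})$, while the third equals, up to smaller pieces, $(n^{-1}\bm\Lambda^\prime\bm\Lambda)^{-1}n^{-1}\sum_{j=1}^n\bm\lambda_j\big(T^{-1}\sum_{t=1}^T\xi_{jt}\xi_{it}\big)$, which is $O_{\mathrm P}(n^{-1}+(nT)^{-1/2})$: its deterministic part $n^{-1}\sum_j\bm\lambda_j\,\E[\xi_{jt}\xi_{it}]$ is $O(n^{-1})$ since $\sum_j\vert\E[\xi_{jt}\xi_{it}]\vert$ is bounded by Assumption \ref{ass:idio}(a)--(b) while $\Vert\bm\lambda_j\Vert\le M_\Lambda$, and the remaining fluctuation is $O_{\mathrm P}((nT)^{-1/2})$ by Assumption \ref{ass:idio}(c). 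Thus $\Vert\wh{\bm\lambda}_i-\bm\lambda_i\Vert=O_{\mathrm P}(\max(T^{-1/2},n^{-1}))=O_{\mathrm P}(\delta_{nT}^{-1})$, and when $n^{-1}\sqrt T\to0$ the last two terms are $o_{\mathrm P}(T^{-1/2})$, so $\sqrt T(\wh{\bm\lambda}_i-\bm\lambda_i)=T^{-1/2}\sum_t\mbf F_t\xi_{it}+o_{\mathrm P}(1)\to_d\mathcal N(\mbf 0_r,\bm\Phi_i)$ by Assumption \ref{ass:CLT}.

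The routine but genuinely laborious part --- which I expect to be the main obstacle --- is controlling $\wh{\mbf F}_t-\mbf F_t$ to sufficient order and, above all, the cross term $T^{-1}\sum_t(\wh{\mbf F}_t-\mbf F_t)\xi_{it}$, in which the idiosyncratic noise entering the estimated factor interacts with $\xi_{it}$ itself. Extracting the $n^{-1}$ rate (rather than a naive $n^{-1/2}$) hinges on the cross-sectional averaging hidden in $n^{-1}\bm\Lambda^\prime\bm\xi_t$ together with the weak cross-correlation bound of Assumption \ref{ass:idio}(b), and on expanding $\wh{\bm F}$ (equivalently, the leading eigenvectors) one order beyond the leading term; this is where the auxiliary lemmas of the Supplement, and Lemma \ref{lem:Gxi}, carry the weight. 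By comparison, the eigenvalue/eigenvector perturbation step is clean once Lemma \ref{lem:Gxi} is in hand.
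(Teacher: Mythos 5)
Your part (b) and the consistency half of part (a) are sound and in fact track the paper's own route quite closely: the Weyl/Davis--Kahan perturbation step is Lemmas \ref{lem:covarianze}(iii)--(iv) and Proposition \ref{prop:L}, and your observation that Assumptions \ref{ass:eval} and \ref{ass:sign} reduce the usual rotation matrix to a sign is exactly how the paper argues. The gap is in the asymptotic-normality half of part (a). In your decomposition the dangerous term is $\l(T^{-1}\wh{\bm F}^\prime\bm F-\mbf I_r\r)\bm\lambda_i$, which you assert is $O_{\mathrm P}((nT)^{-1/2})$ ``from part (b)''. It is not implied by part (b): writing $\wh{\bm F}=\bm X\wh{\bm\Lambda}(\wh{\bm\Lambda}^\prime\wh{\bm\Lambda})^{-1}$ gives, besides the term $\bm\Xi\wh{\bm\Lambda}(\wh{\bm\Lambda}^\prime\wh{\bm\Lambda})^{-1}$ you keep, a remainder $\bm F\bm A$ with $\bm A=(\bm\Lambda-\wh{\bm\Lambda})^\prime\wh{\bm\Lambda}(\wh{\bm\Lambda}^\prime\wh{\bm\Lambda})^{-1}$, and since $T^{-1}\bm F^\prime\bm F=\mbf I_r$ this contributes a term of order $\Vert\bm A\Vert$ to $\wh{\bm\lambda}_i-\bm\lambda_i$. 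The only bound your part (b) delivers for $\bm A$ is Cauchy--Schwarz, $\Vert(\wh{\bm\Lambda}-\bm\Lambda)/\sqrt n\Vert\,\Vert\wh{\bm\Lambda}/\sqrt n\Vert\,\Vert(\wh{\bm\Lambda}^\prime\wh{\bm\Lambda}/n)^{-1}\Vert=O_{\mathrm P}(\max(n^{-1},T^{-1/2}))$, whose $T^{-1/2}$ component is of \emph{exactly the same order} as the CLT term $T^{-1}\sum_t\mbf F_t\xi_{it}$; if it did not in fact vanish faster, the limit law would be contaminated.

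What is actually required --- and what the paper establishes in Propositions \ref{prop:factor} and \ref{prop:H} --- is the refined bound $n^{-1}\Vert(\wh{\bm\Lambda}-\bm\Lambda\wh{\mbf H})^\prime\wh{\bm\Lambda}\Vert=O_{\mathrm P}(\max(n^{-1},(nT)^{-1/2},T^{-1}))$, hence $\Vert\wh{\mbf H}-\mbf J\Vert$ of that same order. This does not follow from the operator-norm rate of part (b); it is obtained by returning to the eigenvector identity \eqref{eq:start4}, projecting the estimation error onto $\wh{\bm\Lambda}$, and exploiting the exact orthogonality $\wh{\bm\Lambda}^\prime\wh{\bm\Lambda}=\wh{\mbf M}^x$ together with the six term-by-term bounds of Proposition \ref{prop:load}. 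The same refined control is what turns the ``smaller pieces'' of your third term $T^{-1}\sum_t(\wh{\mbf F}_t-\mbf F_t)\xi_{it}$ into $O_{\mathrm P}(\max(n^{-1},(nT)^{-1/2}))$. Your closing paragraph correctly identifies this as the main obstacle, but the proposal asserts the two key rates rather than deriving them, and that derivation is precisely the non-routine content of the theorem; as written, the argument for $\sqrt T(\wh{\bm\lambda}_i-\bm\lambda_i)\to_d\mathcal N(\mbf 0_r,\bm\Phi_i)$ is incomplete.
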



An important consequence of Theorem \ref{th:CLTL} is that the PC estimator is asymptotically equivalent to the unfeasible OLS estimator, which we would obtain if the factors were observed, and denoted as $\bm\Lambda^{\text{\tiny OLS}}$, with $i$th row given by ${\bm\lambda}_i^{\text{\tiny \upshape OLS}\prime}$.

\begin{cor}[]\label{cor:PCOLS}
Under Assumptions \ref{ass:common} through \ref{ass:CLT}, as $n,T\to\infty$,
\begin{inparaenum}[(a)]
\item [\upshape (a)]
$
\min(n,\sqrt{nT})\Vert\wh{\bm\lambda}_i - {\bm\lambda}_i^{\text{\tiny \upshape OLS}}\Vert =  O_{\mathrm P}\l(1\r)
$, for any given $i=1,\ldots,n$;
\item [\upshape (b)] $\min(n,\sqrt {nT})\Vert n^{-1/2}({\wh{\bm\Lambda}-{\bm\Lambda}^{\text{\tiny \upshape OLS}}})\Vert=O_{\mathrm P}(1)$.
\end{inparaenum}
\end{cor}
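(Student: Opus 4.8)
The plan is to read the statement off the asymptotic expansion of the PC estimator that is established in the proof of Theorem~\ref{th:CLTL}, after first writing the unfeasible OLS estimator in closed form. Regressing $x_{it}$ on the observed $\mbf F_t$ gives $\bm\lambda_i^{\text{\tiny\upshape OLS}}=(\bm F^\prime\bm F)^{-1}\bm F^\prime\bm x_{(i)}$, with $\bm x_{(i)}$ the $i$th column of $\bm X$; using $T^{-1}\bm F^\prime\bm F=\mbf I_r$ from Assumption~\ref{ass:ident}(b) and $\bm x_{(i)}=\bm F\bm\lambda_i+\bm\xi_{(i)}$, where $\bm\xi_{(i)}$ is the $i$th column of $\bm\Xi$, this becomes
\[
\bm\lambda_i^{\text{\tiny\upshape OLS}}=\bm\lambda_i+\frac1T\sum_{t=1}^T\mbf F_t\xi_{it},\qquad \bm\Lambda^{\text{\tiny\upshape OLS}}=\bm\Lambda+\frac1T\bm\Xi^\prime\bm F .
\]
In particular $\bm\lambda_i^{\text{\tiny\upshape OLS}}-\bm\lambda_i$ is \emph{exactly} the $O_{\mathrm P}(T^{-1/2})$ average whose limiting law is postulated in Assumption~\ref{ass:CLT}. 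Hence it suffices to show that the PC estimator shares the same leading term, i.e.\ that $\wh{\bm\lambda}_i-\bm\lambda_i=T^{-1}\sum_{t=1}^T\mbf F_t\xi_{it}+\mbf R_{nT,i}$ with $\Vert\mbf R_{nT,i}\Vert=O_{\mathrm P}(\max(n^{-1},(nT)^{-1/2}))$, together with the analogous matrix statement for $n^{-1/2}(\wh{\bm\Lambda}-\bm\Lambda)$. This refined expansion, with that very remainder rate, is what the proof of Theorem~\ref{th:CLTL} in fact establishes (and, conversely, it has Theorem~\ref{th:CLTL}(a) as an immediate corollary, since $T^{-1}\sum_{t=1}^T\mbf F_t\xi_{it}=O_{\mathrm P}(T^{-1/2})$ and $\max(n^{-1},(nT)^{-1/2})\le\{\min(n,\sqrt T)\}^{-1}$). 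Subtracting the OLS identity then gives $\wh{\bm\lambda}_i-\bm\lambda_i^{\text{\tiny\upshape OLS}}=\mbf R_{nT,i}$, which is precisely part~(a) because $\max(n^{-1},(nT)^{-1/2})=\{\min(n,\sqrt{nT})\}^{-1}$; part~(b) follows the same way.

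To see where the rate of $\mbf R_{nT,i}$ comes from, I would start from $\wh{\bm\Lambda}=T^{-1}\bm X^\prime\wh{\bm F}$ (which holds because $T^{-1}\wh{\bm F}^\prime\wh{\bm F}=\mbf I_r$; see Section~\ref{sec:PC}) and again $\bm x_{(i)}=\bm F\bm\lambda_i+\bm\xi_{(i)}$, giving
\[
\wh{\bm\lambda}_i-\bm\lambda_i=\Big(\frac{\wh{\bm F}^\prime\bm F}{T}-\mbf I_r\Big)\bm\lambda_i+\frac1T\bm F^\prime\bm\xi_{(i)}+\frac1T(\wh{\bm F}-\bm F)^\prime\bm\xi_{(i)},
\]
so that $\mbf R_{nT,i}=(T^{-1}\wh{\bm F}^\prime\bm F-\mbf I_r)\bm\lambda_i+T^{-1}(\wh{\bm F}-\bm F)^\prime\bm\xi_{(i)}$, the middle term $T^{-1}\bm F^\prime\bm\xi_{(i)}=T^{-1}\sum_{t=1}^T\mbf F_t\xi_{it}$ being exactly what cancels against $\bm\lambda_i^{\text{\tiny\upshape OLS}}-\bm\lambda_i$. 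Using moreover $T^{-1}\wh{\bm F}^\prime\bm F-\mbf I_r=-T^{-1}\wh{\bm F}^\prime(\wh{\bm F}-\bm F)$, both pieces of $\mbf R_{nT,i}$ are explicit linear functionals of the factor estimation error $\wh{\bm F}-\bm F$, whose asymptotic expansion is established as part of the proof of Theorem~\ref{th:CLTL}: up to a random $r\times r$ matrix converging to $\mbf I_r$, it is the sum of a term proportional to $n^{-1}\sum_{j=1}^n\bm\lambda_j\xi_{jt}$, a term proportional to $(nT)^{-1}\sum_{s=1}^T\mbf F_s\sum_{j=1}^n(\xi_{js}\xi_{jt}-\E[\xi_{js}\xi_{jt}])$, and higher-order remainders. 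Substituting it, the diagonal $j=i$ contribution to $T^{-1}(\wh{\bm F}-\bm F)^\prime\bm\xi_{(i)}$ equals $n^{-1}(T^{-1}\sum_{t=1}^T\xi_{it}^2)\bm\lambda_i$ up to smaller order, i.e.\ $n^{-1}\sigma_i^2\bm\lambda_i+o_{\mathrm P}(n^{-1})$, which produces the $O_{\mathrm P}(n^{-1})$ part of $\mbf R_{nT,i}$, while the off-diagonal and factor-weighted pieces produce the $O_{\mathrm P}((nT)^{-1/2})$ part; part~(b) is handled identically from $\wh{\bm\Lambda}-\bm\Lambda^{\text{\tiny\upshape OLS}}=\bm\Lambda(T^{-1}\bm F^\prime\wh{\bm F}-\mbf I_r)+T^{-1}\bm\Xi^\prime(\wh{\bm F}-\bm F)$ together with $\Vert n^{-1/2}\bm\Lambda\Vert=O(1)$ from Assumption~\ref{ass:common}(a).

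The main obstacle is the sharp bound $\Vert\mbf R_{nT,i}\Vert=O_{\mathrm P}(\max(n^{-1},(nT)^{-1/2}))$: a crude Cauchy--Schwarz estimate only yields $\Vert T^{-1}(\wh{\bm F}-\bm F)^\prime\bm\xi_{(i)}\Vert\le(T^{-1}\Vert\wh{\bm F}-\bm F\Vert^2)^{1/2}(T^{-1}\Vert\bm\xi_{(i)}\Vert^2)^{1/2}=O_{\mathrm P}(\max(n^{-1/2},T^{-1/2}))$, which is merely the $\min(n,\sqrt T)$ rate of Theorem~\ref{th:CLTL} and is not enough. The improvement comes from exploiting the cross-sectional averaging built into $\wh{\bm F}-\bm F$, and uses the weak serial and cross-sectional dependence of Assumption~\ref{ass:idio}(b), the fourth-order cumulant summability of Assumption~\ref{ass:idio}(c-ii), the boundedness of the loadings in Assumption~\ref{ass:common}(a), the $\sqrt T$-consistency of the sample second moments of the factors in Assumption~\ref{ass:common}(c-ii), and the factor/idiosyncratic independence of Assumption~\ref{ass:ind}, with Proposition~\ref{prop:K00} relating $\wh{\bm\Lambda}$ and $\wh{\bm F}$ to the eigenstructure of $\wh{\bm\Gamma}^x$. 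Since these are essentially the second-moment computations already carried out to establish Theorem~\ref{th:CLTL}, in the write-up they can be invoked rather than reproduced; the genuinely new ingredient is just the observation that the $O_{\mathrm P}(T^{-1/2})$ leading terms of $\wh{\bm\lambda}_i$ and $\bm\lambda_i^{\text{\tiny\upshape OLS}}$ coincide, which is what upgrades the $\min(n,\sqrt T)$ rate of Theorem~\ref{th:CLTL} to the $\min(n,\sqrt{nT})$ rate asserted in the corollary.
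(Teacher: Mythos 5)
Your core argument coincides with the paper's: under Assumption \ref{ass:ident}(b) the unfeasible OLS error is identically $T^{-1}\sum_{t=1}^T\mbf F_t\xi_{it}$, which is also the leading term of the loading expansion \eqref{eq:finaleL} once $\wh{\mbf H}$ is replaced by $\bm J=\mbf I_r$ (Proposition \ref{prop:H} and Assumption \ref{ass:sign}), so $\wh{\bm\lambda}_i-\bm\lambda_i^{\text{\tiny OLS}}$ equals the remainder, whose rate $O_{\mathrm P}(\max(n^{-1},(nT)^{-1/2}))=O_{\mathrm P}(1/\min(n,\sqrt{nT}))$ is exactly what \eqref{eq:finaleL} records; part (b) is read off \eqref{eq:start4} in the same way. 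Where you genuinely depart from the paper is in how you propose to justify that remainder rate: you route it through the factor estimation error, writing $\wh{\bm\Lambda}=T^{-1}\bm X^\prime\wh{\bm F}$ and expanding $\wh{\bm F}-\bm F$. That is the classical ``approach B'' of \citet{Bai03} and it would work, but it is not what this paper does, and your claim that the expansion of $\wh{\bm F}-\bm F$ ``is established as part of the proof of Theorem \ref{th:CLTL}'' is not accurate here: the paper deliberately follows approach A, never expands $\wh{\bm F}$ (Section \ref{sec:PC} notes the estimated factors are not needed in the sequel), and obtains the remainder by bounding the six terms (1.a)--(1.f) of \eqref{eq:sviluppoLambda} in Proposition \ref{prop:load}, which are functionals of $\bm F^\prime\bm\Xi$, $\bm\Xi^\prime\bm\Xi$ and $\wh{\bm\Lambda}-\bm\Lambda\bm{\mathcal H}$ rather than of $\wh{\bm F}-\bm F$. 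Since in the end you invoke the already-established remainder rate rather than relying on your sketched factor expansion, this is a presentational rather than a substantive difference and the proof is valid; the one step you should not gloss over is that replacing the matrix $(\wh{\mbf M}^x/n)^{-1}(\wh{\bm\Lambda}^\prime\bm\Lambda/n)=\wh{\mbf H}^\prime$ by $\mbf I_r$ in front of the $O_{\mathrm P}(T^{-1/2})$ leading term must itself cost no more than the stated remainder, which is where Proposition \ref{prop:H} enters.
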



\section{Quasi Maximum Likelihood}\label{sec:PCQML}

Define $\bm {\mathcal X}=\text{vec}(\bm X^\prime)=(\mbf x_{1}^\prime\cdots \mbf x_{T}^\prime)^\prime$ and $\bm{\mathcal Z}=\text{vec}(\bm \Xi^\prime)=(\bm\xi_{1}^\prime\cdots \bm\xi_{T}^\prime)^\prime$  as 
the $nT$-dimensional vectors of observations and idiosyncratic components,  
$\bm {\mathcal F}=\text{vec}(\bm F^\prime)=(\mbf F_{1}^\prime\cdots \mbf F_T^\prime)^\prime$ as the $rT$-dimensional vector of factors, and  $\bm {\mathfrak L}=\mbf I_T\otimes \bm\Lambda$ as the $nT\times rT$ matrix containing all factor loadings replicated $T$ times. Then, by vectorizing the transposed of \eqref{eq:SDFM1Rmat} we have:
\begin{align}\label{eq:DFM_matrix}
\bm{\mathcal X}= \bm {\mathfrak L} \bm{\mathcal F} + \bm{\mathcal Z}.
\end{align}
Let $\bm\Omega^x=\E_{}[\bm {\mathcal X}\bm {\mathcal X}^\prime]$ and $\bm\Omega^\xi=\E_{}[\bm{\mathcal Z}\bm{\mathcal Z}^\prime]$, be the $nT\times nT$ covariance matrices of the $nT$-dimensional vectors of data and idiosyncratic  components, respectively. Let also $\bm\Omega^F=  \E_{}[\bm {\mathcal F}\bm {\mathcal F}^{\prime}]$ be the $rT\times rT$ covariance matrix of the $rT$-dimensional factor vector. Then, because of Assumption \ref{ass:ind},
\[
\bm\Omega^x = \bm {\mathfrak L}\,\bm\Omega^F\bm {\mathfrak L}^\prime + \bm\Omega^\xi.
\]
In principle, the parameters that need to be estimated are then given by the vector
$$
\bm\varphi=(\mathrm{vec}(\bm\Lambda)^\prime, \mathrm{vech}(\bm\Omega^{\xi})^\prime,\mathrm{vech}(\bm\Omega^{F})^\prime)^\prime,
$$ and
the Gaussian quasi-log-likelihood computed in a generic value of the parameters, denoted as $\underline{\bm{\varphi}}$, is given by (omitting the constant term for simplicity)
\begin{align}
\ell^*(\bm {\mathcal X};\underline{\bm{\varphi}})
&=-\frac 12 \log\det\l(\underline{\bm\Omega}^x\r) -\frac 12  \bm {\mathcal X}^\prime \l(\underline{\bm\Omega}^x\r)^{-1}\bm {\mathcal X}\label{eq:LL}.
\end{align}
In general, maximization of \eqref{eq:LL} is an unfeasible task since the parameters vector ${\bm{\varphi}}$ to be estimated has $\simeq (nT)^2$ elements. The common practice is then to consider simpler mis-specified log-likelihoods which depend on fewer parameters \citep{baili12,baili16}.  

First of all, consistently with the fact that we do not assume any parametric model describing the dynamics of the factors, we can consider a simpler log-likelihood with $\bm\Omega^F=\mbf I_{nr}$, where we imposed also the identification constraint $\bm\Gamma^F=\mbf I_r$ implied by Assumption \ref{ass:ident}(b).
A second simplification consists in considering a mis-specified log-likelihood of an approximate factor model where also the idiosyncratic components are treated as serially uncorrelated. As a result of these two mis-specifications the log-likelihood \eqref{eq:LL} is reduced to:
\beq
\ell(\bm{\mathcal X};\underline{\bm\varphi})= -\frac T 2 \log\det (\underline{\bm\Lambda}\,\underline{\bm\Lambda}^\prime+\underline{\bm\Gamma}^\xi)-\frac 12\sum_{t=1}^T \mbf x_t^\prime(\underline{\bm\Lambda}\,\underline{\bm\Lambda}^\prime+\underline{\bm\Gamma}^\xi)^{-1}\mbf x_t.\label{eq:LL0}
\eeq
 The parameters to be estimated are then reduced:
$\bm\varphi=(\mathrm{vec}(\bm\Lambda)^\prime, \mathrm{vech}(\bm\Gamma^\xi)^\prime)^\prime$.
Nevertheless, estimation of $\bm\varphi$ by means of maximization of \eqref{eq:LL0} seems still hopeless since in general ${\bm\Gamma}^\xi$ contains $n(n+1)/2$ distinct elements. 

Some further mis-specification of the log-likelihood \eqref{eq:LL0}, based on regularizing $\bm\Gamma^\xi$, is then usually introduced in order to reduce the number of parameters to be estimated. A possibility in this sense is explored, for example, by \citet{bailiao16} who propose to maximize \eqref{eq:LL0} subject to an $\ell_1$ penalty imposed on the off-diagonal entries of $\bm\Gamma^\xi$. This approach forces sparsity, thus reducing the number of parameters to be estimated, but at the same  time it makes estimation dependent on the chosen penalization level, which affects also the rate of consistency. 

An even simpler approach consists in estimating only the diagonal entries of $\bm\Gamma^\xi$. Specifically, letting $\bm\Sigma^\xi=\text{dg}(\sigma_1^2\cdots\sigma_n^2)$ be the diagonal matrix with entries the diagonal entries of $\bm\Gamma^\xi$, we can focus on maximization of  the further mis-specified log-likelihood:
\beq
\ell_{\text{\tiny E}}(\bm{\mathcal X};\underline{\bm\varphi})= -\frac T 2 \log\det (\underline{\bm\Lambda}\,\underline{\bm\Lambda}^\prime+\underline{\bm\Sigma}^\xi)-\frac 12\sum_{t=1}^T \mbf x_t^\prime(\underline{\bm\Lambda}\,\underline{\bm\Lambda}^\prime+\underline{\bm\Sigma}^\xi)^{-1}\mbf x_t.\label{eq:LL00}
\eeq
The parameters to be estimated are then reduced to: $\bm\varphi=(\mathrm{vec}(\bm\Lambda)^\prime, \sigma_1^2,\cdots, \sigma_n^2)^\prime$, which are just $nr+n$. This is a feasible task given that we have $nT$ data points.

The log-likelihood \eqref{eq:LL00} is the one considered in classical factor analysis, where, however, $n$ is assumed to be fixed and small, and, moreover, the true idiosyncratic covariance matrix is assumed to be diagonal, i.e., the likelihood is not mis-specified (see, e.g., \citealp[Chapter 2]{lawleymaxwell71}, and \citealp{RT82}). In the high-dimensional case, i.e., when we allow $n\to\infty$, maximization of \eqref{eq:LL00} has been studied by \citet{baili12,baili16} under a variety of possible identifying constraints. In particular, while \citet{baili12} consider the case of no idiosyncratic serial or cross-correlation, i.e., they assume $\bm\Omega^\xi=\mbf I_T\otimes \bm\Sigma^\xi$, which is diagonal, thus considering \eqref{eq:LL00} as a correctly specified log-likelihood,  \citet{baili16} allow instead for idiosyncratic serial and cross-sectional correlations, as we do in this paper, thus treating \eqref{eq:LL00} as a mis-specified log-likelihood. 

In the rest of this section we briefly review the properties of the QML estimator of the loadings. We refer to \citet{MBQML} for a full review of QML estimation of factor models.
We start with the simplest case in which we consider a further mis-specification of the log-likelihood \eqref{eq:LL00}, where we treat the idiosyncratic components as if they were homoskedastic, thus in the log-likelihood we replace $\bm\Sigma^\xi$ with an even simpler covariance matrix ${\sigma}^2\mbf I_n$, with ${\sigma}^2>0$ and finite.
In this case \citet{tippingbishop99} prove that the QML estimator of the loadings matrix $\bm\Lambda$, has a closed form given by
\beq\label{eq:TB99}
\wh{\bm\Lambda}^{\text{\tiny QML,E}_0} = \wh{\mbf V}^x \l(\wh{\mbf M}^x-\wh{\sigma}^{2\text{\tiny QML,E}_0}\mbf I_r\r)^{1/2}=
 \wh{\mbf V}^x \l(\wh{\mbf M}^x-\l\{\frac 1{n-r}\sum_{j=r+1}^n\wh{\mu}^x_j\r\}\mbf I_r\r)^{1/2}
,
\eeq
with $\wh{\sigma}^{2\text{\tiny QML,E}_0}$ being the QML estimator of $\sigma^2$. Intuitively, since under our assumptions we should have $\wh{\mbf M}^x=O_{\mathrm P}(n)$, as $n\to\infty$, then  $\wh{\bm\Lambda}^{\text{\tiny QML,E}_0}$  seems to coincide asymptotically  with the PC estimator given in \eqref{eq:estL}. This is a well known fact and it is often quoted in the literature (see, e.g., \citealp{DGRqml}): in the case of spherical idiosyncratic components the PC and QML estimators are asymptotically equivalent. However, to the best of our knowledge no formal proof exists, at least under the present set of assumptions. In fact, the proof would essentially require to prove that the estimated $(r+1)$th largest eigenvalue is such that $\wh{\mu}^x_{r+1}=O_{\mathrm P}(1)$, which is not an easy task in a high-dimensional setting, because, although we know that under our assumptions ${\mu}^x_{r+1}=O(1)$ (see Lemma \ref{lem:Gxi}(v)), in general $\wh{\mu}^x_{r+1}$ is not a consistent estimator of ${\mu}^x_{r+1}$ (see, e.g., \citealp[Lemma 2.2]{trapani2018randomized}).

Things become more complicated if we allow for heteroskedasticity. Indeed, no closed form solution exists for the QML estimator maximizing the mis-specified log-likelihood \eqref{eq:LL00} and numerical maximization is required instead (\citealp[see, e.g.,][Section 8]{baili12}, for a proposed algorithm). Still it is possible to derive its asymptotic properties.
Let us denote as $\wh{\bm\lambda}_i^{\text{\tiny \upshape QML,E}}$, $i=1,\ldots, n$,  the QML estimator of the $i$th row of $\bm\Lambda$.  In the classical fixed $n$ case, such estimator retains the classical propertied of the QML estimators, so it is $\sqrt T$-consistent and asymptotically normal. However, since the first and second derivatives of  \eqref{eq:LL00} are very complex, the asymptotic covariance matrix has also a very complicated form, which in turn makes its estimation not at all easy (see, e.g., \citealp[Theorem 12.3]{AR56}, \citealp[Theorem 2F]{AFP87}, and \citealp[Theorems 1, 2, and 3]{AA88}). If, instead, we study the properties of the QML estimator when allowing also for $n\to\infty$, things become, perhaps surprisingly, simpler. This is shown in the following theorem proved by \citet[Theorem 1]{baili16}.

\begin{theorem}[]\label{th:CLTLBAI} 
Assume $n^{-1}{\bm\Lambda^\prime(\bm\Sigma^\xi)^{-1}\bm\Lambda}$ to be diagonal  for all $n\in\mathbb N$ and $T^{-1}{\bm F^\prime\bm F}=\mbf I_r$ for all $T\in\mathbb N$.
Then, under Assumptions \ref{ass:common}, \ref{ass:idio}, \ref{ass:ind}, and \ref{ass:CLT}, if $n^{-1}\sqrt T\to0$, as $n,T\to\infty$, for any given $i=1,\ldots,n$,
$
\sqrt T(\wh{\bm\lambda}_i^{\text{\tiny \upshape QML,E}} -\bm\lambda_i)\to_d\mathcal N\l(\mbf 0_r, \bm\Phi_i\r),
$
where $\bm\Phi_i$ is defined in Assumption \ref{ass:CLT}.
\end{theorem}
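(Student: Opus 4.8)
The statement is, word for word, Theorem 1 of \citet{baili16}, so the plan is not to reprove it but to check that its hypotheses hold in the present set-up. By Remark \ref{rem:baili}, Assumptions \ref{ass:common}, \ref{ass:idio}, \ref{ass:ind}, and \ref{ass:CLT} imply, or are equivalent to, all the assumptions of \citet{baili16} that enter the proof of their Theorem 1 for the loadings; the exceptions (their Assumption D on positivity of the estimated idiosyncratic variances, their moment condition E.3, and Assumptions E.4--E.6 and F.2--F.3) are used only for the idiosyncratic-variance estimators and for the estimated factors, neither of which is part of the present claim. The two displayed identification restrictions in the hypothesis of the theorem are precisely Condition IC3 and the normalization $T^{-1}\bm F^\prime\bm F=\mbf I_r$ of \citet{baili16}, and they play here the role taken by Assumptions \ref{ass:eval}--\ref{ass:sign} in the PC analysis. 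Hence the conclusion follows from their Theorem 1.

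For completeness I would also recall the skeleton of that argument, since it is instructive for what follows. One works directly with the first-order conditions of $\ell_{\text{\tiny E}}$ in \eqref{eq:LL00}, because $\wh{\bm\Lambda}^{\text{\tiny \upshape QML,E}}$ has no closed form. Step one is consistency: using the identification restriction and the eigen-separation implied by \eqref{eq:lindiv}--\eqref{eq:evalidio}, one shows that the population counterpart of $\ell_{\text{\tiny E}}$ is maximized, up to the identification constraint, only at the true $(\bm\Lambda,\bm\Sigma^\xi)$, and transfers this to the sample objective by a uniform law of large numbers, obtaining that $\Vert\wh{\bm\Lambda}^{\text{\tiny \upshape QML,E}}-\bm\Lambda\Vert$ and $\Vert\wh{\bm\Sigma}^\xi-\bm\Sigma^\xi\Vert$ are small. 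Step two is a joint Taylor expansion of the score around the truth; the decisive simplification brought by $n\to\infty$ is that the blocks of the expected Hessian coupling $\bm\Lambda$ and $\bm\Sigma^\xi$ are of smaller order than the $\bm\Lambda$-block, so that estimation of the $n$ idiosyncratic variances does not contaminate the leading term for $\wh{\bm\lambda}_i^{\text{\tiny \upshape QML,E}}$. Step three isolates the leading term of $\sqrt T(\wh{\bm\lambda}_i^{\text{\tiny \upshape QML,E}}-\bm\lambda_i)$ as $T^{-1/2}\sum_{t=1}^T\mbf F_t\xi_{it}$ plus a remainder of order $O_{\mathrm P}(\sqrt T/n)$, which is annihilated by $n^{-1}\sqrt T\to 0$; Assumption \ref{ass:CLT} then delivers the Gaussian limit with covariance $\bm\Phi_i$.

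The main obstacle along this route is exactly the absence of a closed form: one must control $(\underline{\bm\Lambda}\,\underline{\bm\Lambda}^\prime+\underline{\bm\Sigma}^\xi)^{-1}$ uniformly over a neighbourhood of the truth — typically through the Sherman--Morrison--Woodbury identity, which disentangles the rank-$r$ common part from the diagonal idiosyncratic part — and one must track the mis-specification, namely the fact that the true $\bm\Gamma^\xi$ is not diagonal, when passing from population to sample score and Hessian. This is the part of \citet{baili16} that is long and technically heavy, and it is precisely what the approach of the present paper (Theorem \ref{th:QML}, reducing QML to PC and then, via Corollary \ref{cor:PCOLS}, to unfeasible OLS) is designed to bypass.
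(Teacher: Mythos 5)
Your proposal is correct and matches the paper's own treatment: Theorem \ref{th:CLTLBAI} is stated as \citet[Theorem 1]{baili16} under their identification condition IC3, and the paper's justification is precisely the citation together with the verification (Remark \ref{rem:baili}) that all assumptions of \citet{baili16} relevant to the loadings are implied by or equivalent to Assumptions \ref{ass:common}, \ref{ass:idio}, \ref{ass:ind}, and \ref{ass:CLT}. Your sketch of the Bai--Li argument is a faithful summary of their proof strategy and of why the present paper's route via Theorem \ref{th:QML} and Corollary \ref{cor:PCOLS} is designed to bypass it.
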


The proof of this result is based on asymptotic expansions of a set of conditions derived from first order conditions computed for the log-likelihood \eqref{eq:LL00}. It is a very long proof, aimed at showing that $\wh{\bm\lambda}_i^{\text{\tiny \upshape QML,E}}$  is asymptotically equivalent to the unfeasible OLS we would obtain if we knew the factors, which in turn has a known asymptotic distribution  \citep[see the online supplement of][]{baili16}. The proof requires a series of technical assumptions different from, but nested into, ours (see Remark \ref{rem:baili}). However, as noticed already in Section \ref{sec:sdfm}, it is important to stress that Theorem \ref{th:CLTLBAI} holds under an identification constraint  for the loadings which differs from our Assumption \ref{ass:ident}(a). At the end of the next section we prove that this theorem holds also under our assumptions (see Theorem \ref{cor:QMLcons} and Corollary \ref{cor:sett}).

\begin{rem}\label{rem:baili12}
\upshape{
Notice that the fact that we consider a mis-specified log-likelihood with serially and cross-sectionally uncorrelated idiosyncratic components, does not affect consistency of the estimated loadings, but only their asymptotic covariance. And in particular, notice that what matters for the asymptotic covariance is just the idiosyncratic serial correlation. Indeed, if the idiosyncratic components were cross-sectionally uncorrelated but serially correlated the asymptotic covariance in Theorem \ref{th:CLTLBAI} would be the same, while if they were serially uncorrelated the asymptotic covariance would be $ \sigma_i^2 \mbf I_r$ (due to the identification $\bm\Gamma^F=\mbf I_r$), regardless of the presence or not of cross-sectional correlation. This  result, which is a special case of Theorem \ref{th:CLTLBAI}, does not require any constraint between the rates of divergence of $n$ and $T$ and it is proved by \citet[Theorem 5.2]{baili12}. }
\end{rem}

\begin{rem}\label{rem:DGR}
\upshape{
The case of QML estimation when $\bm\Omega^F$ depends explicitly on additional parameters capturing the autocorrelations in the factors is considered in \citet{DGRqml}. However, in that case QML estimation requires the use of the EM algorithm jointly with the Kalman smoother and the results of this paper do not directly apply unless we first prove convergence of the EM to the same QML estimator considered here.
%
}
\end{rem}

\section{The PC and QML estimators are asymptotically equivalent}\label{sec:PCQML0}
Given the discussion in the previous section, we might argue that the PC and QML estimators have the same asymptotic properties. However, Theorems \ref{th:CLTL} and \ref{th:CLTLBAI} are derived under a similar but different sets of identifying assumptions and  the two results cannot be directly compared.  Even in the spherical case the proof seems to be not so easy due to the unknown properties of the smallest $N-r$ sample eigenvalues.

It is then natural to ask the following question. Can we prove in a simple way that the PC and QML estimators are asymptotically equivalent under 
the same set of assumptions given in this paper, which are the standard PC assumptions? Moreover, if we prove such equivalence and 
since the PC estimator of the loadings does not depend on the idiosyncratic covariance matrix, it is natural to ask also the following additional question. Can we prove the asymptotic equivalence of the two estimators when considering QML based on the log-likelihood \eqref{eq:LL0}  of an approximate factor model, i.e., without constraining the idiosyncratic covariance to be diagonal or even homoskedastic? In this section we answer both questions.

Denote as $\wh{\bm\Lambda}^{\text{\tiny QML}}$ the QML estimator of the loadings matrix maximizing the mis-specified log-likleihood \eqref{eq:LL0}, where the idiosyncratic components are treated as serially uncorrelated, but their covariance matrix is unrestricted so it is correctly specified.
Let also $\wh{\bm\lambda}_i^{\text{\tiny \upshape QML}\prime}$, $i=1,\ldots, n$, be the $i$th row of $\wh{\bm\Lambda}^{\text{\tiny QML}}$. Then, we state our main result.

\begin{theorem}\label{th:QML}
Under Assumptions \ref{ass:common} through \ref{ass:sign} and assuming also that $\bm\Gamma^\xi$ is positive definite, as $n,T\to\infty$, 
\begin{inparaenum}[(a)]
\item [\upshape (a)] $n\Vert
n^{-1/2}({\wh{\bm\Lambda}^{\text{\tiny \upshape QML}}-\wh{\bm\Lambda}})\Vert= O_{\mathrm {P}}(1)$;
\item [\upshape (b)]
$n \Vert\wh{\bm\lambda}_i^{\text{\tiny \upshape QML}}-\wh{\bm\lambda}_i \Vert = O_{\mathrm {P}}(1)$, for any given $i=1,\ldots,n$.
\end{inparaenum}
\end{theorem}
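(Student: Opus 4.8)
The plan is to work directly with the first-order conditions of the mis-specified log-likelihood \eqref{eq:LL0} and to compare the resulting estimating equations with the eigenvalue equation \eqref{eq:evecXX2} characterizing the PC estimator. Concentrating $\ell(\bm{\mathcal X};\underline{\bm\varphi})$ with respect to $\underline{\bm\Gamma}^\xi$, the QML estimator $\wh{\bm\Lambda}^{\text{\tiny QML}}$ must satisfy a fixed-point relation of the form $\wh{\bm\Sigma}^x (\wh{\bm\Lambda}\wh{\bm\Lambda}^\prime + \wh{\bm\Gamma}^\xi)^{-1}\wh{\bm\Lambda} = \wh{\bm\Lambda}$ (the classical factor-analytic normal equation), where $\wh{\bm\Sigma}^x = \wh{\bm\Gamma}^x$ is the sample covariance. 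The key algebraic observation, via the Woodbury identity applied to $(\wh{\bm\Lambda}\wh{\bm\Lambda}^\prime + \wh{\bm\Gamma}^\xi)^{-1}$, is that $\wh{\bm\Lambda}^{\text{\tiny QML}}$ spans (up to an $r\times r$ rotation) the same $r$-dimensional subspace as the $r$ leading eigenvectors of $(\wh{\bm\Gamma}^\xi)^{-1/2}\wh{\bm\Gamma}^x(\wh{\bm\Gamma}^\xi)^{-1/2}$, whereas $\wh{\bm\Lambda}$ spans the leading eigenspace of $\wh{\bm\Gamma}^x$ itself. So the entire argument reduces to controlling the discrepancy between these two eigenspaces, which is a perturbation of the identity weighting by $(\wh{\bm\Gamma}^\xi)^{-1/2}$.

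\textbf{Key steps.}
First, I would establish that both $\wh{\bm\Lambda}$ and $\wh{\bm\Lambda}^{\text{\tiny QML}}$ satisfy their respective eigen-equations and that each has column norms of order $\sqrt n$, using Theorem \ref{th:CLTL}(b) for the PC side and its analogue (built from the normal equation plus Lemma \ref{lem:Gxi}) for the QML side. Second, using \eqref{eq:lindiv}–\eqref{eq:evalidio}: the $r$ leading eigenvalues of $\wh{\bm\Gamma}^x$ diverge like $n$, while $\wh{\bm\Gamma}^\xi$ (hence $\wh{\bm\Gamma}^\xi$ rescaled) has bounded operator norm and, by the positive-definiteness assumption, bounded-below smallest eigenvalue; this is the crucial quantitative input that makes the weighting asymptotically negligible relative to the spiked part. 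Third, write $(\wh{\bm\Gamma}^\xi)^{-1/2}\wh{\bm\Gamma}^x(\wh{\bm\Gamma}^\xi)^{-1/2} = (\wh{\bm\Gamma}^\xi)^{-1/2}\wh{\bm\Lambda}\wh{\bm\Lambda}^\prime(\wh{\bm\Gamma}^\xi)^{-1/2} + (\wh{\bm\Gamma}^\xi)^{-1/2}\wh{\bm\Gamma}^\xi_{\text{res}}(\wh{\bm\Gamma}^\xi)^{-1/2}$ where the first term is rank $r$ with eigenvalues of order $n$ and the second is $O_{\mathrm P}(1)$ in operator norm; then a Davis–Kahan / $\sin\Theta$ bound gives that the leading eigenprojection of the weighted matrix differs from that of $(\wh{\bm\Gamma}^\xi)^{-1/2}\wh{\bm\Lambda}\wh{\bm\Lambda}^\prime(\wh{\bm\Gamma}^\xi)^{-1/2}$ by $O_{\mathrm P}(n^{-1})$. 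Fourth, undo the weighting: the leading eigenspace of $(\wh{\bm\Gamma}^\xi)^{-1/2}\wh{\bm\Lambda}\wh{\bm\Lambda}^\prime(\wh{\bm\Gamma}^\xi)^{-1/2}$ is exactly $(\wh{\bm\Gamma}^\xi)^{-1/2}\text{colspan}(\wh{\bm\Lambda})$, so transforming back shows $\text{colspan}(\wh{\bm\Lambda}^{\text{\tiny QML}})$ equals $\text{colspan}(\wh{\bm\Lambda})$ up to an $O_{\mathrm P}(n^{-1})$ perturbation. Fifth, pin down the rotation: both estimators are normalized by Assumption \ref{ass:ident}(a) ($n^{-1}\bm\Lambda^\prime\bm\Lambda$ diagonal with distinct entries, by Assumption \ref{ass:eval}) and signed by Assumption \ref{ass:sign}, so the rotation connecting the two bases is forced to be $\mathbf I_r + O_{\mathrm P}(n^{-1})$ — here one uses that the $r$ diagonal entries are asymptotically distinct and separated by order $n$, so a symmetric near-identity matrix that is also (nearly) diagonalizing must be near $\mathbf I_r$. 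Combining the subspace bound with the rotation bound and the $\sqrt n$ scaling of the columns yields part (a); part (b) follows since $\Vert \wh{\bm\lambda}_i^{\text{\tiny QML}} - \wh{\bm\lambda}_i\Vert \le \Vert \wh{\bm\Lambda}^{\text{\tiny QML}} - \wh{\bm\Lambda}\Vert$ and the assumed row-wise boundedness propagates through the perturbation.

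\textbf{Main obstacle.}
I expect the genuinely delicate point to be Step 5, the control of the $r\times r$ rotation matrix. The Davis–Kahan argument only controls subspaces, and converting a subspace-closeness bound into an entrywise (or Frobenius) bound on $\wh{\bm\Lambda}^{\text{\tiny QML}} - \wh{\bm\Lambda}$ requires identifying the correct orthogonal (indeed, in the normalization used here, not quite orthogonal) transformation and showing it is $O_{\mathrm P}(n^{-1})$-close to $\mathbf I_r$. This hinges essentially on Assumption \ref{ass:eval}: the eigengaps among the $r$ spiked directions are themselves of order $n$, which is what prevents the rotation from being merely $O_{\mathrm P}(1)$. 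A secondary technical nuisance is that $\wh{\bm\Lambda}^{\text{\tiny QML}}$ is only implicitly defined as a maximizer, so one must first argue existence of a solution to the normal equations lying in a neighborhood of the PC solution (e.g. by a contraction / Brouwer argument on the concentrated objective, or by invoking the identification results together with consistency of $\wh{\bm\Gamma}^x$ and $\wh{\bm\Gamma}^\xi$ from Lemma \ref{lem:Gxi}) before the perturbation bounds can even be stated. Once these two points are handled, the remaining estimates are routine consequences of \eqref{eq:lindiv}–\eqref{eq:evalidio} and the $\sqrt T$-consistency of sample second moments.
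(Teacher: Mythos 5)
Your high-level intuition is the right one — the normalized idiosyncratic quadratic form $\underline{\bm v}^\prime\underline{\bm\Gamma}^\xi\underline{\bm v}/n$ is $O(1/n)$ while the spiked part is $O(1)$, and this is exactly the quantitative input the paper uses — but the mechanism you build around it has two genuine gaps.

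First, the characterization of $\wh{\bm\Lambda}^{\text{\tiny QML}}$ as (a rotation of) the leading eigenvectors of $(\wh{\bm\Gamma}^\xi)^{-1/2}\wh{\bm\Gamma}^x(\wh{\bm\Gamma}^\xi)^{-1/2}$ does not apply to the likelihood \eqref{eq:LL0} that Theorem \ref{th:QML} is about. That weighted-eigenproblem characterization belongs to the \emph{diagonal} case \eqref{eq:LL00} under Bai--Li's identification $n^{-1}\bm\Lambda^\prime(\bm\Sigma^\xi)^{-1}\bm\Lambda$ diagonal; under the paper's Assumption \ref{ass:ident}(a) the columns of $(\bm\Sigma^\xi)^{-1/2}\bm\Lambda$ are not individually eigenvectors, only an invariant subspace. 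More importantly, when $\underline{\bm\Gamma}^\xi$ is a fully unrestricted $n\times n$ matrix, the first-order condition in $\underline{\bm\Gamma}^\xi$ forces \emph{exact} fit, $\wh{\bm\Lambda}^{\text{\tiny QML}}\wh{\bm\Lambda}^{\text{\tiny QML}\prime}+\wh{\bm\Gamma}^{\xi,\text{\tiny QML}}=\wh{\bm\Gamma}^x$, after which the loadings FOC is an identity and your weighted eigenproblem becomes circular: its leading eigenspace is, by construction, whatever $(\wh{\bm\Gamma}^{\xi,\text{\tiny QML}})^{-1/2}\wh{\bm\Lambda}^{\text{\tiny QML}}$ spans. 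The decomposition of $\wh{\bm\Gamma}^x$ into a rank-$r$ part plus a remainder is wildly non-unique (e.g.\ $\wh{\bm\Lambda}^{\text{\tiny QML}}=\mbf 0$, $\wh{\bm\Gamma}^{\xi,\text{\tiny QML}}=\wh{\bm\Gamma}^x$ satisfies the FOCs); what selects the right one is the restriction of the search to admissible parameters with $\mu_1(\underline{\bm\Gamma}^\xi)=O(1)$ and $\underline d_j$ bounded away from zero. The paper's proof starts from the exact-fit identity, parametrizes through the SVD of $n^{-1/2}\underline{\bm\Lambda}$ (with $\underline{\mbf U}=\mbf I_r$ forced by Assumptions \ref{ass:ident}(a) and \ref{ass:sign}), and compares the loss $\mathcal L_{0j}$, which carries the $O(1/n)$ idiosyncratic term, with $\mathcal L_{1j}$, whose exact minimizer is the PC solution; no Davis--Kahan step and no estimator of $\bm\Gamma^\xi$ are needed. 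Note also that there is no operator-norm-consistent estimator of the full $\bm\Gamma^\xi$ available here, so "consistency of $\wh{\bm\Gamma}^\xi$" cannot be invoked; only the bound $\mu_1(\underline{\bm\Gamma}^\xi)\le M_{2\xi}$ for admissible candidates is used.

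Second, part (b) does not follow from part (a) via $\Vert\wh{\bm\lambda}_i^{\text{\tiny QML}}-\wh{\bm\lambda}_i\Vert\le\Vert\wh{\bm\Lambda}^{\text{\tiny QML}}-\wh{\bm\Lambda}\Vert$. Part (a) gives $\Vert\wh{\bm\Lambda}^{\text{\tiny QML}}-\wh{\bm\Lambda}\Vert=O_{\mathrm P}(n^{-1/2})$, so this inequality only delivers $O_{\mathrm P}(n^{-1/2})$ for a single row, a factor $\sqrt n$ short of the claimed $O_{\mathrm P}(n^{-1})$. The row-wise rate requires the additional input that individual eigenvector rows satisfy $\Vert\wh{\mbf v}_i^x\Vert=O_{\mathrm P}(n^{-1/2})$ (Lemma \ref{lem:covarianzerighe}) together with a row-wise version of the subspace bound, $\sqrt n\,\Vert\wh{\mbf v}_i^{\text{\tiny QML}}-\wh{\mbf v}_i^x\Vert=O_{\mathrm P}(n^{-1})$, which needs its own argument and is not implied by the operator-norm statement.
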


Consistency and asymptotic normality of the QML estimator of the loadings maximizing the log-likelihood \eqref{eq:LL0} immediately follow.

\begin{theorem}\label{cor:QMLcons}
Under Assumptions \ref{ass:common} through \ref{ass:sign} and assuming also that $\bm\Gamma^\xi$ is positive definite, as $n,T\to\infty$:
\begin{compactenum}[(a)]
\item 
$\min(n,\sqrt T)\Vert\wh{\bm\lambda}_i^{\text{\tiny \upshape QML}}-{\bm\lambda}_i\Vert = O_{\mathrm {P}}(1)$, for any given $i=1,\ldots,n$,, and, if $n^{-1}\sqrt T\to 0$
$
\sqrt T(\wh{\bm\lambda}_i^{\text{\tiny \upshape QML}}-\bm\lambda_i)\to_d\mathcal N\l(\mbf 0_r, \bm\Phi_i\r),
$
where $\bm\Phi_i$ is defined in Assumption \ref{ass:CLT};
\item $\min(n,\sqrt T)\Vert n^{-1/2}({\wh{\bm\Lambda}^{\text{\tiny \upshape QML}}-\bm\Lambda})\Vert=O_{\mathrm P}(1)$.
\end{compactenum}
\end{theorem}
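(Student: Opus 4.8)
The plan is to deduce Theorem \ref{cor:QMLcons} directly from the equivalence result in Theorem \ref{th:QML} together with the PC asymptotics already established in Theorem \ref{th:CLTL}, using nothing more than the triangle inequality and Slutsky's lemma. All the substantive work is contained in Theorem \ref{th:QML}; the present statement is essentially a matter of adding the QML-versus-PC bound to the PC-versus-truth bound and checking that the former is the faster of the two rates.

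First I would handle part (a). Writing the decomposition
\[
\wh{\bm\lambda}_i^{\text{\tiny \upshape QML}}-\bm\lambda_i=\l(\wh{\bm\lambda}_i^{\text{\tiny \upshape QML}}-\wh{\bm\lambda}_i\r)+\l(\wh{\bm\lambda}_i-\bm\lambda_i\r),
\]
I would bound the Euclidean norm by the sum of the two norms. By Theorem \ref{th:QML}(b), $\Vert\wh{\bm\lambda}_i^{\text{\tiny \upshape QML}}-\wh{\bm\lambda}_i\Vert=O_{\mathrm P}(n^{-1})$, and since $n^{-1}\le(\min(n,\sqrt T))^{-1}$ for all $n,T$, this is in particular $O_{\mathrm P}((\min(n,\sqrt T))^{-1})$; by Theorem \ref{th:CLTL}(a), $\Vert\wh{\bm\lambda}_i-\bm\lambda_i\Vert=O_{\mathrm P}((\min(n,\sqrt T))^{-1})$. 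Summing gives the first claim, $\min(n,\sqrt T)\Vert\wh{\bm\lambda}_i^{\text{\tiny \upshape QML}}-\bm\lambda_i\Vert=O_{\mathrm P}(1)$. For the limit law, I would multiply the same decomposition by $\sqrt T$: under $n^{-1}\sqrt T\to0$ the term $\sqrt T(\wh{\bm\lambda}_i-\bm\lambda_i)\to_d\mathcal N(\mbf 0_r,\bm\Phi_i)$ by Theorem \ref{th:CLTL}(a), while $\sqrt T(\wh{\bm\lambda}_i^{\text{\tiny \upshape QML}}-\wh{\bm\lambda}_i)=O_{\mathrm P}(\sqrt T/n)=o_{\mathrm P}(1)$ again because $n^{-1}\sqrt T\to0$, so Slutsky's lemma yields $\sqrt T(\wh{\bm\lambda}_i^{\text{\tiny \upshape QML}}-\bm\lambda_i)\to_d\mathcal N(\mbf 0_r,\bm\Phi_i)$.

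Part (b) is the same computation at the level of the (spectral or Frobenius) matrix norm: $\Vert n^{-1/2}(\wh{\bm\Lambda}^{\text{\tiny \upshape QML}}-\bm\Lambda)\Vert\le\Vert n^{-1/2}(\wh{\bm\Lambda}^{\text{\tiny \upshape QML}}-\wh{\bm\Lambda})\Vert+\Vert n^{-1/2}(\wh{\bm\Lambda}-\bm\Lambda)\Vert$, with the first summand $O_{\mathrm P}(n^{-1})$ by Theorem \ref{th:QML}(a) and the second $O_{\mathrm P}((\min(n,\sqrt T))^{-1})$ by Theorem \ref{th:CLTL}(b), so the sum is $O_{\mathrm P}((\min(n,\sqrt T))^{-1})$. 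The only point requiring any attention is the bookkeeping of rates — that the QML--PC gap $O_{\mathrm P}(n^{-1})$ is never slower than the PC error $O_{\mathrm P}((\min(n,\sqrt T))^{-1})$, hence always absorbed, and that $n^{-1}\sqrt T\to0$ is exactly the condition making that gap negligible at the $\sqrt T$ scale. There is no real obstacle at this stage: the corollary is a two-line argument once Theorem \ref{th:QML} is available, and the derivation does not touch the QML estimators of the idiosyncratic variances at all, which is the point emphasized in the introduction.
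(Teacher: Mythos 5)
Your proof is correct and follows essentially the same route as the paper: a triangle-inequality decomposition of $\wh{\bm\lambda}_i^{\text{\tiny \upshape QML}}-\bm\lambda_i$ into the QML--PC gap (controlled by Theorem \ref{th:QML}) plus the PC estimation error, followed by Slutsky's lemma for the limit law under $n^{-1}\sqrt T\to 0$. The only cosmetic difference is that the paper inserts one extra intermediate step through the unfeasible OLS estimator via Corollary \ref{cor:PCOLS}, whereas you invoke Theorem \ref{th:CLTL}(a) directly for the PC rate and asymptotic normality; the rate bookkeeping in both versions is identical.
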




The results in Theorems \ref{th:QML} and \ref{cor:QMLcons} hold when considering the log-likelihood \eqref{eq:LL0} which depends on any generic idiosyncratic covariance matrix provided that it is positive definite and Assumption \ref{ass:idio} is satisfied. Therefore, from Theorem \ref{th:QML} it follows that if we replace in the log-likelihood the full idiosyncratic covariance with a diagonal covariance matrix as in \citet{baili12,baili16} or if we also impose homoskedasticity as in \citet{tippingbishop99}, we still get QML estimators that are asymptotically equivalent to the PC estimator.

\begin{cor}\label{cor:sett}
Under Assumptions \ref{ass:common} through \ref{ass:sign}, as $n,T\to\infty$, 
\begin{inparaenum}[(a)]
\item [\upshape (a)] $n\Vert \wh{\bm\lambda}_i - \wh{\bm\lambda}_i^{\text{\tiny \upshape QML,E}}\Vert=O_{\mathrm P}(1)$;
\item [\upshape (b)] $n\Vert \wh{\bm\lambda}_i - \wh{\bm\lambda}_i^{\text{\tiny \upshape QML,E}_0}\Vert=O_{\mathrm P}(1)$.
\end{inparaenum}
\end{cor}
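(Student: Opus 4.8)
The plan is to deduce Corollary~\ref{cor:sett} from Theorem~\ref{th:QML}. The likelihoods $\ell_{\text{\tiny E}}$ in \eqref{eq:LL00} and the homoskedastic one leading to \eqref{eq:TB99} are nothing but \eqref{eq:LL0} with the idiosyncratic-covariance parameter $\underline{\bm\Gamma}^\xi$ restricted to, respectively, positive diagonal matrices and scalar matrices $\sigma^2\mbf I_n$. As noted right after Theorem~\ref{cor:QMLcons}, Theorem~\ref{th:QML} remains valid when the true idiosyncratic covariance $\bm\Gamma^\xi$ is replaced by any positive-definite matrix satisfying Assumption~\ref{ass:idio}; inspecting its proof, $\bm\Gamma^\xi$ enters only through the facts that it is positive definite, with largest eigenvalue bounded uniformly in $n$ (cf.\ \eqref{eq:evalidio}) and smallest eigenvalue bounded away from $0$. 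So the strategy for part~(a) is to run the argument of Theorem~\ref{th:QML} on the first-order conditions of \eqref{eq:LL00}, with a suitable surrogate in the role of $\bm\Gamma^\xi$, after checking the surrogate has these spectral properties.

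For part~(a) take the surrogate to be $\bm\Sigma^\xi=\text{dg}(\sigma_1^2\cdots\sigma_n^2)$. By Assumption~\ref{ass:idio}(a), $C_\xi\le\sigma_i^2\le C_\xi^\prime$ for every $i$, so $\bm\Sigma^\xi$ is positive definite with spectrum in $[C_\xi,C_\xi^\prime]$ uniformly in $n$, and, being diagonal, it satisfies Assumption~\ref{ass:idio}(b)--(c) trivially. Maximizing \eqref{eq:LL00} is then maximizing \eqref{eq:LL0} over $\underline{\bm\Lambda}$ subject to the identification constraints and over $\underline{\bm\Gamma}^\xi$ restricted to positive diagonal matrices; the argument of Theorem~\ref{th:QML} applied to the first-order conditions of \eqref{eq:LL00}, with $\bm\Sigma^\xi$ in place of $\bm\Gamma^\xi$, then gives $n\Vert n^{-1/2}(\wh{\bm\Lambda}^{\text{\tiny QML,E}}-\wh{\bm\Lambda})\Vert=O_{\mathrm P}(1)$, and hence the row-wise bound $n\Vert\wh{\bm\lambda}_i^{\text{\tiny \upshape QML,E}}-\wh{\bm\lambda}_i\Vert=O_{\mathrm P}(1)$.

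For part~(b) there is a closed form, so I would argue directly. From \eqref{eq:TB99} and \eqref{eq:estL},
\[
\wh{\bm\Lambda}^{\text{\tiny QML,E}_0}-\wh{\bm\Lambda}=\wh{\mbf V}^x\Bigl[\bigl(\wh{\mbf M}^x-\wh{\sigma}^{2\text{\tiny QML,E}_0}\mbf I_r\bigr)^{1/2}-\bigl(\wh{\mbf M}^x\bigr)^{1/2}\Bigr],\qquad \wh{\sigma}^{2\text{\tiny QML,E}_0}=\frac{1}{n-r}\sum_{j=r+1}^n\wh{\mu}_j^x.
\]
Since $\E[\text{tr}(\wh{\bm\Gamma}^x)]=\sum_{i=1}^n\Var(x_{it})=O(n)$ (with $\Var(x_{it})$ bounded uniformly in $i$ by Assumptions~\ref{ass:common}--\ref{ass:idio}), the Markov inequality gives $\text{tr}(\wh{\bm\Gamma}^x)=O_{\mathrm P}(n)$, whence $0\le\wh{\sigma}^{2\text{\tiny QML,E}_0}\le(n-r)^{-1}\text{tr}(\wh{\bm\Gamma}^x)=O_{\mathrm P}(1)$. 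Also $n^{-1}\wh{\mbf M}^x=n^{-1}\wh{\bm\Lambda}^\prime\wh{\bm\Lambda}$ converges in probability to a positive diagonal matrix $\bm\Sigma_\Lambda$ (Theorem~\ref{th:CLTL}(b), Assumptions~\ref{ass:common}(a) and \ref{ass:ident}(a)), so $\wh{\mu}_r^x$ is bounded below by a positive multiple of $n$ with probability tending to one; hence, for each $j=1,\ldots,r$,
\[
\bigl(\wh{\mu}_j^x-\wh{\sigma}^{2\text{\tiny QML,E}_0}\bigr)^{1/2}-\bigl(\wh{\mu}_j^x\bigr)^{1/2}=\frac{-\,\wh{\sigma}^{2\text{\tiny QML,E}_0}}{\bigl(\wh{\mu}_j^x-\wh{\sigma}^{2\text{\tiny QML,E}_0}\bigr)^{1/2}+\bigl(\wh{\mu}_j^x\bigr)^{1/2}}=O_{\mathrm P}(n^{-1/2}),
\]
so the bracketed diagonal matrix above has spectral norm $O_{\mathrm P}(n^{-1/2})$. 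The $i$th row of $\wh{\bm\Lambda}^{\text{\tiny QML,E}_0}-\wh{\bm\Lambda}$ equals $\wh{\bm\lambda}_i^\prime(\wh{\mbf M}^x)^{-1/2}$ times that matrix (as $\wh{\mbf V}^x=\wh{\bm\Lambda}(\wh{\mbf M}^x)^{-1/2}$), and $\Vert\wh{\bm\lambda}_i^\prime(\wh{\mbf M}^x)^{-1/2}\Vert\le\Vert\wh{\bm\lambda}_i\Vert(\wh{\mu}_r^x)^{-1/2}=O_{\mathrm P}(n^{-1/2})$ because $\Vert\wh{\bm\lambda}_i\Vert=O_{\mathrm P}(1)$ by Theorem~\ref{th:CLTL}(a); therefore $\Vert\wh{\bm\lambda}_i^{\text{\tiny \upshape QML,E}_0}-\wh{\bm\lambda}_i\Vert=O_{\mathrm P}(n^{-1})$, which is~(b). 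Equivalently, (b) follows as in part~(a) with surrogate $\sigma^2\mbf I_n$ for any fixed $\sigma^2\in[C_\xi,C_\xi^\prime]$, but the closed form is shorter.

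The only step requiring care is the one in part~(a): verifying that the estimates in the proof of Theorem~\ref{th:QML} are uniform over the admissible parameter set, so that (i) restricting $\underline{\bm\Gamma}^\xi$ to diagonal matrices does not disturb the asymptotic expansions of the first-order conditions exploited there, and (ii) one can rule out, without separately analysing the QML estimators $\wh{\sigma}_i^2$, that the maximizing diagonal matrix has entries drifting to $0$ or $\infty$. Part~(b) needs no such care, being an explicit computation resting only on $\wh{\sigma}^{2\text{\tiny QML,E}_0}=O_{\mathrm P}(1)$ and the order-$n$ magnitude of the $r$ leading sample eigenvalues, both already available.
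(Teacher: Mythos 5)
Your proposal is correct. Part (a) is essentially the paper's own argument: the paper likewise proves (a) by re-running the proof of Theorem \ref{th:QML} with the log-likelihood \eqref{eq:LL00} in place of \eqref{eq:LL0} and with $\underline{\bm\Sigma}^\xi$, $\wh{\bm\Sigma}^{\xi,\text{\tiny QML}}$ substituted for $\underline{\bm\Gamma}^\xi$, $\wh{\bm\Gamma}^{\xi,\text{\tiny QML}}$, the only property used being positive definiteness with spectrum bounded in $[C_\xi,C_\xi^\prime]$ (Assumption \ref{ass:idio}(a)); the uniformity caveats you flag are left implicit there as well. For part (b), however, you take a genuinely different route: the paper again just repeats the substitution argument with $\sigma^2\mbf I_n$, whereas you compute directly from the Tipping--Bishop closed form \eqref{eq:TB99}. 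Your key observation — that $\wh{\sigma}^{2\text{\tiny QML,E}_0}$ is the average of the $n-r$ trailing sample eigenvalues and is therefore bounded by $(n-r)^{-1}\mathrm{tr}(\wh{\bm\Gamma}^x)=O_{\mathrm P}(1)$ by a one-line Markov argument — is worth emphasizing, because it sidesteps exactly the difficulty the paper raises in Section \ref{sec:PCQML} (that $\wh{\mu}^x_{r+1}$ is not a consistent estimator of $\mu^x_{r+1}$ in high dimensions): no control of any individual tail eigenvalue is needed, only of their average. The remaining ingredients of your computation (the $r$ leading eigenvalues of order $n$, the difference of square roots being $O_{\mathrm P}(n^{-1/2})$, and $\sqrt n\Vert\wh{\mbf v}_i^x\Vert=O_{\mathrm P}(1)$) are all already established in the paper's lemmas, so your version of (b) is shorter, fully explicit, and does not lean on the unproved uniformity of Theorem \ref{th:QML}'s expansions over a restricted parameter class — a small but real gain in rigor over the paper's one-sentence treatment.
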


From these results and Theorem \ref{cor:QMLcons}, we directly have a proof of Theorem \ref{th:CLTLBAI} by \citet{baili16}, which now holds under our identifying constraint in Assumption \ref{ass:ident}(a), and we also prove the often quoted statement that  the PC and QML estimators are asymptotically equivalent under sphericity of idiosyncratic components.

\begin{rem}\label{rem:lobuttiamo}
\upshape{
The appealing feature of the log-likelihood \eqref{eq:LL0}, is that it does not depend on the factors, and, under the identifying constraint $T^{-1}{\bm F'\bm F}=\mbf I_r$ for all $T\in\mathbb N$, it does not depend on the second moments of the factors either. Still, if we follow the classical approach to consider the vector of factors  $\bm{\mathcal F}$ as an $rT$-dimensional sequence of deterministic constants \citep{lawleymaxwell71}, then a log-likelihood alternative to \eqref{eq:LL00} can be considered, namely:
\beq\label{eq:ZLL}
\ell_{\text{\tiny E}}(\bm{\mathcal X};\underline{\bm\varphi},\underline{\bm{\mathcal F}})= -\frac T2\log\det(\underline{\bm\Gamma}^\xi)-\frac 12\sum_{t=1}^T
(\mbf x_t-\underline{\bm\Lambda}\,\underline{\mbf F}_t)^\prime
(\underline{\bm\Gamma}^\xi)^{-1}
(\mbf x_t-\underline{\bm\Lambda}\,\underline{\mbf F}_t).
\eeq
It is straightforward to see that, for given factors, the loadings maximizing \eqref{eq:ZLL} are given by their OLS estimator, while, for given loadings, the factors maximizing \eqref{eq:ZLL} are given by their GLS estimator. Hence, full maximization of \eqref{eq:ZLL} requires knowing, or estimating, the factors too. Although seemingly simpler this approach presents at least two major drawbacks with respect to the approach followed in the proof of Theorem \ref{th:QML} (see also the comments in 
\citealp[p.440]{baili12}, and \citealp[pp.129-130]{AR56}).

First, by iterating, between OLS and GLS we can think of finding a solution. This approach is similar to the one recently considered by \citet{pz23}, but the convergence properties of such algorithm are not proved nor discussed, so it is unclear how this estimator is related to the maximizer of the full likelihood \eqref{eq:LL0}. Moreover, this strategy would require a positive definite estimator of $\bm\Gamma^\xi$ in order to compute the GLS estimator of the factors, a hard task when $n$ is large. This, in general, requires again mis-specifying or regularizing the log-likelihood \eqref{eq:ZLL}, e.g., by replacing $\bm\Gamma^\xi$ with the diagonal $\bm\Sigma^\xi$. The asymptotic properties of the estimator of the loadings will then explicitly  depend on the properties of an estimator of the idiosyncratic covariance, or at least of its diagonal elements. As shown in Theorem \ref{cor:QMLcons}, this is not the case for the QML estimator maximizing the log-likelihood \eqref{eq:LL0}.


Second, if the factors are treated as random variables, as, e.g., in the popular Factor Augmented VAR models \citep{BBE05}, then they cannot be considered as constant parameters, which means that \eqref{eq:ZLL} is not the full log-likelihood of the data $\bm{\mathcal X}$ but it is just the conditional log-likelihood of $\bm{\mathcal X}$ given the factors, so, in principle, not all information is used to estimate the loadings. Indeed, if the factors are random variables then the log-likelihood \eqref{eq:LL00} is decomposed as
\beq
\ell_{\text{\tiny E}}(\bm{\mathcal X};\underline{\bm\varphi})=\ell_{\text{\tiny E}}(\bm{\mathcal X}|\bm{\mathcal F};\underline{\bm\varphi}) +\ell_{\text{\tiny E}}(\bm{\mathcal F};\underline{\bm\varphi})-\ell_{\text{\tiny E}}(\bm{\mathcal F}|\bm{\mathcal X};\underline{\bm\varphi}),\label{eq:bayesiid}
\eeq 
where $\ell_{\text{\tiny E}}(\bm{\mathcal X}|\bm{\mathcal F};\underline{\bm\varphi})$ coincides with \eqref{eq:ZLL} but  it does not coincide with $\ell_{\text{\tiny E}}(\bm{\mathcal X};\underline{\bm\varphi})$ anymore. 

This last point has  both a theoretical and an applied implication. From the theory point of view, to directly show from \eqref{eq:bayesiid} that the QML estimator of the loadings maximizing $\ell_{\text{\tiny E}}(\bm{\mathcal X};\underline{\bm\varphi})$ is asymptotically equivalent to the unfeasible OLS maximizing $\ell_{\text{\tiny E}}(\bm{\mathcal X}|\bm{\mathcal F};\underline{\bm\varphi})$ would require showing that $\ell_{\text{\tiny E}}(\bm{\mathcal F};\underline{\bm\varphi})$ and $\ell_{\text{\tiny E}}(\bm{\mathcal F}|\bm{\mathcal X};\underline{\bm\varphi})$ are asymptotically negligible. This is the argument sketched by \citet{BT11}, but to make it formal is not an easy task. Consider the simplest case in which the factors are treated as serially uncorrelated, then, while $\ell_{\text{\tiny E}}(\bm{\mathcal F};\underline{\bm\varphi})$ does not depend on any parameter and can be discarded, the expression of $\ell_{\text{\tiny E}}(\bm{\mathcal F}|\bm{\mathcal X};\underline{\bm\varphi})$ will depend on the conditional moments (mean and covariance) of the factors given $\bm{\mathcal X}$. These in turn have simple expressions only if we are willing to assume Gaussianity, in which case the conditional mean is just a linear projection. Otherwise computation of those moments is not straightforward. The proof of Theorem \ref{th:QML} relies instead only on $\ell_{\text{\tiny E}}(\bm{\mathcal X};\underline{\bm\varphi})$, so, as noticed above, it does not require knowing the factors or their conditional moments.

Finally, from a practical point of view, we could use the right-hand-side of \eqref{eq:bayesiid} to compute the QML estimator by means of the Expectation Maximization (EM) algorithm, which simplifies our task since it allows us to discard $\ell_{\text{\tiny E}}(\bm{\mathcal F}|\bm{\mathcal X};\underline{\bm\varphi})$ \citep{wu83}. However, once again we would still have to compute the conditional moments of the factors when in the E-step we need compute the conditional expectation of $\ell_{\text{\tiny E}}(\bm{\mathcal X}|\bm{\mathcal F};\underline{\bm\varphi}) +\ell_{\text{\tiny E}}(\bm{\mathcal F};\underline{\bm\varphi})$ given $\bm{\mathcal X}$. 
This introduces a correction in the estimation of the loadings and the resulting estimator is not given by a simple OLS anymore. 

}
\end{rem}

\section{Asymptotic covariance matrices of the QML and PC estimators}\label{sec:fisher}

In this section,  we focus on the mis-specified log-likelihood \eqref{eq:LL00}, which is commonly used in empirical work \citep{baili12,baili16}. For such log-likelihood, denote the Fisher information and the population Hessian matrices for $\bm\lambda_i$, $i=1,\ldots, n$, as:
\begin{align}
\bm{\mathcal I}_i(\bm{\mathcal X};\bm\varphi)&=\lim_{T\to\infty} \E\l[
\l(\frac 1{\sqrt T}
\l.\frac{\partial \ell_{\text{\tiny E}}(\bm{\mathcal X};\underline{\bm\varphi})}{\partial \underline{\bm\lambda}_i^\prime}\r\vert_{\underline{\bm\varphi}={{\bm\varphi}}}
\r)
\l(\frac 1{\sqrt T}\l.\frac{\partial \ell_{\text{\tiny E}}(\bm{\mathcal X};\underline{\bm\varphi})}{\partial \underline{\bm\lambda}_i}\r\vert_{\underline{\bm\varphi}={{\bm\varphi}}}
\r)\,
\r],\nn\\
\bm{\mathcal H}_{i}(\bm{\mathcal X};\bm\varphi)&=\lim_{T\to\infty} \E\l[
\frac 1T  \l.\frac{\partial^2 \ell_{\text{\tiny E}}(\bm{\mathcal X};\underline{\bm\varphi})}{\partial \underline{\bm\lambda}_i^\prime\partial \underline{\bm\lambda}_i}\r\vert_{\underline{\bm\varphi}={{\bm\varphi}}}
\r].\nn
\end{align}
From QML theory the asymptotic covariance of the QML estimator should be given by:
\beq\label{eq:QMLADVar}
\text{A$\!\mathbb V$ar}_0(\sqrt T \wh{\bm\lambda}_i^{\text{\tiny QML,E}})= \{\bm{\mathcal H}_i(\bm{\mathcal X};\bm\varphi)\}^{-1}\bm{\mathcal I}_i(\bm{\mathcal X};\bm\varphi)\{\bm{\mathcal H}_i(\bm{\mathcal X};\bm\varphi)\}^{-1}.
\eeq
This would be the matrix to estimate if we were to conduct QML based inference on the loadings. 

In general, estimation of \eqref{eq:QMLADVar} is very difficult given the complex expressions of the Hessian and Fisher information matrices (see also (A.15) and (A.42) in the Supplementary Material). Moreover, from Theorem \ref{cor:QMLcons} we know that, in fact, if $n^{-1}\sqrt T\to 0$, as $n,T\to\infty$, then the asymptotic covariance of the QML estimator is
\beq\label{eq:QMLADVar1}
\text{A$\!\mathbb V$ar}_1(\sqrt T \wh{\bm\lambda}_i^{\text{\tiny QML,E}})= \bm \Phi_i=\lim_{T\to\infty}\frac 1{T}\sum_{s,t=1}^T\E\l[\mbf F_t\mbf F_s^\prime \xi_{it}\xi_{is}\r],
\eeq
where we used the definition of $\bm\Phi_i$ in Assumption \ref{ass:CLT}. So what is the relation between the asymptotic covariances \eqref{eq:QMLADVar} and \eqref{eq:QMLADVar1}?

First, notice that \eqref{eq:QMLADVar1} coincides with the asymptotic covariance of the unfeasible OLS estimator, which, in turn, is the QML estimator maximizing the log-likelihood for an exact factor model conditional on observing the factors, i.e., 
\beq
\ell_{\text{\tiny E}}(\bm{\mathcal X}|\bm{\mathcal F};\underline{\bm\varphi})=-\frac T2\log\det(\underline{\bm\Sigma}^\xi)-\frac 12\sum_{t=1}^T (\mbf x_t-\underline{\bm\Lambda}\mbf F_t)^\prime (\underline{\bm\Sigma}^\xi)^{-1}(\mbf x_t-\underline{\bm\Lambda}\mbf F_t).\label{eq:LLXF00}
\eeq
It is easy to compute the Fisher information and the population Hessian matrices of \eqref{eq:LLXF00} for $\bm\lambda_i$: 
(see also (A.37) and (A.47) in the Supplementary Material)
\begin{align}
\bm{\mathcal I}_i(\bm{\mathcal X}|\bm{\mathcal F};\bm\varphi)&=\lim_{T\to\infty} \E\l[
\l(\frac 1{\sqrt T}
\l.\frac{\partial \ell_{\text{\tiny E}}(\bm{\mathcal X}|\bm{\mathcal F};\underline{\bm\varphi})}{\partial \underline{\bm\lambda}_i^\prime}\r\vert_{\underline{\bm\varphi}={{\bm\varphi}}}
\r)
\l(\frac 1{\sqrt T}\l.\frac{\partial \ell_{\text{\tiny E}}(\bm{\mathcal X}|\bm{\mathcal F};\underline{\bm\varphi})}{\partial \underline{\bm\lambda}_i}\r\vert_{\underline{\bm\varphi}={{\bm\varphi}}}
\r)\,
\r]
=\frac 1{\sigma_i^4}\bm\Phi_i,\nn\\
\bm{\mathcal H}_{i}(\bm{\mathcal X}|\bm{\mathcal F};\bm\varphi)&=\lim_{T\to\infty} \E\l[
\frac 1T \l.\frac{\partial^2 \ell_{\text{\tiny E}}(\bm{\mathcal X}|\bm{\mathcal F};\underline{\bm\varphi})}{\partial \underline{\bm\lambda}_i^\prime\partial \underline{\bm\lambda}_i}\r\vert_{\underline{\bm\varphi}={{\bm\varphi}}}
\r] =-\frac 1{\sigma_i^2}\mbf I_r,\nn
\end{align}
where we imposed orthonormality of the factors as required by Assumption \ref{ass:ident}(b). Clearly,
\beq\label{eq:QMLADVar2}
\bm\Phi_i=\l\{\bm{\mathcal H}_{i}(\bm{\mathcal X}|\bm{\mathcal F};\bm\varphi)\r\}^{-1}
\bm{\mathcal I}_i(\bm{\mathcal X}|\bm{\mathcal F};\bm\varphi)
\l\{\bm{\mathcal H}_{i}(\bm{\mathcal X}|\bm{\mathcal F};\bm\varphi)\r\}^{-1}.
\eeq
It follows that for the asymptotic covariances \eqref{eq:QMLADVar} and \eqref{eq:QMLADVar1} to be equivalent it must be that the Fisher information and Hessian matrices of the full log-likelihood $\ell_{\text{\tiny E}}(\bm{\mathcal X};\underline{\bm\varphi})$ in \eqref{eq:LL0} and of the conditional log-likelihood $\ell_{\text{\tiny E}}(\bm{\mathcal X}|\bm{\mathcal F};\underline{\bm\varphi})$ in \eqref{eq:LLXF00} are asymptotically equivalent when computed in the true values of the parameters $\bm\varphi$. This is proved in the following theorem.



%

\begin{theorem}\label{prop:hessiani}
Under Assumptions \ref{ass:common} through \ref{ass:sign}, as $n,T\to\infty$, for any given $i=1,\ldots,n$,
\begin{compactenum}[(a)]
\item 
\[
\frac 1{\sqrt T}\l\Vert \l.\frac{\partial \ell_{\text{\tiny \upshape E}}(\bm{\mathcal X};\underline{\bm\varphi})}{\partial \underline{\bm\lambda}_i^\prime}\r\vert_{\underline{\bm\varphi}={\bm\varphi}}
-\l.\frac{\partial \ell_{\text{\tiny \upshape E}}(\bm{\mathcal X}|\bm{\mathcal F};\underline{\bm\varphi})}{\partial \underline{\bm\lambda}_i^\prime}\r\vert_{\underline{\bm\varphi}={\bm\varphi}}
\r\Vert=O_{\mathrm P}\l(\max\l(\frac 1{\sqrt n},\frac {\sqrt T}{n}\r)\r);
\]
\item 
\[
\frac 1T\l\Vert \l.\frac{\partial^2 \ell_{\text{\tiny \upshape E}}(\bm{\mathcal X};\underline{\bm\varphi})}{\partial \underline{\bm\lambda}_i^\prime\partial \underline{\bm\lambda}_i}\r\vert_{\underline{\bm\varphi}={\bm\varphi}}
-\l.\frac{\partial^2 \ell_{\text{\tiny \upshape E}}(\bm{\mathcal X}|\bm{\mathcal F};\underline{\bm\varphi})}{\partial \underline{\bm\lambda}_i^\prime\partial \underline{\bm\lambda}_i}\r\vert_{\underline{\bm\varphi}={\bm\varphi}}
\r\Vert=O_{\mathrm P}\l(\max\l(\frac 1n,\frac 1{\sqrt {nT}}\r)\r);
\]
\item 
\[
\frac 1{\sqrt T}\l\Vert \l.\frac{\partial \ell_{\text{\tiny \upshape E}}(\bm{\mathcal X};\underline{\bm\varphi})}{\partial \underline{\bm\lambda}_i^\prime}\r\vert_{\underline{\bm\varphi}=\wh{\bm\varphi}^{\text{\tiny \upshape QML,E}}}
-\l.\frac{\partial \ell_{\text{\tiny \upshape  E}}(\bm{\mathcal X}|\bm{\mathcal F};\underline{\bm\varphi})}{\partial \underline{\bm\lambda}_i^\prime}\r\vert_{\underline{\bm\varphi}=\wh{\bm\varphi}^{\text{\tiny \upshape QML,E}}}
\r\Vert=O_{\mathrm P}\l(\max\l(\frac 1{\sqrt n},\frac {\sqrt T}{n}\r)\r);
\]
\item 
\[
\frac 1T\l\Vert \l.\frac{\partial^2 \ell_{\text{\tiny \upshape E}}(\bm{\mathcal X};\underline{\bm\varphi})}{\partial \underline{\bm\lambda}_i^\prime\partial \underline{\bm\lambda}_i}\r\vert_{\underline{\bm\varphi}=\wh{\bm\varphi}^{\text{\tiny \upshape QML,E}}}
-\l.\frac{\partial^2 \ell_{\text{\tiny \upshape E}}(\bm{\mathcal X}|\bm{\mathcal F};\underline{\bm\varphi})}{\partial \underline{\bm\lambda}_i^\prime\partial \underline{\bm\lambda}_i}\r\vert_{\underline{\bm\varphi}=\wh{\bm\varphi}^{\text{\tiny \upshape QML,E}}}
\r\Vert=O_{\mathrm P}\l(\max\l(\frac 1n,\frac 1{\sqrt {nT}}\r)\r).
\]

\end{compactenum}
\end{theorem}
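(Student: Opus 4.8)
\noindent\emph{Proof strategy.} The plan is to write both pairs of derivatives in closed form and compare them term by term, the decisive point being that the Sherman--Morrison--Woodbury identity forces the leading term of the full log-likelihood \eqref{eq:LL00} to coincide with the conditional log-likelihood \eqref{eq:LLXF00}. Since $\underline{\bm\Sigma}^\xi$ is diagonal, $\ell_{\text{\tiny E}}(\bm{\mathcal X}|\bm{\mathcal F};\underline{\bm\varphi})$ separates across $i$, and a direct computation gives, at $\underline{\bm\varphi}=\bm\varphi$,
\[
\l.\frac{\partial \ell_{\text{\tiny E}}(\bm{\mathcal X}|\bm{\mathcal F};\underline{\bm\varphi})}{\partial \underline{\bm\lambda}_i}\r\vert_{\bm\varphi}=\frac 1{\sigma_i^2}\sum_{t=1}^T\mbf F_t\xi_{it},
\qquad
\l.\frac{\partial^2 \ell_{\text{\tiny E}}(\bm{\mathcal X}|\bm{\mathcal F};\underline{\bm\varphi})}{\partial \underline{\bm\lambda}_i\partial \underline{\bm\lambda}_i^\prime}\r\vert_{\bm\varphi}=-\frac 1{\sigma_i^2}\sum_{t=1}^T\mbf F_t\mbf F_t^\prime=-\frac T{\sigma_i^2}\mbf I_r,
\]
the last equality by Assumption \ref{ass:ident}(b). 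For the full log-likelihood, writing $\bm\Sigma^x=\bm\Lambda\bm\Lambda^\prime+\bm\Sigma^\xi$ (which is \emph{not} the true $\bm\Gamma^x$, since $\bm\Sigma^\xi\ne\bm\Gamma^\xi$), matrix differentiation gives
\[
\l.\frac{\partial \ell_{\text{\tiny E}}(\bm{\mathcal X};\underline{\bm\varphi})}{\partial \underline{\bm\lambda}_i}\r\vert_{\bm\varphi}=-T\,\bm\Lambda^\prime(\bm\Sigma^x)^{-1}\mbf e_i+\sum_{t=1}^T\l(\mbf e_i^\prime(\bm\Sigma^x)^{-1}\mbf x_t\r)\bm\Lambda^\prime(\bm\Sigma^x)^{-1}\mbf x_t,
\]
with $\mbf e_i$ the $i$th canonical vector in $\mathbb R^n$, and an analogous but longer expression for the second derivative.

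The key step is to write $(\bm\Sigma^x)^{-1}=(\bm\Sigma^\xi)^{-1}-(\bm\Sigma^\xi)^{-1}\bm\Lambda\,\mbf G\,\bm\Lambda^\prime(\bm\Sigma^\xi)^{-1}$ with $\mbf G=(\mbf I_r+\bm\Lambda^\prime(\bm\Sigma^\xi)^{-1}\bm\Lambda)^{-1}$; by Assumptions \ref{ass:common}(a) and \ref{ass:idio}(a) the matrix $\bm\Lambda^\prime(\bm\Sigma^\xi)^{-1}\bm\Lambda$ is positive definite of exact order $n$, so $\Vert\mbf G\Vert=O(n^{-1})$. Substituting and using $\mbf x_t=\bm\Lambda\mbf F_t+\bm\xi_t$ with Assumption \ref{ass:ind}, the scalars $\mbf e_i^\prime(\bm\Sigma^x)^{-1}\mbf x_t$ and the vectors $\bm\Lambda^\prime(\bm\Sigma^x)^{-1}\mbf x_t$ equal their conditional-likelihood counterparts ($\xi_{it}/\sigma_i^2$, respectively $\mbf F_t$) plus remainders proportional to $\mbf G$. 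Collecting, the $-T\bm\Lambda^\prime(\bm\Sigma^x)^{-1}\mbf e_i$ piece cancels \emph{exactly} against the deterministic (expectation) part $T\mbf G\bm\lambda_i/\sigma_i^2$ of $\mbf G\sum_t\xi_{it}\bm\Lambda^\prime(\bm\Sigma^\xi)^{-1}\bm\xi_t$, and what survives is $\mbf G$ times: (i) deterministic sums of order $T$, contributing $O(T/n)$; and (ii) mean-zero stochastic sums such as $\sum_t\mbf F_t\bm\xi_t^\prime(\bm\Sigma^\xi)^{-1}\bm\Lambda$ and the centered part of $\sum_t\xi_{it}\bm\Lambda^\prime(\bm\Sigma^\xi)^{-1}\bm\xi_t$, which are $O_{\mathrm P}(\sqrt{nT})$, hence $O_{\mathrm P}(\sqrt{T/n}\,)$ after multiplying by $\mbf G$; plus a term quadratic in $\mbf G$ that is of smaller order. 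Dividing by $\sqrt T$ turns these into $O(\sqrt T/n)$ and $O_{\mathrm P}(1/\sqrt n)$, giving (a). Repeating with one further differentiation — the leading surviving term becomes $-\sigma_i^{-2}T^{-1}\sum_t\bm h_t\bm h_t^\prime$ with $\bm h_t=\bm\Lambda^\prime(\bm\Sigma^x)^{-1}\mbf x_t=\mbf F_t+O_{\mathrm P}(n^{-1/2})$, whose $T^{-1}$-normalization is $-\sigma_i^{-2}\mbf I_r$ plus the same remainders — and dividing by $T$ instead of $\sqrt T$ gives (b).

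For (c) and (d) I would use that $\wh{\bm\varphi}^{\text{\tiny \upshape QML,E}}$ is a stationary point of \eqref{eq:LL00}: since $\ell_{\text{\tiny E}}(\bm{\mathcal X};\underline{\bm\varphi})$ depends on $\underline{\bm\Lambda}$ only through $\underline{\bm\Lambda}\underline{\bm\Lambda}^\prime$, the likelihood equation $\partial\ell_{\text{\tiny E}}(\bm{\mathcal X};\underline{\bm\varphi})/\partial\underline{\bm\lambda}_i|_{\wh{\bm\varphi}^{\text{\tiny \upshape QML,E}}}=\mbf 0_r$ holds regardless of the identification constraint. Hence (c) reduces to bounding $T^{-1/2}\Vert\partial\ell_{\text{\tiny E}}(\bm{\mathcal X}|\bm{\mathcal F};\underline{\bm\varphi})/\partial\underline{\bm\lambda}_i|_{\wh{\bm\varphi}^{\text{\tiny \upshape QML,E}}}\Vert$; using $\sum_t\mbf F_t\mbf F_t^\prime=T\mbf I_r$ and $\bm\lambda_i^{\text{\tiny OLS}}=\bm\lambda_i+T^{-1}\sum_t\mbf F_t\xi_{it}$, this derivative equals $(T/\wh\sigma_i^{2,\text{\tiny QML,E}})(\bm\lambda_i^{\text{\tiny OLS}}-\wh{\bm\lambda}_i^{\text{\tiny \upshape QML,E}})$, and $\Vert\bm\lambda_i^{\text{\tiny OLS}}-\wh{\bm\lambda}_i^{\text{\tiny \upshape QML,E}}\Vert=O_{\mathrm P}(\max(n^{-1},(nT)^{-1/2}))$ by combining Corollary \ref{cor:sett} with Corollary \ref{cor:PCOLS}(a); multiplying by $T/\sqrt T=\sqrt T$ yields the stated rate, the factor $\wh\sigma_i^{-2,\text{\tiny QML,E}}$ being $O_{\mathrm P}(1)$ as the QML idiosyncratic variances stay bounded away from zero. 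For (d), I would re-run the expansion used for (b) at $\underline{\bm\varphi}=\wh{\bm\varphi}^{\text{\tiny \upshape QML,E}}$: by Corollary \ref{cor:sett} and Assumption \ref{ass:common}(a) the estimated loadings still have rows that are $O_{\mathrm P}(1)$ with $n^{-1}$ times their cross-product converging to a positive-definite matrix, so the analogue of $\mbf G$ is again $O_{\mathrm P}(n^{-1})$; the leading term is $-\wh\sigma_i^{-2,\text{\tiny QML,E}}\mbf I_r$, which cancels against the conditional Hessian at $\wh{\bm\varphi}^{\text{\tiny \upshape QML,E}}$, and the extra pieces caused by $\wh{\bm\Lambda}^{\text{\tiny \upshape QML,E}}\ne\bm\Lambda$ and by the estimated idiosyncratic variances are of smaller order by the same consistency rates.

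I expect the main obstacle to lie in (a)--(b): controlling the remainders after substitution, i.e.\ proving that the loading-weighted cross-sectional sums of idiosyncratic products and the factor--idiosyncratic cross sums are genuinely $O_{\mathrm P}(\sqrt{nT})$ (rather than at a coarser rate), which requires careful exploitation of the geometric decay and the second- and fourth-order summability conditions of Assumption \ref{ass:idio}(b)--(c) together with the independence of Assumption \ref{ass:ind} — the same bookkeeping that underlies Lemma \ref{lem:Gxi}, now carried one moment higher. A secondary point is to make (c)--(d) go through with only boundedness, not consistency, of the QML idiosyncratic variances, so that the moment conditions of \citet{baili16} dropped in Remark \ref{rem:baili} are not tacitly reintroduced.
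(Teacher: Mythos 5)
Your proposal follows essentially the same route as the paper: parts (a)--(b) rest on the Woodbury expansion of $(\underline{\bm\Lambda}\,\underline{\bm\Lambda}^\prime+\underline{\bm\Sigma}^\xi)^{-1}$, the bound $\Vert\{\mbf I_r+\bm\Lambda^\prime(\bm\Sigma^\xi)^{-1}\bm\Lambda\}^{-1}\Vert=O(n^{-1})$, and the $O_{\mathrm P}(\sqrt{nT})$ control of $\sum_t\mbf F_t\bm\xi_t^\prime\bm\Lambda$ and of the centered loading-weighted idiosyncratic products (all of which the paper establishes from Assumptions \ref{ass:idio}(b)--(c) and \ref{ass:ind}, so no higher moments are needed), while your argument for (c) --- the first-order condition kills the full score, and the conditional score at $\wh{\bm\varphi}^{\text{\tiny \upshape QML,E}}$ equals $T(\wh{\sigma}_i^{2,\text{\tiny \upshape QML,E}})^{-1}(\bm\lambda_i^{\text{\tiny \upshape OLS}}-\wh{\bm\lambda}_i^{\text{\tiny \upshape QML,E}})$ with $\Vert\bm\lambda_i^{\text{\tiny \upshape OLS}}-\wh{\bm\lambda}_i^{\text{\tiny \upshape QML,E}}\Vert=O_{\mathrm P}(\max(n^{-1},(nT)^{-1/2}))$ --- is precisely the paper's second proof of that part. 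The only divergence is (d), where the paper Taylor-expands the Hessian around $\bm\varphi$ and bounds the third derivative by $O_{\mathrm P}(Tn^{-1})$, whereas you re-run the Hessian expansion directly at $\wh{\bm\varphi}^{\text{\tiny \upshape QML,E}}$; both work, and both (like the paper) tacitly need $(\wh{\sigma}_i^{2,\text{\tiny \upshape QML,E}})^{-1}=O_{\mathrm P}(1)$, the point you rightly flag.
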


From Theorem \ref{prop:hessiani} it is clear that if $n^{-1}\sqrt T\to 0$, as $n,T\to\infty$, then we can consistently estimate $\text{A$\!\mathbb V$ar}_0(\sqrt T \wh{\bm\lambda}_i^{\text{\tiny QML,E}})$ in \eqref{eq:QMLADVar} by means of any consistent estimator of $\text{A$\!\mathbb V$ar}_1(\sqrt T \wh{\bm\lambda}_i^{\text{\tiny QML,E}})=\bm\Phi_i$, as the classical HAC estimator:
\begin{align}
\wh{\bm\Phi}_i =\l( \frac 1T\sum_{t=1}^T \wh{\mbf F}_t\wh{\mbf F}_t^\prime \wh{\xi}_{it}^2  \r)&+\l\{\sum_{k=1}^{m_T} \l(1-\frac k{M_T+1}\r)\l[
\l( \frac 1T\sum_{t=k+1}^T \wh{\mbf F}_t\wh{\mbf F}_{t-k}^\prime \wh{\xi}_{it}\wh{\xi}_{it-k}  \r)\r.\r.\nn\\
&+\l.\l.\l( \frac 1T\sum_{t=k+1}^T \wh{\mbf F}_t\wh{\mbf F}_{t-k}^\prime \wh{\xi}_{it}\wh{\xi}_{it-k}  \r)^\prime\r]
\r\},\nn
\end{align}
where $\wh{\mbf F}_t=(\wh{\mbf M}^x)^{-1/2}\wh{\mbf V}^{x\prime} \mbf x_t$ and $\wh{\xi}_{it}=x_{it}-\wh{\bm\lambda}_i^\prime\wh{\mbf F}_t$ are the PC estimators of the factors and idiosyncratic components, respectively. Consistency of this estimator is proved by \citet[Theorem 6]{Bai03}. 

\begin{rem}
\upshape{
The sandwich form of the asymptotic covariances \eqref{eq:QMLADVar} and \eqref{eq:QMLADVar2} comes from the fact that in the log-likelihood we treat the idiosyncratic components as serially uncorrelated, while, in fact, they might be autocorrelated. Indeed, if we assumed $\E[\xi_{it}\xi_{is}]=0$ for all $s,t=1,\ldots, T$ with $s\ne t$ and all $i=1,\ldots, T$, then $\bm \Phi_i = \sigma_i^2\mbf I_r$ (see also Remark \ref{rem:baili12}). It follows that: $\bm{\mathcal I}_i(\bm{\mathcal X}|\bm{\mathcal F};\bm\varphi)={\sigma_i^{-2}}\mbf I_r=-\bm{\mathcal H}_{i}(\bm{\mathcal X}|\bm{\mathcal F};\bm\varphi)$.
And, by virtue of Theorem \ref{prop:hessiani}, $\bm{\mathcal I}_i(\bm{\mathcal X};\bm\varphi)$ must coincide asymptotically with $-\bm{\mathcal H}_{i}(\bm{\mathcal X};\bm\varphi)$, thus, from \eqref{eq:QMLADVar}, we have 
$\text{A$\!\mathbb V$ar}_0(\sqrt T \wh{\bm\lambda}_i^{\text{\tiny QML,E}})=\{\bm{\mathcal I}_i(\bm{\mathcal X};\bm\varphi)\}^{-1}$.
%
}
\end{rem}

\section{Estimation of the factors}\label{sec:fact}
Up to this point we did not say anything about estimating the factors. This is because, first, the factors are typically estimated once we have estimated the loadings, and, second, there is not a clear definition of what a QML estimator of the factors is. 

If we treat the factors as random variables, then they are not parameters and they do not have a QML estimator. We know that their optimal (in mean-squared sense) estimator is their conditional mean given $\bm{\mathcal X}$, which, under Gaussianity can be estimated as the linear projection. If, consistently with the discussion in Section \ref{sec:PCQML} about QML estimation, we mis-specify the second order structure of the factors, by replacing $\bm\Omega^F$ with $\mbf I_{rT}$, and of the idiosyncratic components, by replacing $\bm\Omega^\xi$ with $\bm\Sigma^\xi$,
then such linear projection is given by:
\beq\label{eq:FLP}
\mbf F_t^{\text{\tiny LP}}=\bm\Lambda^\prime \left(\bm\Lambda\bm\Lambda^\prime+\bm\Sigma^\xi\right)^{-1} \mbf x_t=
\left(\bm\Lambda^\prime(\bm\Sigma^\xi)^{-1}\bm\Lambda+\mbf I_r\right)^{-1}\bm\Lambda^\prime(\bm\Sigma^\xi)^{-1}\mbf x_t, \quad t=1,\ldots, T,
\eeq
where we used Woodbury formula. Alternatively, it is common to use the OLS or GLS estimators:
\beq\label{eq:FLS}
\mbf F_t^{\text{\tiny OLS}}=\left(\bm\Lambda^\prime\bm\Lambda\right)^{-1}\bm\Lambda^\prime\mbf x_t, \;\text{ or }\; 
\mbf F_t^{\text{\tiny GLS}}=\left(\bm\Lambda^\prime(\bm\Sigma^\xi)^{-1}\bm\Lambda\right)^{-1}\bm\Lambda^\prime(\bm\Sigma^\xi)^{-1}\mbf x_t, \quad t=1,\ldots, T.
\eeq
If instead we treat the factors as constant parameters then, as discussed in Remark \ref{rem:lobuttiamo}, we can see the GLS estimator as the QML estimator maximizing the joint log-likelihood \eqref{eq:ZLL} of factors and data.

The three estimators defined above are unfeasible unless we first compute estimates of the parameters. By virtue of our result in Theorem \ref{th:QML} there is asymptotically no difference if we use $\wh{\bm\Lambda}$ or $\wh{\bm\Lambda}^{\text{\tiny QML}}$, and an estimator of $\bm\Sigma^\xi$ can easily be computed either from the residuals of PC estimation or by QML as suggested by \citet{baili12}. Once we use these estimated parameters in \eqref{eq:FLP} and \eqref{eq:FLS}, we have the estimators $\wh {\mbf F}_t^{\text{\tiny LP}}$, $\wh{\mbf F}_t^{\text{\tiny OLS}}$, and $\wh{\mbf F}_t^{\text{\tiny GLS}}$. Notice that, as shown in Section \ref{sec:PC}, $\wh{\mbf F}_t^{\text{\tiny OLS}}$ is nothing else but the PC estimator of the factors. The GLS has also been studied by \citet{BT11} and \citet{choi12} when using different estimators of $\bm\Sigma^\xi$.

By construction, the OLS and GLS in \eqref{eq:FLS} are both less efficient than the linear projection in \eqref{eq:FLP}. Moreover, the GLS is always more efficient than the OLS if we could use an estimator of the full idiosyncratic covariance matrix, but, since this is in general unfeasible and we typically estiamate only its diagonal as in \eqref{eq:FLS}, then we can just conjecture that  the more sparse is the true covariance matrix the more likely the GLS is to be more efficient. 

Finally, these estimators are all $\min(\sqrt n, T)$-consistent. For the OLS we refer to \citet[Theorem 1]{Bai03}. For the GLS we refer to \citet[Theorem 2]{baili16}. Moreover, it is  straightforward to see that, by Lemmas \ref{lem:Gxi}(v) and \ref{lem:FTLN}(i), we have $\Vert \wh{\mbf F}_t^{\text{\tiny GLS}}-\wh{\mbf F}_t^{\text{\tiny LP}} \Vert=O_{\mathrm P}(n^{-1})$.

\begin{rem}\label{rem:filtro}
\upshape{
If we explicitly model the dynamics of $\mbf F_t$ then the expression of $\wh{\mbf F}_t^{\text{\tiny LP}}$ in \eqref{eq:FLP} is replaced by the Kalman smoother \citep{DGRfilter}. This, in fact, can be shown to be asymptotically equivalent, as $n\to\infty$, to the GLS estimator in \eqref{eq:FLS} \citep{baili16,PR22}. 
}
\end{rem}

\begin{rem}\upshape{
In the case of deterministic factors, we could also write the mis-specified log-likelihood \eqref{eq:LL00} of an exact factor model as:
\beq
\ell_{\text{\tiny E}}(\bm{\mathcal X};\underline{\bm\varphi})= -\frac 1 2\sum_{i=1}^n \log\det (\underline{\bm F}\,\underline{\bm \lambda}_i\underline{\bm \lambda}_i^\prime\underline{\bm F}^\prime+\underline{\sigma}_i^2\mbf I_T)-\frac 12\sum_{i=1}^n \bm x_i^\prime(\underline{\bm F}\,\underline{\bm \lambda}_i\underline{\bm \lambda}_i^\prime\underline{\bm F}^\prime+\underline{\sigma}_i^2\mbf I_T)^{-1}\bm x_i,\label{eq:LL00F}
\eeq
where $\bm x_i=(x_{i1}\cdots x_{iT})'$, thereby exchanging the role of $n$ and $T$. 
Then, we can conjecture that the QML estimator of the factors maximizing \eqref{eq:LL00F} will be asymptotically equivalent, this time as $T\to\infty$, to their PC estimator. 
This is approach is also considered in \citet{fortin2023latent}. However, since, as noted above, the PC estimator of the factors is asymptotically equivalent to the OLS, it is not the most efficient estimator because it does not account for possible cross-sectional heteroskedasticity of  idiosyncratic components. 
}\end{rem}

\section{Monte Carlo study}\label{sec:MC}

Throughout, we let $n\in\{20, 50, 100,200\}$, $T=100$, and $r=2$, and, for all $i=1,\ldots, n$ and $t=1,\ldots, T$, we simulate the data according to 
\begin{align}
x_{it}&={\bm\ell}_{i}^\prime {\bm f}_t + \phi_i \xi_{it}, \quad {\bm f}_t ={\bm A}  {\bm f}_{t-1} + \mbf u_t,\quad \xi_{it}=\alpha_i\xi_{it-1}+e_{it},\nn
\end{align}
where $\bm\ell_i$ and $\bm f_t$ are $r$-dimensional vectors. Specifically,
\begin{inparaenum}
	\item []  $\bm\ell_i$ has entries ${\ell}_{ij}{\sim}iid\,\mathcal{N}(1,1)$, $i=1,\ldots, N$, $j=1,\ldots, r$; 
	\item [] ${\bm A}=0.9 \check{\bm A} \Vert{\bm A}\Vert^{-1}$, where $\check{\bm A}$ is $r\times r$ with entries $\check{a}_{jj}{\sim} iid\,\mathcal U[0.5,0.8]$ for all $j$, and  $\check{ a}_{jk}{\sim} iid\, \mathcal U[0,0.3]$ if $j\ne k$;
	\item [] $u_{jt}{\sim}(0,1)$, $j=1,\ldots, r$, $t=1,\ldots, t$, with either a Gaussian or an Asymmetric Laplace distribution and with $\Cov(u_{it},u_{jt})=0$ if $i\ne j$,
	and $\Cov(u_{it},u_{jt-k})=0$ for all $i,j$ and if $k\ne 0$;
	\item [] $e_{it}{\sim}(0,\sigma_{ei}^2)$, with either a Gaussian or an Asymmetric Laplace distribution and with
	$\sigma_{ei}^2\sim \mathcal U[0.5, 1.5]$ for all $i$,
	$\Cov(e_{it},e_{jt})=\tau^{\vert i-j\vert }\mathbb I(\vert i-j\vert \le 10)$ with $\tau\in\{0,0.5\}$ if $i\ne j$,
	and $\Cov(e_{it},e_{jt-k})=0$ for all $i,j$ and if $k\ne 0$;
	\item  []  $\delta_i{\sim}iid\,\mathcal{U}(0,\delta)$, and $\delta\in\{0,0.5\}$;
	\item [] $\phi_i=\{{\theta_i (\sum_{t=1}^T \chi_{it}^2)/(\sum_{t=1}^T \xi_{it}^2)}\}^{1/2}$,
	and 
	$\theta_i{\sim}iid\,\mathcal{U}(0.25,0.5)$.
\end{inparaenum}
In the case of the Asymmetric Laplace distribution, all the innovations have location 0, asymmetry index $\kappa$, with $\kappa \sim \mathcal{U}(.9,1.1)$ and scale index $\lambda=\sqrt{{1+\kappa^4}/{\kappa^2}}$, so that the variance is 1. The parameters  $\tau$ and $\delta$ control the degrees of
cross-sectional and serial idiosyncratic correlation in the idiosyncratic components. The the noise-to-signal ratio for series $i$ is given by $\theta_i$. 

Finally, in order to satisfy Assumptions \ref{ass:ident}(a) and \ref{ass:ident}(b) we proceed as follows. Given the common components is generated as $\chi_{it}={\bm\ell}_{i}^\prime {\bm f}_t$, let $\bm\chi_t=(\chi_{1t}\cdots \chi_{nt})^\prime$ and compute $\check{\bm\Gamma}^\chi=T^{-1}\sum_{t=1}^T \bm\chi_t\bm\chi_t^\prime$. Collect its $r$ non-zero eigenvalues into the $r\times r$ diagonal matrix $\check{\mbf M}^{\chi}$ and the corresponding normalized eigenvectors as the columns of the $n\times r$ matrix $\check{\mbf V}^{\chi}$, with sign fixed such that it has non-negative entries in the first row. The loadings are then simulated as $\bm \Lambda=\check{\mbf V}^{\chi} (\check{\mbf M}^{\chi})^{1/2}$ and the  factors as $\mbf F_t = (\check{\mbf M}^{\chi})^{-1/2}\check{\mbf V}^{\chi\prime}\bm\chi_t$.

We simulate the model described above $B=500$ times, and at each replication we estimate the loadings via PC and QML, where the latter is defined as the maximizer of the log-likelihood \eqref{eq:LL00} and it is computed numerically in the way proposed by \citet{baili12,baili16}. In this way, we obtain two estimates of the loadings matrix $\wh{\bm\Lambda}^{(b)}$ and $\wh{\bm\Lambda}^{\text{\tiny QML,E}(b)}$, respectively, $b=1,\ldots, B$. We also compute the unfeasible OLS estimator ${\bm\Lambda}^{\text{\tiny OLS}(b)}$ by regressing $\mbf x_t$ onto the simulated factors.

 \begin{table}[t!]
 \caption{MSEs}\label{tab:MSEG}
 \centering
\vskip .1cm
\scriptsize
 \begin{tabular}{cccc | ccc | ccc}
 \hline
 \hline
 \multicolumn{10}{c}{\textsc {Gaussian innovations}}\\
 \hline
 &&&&&&& \\[-8pt]
 $n$ & $T$ &$\tau$ &$\delta$ &  $\text{MSE}_1^{\text{\tiny OLS}}$ & $\text{MSE}_1^{\text{\tiny PC}}$ & $\text{MSE}_1^{\text{\tiny QML}}$&  $\text{MSE}_2^{\text{\tiny OLS}}$ & $\text{MSE}_2^{\text{\tiny PC}}$ & $\text{MSE}_2^{\text{\tiny QML}}$\\
  &&&&&&& \\[-8pt]
 \hline
 &&&&&&& \\[-8pt]
 20 & 100 & 0 & 0 &	0.0103		&		0.0123		&		0.0116		&		0.0099		&		0.0153		&		0.0118		\\
&&&&\tiny(0.0032)	&	\tiny(0.0060)	&	\tiny(0.0053)	&	\tiny(0.0034)	&	\tiny(0.0069)	&	\tiny(0.0053)	\\
 50 & 100 & 0 & 0 &	0.0102		&		0.0109		&		0.0108		&		0.0102		&		0.0116		&		0.0107		\\
&&&&\tiny(0.0021)	&	\tiny(0.0023)	&	\tiny(0.0023)	&	\tiny(0.0022)	&	\tiny(0.0025)	&	\tiny(0.0023)	\\
 100 & 100 & 0 & 0 &	0.0100		&		0.0103		&		0.0103		&		0.0100		&		0.0105		&		0.0104		\\
&&&&\tiny(0.0015)	&	\tiny(0.0016)	&	\tiny(0.0015)	&	\tiny(0.0015)	&	\tiny(0.0016)	&	\tiny(0.0016)	\\
 200 & 100 & 0 & 0 &	0.0101		&		0.0102		&		0.0102		&		0.0101		&		0.0104		&		0.0103		\\
&&&&\tiny(0.0011)	&	\tiny(0.0011)	&	\tiny(0.0011)	&	\tiny(0.0011)	&	\tiny(0.0011)	&	\tiny(0.0011)	\\
 \hline
 &&&&&&& \\[-8pt]
  20 & 100 & 0.5 & 0.5 &	0.0180		&		0.0239		&		0.0230		&		0.0134		&		0.0270		&		0.0276		\\
&&&&\tiny(0.0081)	&	\tiny(0.0115)	&	\tiny(0.0110)	&	\tiny(0.0055)	&	\tiny(0.0188)	&	\tiny(0.0217)		\\
  50 & 100 & 0.5 & 0.5 &	0.0183		&		0.0201		&		0.0199		&		0.0135		&		0.0166		&		0.0160		\\
&&&&\tiny(0.0054)	&	\tiny(0.0060)	&	\tiny(0.0060)	&	\tiny(0.0037)	&	\tiny(0.0050)	&	\tiny(0.0047)		\\
  100 & 100 & 0.5 & 0.5 &	0.0182		&		0.0190		&		0.0189		&		0.0134		&		0.0146		&		0.0145		\\
&&&&\tiny(0.0038)	&	\tiny(0.0041)	&	\tiny(0.0041)	&	\tiny(0.0027)	&	\tiny(0.0031)	&	\tiny(0.0030)		\\
  200 & 100 & 0.5 & 0.5 &	0.0184		&		0.0187		&		0.0187		&		0.0135		&		0.0141		&		0.0141		\\
&&&&\tiny(0.0027)	&	\tiny(0.0028)	&	\tiny(0.0028)	&	\tiny(0.0021)	&	\tiny(0.0022)	&	\tiny(0.0022)		\\
\hline
\hline
 \multicolumn{10}{c}{\textsc {Asymmetric Laplace innovations}}\\
 \hline
 &&&&&&& \\[-8pt]
 $n$ & $T$ &$\tau$ &$\delta$ &  $\text{MSE}_1^{\text{\tiny OLS}}$ & $\text{MSE}_1^{\text{\tiny PC}}$ & $\text{MSE}_1^{\text{\tiny QML}}$&  $\text{MSE}_2^{\text{\tiny OLS}}$ & $\text{MSE}_2^{\text{\tiny PC}}$ & $\text{MSE}_2^{\text{\tiny QML}}$\\
  &&&&&&& \\[-8pt]
 \hline
 &&&&&&& \\[-8pt]
  20 & 100 & 0 & 0 & 	0.0401		&		0.0499		&		0.0440		&		0.0408		&		0.0654		&		0.0389		\\
&&&&\tiny(0.0136)	&	\tiny(0.0211)	&	\tiny(0.0180)	&	\tiny(0.0144)	&	\tiny(0.0254)	&	\tiny(0.0145)	\\
  50 & 100 & 0 & 0 &	0.0410		&		0.0442		&		0.0423		&		0.0407		&		0.0480		&		0.0390		\\
&&&&\tiny(0.0091)	&	\tiny(0.0102)	&	\tiny(0.0097)	&	\tiny(0.0090)	&	\tiny(0.0112)	&	\tiny(0.0087)	\\
  100 & 100 & 0 & 0 &	0.0412		&		0.0429		&		0.0418		&		0.0410		&		0.0439		&		0.0391		\\
&&&&\tiny(0.0068)	&	\tiny(0.0073)	&	\tiny(0.0071)	&	\tiny(0.0068)	&	\tiny(0.0077)	&	\tiny(0.0066)	\\
  200 & 100 & 0 & 0 &	0.0409		&		0.0420		&		0.0435		&		0.0411		&		0.0421		&		0.0421		\\
&&&&\tiny(0.0046)	&	\tiny(0.0050)	&	\tiny(0.0078)	&	\tiny(0.0051)	&	\tiny(0.0054)	&	\tiny(0.0075)	\\
\hline
&&&&&&& \\[-8pt]
  20 & 100 & 0.5 & 0.5 &	0.0755		&		0.1169		&		0.1016		&		0.0572		&		0.1257		&		0.1050		\\
&&&&\tiny(0.0353)	&	\tiny(0.1469)	&	\tiny(0.1405)	&	\tiny(0.0270)	&	\tiny(0.1577)	&	\tiny(0.1378)	\\
  50 & 100 & 0.5 & 0.5 &	0.0745		&		0.0864		&		0.0797		&		0.0549		&		0.0726		&		0.0587		\\
&&&&\tiny(0.0252)	&	\tiny(0.0292)	&	\tiny(0.0271)	&	\tiny(0.0179)	&	\tiny(0.0264)	&	\tiny(0.0191)	\\
  100 & 100 & 0.5 & 0.5 &	0.0742		&		0.0814		&		0.0765		&		0.0566		&		0.0640		&		0.0555		\\
&&&&\tiny(0.0183)	&	\tiny(0.0216)	&	\tiny(0.0194)	&	\tiny(0.0136)	&	\tiny(0.0162)	&	\tiny(0.0128)	\\
  200 & 100 & 0.5 & 0.5 &	0.0760		&		0.0800		&		0.0774		&		0.0548		&		0.0594		&		0.0542		\\
&&&&\tiny(0.0160)	&	\tiny(0.0178)	&	\tiny(0.0169)	&	\tiny(0.0106)	&	\tiny(0.0128)	&	\tiny(0.0108)	\\
\hline
\hline
 \end{tabular}
 \end{table}
 
  \begin{table}[t!]
 \caption{Comparison between PC and QML estimators.}\label{tab:MSEREL}
 \centering
\vskip .1cm
\scriptsize
 \begin{tabular}{cccc | cc | cc}
 \hline
 \hline
 \multicolumn{8}{c}{\textsc {Gaussian innovations}}\\
 \hline
&&&&&&& \\[-8pt]
 $n$ & $T$ &$\tau$ &$\delta$ & $D_1$ & $D_2$ &${\text{MSE}_1^{\text{\tiny REL}}}$ & ${\text{MSE}_2^{\text{\tiny REL}}}$ \\
  &&&&&&& \\[-8pt]
 \hline
 &&&&&&& \\[-8pt]
 20 & 100 & 0 & 0 &		5.06$\times 10^{-4}$		&		2.79$\times 10^{-3}$		&	1.06	&	1.30	\\
&&&&\tiny(1.72$\times 10^{-4}$)	&	\tiny(2.61$\times 10^{-3}$)	&		&		\\
 50 & 100 & 0 & 0 &	6.07$\times 10^{-5}$		&		4.50$\times 10^{-4}$		&	1.01	&	1.08	\\
&&&&\tiny(1.59$\times 10^{-5}$)	&	\tiny(2.01$\times 10^{-4}$)	&		&		\\
 100 & 100 & 0 & 0 &	1.23$\times 10^{-5}$		&		6.28$\times 10^{-5}$		&	1.00	&	1.01	\\
&&&&\tiny(3.54$\times 10^{-6}$)	&	\tiny(3.00$\times 10^{-5}$)	&		&		\\
 200 & 100 & 0 & 0 &	4.54$\times 10^{-6}$		&		2.05$\times 10^{-5}$		&	1.00	&	1.00	\\
&&&&\tiny(1.02$\times 10^{-6}$)	&	\tiny(6.96$\times 10^{-6}$)	&		&		\\
\hline
&&&&&&& \\[-8pt]
 20 & 100 & 0.5 & 0.5 &			5.72$\times 10^{-4}$		&		5.86$\times 10^{-3}$		&	1.04	&	0.98	\\
&&&&\tiny	(3.27$\times 10^{-4}$)	&	(1.15$\times 10^{-2}$)	&		&		\\
 50 & 100 & 0.5 & 0.5 &		7.55$\times 10^{-5}$		&		5.63$\times 10^{-4}$		&	1.01	&	1.04	\\
&&&&\tiny	(3.65$\times 10^{-5}$)	&	(4.88$\times 10^{-4}$)	&		&		\\
 100 & 100 & 0.5 & 0.5 &		1.77$\times 10^{-5}$		&		7.93$\times 10^{-5}$		&	1.00	&	1.01	\\
&&&&\tiny	(7.26$\times 10^{-6}$)	&	(5.36$\times 10^{-5}$)	&		&		\\
 200 & 100 & 0.5 & 0.5 &		6.06$\times 10^{-6}$		&		2.34$\times 10^{-5}$		&	1.00	&	1.00	\\
&&&&\tiny	(1.95$\times 10^{-6}$)	&	(1.02$\times 10^{-5}$)	&		&		\\
\hline
\hline
 \multicolumn{8}{c}{\textsc {Asymmetric Laplace innovations}}\\
 \hline
 &&&&&&& \\[-8pt]
 $n$ & $T$ &$\tau$ &$\delta$ & $D_1$ & $D_2$ &${\text{MSE}_1^{\text{\tiny REL}}}$ & ${\text{MSE}_2^{\text{\tiny REL}}}$ \\
  &&&&&&& \\[-8pt]
 \hline
 &&&&&&& \\[-8pt]
  20 & 100 & 0 & 0 &		4.35$\times 10^{-3}$		&		2.09$\times 10^{-2}$		&	1.13	&	1.68	\\
&&&&\tiny(1.67$\times 10^{-3}$)	&(1.68$\times 10^{-2}$)	&		&		\\
  50 & 100 & 0 & 0 &		1.13$\times 10^{-3}$		&		5.67$\times 10^{-3}$		&	1.05	&	1.23	\\
&&&&\tiny(3.33$\times 10^{-4}$)	&(3.19$\times 10^{-3}$)	&		&		\\
  100 & 100 & 0 & 0 &		6.23$\times 10^{-4}$		&		2.64$\times 10^{-3}$		&	1.03	&	1.12	\\
&&&&\tiny(1.48$\times 10^{-4}$)	&(1.28$\times 10^{-3}$)	&		&		\\
  200 & 100 & 0 & 0 &		2.20$\times 10^{-3}$		&		2.07$\times 10^{-3}$		&	0.97	&	1.00	\\
&&&&\tiny(3.81$\times 10^{-3}$)	&(2.41$\times 10^{-3}$)	&		&		\\
\hline
&&&&&&& \\[-8pt]
  20 & 100 & 0.5 & 0.5 &			9.93$\times 10^{-3}$		&		4.12$\times 10^{-2}$		&	1.15	&	1.20	\\
&&&&\tiny(6.77$\times 10^{-3}$)	&(5.68$\times 10^{-2}$)	&		&		\\
50 & 100 & 0.5 & 0.5 &		4.68$\times 10^{-3}$		&		9.16$\times 10^{-3}$		&	1.08	&	1.24	\\
&&&&\tiny(2.07$\times 10^{-3}$)	&(9.07$\times 10^{-3}$)	&		&		\\
100 & 100 & 0.5 & 0.5 &		3.13$\times 10^{-3}$		&		4.56$\times 10^{-3}$		&	1.06	&	1.15	\\
&&&&\tiny(1.05$\times 10^{-3}$)	&(2.92$\times 10^{-3}$)	&		&		\\
200 & 100 & 0.5 & 0.5 &		2.63$\times 10^{-3}$		&		3.41$\times 10^{-3}$		&	1.03	&	1.10	\\
&&&&\tiny(1.78$\times 10^{-3}$)	&(2.73$\times 10^{-3}$)	&		&		\\
\hline
\hline
 \end{tabular}
 \end{table}

 In Table \ref{tab:MSEG}  we report the Mean-Squared-Error (MSE) for each column, $j=1,\ldots, r$, of the considered estimators, averaged over the $B$ replications (with standard deviations in parenthesis):
\begin{align}
&\text{MSE}_j^{\text{\tiny OLS}} =\frac 1B\sum_{b=1}^B\l\{ \frac 1n \sum_{i=1}^n \l({\bm\Lambda}^{\text{\tiny OLS}(b)}_{ij}-\bm\Lambda_{ij} \r)^2\r\},\nn\\
& \text{MSE}_j^{\text{\tiny PC}} =\frac 1B\sum_{b=1}^B\l\{ \frac 1n \sum_{i=1}^n \l(\wh{\bm\Lambda}^{(b)}_{ij}-\bm\Lambda_{ij} \r)^2\r\},\quad \text{MSE}_j^{\text{\tiny QML}} =\frac 1B\sum_{b=1}^B\l\{ \frac 1n \sum_{i=1}^n \l(\wh{\bm\Lambda}^{\text{\tiny QML}(b)}_{ij}-\bm\Lambda_{ij} \r)^2\r\}.\nn
\end{align}
Results show that QML and PC estimator have similar MSEs and both improve as $n$ increases to the point that when $n=200$ their MSEs are comparable with the  one of the unfeasible OLS, i.e., the estimators behave as if the factors were observed. In most cases the QML estimator has a smaller MSE even when the true distribution is not Gaussian.

Finally, in Table \ref{tab:MSEREL} for each column of the loadings, $j=1,\ldots, r$, we report the distance between the QML and PC estimators measured as (with standard deviations in parenthesis):
\begin{align}
&\text{D}_j =\frac 1B\sum_{b=1}^B\l\{ \frac 1n \sum_{i=1}^n \l(\wh{\bm\Lambda}^{\text{\tiny QML}(b)}_{ij}-\wh{\bm\Lambda}_{ij}^{(b)} \r)^2\r\}.\nn
\end{align}
And we also report the relative MSE of the PC estimator with respect to the MSE of the QML estimator: $\text{MSE}_j^{\text{\tiny REL}}={\text{MSE}_j^{\text{\tiny PC}}}/{\text{MSE}_j^{\text{\tiny QML}}}$. Results clearly show that as $n$ grows the PC and QML estimators become almost indistinguishable.

\section{Concluding remarks}\label{sec:conc}

To compute in practice the QML estimator of the loadings there are at least two main issues. First, in finite samples the QML estimator of the loadings has no closed form and depends also on the estimator of the idiosyncratic covariance for which no closed form exists either. This is true even if we use the the log-likelihood \eqref{eq:LL00} of an exact factor model. Second, the convergence properties of the various available EM algorithms used to compute the QML estimator \citep{RT82,baili12,baili16,pz23} have never been fully investigated. On the one hand, it is easy to prove, that at each iteration of an EM algorithm the log-likelihood evaluated in the  parameters estimated at that iteration is larger than at the previous iteration \citep{wu83}, but, on the other hand, no formal proof exists of convergence of those algorithms to a global maximum of the likelihood, at least to the best of our knowledge. 

The results of this paper offer a possible solution by showing that, if we are just interested in the factor loadings and we do not need to estimate the idiosyncratic variances, then we can simply use the PC estimator of the loadings and of its asymptotic covariance matrix to approximate the corresponding QML estimator and  its asymptotic covariance matrix. Once this is done, the factors can be estimated via OLS as in PC analysis. 

As a consequence, we might think that there is no apparent advantage in directly computing the QML estimator of the loadings and of the idiosyncratic variances. 

Nevertheless, QML estimation has at least three advantages. First, it allows us to easily impose restrictions on the parameters of the model. Second, having also the QML estimator of the idiosyncratic variances allows us to compute estimators of the factors as the GLS, which are possibly more efficient. Third, QML estimation, as presented in this paper, is a first step towards estimating a model where we explicitly model the dynamics of the factors, something we cannot do with PC analysis. This last point, already briefly discussed  in Remarks \ref{rem:DGR} and \ref{rem:filtro}, is the subject of our ongoing research \citep{BLqml}.

\bibliographystyle{chicago} 
\bibliography{BL_Biblio}

\begin{appendix}
\numberwithin{equation}{section}
\numberwithin{prop}{section}
\small
\section{Proof of main results}

\subsection{Proof of Theorem \ref{th:CLTL}}\label{app:A1}
From \eqref{eq:evecXX} and \eqref{eq:evecXX2}, and since by \eqref{eq:estL} we have $\wh{\bm\Lambda}^\prime\wh{\bm\Lambda}=\wh{\mbf M}^x$ and $\wh{\mbf V}^x=\wh{\bm\Lambda}(\wh{\mbf M}^x)^{-1/2}$ which is well defined because of Lemma \ref{lem:MO1}(iv), we get
\beq\label{eq:start}
\frac{\bm X^\prime\bm X}{nT}\wh{\bm\Lambda}=\wh{\bm\Lambda}\frac{\wh{\mbf M}^x}{n}.
\eeq
Then, substituting $\bm X^\prime\bm X=(\bm\Lambda\bm F^\prime+\bm\Xi^\prime)^\prime(\bm F\bm\Lambda^\prime+\bm\Xi)$ into \eqref{eq:start}
\begin{align}\label{eq:start2}
\frac{\bm\Lambda\bm F^\prime\bm F\bm\Lambda^\prime\wh{\bm\Lambda}}{nT}
+\frac{\bm\Lambda\bm F^\prime\bm \Xi\wh{\bm\Lambda}}{nT}
+\frac{\bm\Xi^\prime\bm F\bm\Lambda^\prime\wh{\bm\Lambda}}{nT}
+\frac{\bm\Xi^\prime\bm \Xi\wh{\bm\Lambda}}{nT}=\wh{\bm\Lambda}\frac{\wh{\mbf M}^x}{n}.
\end{align}
Define
\beq\label{eq:acca}
\wh{\mbf H}=
\l(\frac{\bm F^\prime\bm F}{T}\r)
\l(\frac{\bm\Lambda^\prime\wh{\bm\Lambda}}{n}\r)
\l(\frac{\wh{\mbf M}^x}{n}\r)^{-1}=
\l(\frac{\bm\Lambda^\prime\wh{\bm\Lambda}}{n}\r)
\l(\frac{\wh{\mbf M}^x}{n}\r)^{-1},
\eeq
by Assumption \ref{ass:ident}(b). Notice that, as $n,T\to\infty$,  $\wh{\mbf H}$ is well defined because $\l(\frac{\wh{\mbf M}^x}{n}\r)^{-1}$ is well defined because of Lemma \ref{lem:MO1}(iv). From \eqref{eq:start2} and \eqref{eq:acca}
\begin{align}
\wh{\bm\Lambda}&-\bm\Lambda\wh{\mbf H}= \l(\frac{\bm\Lambda\bm F^\prime\bm \Xi\wh{\bm\Lambda}}{nT}
+\frac{\bm\Xi^\prime\bm F\bm\Lambda^\prime\wh{\bm\Lambda}}{nT}
+\frac{\bm\Xi^\prime\bm \Xi\wh{\bm\Lambda}}{nT}\r)\l(\frac{\wh{\mbf M}^x}{n}\r)^{-1}\nn\\
=&\,
\l(\frac{\bm\Lambda\bm F^\prime\bm \Xi{\bm\Lambda}}{nT}
+\frac{\bm\Xi^\prime\bm F\bm\Lambda^\prime{\bm\Lambda}}{nT}
+\frac{\bm\Xi^\prime\bm \Xi{\bm\Lambda}}{nT}\r)\bm{\mathcal H} \l(\frac{\wh{\mbf M}^x}{n}\r)^{-1}\!\!\!\!\!+
\l(\frac{\bm\Lambda\bm F^\prime\bm \Xi}{nT}
+\frac{\bm\Xi^\prime\bm F\bm\Lambda^\prime}{nT}
+\frac{\bm\Xi^\prime\bm \Xi}{nT}\r)(\wh{\bm\Lambda}-\bm\Lambda\bm{\mathcal H}) \l(\frac{\wh{\mbf M}^x}{n}\r)^{-1}.\label{eq:start4}
\end{align}
Taking the $i$th row of  \eqref{eq:start4}
\begin{align}
\wh{\bm\lambda}_i^\prime-{\bm\lambda}_i^\prime\wh{\mbf H}=&\,
\l(
\underbrace{\frac 1{nT}{\bm\lambda}_i^\prime\sum_{t=1}^T\sum_{j=1}^n\mbf F_t\xi_{jt}{\bm\lambda}_j^\prime}_{\text{(1.a)}}
+\underbrace{\frac 1{nT} \sum_{t=1}^T \xi_{it}\mbf F_t^\prime\sum_{j=1}^n\bm\lambda_j\bm\lambda_j^\prime}_{\text{(1.b)}}
+\underbrace{\frac 1{nT} \sum_{t=1}^T\sum_{j=1}^n\xi_{it}\xi_{jt} \bm\lambda_j^\prime}_{\text{(1.c)}}
\r)\bm{\mathcal H} \l(\frac{\wh{\mbf M}^x}{n}\r)^{-1}\nn\\
&+
\l(
\underbrace{\frac 1{nT}{\bm\lambda}_i^\prime\sum_{t=1}^T\sum_{j=1}^n\mbf F_t\xi_{jt}(\wh{\bm\lambda}_j^\prime-\bm\lambda_j^\prime\bm{\mathcal H})}_{\text{(1.d)}}
+\underbrace{\frac 1{nT} \sum_{t=1}^T \xi_{it}\mbf F_t^\prime\sum_{j=1}^n\bm\lambda_j(\wh{\bm\lambda}_j^\prime-\bm\lambda_j^\prime\bm{\mathcal H})}_{\text{(1.e)}}\r.\nn\\
&\;\;\l.+\underbrace{\frac 1{nT} \sum_{t=1}^T\sum_{j=1}^n\xi_{it}\xi_{jt} (\wh{\bm\lambda}_j^\prime-\bm\lambda_j^\prime\bm{\mathcal H})}_{\text{(1.f)}}\r) \l(\frac{\wh{\mbf M}^x}{n}\r)^{-1}.\label{eq:sviluppoLambda}
\end{align}
From Proposition \ref{prop:load} we see that, under Assumptions \ref{ass:common} through \ref{ass:eval}, the terms $\text{\upshape (1.a)}$, $\text{\upshape (1.c)}$, $\text{\upshape (1.d)}$, $\text{\upshape (1.e)}$, and $\text{\upshape (1.f)}$ are all $o_{\mathrm P}\l(\frac 1{\sqrt T}\r)$. In particular,  from \eqref{eq:sviluppoLambda}, for any $i=1,\ldots, n$, we get
\begin{align}
\wh{\bm\lambda}_i-\wh{\mbf H}^\prime{\bm\lambda}_i = \l(\frac{\wh{\mbf M}^x}{n}\r)^{-1} \bm{\mathcal H}^\prime\l(\frac{\bm\Lambda^\prime\bm\Lambda}{n}\r) \l(\frac 1T\sum_{t=1}^T \mbf F_t\xi_{it}\r)
+ O_{\mathrm {P}}\l(\max\l(\frac 1 n, \frac 1{\sqrt{nT}}\r)\r)\label{eq:finaleL}.
\end{align}
Consistency follows immediately since the first term in \eqref{eq:finaleL} is $O_{\mathrm P}\l(\frac 1{\sqrt T}\r)$ because of Assumption \ref{ass:CLT} (see also \eqref{eq:2a3bis}  in the proof of Theorem \ref{prop:hessiani}) and since  $\l\Vert \l(\frac{\wh{\mbf M}^x}{n}\r)^{-1} \bm{\mathcal H}^\prime \r\Vert = O_{\mathrm {P}}(1)$ because of Lemma \ref{lem:MO1}(iv) and \ref{lem:HO1bis}.

Then, from \eqref{eq:finaleL}, 
by Proposition \ref{prop:KKK}(b):
\begin{align}\label{eq:CLTL}
\sqrt T(\wh{\bm\lambda}_i-\wh{\mbf H}^\prime{\bm\lambda}_i) &= \l(\frac{\wh{\mbf M}^x}{n}\r)^{-1} \bm{\mathcal H}^\prime\l(\frac{\bm\Lambda^\prime\bm\Lambda}{n}\r) \l(\frac 1{\sqrt T}\sum_{t=1}^T \mbf F_t\xi_{it}\r)
+ o_{\mathrm {P}}(1)\nn\\
&=  \l(\frac{\wh{\mbf M}^x}{n}\r)^{-1}\l(\frac{\wh{\bm\Lambda}^\prime\bm\Lambda}{n}\r) \l(\frac 1{\sqrt T}\sum_{t=1}^T \mbf F_t\xi_{it}\r)
+ o_{\mathrm {P}}(1).
\end{align}
Now, from Proposition \ref{prop:H} we have that, as $n,T\to\infty$, 
$\min(\sqrt{n},\sqrt T)\Vert \wh{\mbf H}-\bm J\Vert = o_{\mathrm P}(1)$, and, by using the definition of $\wh{\mbf H}$ in \eqref{eq:acca}, it follows that \eqref{eq:CLTL} is equivalent to (note that $\Vert\bm\lambda_i\Vert=O(1)$ by Assumption \ref{ass:common}(a))
\begin{align}\label{eq:CLTLalt}
\sqrt T(\wh{\bm\lambda}_i-\bm J\bm\lambda_i)&=\sqrt T(\wh{\bm\lambda}_i-\wh{\mbf H}^\prime\bm\lambda_i)+\sqrt T(\wh{\mbf H}^\prime-\bm J)\bm\lambda_i+o_{\mathrm P}(1)\nn\\
 &=  \l(\frac{\wh{\mbf M}^x}{n}\r)^{-1} \l(\frac{\wh{\bm\Lambda}^\prime\bm\Lambda}{n}\r) \l(\frac 1{\sqrt T}\sum_{t=1}^T \mbf F_t\xi_{it}\r)+ o_{\mathrm {P}}(1) \nn\\
&=\wh{\mbf H}^\prime  \l(\frac 1{\sqrt T}\sum_{t=1}^T \mbf F_t\xi_{it}\r)+o_{\mathrm P}(1)\nn\\
&= \bm J\l(\frac 1{\sqrt T}\sum_{t=1}^T \mbf F_t\xi_{it}\r)+o_{\mathrm P}(1)\nn\\
&\to_d\mathcal N\l(\mbf 0_r, \bm\Phi_i\r),
\end{align}
where we used Slutsky's theorem and Assumption \ref{ass:CLT}. Notice that, $\bm J$ plays no role in the covariance since it is diagonal and $\bm J^2=\mbf I_r$. Because of Assumption \ref{ass:sign} the sign indeterminacy on the left-hand-side can be easily fixed so that $\bm J=\mbf I_r$. By substituting $\mbf I_r$ in place of $\wh{\mbf H}$ in \eqref{eq:finaleL} and using Assumption \ref{ass:CLT} it follows also that:
\[
\l\Vert \wh{\bm\lambda}_i-{\bm\lambda}_i  \r\Vert = O_{\mathrm {P}}\l(\max\l(\frac 1 n, \frac 1{\sqrt{T}}\r)\r).
\]
This completes the proof of part (a).\smallskip

For part (b), from \eqref{eq:start4}
\begin{align}
\l\Vert\frac{\wh{\bm\Lambda}-\bm\Lambda\wh{\mbf H}}{\sqrt n}\r\Vert\le&\,
\l(\l\Vert\frac{\bm F^\prime\bm \Xi}{\sqrt nT}\r\Vert\,\l\Vert\frac{\bm\Lambda}{\sqrt n}\r\Vert^2
+\l\Vert \frac{\bm\Xi^\prime\bm F}{\sqrt {n}T}\r\Vert\,\l\Vert \frac{\bm\Lambda^\prime{\bm\Lambda}}{n}\r\Vert
+\l\Vert\frac{\bm\Xi^\prime\bm \Xi\bm\Lambda}{n^{3/2}T}\r\Vert\r)\l\Vert\bm{\mathcal H} \r\Vert \,\l\Vert\l(\frac{\wh{\mbf M}^x}{n}\r)^{-1}\r\Vert\nn\\
&+
\l(\l\Vert\frac{\bm F^\prime\bm \Xi}{\sqrt nT}\r\Vert\, \l\Vert\frac{\bm\Lambda}{\sqrt n}\r\Vert
+\l\Vert\frac{\bm\Xi^\prime\bm F}{\sqrt nT}\r\Vert\, \l\Vert\frac{\bm\Lambda}{\sqrt n}\r\Vert
+\l\Vert\frac{\bm\Xi^\prime\bm \Xi}{nT}\r\Vert\r)\l\Vert\frac{\wh{\bm\Lambda}-\bm\Lambda\bm{\mathcal H}}{\sqrt n}\r\Vert \l\Vert\l(\frac{\wh{\mbf M}^x}{n}\r)^{-1}\r\Vert,\nn
\end{align}
and the proof of part (b) follows from 
Proposition \ref{prop:L} and Lemma
\ref{lem:FTLN}(i),
\ref{lem:LLN},
\ref{lem:aiuto}(i),
\ref{lem:aiuto}(ii),
\ref{lem:aiuto}(iii),
\ref{lem:MO1}(iv),
\ref{lem:HO1bis}. This proves part (b) and completes the proof. 
\hfill $\Box$

\subsection{Proof of Corollary \ref{cor:PCOLS}}\label{app:A2}
\noindent
From \eqref{eq:finaleL} and the second last line of \eqref{eq:CLTLalt} in the proof of Theorem \ref{th:CLTL} and by imposing the identification constraint of orthonormal factors in Assumption \ref{ass:ident}(b) and $\bm J=\mbf I_r$ by Assumption \ref{ass:sign}, we have
\begin{align}
\wh{\bm\lambda}_i-\bm\lambda_i &= \frac 1{ T}\sum_{t=1}^T \mbf F_t\xi_{it}+ O_{\mathrm P}\l(\max\l(\frac 1n,\frac 1{\sqrt {nT}}\r)\r),\label{eq:defOLS2}
\end{align}
where the rate of the last term comes from \eqref{eq:finaleL} in the proof of Theorem \ref{th:CLTL}(a). By definition of OLS, and again imposing Assumption \ref{ass:ident}(b),  we have: 
\begin{align}
{\bm\lambda}_i^{\text{\tiny \upshape OLS}}-\bm\lambda_i
= \frac 1{ T}\sum_{t=1}^T  \mbf F_t\xi_{it}.\label{eq:defOLS}
\end{align}
By comparing \eqref{eq:defOLS2} and \eqref{eq:defOLS} we complete the proof of part (a). Part (b) follows also directly from Theorem \ref{th:CLTL}(b). This completes the proof.
\hfill $\Box$

\subsection{Proof of Theorem \ref{th:QML}}

In principle, we could try to replicate the proofs by \citet{baili16} under our identifying constraints and using only our assumptions. However, there is a much simpler and intuitive way to proceed. 

Consider the log-likelihood \eqref{eq:LL0}. The parameters to be estimated are given by ${\bm\varphi}=(\mathrm{vec}({\bm\Lambda})^\prime, \mathrm{vech}({\bm\Gamma}^\xi)^\prime)^\prime$. Let also $\wh{\bm\varphi}^{\text{\tiny \upshape QML}}=(\mathrm{vec}(\wh{\bm\Lambda}^{\text{\tiny \upshape QML}})^\prime, \mathrm{vech}(\wh{\bm\Gamma}^{\xi,\text{\tiny \upshape QML}})^\prime)^\prime$ denote  the maximizer of \eqref{eq:LL0} and 
$\underline{\bm\varphi}=(\mathrm{vec}(\underline{\bm\Lambda})^\prime, \mathrm{vech}(\underline{\bm\Gamma}^\xi)^\prime)^\prime$ denote a generic value of the parameters.
Whenever we consider $\underline{\bm\varphi}$ it is intended that its elements satisfy Assumptions \ref{ass:common} through \ref{ass:sign}.

Then, the elements of $\wh{\bm\varphi}^{\text{\tiny \upshape QML}}$ are such that:
\beq\label{intuito}
\wh{\bm\Lambda}^{\text{\tiny \upshape QML}}\wh{\bm\Lambda}^{\text{\tiny \upshape QML}\prime}+\wh{\bm\Gamma}^{\xi,\text{\tiny \upshape QML}}=\wh{\bm\Gamma}^x,
\eeq
where $\wh{\bm\Gamma}^x=T^{-1} {\bm X^\prime\bm X}$. To see that \eqref{intuito} defines the global maximum of the log-likelihood we proceed in two steps.

First, notice that the first order conditions derived from the log-likelihood \eqref{eq:LL0} are satisfied when \eqref{intuito} holds:
\begin{align}
&\l.\frac{\partial \ell(\bm{\mathcal X};\underline{\bm\varphi})}{\partial\underline{\bm\Lambda}}\r\vert_{\underline{\bm\Lambda}=\wh{\bm\Lambda}^{\text{\tiny \upshape QML}}}\nn\\
&= T
\l(\wh{\bm\Lambda}^{\text{\tiny \upshape QML}}\wh{\bm\Lambda}^{\text{\tiny \upshape QML}\prime}+\wh{\bm\Gamma}^{\xi,\text{\tiny \upshape QML}}\r)^{-1}
\wh{\bm\Gamma}^x
\l(\wh{\bm\Lambda}^{\text{\tiny \upshape QML}}\wh{\bm\Lambda}^{\text{\tiny \upshape QML}\prime}+\wh{\bm\Gamma}^{\xi,\text{\tiny \upshape QML}}\r)^{-1}
\wh{\bm\Lambda}^{\text{\tiny \upshape QML}}
-T\l(\wh{\bm\Lambda}^{\text{\tiny \upshape QML}}\wh{\bm\Lambda}^{\text{\tiny \upshape QML}\prime}+\wh{\bm\Gamma}^{\xi,\text{\tiny \upshape QML}}\r)^{-1}
\wh{\bm\Lambda}^{\text{\tiny \upshape QML}}\nn\\
&= T
\l(\wh{\bm\Lambda}^{\text{\tiny \upshape QML}}\wh{\bm\Lambda}^{\text{\tiny \upshape QML}\prime}+\wh{\bm\Gamma}^{\xi,\text{\tiny \upshape QML}}\r)^{-1}
\wh{\bm\Lambda}^{\text{\tiny \upshape QML}}
-T\l(\wh{\bm\Lambda}^{\text{\tiny \upshape QML}}\wh{\bm\Lambda}^{\text{\tiny \upshape QML}\prime}+\wh{\bm\Gamma}^{\xi,\text{\tiny \upshape QML}}\r)^{-1}
\wh{\bm\Lambda}^{\text{\tiny \upshape QML}}=\mbf 0_{r\times n},\nn\\
&\l.\frac{\partial \ell(\bm{\mathcal X};\underline{\bm\varphi})}{\partial\underline{\bm\Gamma}^\xi}\r\vert_{\underline{\bm\Gamma}^\xi=\wh{\bm\Gamma}^{\xi,\text{\tiny \upshape QML}}}\nn\\
&=
\frac T2   \l(\wh{\bm\Lambda}^{\text{\tiny \upshape QML}}\wh{\bm\Lambda}^{\text{\tiny \upshape QML}\prime}+\wh{\bm\Gamma}^{\xi,\text{\tiny \upshape QML}}\r)^{-1} \wh{\bm\Gamma}^x  
 \l(\wh{\bm\Lambda}^{\text{\tiny \upshape QML}}\wh{\bm\Lambda}^{\text{\tiny \upshape QML}\prime}+\wh{\bm\Gamma}^{\xi,\text{\tiny \upshape QML}}\r)^{-1}
- \frac T2  \l(\wh{\bm\Lambda}^{\text{\tiny \upshape QML}}\wh{\bm\Lambda}^{\text{\tiny \upshape QML}\prime}+\wh{\bm\Gamma}^{\xi,\text{\tiny \upshape QML}}\r)^{-1}\nn\\
&=
\frac T2   
 \l(\wh{\bm\Lambda}^{\text{\tiny \upshape QML}}\wh{\bm\Lambda}^{\text{\tiny \upshape QML}\prime}+\wh{\bm\Gamma}^{\xi,\text{\tiny \upshape QML}}\r)^{-1}
- \frac T2  \l(\wh{\bm\Lambda}^{\text{\tiny \upshape QML}}\wh{\bm\Lambda}^{\text{\tiny \upshape QML}\prime}+\wh{\bm\Gamma}^{\xi,\text{\tiny \upshape QML}}\r)^{-1}=\mbf 0_{n\times n}.\nn
\end{align}
Notice also that the conditions given in \citet[Equations (2.7)-(2.8)]{baili12}, which are derived from the first order conditions above, are also satisfied. Namely, it holds that:
\begin{align}
&\wh{\bm\Lambda}^{\text{\tiny \upshape QML}\prime}\l(\wh{\bm\Lambda}^{\text{\tiny \upshape QML}}\wh{\bm\Lambda}^{\text{\tiny \upshape QML}\prime}+\wh{\bm\Gamma}^{\xi,\text{\tiny \upshape QML}}\r)^{-1}\l\{\wh{\bm\Gamma}^x-\wh{\bm\Lambda}^{\text{\tiny \upshape QML}}\wh{\bm\Lambda}^{\text{\tiny \upshape QML}\prime}-\wh{\bm\Gamma}^{\xi,\text{\tiny \upshape QML}}\r\}=\mbf 0_{r\times n},\nn\\
&\l(\wh{\bm\Lambda}^{\text{\tiny \upshape QML}}\wh{\bm\Lambda}^{\text{\tiny \upshape QML}\prime}+\wh{\bm\Gamma}^{\xi,\text{\tiny \upshape QML}}\r)^{-1}=
\l(\wh{\bm\Lambda}^{\text{\tiny \upshape QML}}\wh{\bm\Lambda}^{\text{\tiny \upshape QML}\prime}+\wh{\bm\Gamma}^{\xi,\text{\tiny \upshape QML}}\r)^{-1}
\wh{\bm \Gamma}^x
\l(\wh{\bm\Lambda}^{\text{\tiny \upshape QML}}\wh{\bm\Lambda}^{\text{\tiny \upshape QML}\prime}+\wh{\bm\Gamma}^{\xi,\text{\tiny \upshape QML}}\r)^{-1},\nn\\
&\wh{\bm\Lambda}^{\text{\tiny \upshape QML}\prime}
\l(\wh{\bm\Lambda}^{\text{\tiny \upshape QML}}\wh{\bm\Lambda}^{\text{\tiny \upshape QML}\prime}+\wh{\bm\Gamma}^{\xi,\text{\tiny \upshape QML}}\r)^{-1}
\wh{\bm\Lambda}^{\text{\tiny \upshape QML}}=
\wh{\bm\Lambda}^{\text{\tiny \upshape QML}\prime}
\l(\wh{\bm\Lambda}^{\text{\tiny \upshape QML}}\wh{\bm\Lambda}^{\text{\tiny \upshape QML}\prime}+\wh{\bm\Gamma}^{\xi,\text{\tiny \upshape QML}}\r)^{-1}
\wh{\bm \Gamma}^x
\l(\wh{\bm\Lambda}^{\text{\tiny \upshape QML}}\wh{\bm\Lambda}^{\text{\tiny \upshape QML}\prime}+\wh{\bm\Gamma}^{\xi,\text{\tiny \upshape QML}}\r)^{-1}
\wh{\bm\Lambda}^{\text{\tiny \upshape QML}}.\nn
\end{align}

Second, given  the log-likelihood \eqref{eq:LL0}, for any  $\underline{\bm\varphi}$, we have:
\begin{align}
\ell(\bm{\mathcal X};\wh{\bm\varphi}^{\text{\tiny \upshape QML}})&-\ell(\bm{\mathcal X};\underline{\bm\varphi})\nn\\
&=-\frac{T}2\log\frac{\det\l( \wh{\bm\Lambda}^{\text{\tiny \upshape QML}}\wh{\bm\Lambda}^{\text{\tiny \upshape QML}\prime}+\wh{\bm\Gamma}^{\xi,\text{\tiny \upshape QML}}\r)}
{\det\l(\underline{\bm\Lambda}\,\underline{\bm\Lambda}^\prime+\underline{\bm\Gamma}^\xi \r)}-\frac{nT}{2}
+\frac T2
\text{tr}\l\{
\wh{\bm\Gamma}^x
\l(\underline{\bm\Lambda}\,\underline{\bm\Lambda}^\prime+\underline{\bm\Gamma}^\xi \r)^{-1}
\r\}\label{eq:DL}\\
&=-\frac{T}2\log\frac{\det\l( \wh{\bm\Lambda}^{\text{\tiny \upshape QML}}\wh{\bm\Lambda}^{\text{\tiny \upshape QML}\prime}+\wh{\bm\Gamma}^{\xi,\text{\tiny \upshape QML}}\r)}
{\det\l(\underline{\bm\Lambda}\,\underline{\bm\Lambda}^\prime+\underline{\bm\Gamma}^\xi \r)}-\frac{nT}{2}
+\frac T2
\text{tr}\l\{
\l(
\wh{\bm\Lambda}^{\text{\tiny \upshape QML}}\wh{\bm\Lambda}^{\text{\tiny \upshape QML}\prime}+\wh{\bm\Gamma}^{\xi,\text{\tiny \upshape QML}}
\r)
\l(\underline{\bm\Lambda}\,\underline{\bm\Lambda}^\prime+\underline{\bm\Gamma}^\xi \r)^{-1}
\r\},\nn
\end{align}
because of \eqref{intuito}. Now, denote as $\zeta_j$, $j=1,\ldots, n$, the $n$ roots of
\[
\det\l\{\l(\wh{\bm\Lambda}^{\text{\tiny \upshape QML}}\wh{\bm\Lambda}^{\text{\tiny \upshape QML}\prime}+\wh{\bm\Gamma}^{\xi,\text{\tiny \upshape QML}}\r)-\zeta\l(
\underline{\bm\Lambda}\,\underline{\bm\Lambda}^\prime+\underline{\bm\Gamma}^\xi 
\r) \r\}=0,
\]
which are all real since $\l(\wh{\bm\Lambda}^{\text{\tiny \upshape QML}}\wh{\bm\Lambda}^{\text{\tiny \upshape QML}\prime}+\wh{\bm\Gamma}^{\xi,\text{\tiny \upshape QML}}\r)\l(
\underline{\bm\Lambda}\,\underline{\bm\Lambda}^\prime+\underline{\bm\Gamma}^\xi 
\r)^{-1}
$ is a symmetric matrix. Then, \eqref{eq:DL} reads:
\beq\label{eq:DL0}
\ell(\bm{\mathcal X};\wh{\bm\varphi}^{\text{\tiny \upshape QML}})-\ell(\bm{\mathcal X};\underline{\bm\varphi}) = \frac T2\sum_{j=1}^n \l\{-\log\zeta_j-1+\zeta_j\r\}\ge 0,
\eeq
since $x\le e^{x-1}$ and so $-\log x-1+x\ge 0$. Therefore, from \eqref{eq:DL0} we see that \eqref{intuito} defines indeed the global maximum of the log-likleihood \eqref{eq:LL0}.

Now, consider the Singular Value Decomposition (SVD) of the true loadings:
\beq\label{svdQML0}
\frac{{\bm\Lambda}}{\sqrt n}={\mbf V}{\mbf D}{\mbf U},
\eeq
where ${\mbf V}$ is $n\times r$ and such that ${\mbf V}^\prime{\mbf V}=\mbf I_r$ for all $n\in\mathbb N$, 
${\mbf D}$ is $r\times r$ diagonal with strictly positive entries, and ${\mbf U}$  is $r\times r$ such that
${\mbf U}{\mbf U}^\prime={\mbf U}^{\prime}{\mbf U}=\mbf I_r$. 

%

We know that under Assumption \ref{ass:ident} and \ref{ass:sign}, $\bm\Lambda$ is globally identified. Let us show that  ${\mbf V}$, ${\mbf D}$, and ${\mbf U}$ in \eqref{svdQML0} are also globally identified. First, notice that, given Assumption \ref{ass:ident}(a) which requires $n^{-1}{{\bm\Lambda}^\prime{\bm\Lambda}}$ to be diagonal, in order to estimate ${\bm\Lambda}$ we need to estimate $nr-{r(r-1)}/{2}$ parameters. Then, by looking at the 
right-hand-side of \eqref{svdQML0} we see that to estimate ${\mbf V}$ we need to estimate $nr-\frac{r(r+1)}{2}$ parameters and to estimate ${\mbf D}$ we need to estimate $r$ parameters, thus ${\mbf V}{\mbf D}$ depends on $nr-{r(r+1)}/{2}+r=nr-{r(r-1)}/{2}$ parameters as ${\bm\Lambda}$. However, in principle the 
right-hand-side of \eqref{svdQML0} depends also on ${\mbf U}$ which in turn requires estimating ${r(r+1)}/2$ parameters more. But if we impose Assumption \ref{ass:ident}(a) also to the right-hand-side of \eqref{svdQML0} we have that $n^{-1}{{\bm\Lambda}^\prime{\bm\Lambda}}={\mbf U}^\prime{\mbf D}^2{\mbf U}$ has to be diagonal, and since ${\mbf D}$ is diagonal, without loss of generality we can set ${\mbf U}=\mbf J$, a diagonal $r\times r$ matrix with entries $\pm 1$. 

Furthermore, from Proposition \ref{prop:K00}(a) we also see that we must have
$n^{-1}{{\bm\Lambda}^\prime{\bm\Lambda}}=n^{-1}{\mbf M^\chi}=\mbf D^2$. 
Hence,
\beq\label{eq:defDD}
\mbf D = \l(\frac{\mbf M^\chi}n\r)^{1/2},
\eeq
and by Lemma \ref{lem:Gxi}(iv) the entries of $\mbf D$, denoted as $d_j$, $j=1,\ldots,r$, are such that
\beq\label{eq:CDDC}
\sqrt{\underline C_j}\!\le \lim\inf_{n\to\infty} d_j \le\lim\sup_{n\to\infty} d_j \le\! \sqrt{\overline C_j},
\eeq
where $\underline C_j$ and $\overline C_j$ are finite positive reals. Last, from \eqref{svdQML0} and Proposition \ref{prop:K00}(b), we must have
$n^{-1/2}{{\bm\Lambda}}={\mbf V}{\mbf D}\mbf J = \mbf V^\chi \l(n^{-1}{\mbf M^\chi}\r)^{1/2}$, and by \eqref{eq:defDD} it follows that 
\beq\label{eq:defVV}
{\mbf V} = \mbf V^\chi \mbf J.
\eeq
This shows that under Assumption \ref{ass:ident}(a), the parameters in ${\mbf D}$ are globally identified while the parameters in $\mbf V$ are uniquely identified up to a right-multiplication by $\mbf J$, which can be pinned down by means of Assumption \ref{ass:sign}, thus achieving global identification of $\mbf V$ as well. Given this discussion, hereafter, we can directly set $\mbf U=\mbf I_r$.

It is clear that the problem of QML estimation of the loadings can be rewritten as a problem of QML estimation of their SVD in \eqref{svdQML0}, namely of $\mbf V$ and $\mbf D$. To this end, we introduce also the SVDs of the QML estimator of the loadings and of a generic value of the loadings:
\begin{align}\label{eq:varieSVD}
&\frac{\wh{\bm\Lambda}^{\text{\tiny \upshape QML}}}{\sqrt n}=\wh{\mbf V}^{\text{\tiny \upshape QML}}\wh{\mbf D}^{\text{\tiny \upshape QML}}\wh{\mbf U}^{\text{\tiny \upshape QML}},\qquad\frac{\underline{\bm\Lambda}}{\sqrt n}=\underline{\mbf V}\,\underline{\mbf D}\,\underline{\mbf U},
\end{align}
where 
$\wh{\mbf V}^{\text{\tiny \upshape QML}}$ and $\underline{\mbf V}$ have the same properties as ${\mbf V}$, 
$\wh{\mbf D}^{\text{\tiny \upshape QML}}$ and $\underline{\mbf D}$ have the same properties as ${\mbf D}$, 
and
$\wh{\mbf U}^{\text{\tiny \upshape QML}}$ and $\underline{\mbf U}$ have the same properties as ${\mbf U}$. 

Because we set $\mbf U=\mbf I_r$, it follows that we can set $\wh{\mbf U}^{\text{\tiny \upshape QML}} = \mbf I_r$, and we are left with the task of finding $\wh{\mbf V}^{\text{\tiny \upshape QML}}$ and $\wh{\mbf D}^{\text{\tiny \upshape QML}}$. From \eqref{intuito} and since we must have $\wh{\mbf V}^{\text{\tiny \upshape QML}\prime}\wh{\mbf V}^{\text{\tiny \upshape QML}}=\mbf I_r$, it follows that
\beq
\frac{\wh{\bm\Gamma}^x}{n}-\wh{\mbf V}^{\text{\tiny \upshape QML}}\l(\wh{\mbf D}^{\text{\tiny \upshape QML}}\r)^2 \wh{\mbf V}^{\text{\tiny \upshape QML}\prime}-\frac{\wh{\bm\Gamma}^{\xi,\text{\tiny \upshape QML}}}{n}=\mbf 0_{n\times n},\nn
\eeq
which is equivalent to
\beq\label{intuito00}
\frac{\wh{\mbf V}^{\text{\tiny \upshape QML}\prime}\wh{\bm\Gamma}^x\wh{\mbf V}^{\text{\tiny \upshape QML}}}{n}-\l(\wh{\mbf D}^{\text{\tiny \upshape QML}}\r)^2 -\frac{\wh{\mbf V}^{\text{\tiny \upshape QML}\prime}\wh{\bm\Gamma}^{\xi,\text{\tiny \upshape QML}}\wh{\mbf V}^{\text{\tiny \upshape QML}}}{n}=\mbf 0_{r\times r}.
\eeq
Now, let $\wh{\bm v}_j^{\text{\tiny \upshape QML}}$, $j=1,\ldots, r$, be the $j$th column of $\wh{\mbf V}^{\text{\tiny \upshape QML}}$ and let 
$\wh{d}_j^{\,\text{\tiny \upshape QML}}$, $j=1,\ldots, r$, be the $j$th diagonal entry of $\wh{\mbf D}^{\text{\tiny \upshape QML}}$. Then, from \eqref{intuito00}, for all $j=1,\ldots, r$, we have
\beq
\frac{\wh{\bm v}_j^{\text{\tiny \upshape QML}\prime}\wh{\bm\Gamma}^x\wh{\bm v}_j^{\text{\tiny \upshape QML}}}{n}-\l(\wh{ d}_j^{\,\text{\tiny \upshape QML}}\r)^2 -\frac{\wh{\bm v}_j^{\text{\tiny \upshape QML}\prime}\wh{\bm\Gamma}^{\xi,\text{\tiny \upshape QML}}\wh{\bm v}_j^{\text{\tiny \upshape QML}}}{n}=0.
\eeq
 Then, trivially, the QML estimators are such that:
\beq\label{eq:minim1}
\l(\wh{\bm v}_j^{\text{\tiny \upshape QML}}, \wh{d}_j^{\,\text{\tiny \upshape QML}},\wh{\bm\Gamma}^{\xi,\text{\tiny \upshape QML}}\r)=\arg\!\!\!\!\min_{\underline{\bm v}_j,\underline{d}_j,\underline{\bm\Gamma}^\xi} 
\l(\frac{\underline{\bm v}_j^{\prime}\wh{\bm\Gamma}^x\underline{\bm v}_j}{n}-\underline{ d}_j^2 -
\frac{\underline{\bm v}_j^{\prime}\underline{\bm\Gamma}^{\xi}\underline{\bm v}_j}n
\r)^2=\arg\!\!\!\!\min_{\underline{\bm v}_j,\underline{d}_j,\underline{\bm\Gamma}^\xi} 
\mathcal L_{0j}\l(\underline{\bm v}_j, \underline d_j,\underline{\bm\Gamma}^{\xi} \r), \;\text{say,}
\eeq
where $\underline{\bm v}_j$, $j=1,\ldots, r$, is the $j$th column of $\underline{\mbf V}$ and  
$\underline{d}_j$, $j=1,\ldots, r$, is the $j$th diagonal entry of $\underline{\mbf D}$.

Define also the estimators $\wt{\bm v}_j$ and $\wt{d}_j$ such that:
\beq\label{eq:minim2}
\l(\wt{\bm v}_j, \wt{d}_j\r)=\arg\!\min_{\underline{\bm v}_j,\underline{d}_j}
\l(\frac{\underline{\bm v}_j^{\prime}\wh{\bm\Gamma}^x\underline{\bm v}_j}{n}-\underline{ d}_j^2 
\r)^2=\arg\!\min_{\underline{\bm v}_j,\underline{d}_j}\mathcal L_{1j}\l(\underline{\bm v}_j, \underline d_j\r), \;\text{say.}
\eeq
Consistently with \eqref{svdQML0} and \eqref{eq:varieSVD}, these define an estimator of $\bm\Lambda$ by means of its SVD:
\beq
\frac{\wt{\bm\Lambda}}{\sqrt n} = \wt{\mbf V}\wt{\mbf D}\wt{\mbf U},
\eeq
where $\wt{\mbf V}$ has columns $\wt{\bm v}_j$, $j=1,\ldots, r$, and it has the same properties as ${\mbf V}$ (because $\underline{\mbf V}$ does), and
$\wt{\mbf D}$ has entries  $\wt{ d}_j$, $j=1,\ldots, r$,  and it has the same properties as ${\mbf D}$ (because $\underline{\mbf D}$ does). We set $\wt{\mbf U}=\mbf I_r$ consistently with the fact that $\mbf U=\mbf I_r$.

Then, it is easily seen that
\beq\label{eq:DIFFL0L1}
\mathcal L_{0j}\l(\underline{\bm v}_j, \underline d_j,\underline{\bm\Gamma}^{\xi} \r)= \mathcal L_{1j}\l(\underline{\bm v}_j, \underline d_j\r)+\l(\frac{\underline{\bm v}_j^{\prime}\underline{\bm\Gamma}^{\xi}\underline{\bm v}_j}n\r)^2-2\l(\frac{\underline{\bm v}_j^{\prime}\underline{\bm\Gamma}^{\xi}\underline{\bm v}_j}n\r)\sqrt{\mathcal L_{1j}\l(\underline{\bm v}_j, \underline d_j\r)}.
\eeq
Now, we know that for any generic value of the idiosyncratic covariance matrix satisfying Assumption \ref{ass:idio}(b), it holds that
\beq\label{eq:DIFFL0L2}
\frac{\underline{\bm v}_j^{\prime}\underline{\bm\Gamma}^{\xi}\underline{\bm v}_j}n\le \max_{\bm w\, :\, \bm w^\prime\bm w=1}\frac{\bm w'\underline{\bm\Gamma}^{\xi}\bm w}{n}=\frac{\underline {\mu}^\xi_1}{n}\le \frac{M_{2\xi}}n,
\eeq
since $\underline{\bm v}_j^{\prime}\underline{\bm v}_j=1$ and because of  Lemma \ref{lem:Gxi}(v) which implies also that $M_{2\xi}$ is independent of $n$. Moreover, for any generic $\underline{\bm \Lambda}$ satisfying Assumption \ref{ass:common}(a):
\beq\label{eq:DIFFL0L3}
\sqrt{\mathcal L_{1j}\l(\underline{\bm v}_j, \underline d_j\r)} = \l\vert \frac{\underline{\bm v}_j^{\prime}\wh{\bm\Gamma}^x\underline{\bm v}_j}{n}-\underline{ d}_j^2  \r\vert \le 
 \frac{\underline{\bm v}_j^{\prime}\wh{\bm\Gamma}^x\underline{\bm v}_j}{n}+ \underline{ d}_j^2   
\le \frac{\wh{\mu}_1^x}{n} + \underline{ d}_j^2
= O_{\mathrm P}(1),
\eeq
since $\underline d_j= O(1)$ because $\Vert n^{-1/2}{\underline{\bm \Lambda}}\Vert= O(1)$ by Lemma \ref{lem:FTLN}(i), and $\wh{\mu}_1^x = O_{\mathrm P}(n)$ because of Lemma \ref{lem:MO1}(iii). Therefore, from \eqref{eq:DIFFL0L1}, \eqref{eq:DIFFL0L2}, and \eqref{eq:DIFFL0L3}, we have
\beq\label{eq:DIFFL0L4}
\l\vert\mathcal L_{0j}\l(\underline{\bm v}_j, \underline d_j,\underline{\bm\Gamma}^{\xi} \r)-\mathcal L_{1j}\l(\underline{\bm v}_j, \underline d_j \r)\r\vert = O_{\mathrm P}\l(\frac 1n\r),
\eeq
which holds for any $\underline{\bm\Gamma}^{\xi}$ satisfying Assumption \ref{ass:idio}(b). By continuity of these loss functions, from \eqref{eq:DIFFL0L4} it follows that their minima satisfy:
\begin{align}\label{eq:contLoss}
\l\Vert \wh{\bm v}_j^{\text{\tiny \upshape QML}}-\wt{\bm v}_j  \r\Vert = O_{\mathrm P}\l(\frac 1n\r),\qquad \l\vert \wh{d}_j^{\,\text{\tiny \upshape QML}}-\wt{d}_j  \r\vert = O_{\mathrm P}\l(\frac 1n\r),
\end{align}
for all $j=1,\ldots, r$.

Let us now find $\wt{\bm v}_j $ and $\wt{d}_j $. From \eqref{eq:minim2}, it is clear that the solutions must be such that:
%
\beq\label{eq:vGvD2}
\frac{\wt{\bm v}_j^{\prime}\wh{\bm\Gamma}^x\wt{\bm v}_j}{n}=\wt{ d}_j^{\,2},
\eeq
which means that $\wt{ d}_j^{\,2}$ must be an eigenvalue of $n^{-1}{\wh{\bm\Gamma}^x}$ and $\wt{\bm v}_j$ is the corresponding normalized eigenvector.  Obviously, the solution in \eqref{eq:vGvD20} defines a global minimum of the loss $\mathcal L_1(\underline{\bm v}_j,\underline d_j)$ since  $\mathcal L_1(\wt{\bm v}_j,\wt d_j)=0$ while $\mathcal L_1(\underline{\bm v}_j,\underline d_j)> 0$ for any other value $\underline{\bm v}_j\ne \wt{\bm v}_j$ and $\underline d_j\ne \wt d_j$. 

Now, let us show that indeed it must be that $\wt{ d}_j^{\,2}=n^{-1}{\wh{\mu}_j^x}$, $j=1,\ldots, r$, i.e., they have to be the $r$ largest eigenvalues of $n^{-1}{\wh{\bm\Gamma}^x}$. First, by Lemma \ref{lem:covarianze}(i) and Weyl's inequality, for all $k=1,\ldots, r$, as $n,T\to\infty$,
\beq\label{eq:trapani}
\l\vert\frac{\wh{\mu}_k^x}n-\frac{{\mu}_k^x}n\r\vert\le \l\Vert\frac{\wh{\bm\Gamma}^x}n-\frac{{\bm\Gamma}^x}n\r\Vert= O_{\mathrm P}\l(\frac 1{\sqrt T}\r).
\eeq
Therefore, if for any given $j=1,\ldots,r $ we were to choose $\wt{ d}_j^{\,2}=n^{-1}{\wh{\mu}_k^x}$ for, say, $k=r+1$, then, from \eqref{eq:trapani}, we would have
\begin{align}
&\lim\inf_{n\to\infty} \wt{ d}_j^{\,2} =\lim\inf_{n\to\infty} \frac{\wh{\mu}_{r+1}^x}n =\lim\inf_{n\to\infty}  \frac{{\mu}_{r+1}^x}n+O_{\mathrm P}\l(\frac 1{\sqrt T}\r)
\ge \lim\inf_{n\to\infty}  \frac{{\mu}_{n}^{\xi}}n+O_{\mathrm P}\l(\frac 1{\sqrt T}\r)= O_{\mathrm P}\l(\frac 1{\sqrt T}\r),\nn\\
&\lim\sup_{n\to\infty} \wt{ d}_j^{\,2} =\lim\sup_{n\to\infty} \frac{\wh{\mu}_{r+1}^x}n =\lim\sup_{n\to\infty}  \frac{{\mu}_{r+1}^x}n+O_{\mathrm P}\l(\frac 1{\sqrt T}\r)
\le \lim_{n\to\infty}  \frac{M_\xi}n+O_{\mathrm P}\l(\frac 1{\sqrt T}\r)= O_{\mathrm P}\l(\frac 1{\sqrt T}\r),\nn
\end{align}
since we assumed $\bm\Gamma^{\xi}$ to be positive definite, so $\mu^\xi_n>0$ for all $n\in\mathbb N$, and by Lemma \ref{lem:Gxi}(vi). Therefore, as $n,T\to\infty$, this choice for $\wt{ d}_j$ cannot be a consistent estimator of $d_j$ (nor an approximation of $\wh{d}_j^{\,\text{\tiny QML}}$), since while $\wt{ d}_j\to0$, as $n,T\to\infty$, it must be that $d_j>0$ for all $n\in\mathbb N$, as required in \eqref{eq:CDDC}.

So from \eqref{eq:vGvD2} and the above reasoning it follows that, for all $j=1,\ldots, r$,
\beq\label{eq:vGvD20}
\wt{ d}_j^{\,2}= \frac{\wh{\mu}_j^x}{n},\qquad \wt{\bm v}_j = \wh{\bm v}_j^x.
\eeq
where $ \wh{\bm v}_j^x$ is the eigenvector of  $n^{-1}{\wh{\bm\Gamma}^x}$ corresponding to its $j$th largest eigenvalue. 

Then, from \eqref{eq:vGvD20}, first  we have
\begin{align}
&\l\Vert \wh{\mbf V}^{\text{\tiny \upshape QML}}-\wh{\mbf  V}^x  \r\Vert=\l\Vert \wh{\mbf V}^{\text{\tiny \upshape QML}}-\wt{\mbf  V}  \r\Vert = O_{\mathrm P}\l(\frac 1n\r),\label{eq:contLos2}\\
&\l\Vert \wh{\mbf D}^{\,\text{\tiny \upshape QML}}-\l(\frac{\wh{\mbf M}^x}{n}\r)^{1/2}  \r\Vert=
\l\Vert \wh{\mbf D}^{\,\text{\tiny \upshape QML}}-\wt{\mbf D}  \r\Vert = O_{\mathrm P}\l(\frac 1n\r),\label{eq:contLos3}
\end{align}
because of   \eqref{eq:contLoss}, and, second, we have
\beq\label{eq:thisisPC}
\frac{\wt{\bm\Lambda}}{\sqrt n} =  \wh{\mbf V}^x\l(\frac{\wh{\mbf M}^x}{n}\r)^{1/2}=\frac{\wh{\bm\Lambda}}{\sqrt n} ,
\eeq
which shows that the estimator $\wt{\bm\Lambda}$  
 is the PC estimator  defined in \eqref{eq:estL}. Therefore, from \eqref{eq:contLos2}, \eqref{eq:contLos3}, and \eqref{eq:thisisPC}, and by using the SVD of the QML estimator in \eqref{eq:varieSVD}, it follows that:
\begin{align}
\l\Vert
\frac{\wh{\bm\Lambda}^{\text{\tiny QML}}}{\sqrt n}-\frac{\wh{\bm\Lambda}}{\sqrt n}
\r\Vert =&\, 
\l\Vert 
\wh{\mbf V}^{\text{\tiny \upshape QML}}\wh{\mbf D}^{\text{\tiny \upshape QML}}
-  \wh{\mbf V}^x\l(\frac{\wh{\mbf M}^x}{n}\r)^{1/2}
\r\Vert\nn\\
\le&\, \l\Vert
\wh{\mbf V}^{\text{\tiny \upshape QML}}-  \wh{\mbf V}^x
\r\Vert\,
\l\Vert
\l(\frac{\wh{\mbf M}^x}{n}\r)^{1/2}
\r\Vert
+
\l\Vert
\wh{\mbf D}^{\text{\tiny \upshape QML}}-\l(\frac{\wh{\mbf M}^x}{n}\r)^{1/2}
\r\Vert\, 
 \l\Vert
 \wh{\mbf V}^x
 \r\Vert\nn\\
 &+\l\Vert
\wh{\mbf V}^{\text{\tiny \upshape QML}}-  \wh{\mbf V}^x
\r\Vert\,
\l\Vert
\wh{\mbf D}^{\text{\tiny \upshape QML}}-\l(\frac{\wh{\mbf M}^x}{n}\r)^{1/2}
\r\Vert = O_{\mathrm P}\l(\frac 1n\r),\nn
\end{align}
because $ \Vert\wh{\mbf V}^x \Vert=1$ (it is a matrix of normalized eigenvectors) and $\Vert (n^{-1}{\wh{\mbf M}^x})^{1/2}\Vert=O_{\mathrm P}(1)$ by Lemma \ref{lem:MO1}(iii). This proves part (a).

Finally, consider the $r$ dimensional $i$th rows, $i=1,\ldots, n$, of ${\mbf V}^\chi$, $\wh{\mbf V}^x$, $\underline {\mbf V}$, and $\wh{\mbf V}^{\text{\tiny \upshape QML}}$, denoted as ${\mbf v}_i^{\chi\prime}$, $\wh{\mbf v}_i^{x\prime}$, $\underline{\mbf v}_i^\prime$ , and $\wh{\mbf v}_i^{\text{\tiny \upshape QML}\prime}$, respectively. From Lemma \ref{lem:covarianzerighe}(ii) and \ref{lem:covarianzerighe}(iii), we know that $\sqrt n\Vert {\mbf v}_i^{\chi\prime}\Vert = O(1)$ and $\sqrt n\Vert\wh{\mbf v}_i^{x\prime}\Vert=O_{\mathrm P}(1)$, which means that we must have $\sqrt n\Vert \underline {\mbf v}_i^\prime\Vert= O(1)$ since any generic parameter we consider is assumed to satisfy the same assumptions as the corresponding true parameters (${\mbf v}_i^{\chi\prime}$ in this case). Therefore, since the search of the elements of  $\wh{\mbf V}^{\text{\tiny \upshape QML}}$ is made over all $\underline {\mbf V}$ satisfying the same assumptions as ${\mbf V}^\chi$, it follows that we must have also $\sqrt n\Vert \wh{\mbf v}_i^{\text{\tiny \upshape QML}\prime}\Vert=O_{\mathrm P}(1)$. 

To see that this is the case, notice also that because of Assumption  \ref{ass:ident}(a) the QML estimator must be such that $n^{-1}{\wh{\bm \Lambda}^{\text{\tiny \upshape QML}\prime}\wh{\bm \Lambda}^{\text{\tiny \upshape QML}}}$ is diagonal and positive definite. Thus, we must have $\Vert n^{-1/2}{\wh{\bm \Lambda}^{\text{\tiny \upshape QML}}}\Vert = O_{\mathrm P}(1)$ (see also Lemma \ref{lem:FTLN}(i)). From the SVD in \eqref{eq:varieSVD} when $\wh{\mbf U}^{\text{\tiny \upshape QML}}=\mbf I_r$ and since it must be that $\wh{\mbf V}^{\text{\tiny \upshape QML}\prime}\wh{\mbf V}^{\text{\tiny \upshape QML}}=\mbf I_r$ for all $n\in\mathbb N$, then it follows that $\Vert \wh{\mbf V}^{\text{\tiny \upshape QML}}\Vert =O_{\mathrm P}(1)$ and $\Vert \wh{\mbf D}^{\text{\tiny \upshape QML}}\Vert =O_{\mathrm P}(\sqrt n)$. Moreover, since by Assumption \ref{ass:common}(a)   the $i$th row of
$\wh{\bm \Lambda}^{\text{\tiny \upshape QML}}$ must be such that $\Vert\wh{\bm \lambda}_i^{\text{\tiny \upshape QML}}\Vert=O_{\mathrm P}(1)$ and $\wh{\bm \lambda}_i^{\text{\tiny \upshape QML}\prime}= \wh{\mbf v}_i^{\text{\tiny \upshape QML}\prime} \wh{\mbf D}^{\text{\tiny \upshape QML}}$, then it must be that $\sqrt n\Vert \wh{\mbf v}_i^{\text{\tiny \upshape QML}\prime}\Vert=O_{\mathrm P}(1)$.

From this reasoning and \eqref{eq:contLos2} it follows that:
\beq\label{eq:rigacontLoss}
\sqrt n \l\Vert \wh{\mbf v}_i^{\text{\tiny \upshape QML}}-\wh{\mbf  v}_i^x 
\r\Vert = O_{\mathrm P}\l(\frac 1n\r).
\eeq
Now, by taking the $i$th row of the PC estimator in \eqref{eq:thisisPC} and the SVD of the QML estimator in \eqref{eq:varieSVD}, for any $i=1,\ldots, n$, we get
\begin{align}
\l\Vert
\wh{\bm\lambda}_i^{\text{\tiny QML}\prime}-\wh{\bm\lambda}_i^{\prime}
\r\Vert =&\, 
\l\Vert
\wh{\mbf v}_i^{\text{\tiny \upshape QML}\prime} \sqrt n \wh{ \mbf D}^{\text{\tiny \upshape QML}}-
 \wh{\mbf  v}_i^{x\prime}(\wh{\mbf M}^x)^{1/2} 
\r\Vert\nn\\
\le&\,
\l\Vert
 \wh{\mbf v}_i^{\text{\tiny \upshape QML}} - \wh{\mbf  v}_i^x 
\r\Vert \, 
\l\Vert
\l({\wh{\mbf M}^x}\r)^{1/2}
\r\Vert + 
\l\Vert
 \sqrt n \wh{ \mbf D}^{\text{\tiny \upshape QML}}-\l({\wh{\mbf M}^x}\r)^{1/2}
\r\Vert\,
\l\Vert
\wh{\mbf  v}_i^x
\r\Vert\nn\\
&+\l\Vert
 \wh{\mbf v}_i^{\text{\tiny \upshape QML}} - \wh{\mbf  v}_i^x 
\r\Vert \,
\l\Vert
 \sqrt n \wh{ \mbf D}^{\text{\tiny \upshape QML}}-\l({\wh{\mbf M}^x}\r)^{1/2}
\r\Vert= O_{\mathrm P}\l(\frac 1n\r),\nn
\end{align}
because of \eqref{eq:contLos3}, \eqref{eq:rigacontLoss}, and since $\Vert
\wh{\mbf  v}_i^x
\Vert=O_{\mathrm P}(n^{-1/2})$ by Lemma \ref{lem:covarianzerighe}(ii) and \ref{lem:covarianzerighe}(iii), and $\Vert ({\wh{\mbf M}^x})^{1/2}\Vert=O_{\mathrm P}(\sqrt n)$ by Lemma \ref{lem:MO1}(iii). This proves part (b) and completes the proof. \hfill $\Box$\\

\subsection{Proof of Theorem \ref{cor:QMLcons}}\label{app:A3}
From Theorem \ref{th:QML}(b) and  Corollary \ref{cor:PCOLS} it follows that:
\begin{align}\label{eq:QMLPCOLS}
\Vert\wh{\bm\lambda}_i^{\text{\tiny \upshape QML}}-{\bm\lambda}_i\Vert&=\Vert\wh{\bm\lambda}_i-{\bm\lambda}_i\Vert+O_{\mathrm {P}}\l(\frac 1{n}\r)\nn\\
&=\Vert\wh{\bm\lambda}_i-{\bm\lambda}_i^{\text{\tiny OLS}}\Vert+\Vert{\bm\lambda}_i^{\text{\tiny OLS}}-{\bm\lambda}_i\Vert+O_{\mathrm {P}}\l(\frac 1{n}\r)\nn\\
&=\Vert{\bm\lambda}_i^{\text{\tiny OLS}}-{\bm\lambda}_i\Vert+O_{\mathrm {P}}\l(\max\l(\frac 1n,\frac 1{\sqrt {nT}}\r)\r)\nn\\
&=O_{\mathrm {P}}\l(\max\l(\frac 1n,\frac 1{\sqrt {nT}}, \frac 1{\sqrt {T}}\r)\r),
\end{align}
since the unfeasible OLS estimator is $\sqrt T$-consistent. Moreover, if $\sqrt T/n\to 0$ as $n,T\to\infty$, 
by imposing the identification constraint of orthonormal factors in Assumption \ref{ass:ident}(b),
we have
\begin{align}
\sqrt T(\wh{\bm\lambda}_i^{\text{\tiny \upshape QML}}-\bm\lambda_i)&= 
\sqrt T(\wh{\bm\lambda}_i^{\text{\tiny \upshape OLS}}-\bm\lambda_i)+ o_{\mathrm P}(1)
=\frac 1{\sqrt T}\sum_{t=1}^T \mbf F_t\xi_{it}+ o_{\mathrm P}(1)\to_d\mathcal N\l(\mbf 0_r, \bm\Phi_i\r).\nn
\end{align}
by Slutsky's theorem and Assumption \ref{ass:CLT}. This completes the proof of part (a). Part (b) follows similarly from Theorem \ref{th:QML}(a). This completes the proof.
\hfill $\Box$

\subsection{Proof of Corollary \ref{cor:sett}}

The proof of part (a) follows by using the log-likelihood \eqref{eq:LL00} of an exact factor model in place of the log-likelihood \eqref{eq:LL0}, then, by 
noticing that $\bm{\Sigma}^{\xi}$ is positive definite by Assumption \ref{ass:idio}(a), and fnally by
replacing in the proof of Theorem \ref{th:QML}, $\wh{\bm\Gamma}^{\xi,\text{\tiny QML}}$ and $\underline{\bm\Gamma}^{\xi}$ with $\wh{\bm\Sigma}^{\xi,\text{\tiny QML}}$
and $\underline{\bm\Sigma}^{\xi}$ respectively. The proof of part (b) is the same but when using $\sigma^2\mbf I_n$, with $\sigma^2>0$, in place of $\bm\Sigma^\xi$. \hfill $\Box$

\subsection{Proof of Theorem \ref{prop:hessiani}}\label{app:A4}
First of all, denote the log-likelihoods for one observation:
\begin{align}
&\ell_{t}(\mbf x_t;\underline{\bm\varphi})=-\frac 12\log \det(\underline{\bm\Lambda}\,\underline{\bm\Lambda}^\prime+\underline{\bm\Sigma}^\xi)-\frac 12 \mbf x_t^\prime (\underline{\bm\Lambda}\,\underline{\bm\Lambda}^\prime+\underline{\bm\Sigma}^\xi)^{-1}\mbf x_t,\label{eq:LL0t}\\
&\ell_{t}(\mbf x_t|\mbf F_t;\underline{\bm\varphi})=-\frac 12\log \det(\underline{\bm\Sigma}^\xi)-\frac 12 (\mbf x_t-\underline{\bm\Lambda}\mbf F_t)^\prime (\underline{\bm\Sigma}^\xi)^{-1}(\mbf x_t-\underline{\bm\Lambda}\mbf F_t),\label{eq:LL0XFt}
\end{align}

Let us consider part (a). For any fixed value of the parameters, say $\wt{\bm\varphi}$, let
\begin{align}
\bm S(\bm{\mathcal X};\wt{\bm\varphi}) &= \sum_{t=1}^T\l.\frac{\partial \ell_t(\mbf x_t;\underline{\bm\varphi})}{\partial \underline{\bm\Lambda}^\prime}\r\vert_{\underline{\bm\varphi}={\wt{\bm\varphi}}} = \l(\ba{c}
\bm s_1^\prime(\bm{\mathcal X};\wt{\bm\varphi})\\
\vdots\\
\bm s_n^\prime(\bm{\mathcal X};\wt{\bm\varphi})\\
\ea
\r),\nn\\
\bm S(\bm{\mathcal X}|\bm{\mathcal F};\wt{\bm\varphi}) &= \sum_{t=1}^T\l.\frac{\partial \ell_t(\mbf x_t|\mbf F_t;\underline{\bm\varphi})}{\partial \underline{\bm\Lambda}^\prime}\r\vert_{\underline{\bm\varphi}={\wt{\bm\varphi}}} = \l(\ba{c}
\bm s_1^\prime(\bm{\mathcal X}|\bm{\mathcal F};\wt{\bm\varphi})\\
\vdots\\
\bm s_n^\prime(\bm{\mathcal X}|\bm{\mathcal F};\wt{\bm\varphi})\\
\ea
\r),\nn
\end{align}
which are $n\times r$ matrices of first derivatives, and where, for any given $i=1,\ldots, n$,
\beq\label{eq:defscore}
\bm s_i(\bm{\mathcal X};\wt{\bm\varphi})=\sum_{t=1}^T\l.\frac{\partial \ell_t(\mbf x_t;\underline{\bm\varphi})}{\partial \underline{\bm\lambda}_i^\prime}\r\vert_{\underline{\bm\varphi}={\wt{\bm\varphi}}}, \quad \bm s_i(\bm{\mathcal X}|\bm{\mathcal F};\wt{\bm\varphi})=\sum_{t=1}^T\l.\frac{\partial \ell_t(\mbf x_t|\mbf F_t;\underline{\bm\varphi})}{\partial \underline{\bm\lambda}_i^\prime}\r\vert_{\underline{\bm\varphi}={\wt{\bm\varphi}}}, 
\eeq
which are $r$-dimensional column vectors. 

Then, recalling that $\wh{\bm\Gamma}^x = \frac 1T\sum_{t=1}^T\mbf x_t\mbf x_t^\prime$, by computing the first derivatives of the log-likelihood \eqref{eq:LL0t}, we have
\begin{align}
\bm S(\bm{\mathcal X};\underline{\bm\varphi})=&\, -T (\underline{\bm\Lambda}\,\underline{\bm\Lambda}^\prime + \underline{\bm\Sigma}^\xi)^{-1} \underline{\bm\Lambda}+
(\underline{\bm\Lambda}\,\underline{\bm\Lambda}^\prime + \underline{\bm\Sigma}^\xi)^{-1} 
\wh{\bm\Gamma}^x
(\underline{\bm\Lambda}\,\underline{\bm\Lambda}^\prime + \underline{\bm\Sigma}^\xi)^{-1} \underline{\bm\Lambda}\nn\\
 =&\, -T (\underline{\bm\Sigma}^\xi)^{-1}\underline{\bm\Lambda}\{\mbf I_r+\underline{\bm\Lambda}^\prime(\underline{\bm\Sigma}^\xi)^{-1} \underline{\bm\Lambda}\}^{-1}
+ T (\underline{\bm\Lambda}\,\underline{\bm\Lambda}^\prime + \underline{\bm\Sigma}^\xi)^{-1} \wh{\bm\Gamma}^x(\underline{\bm\Sigma}^\xi)^{-1}\underline{\bm\Lambda}\{\mbf I_r+\underline{\bm\Lambda}^\prime(\underline{\bm\Sigma}^\xi)^{-1} \underline{\bm\Lambda}\}^{-1}\nn\\
 =&\, -T (\underline{\bm\Sigma}^\xi)^{-1}\underline{\bm\Lambda}\{\mbf I_r+\underline{\bm\Lambda}^\prime(\underline{\bm\Sigma}^\xi)^{-1} \underline{\bm\Lambda}\}^{-1}
+ T (\underline{\bm\Sigma}^\xi)^{-1} \wh{\bm\Gamma}^x(\underline{\bm\Sigma}^\xi)^{-1}\underline{\bm\Lambda}\{\mbf I_r+\underline{\bm\Lambda}^\prime(\underline{\bm\Sigma}^\xi)^{-1} \underline{\bm\Lambda}\}^{-1}\nn\\
&-T(\underline{\bm\Sigma}^\xi)^{-1}\underline{\bm\Lambda}
\{\mbf I_r+\underline{\bm\Lambda}^\prime(\underline{\bm\Sigma}^\xi)^{-1} \underline{\bm\Lambda}\}^{-1}
\underline{\bm\Lambda}^\prime
(\underline{\bm\Sigma}^\xi)^{-1}
\wh{\bm\Gamma}^x(\underline{\bm\Sigma}^\xi)^{-1}\underline{\bm\Lambda}\{\mbf I_r+\underline{\bm\Lambda}^\prime(\underline{\bm\Sigma}^\xi)^{-1} \underline{\bm\Lambda}\}^{-1},\label{eq:SX}
\end{align}
where we used the Woodbury identities
\begin{align}
&(\underline{\bm\Lambda}\,\underline{\bm\Lambda}^\prime + \underline{\bm\Sigma}^\xi)^{-1} \underline{\bm\Lambda} = (\underline{\bm\Sigma}^\xi)^{-1}\underline{\bm\Lambda}\{\mbf I_r+\underline{\bm\Lambda}^\prime(\underline{\bm\Sigma}^\xi)^{-1} \underline{\bm\Lambda}\}^{-1},\label{eq:wood1}\\
&(\underline{\bm\Lambda}\,\underline{\bm\Lambda}^\prime + \underline{\bm\Sigma}^\xi)^{-1}= (\underline{\bm\Sigma}^\xi)^{-1}-(\underline{\bm\Sigma}^\xi)^{-1}
\underline{\bm\Lambda}
\{\mbf I_r+\underline{\bm\Lambda}^\prime(\underline{\bm\Sigma}^\xi)^{-1} \underline{\bm\Lambda}\}^{-1}
\underline{\bm\Lambda}^\prime
(\underline{\bm\Sigma}^\xi)^{-1}.\label{eq:wood2}
\end{align}

In what follows, we make use of the following results. Let $\wh{\bm\Gamma}^\xi=\frac 1T\sum_{t=1}^T \bm\xi_t\bm\xi_t^\prime$ and let $\wh{\bm\Sigma}^\xi=\text{diag}(\wh{\bm\Gamma}^\xi)$, the diagonal matrix having as diagonal entries the diagonal entries of $\wh{\bm\Gamma}^\xi$. Then, by using twice Lemma \ref{lem:covarianzeF}(ii) and by Lemma \ref{lem:Gxi}(v), we have
\begin{align}
\frac 1n\l\Vert \bm\Sigma^\xi -\wh{\bm\Gamma}^\xi\r\Vert &\le \frac 1n\l\Vert \bm\Sigma^\xi -  \wh{\bm\Sigma}^\xi\r\Vert +\frac 1n\l\Vert\wh{\bm\Gamma}^\xi-\wh{\bm\Sigma}^\xi \r\Vert \le O_{\mathrm P}\l(\frac 1{\sqrt T}\r) + \frac 1n\l\Vert\wh{\bm\Gamma}^\xi \r\Vert\nn\\
&\le O_{\mathrm P}\l(\frac 1{\sqrt T}\r) + \frac 1n \l\Vert\bm\Gamma^\xi\r\Vert + O_{\mathrm P}\l(\frac 1{\sqrt T}\r) = O_{\mathrm P}\l(\frac 1{\sqrt T}\r) + O\l(\frac 1n\r).
\end{align}
which implies:
\beq\label{eq:anc1}
\l\Vert (\bm\Sigma^\xi)^{-1}\wh{\bm\Gamma}^\xi - \mbf I_n\r\Vert= O_{\mathrm P}\l(\max\l(\frac 1{n},\frac 1{\sqrt T}\r)\r).
\eeq
Moreover,
\beq\label{eq:anc2}
\l\Vert\{\mbf I_r+{\bm\Lambda}^\prime({\bm\Sigma}^\xi)^{-1} {\bm\Lambda}\}^{-1}\l\{{\bm\Lambda}^\prime({\bm\Sigma}^\xi)^{-1}{\bm\Lambda}\r\}-\mbf I_r\r\Vert= O\l(\frac 1n\r),
\eeq
because of  Lemma \ref{lem:woodbury},
\beq\label{eq:anc3}
\l\Vert \l\{{\bm\Lambda}^\prime({\bm\Sigma}^\xi)^{-1}{\bm\Lambda}\r\}^{-1}\r\Vert  
\le \frac 1{n\underline C_r \Vert({\bm\Sigma}^\xi)^{-1}\Vert}=\frac 1{n\underline C_r \min_{i=1,\ldots,n} \sigma_i^2}=\frac 1{n\underline C_r C_\xi}= O\l(\frac 1n\r), 
\eeq
because of Lemma \ref{lem:Gxi}(iv)-\ref{lem:Gxi}(vi), Assumption \ref{ass:idio}(a), and \citet[Theorem 7]{MK04}, 
\begin{align}
\label{eq:anc3bis}
\l\Vert\{\mbf I_r+{\bm\Lambda}^\prime({\bm\Sigma}^\xi)^{-1} {\bm\Lambda}\}^{-1}-\l\{{\bm\Lambda}^\prime({\bm\Sigma}^\xi)^{-1}{\bm\Lambda}\r\}^{-1}\r\Vert= O\l(\frac 1{n^2}\r),\end{align}
because of \eqref{eq:anc2} and \eqref{eq:anc3},  
\begin{align}
&\l\Vert \l\{{\bm\Lambda}^\prime({\bm\Sigma}^\xi)^{-1}{\bm\Lambda}\r\}^{-1}\bm\Lambda^\prime\r\Vert =  O\l(\frac 1{\sqrt n}\r),\label{eq:anc4bis}\\
&\l\Vert\bm\Lambda\{\mbf I_r+{\bm\Lambda}^\prime({\bm\Sigma}^\xi)^{-1} {\bm\Lambda}\}^{-1}-\bm\Lambda\l\{{\bm\Lambda}^\prime({\bm\Sigma}^\xi)^{-1}{\bm\Lambda}\r\}^{-1}\r\Vert= O\l(\frac 1{n^{3/2}}\r),\label{eq:anc4}
\end{align}
because of \eqref{eq:anc3}, \eqref{eq:anc3bis} and since $\Vert {\bm\Lambda}\Vert = O(\sqrt n)$ by Lemma \ref{lem:FTLN}(i),  and, last,
\begin{align}
&\l\Vert
{\bm\Lambda}{\bm\Lambda}^\prime
(\bm\Sigma^\xi)^{-1}\bm\Lambda\l\{\mbf I_r+{\bm\Lambda}^\prime({\bm\Sigma}^\xi)^{-1}{\bm\Lambda}\r\}^{-1}-\bm\Lambda\r\Vert= O\l(\frac 1{\sqrt n}\r),
\label{eq:anc4bis}
\end{align}
because of \eqref{eq:anc2} and since $\Vert {\bm\Lambda}\Vert = O(\sqrt n)$ by Lemma \ref{lem:FTLN}(i).

Now, denote $\wh{\bm\Gamma}^{\xi F}=\frac 1T\sum_{t=1}^T \bm\xi_t\mbf F_t^\prime$ and $\wh{\bm\Gamma}^{F\xi}=\wh{\bm\Gamma}^{\xi F\prime}$, so that we can write
\beq\label{eq:G4}
\wh{\bm\Gamma}^x 
= \bm\Lambda\l(\frac 1T\sum_{t=1}^T\mbf F_t\mbf F_t^\prime\r)\bm\Lambda^\prime + \wh{\bm\Gamma}^\xi+ \bm\Lambda\wh{\bm\Gamma}^{F\xi}+\wh{\bm\Gamma}^{\xi F}\bm\Lambda^\prime
= \bm\Lambda\bm\Lambda^\prime + \wh{\bm\Gamma}^\xi+ \bm\Lambda\wh{\bm\Gamma}^{F\xi}+\wh{\bm\Gamma}^{\xi F}\bm\Lambda^\prime,
\eeq
because of Assumption \ref{ass:ident}(b). Let us consider \eqref{eq:SX} when computed in the true value of the parameters. By means of \eqref{eq:anc1}-\eqref{eq:anc4} and \eqref{eq:G4} we have
\begin{align}
\bm S(\bm{\mathcal X};{\bm\varphi})=&\, T(\bm\Sigma^\xi)^{-1}
\l\{ 
\bm\Lambda\bm\Lambda^\prime 
+ \wh{\bm\Gamma}^\xi
+ \bm\Lambda\wh{\bm\Gamma}^{F\xi}
+\wh{\bm\Gamma}^{\xi F}\bm\Lambda^\prime
\r\} 
(\bm\Sigma^\xi)^{-1}\bm\Lambda\l\{{\bm\Lambda}^\prime({\bm\Sigma}^\xi)^{-1}{\bm\Lambda}\r\}^{-1}\nn\\
&- T(\bm\Sigma^\xi)^{-1}
\bm\Lambda\l\{{\bm\Lambda}^\prime({\bm\Sigma}^\xi)^{-1}{\bm\Lambda}\r\}^{-1}\bm\Lambda^\prime(\bm\Sigma^\xi)^{-1}
\l\{ 
\bm\Lambda\bm\Lambda^\prime + \wh{\bm\Gamma}^\xi+ \bm\Lambda\wh{\bm\Gamma}^{F\xi}+\wh{\bm\Gamma}^{\xi F}\bm\Lambda^\prime
\r\} 
(\bm\Sigma^\xi)^{-1}\bm\Lambda\l\{{\bm\Lambda}^\prime({\bm\Sigma}^\xi)^{-1}{\bm\Lambda}\r\}^{-1}\nn\\
&-T(\bm\Sigma^\xi)^{-1}\bm\Lambda\l\{{\bm\Lambda}^\prime({\bm\Sigma}^\xi)^{-1}{\bm\Lambda}\r\}^{-1}+ O\l(\frac T{\sqrt n}\r)\nn\\
=&\, T(\bm\Sigma^\xi)^{-1} \bm\Lambda\bm\Lambda^\prime (\bm\Sigma^\xi)^{-1}\bm\Lambda\l\{{\bm\Lambda}^\prime({\bm\Sigma}^\xi)^{-1}{\bm\Lambda}\r\}^{-1}+T(\bm\Sigma^\xi)^{-1} \wh{\bm\Gamma}^\xi(\bm\Sigma^\xi)^{-1}\bm\Lambda\l\{{\bm\Lambda}^\prime({\bm\Sigma}^\xi)^{-1}{\bm\Lambda}\r\}^{-1}\nn\\
&+T(\bm\Sigma^\xi)^{-1} \bm\Lambda\wh{\bm\Gamma}^{F\xi}(\bm\Sigma^\xi)^{-1}\bm\Lambda\l\{{\bm\Lambda}^\prime({\bm\Sigma}^\xi)^{-1}{\bm\Lambda}\r\}^{-1}+T(\bm\Sigma^\xi)^{-1}\wh{\bm\Gamma}^{\xi F}\bm\Lambda^\prime
(\bm\Sigma^\xi)^{-1}\bm\Lambda\l\{{\bm\Lambda}^\prime({\bm\Sigma}^\xi)^{-1}{\bm\Lambda}\r\}^{-1}\nn\\
&-T(\bm\Sigma^\xi)^{-1}
\bm\Lambda\l\{{\bm\Lambda}^\prime({\bm\Sigma}^\xi)^{-1}{\bm\Lambda}\r\}^{-1}\bm\Lambda^\prime(\bm\Sigma^\xi)^{-1}\bm\Lambda\bm\Lambda^\prime(\bm\Sigma^\xi)^{-1}\bm\Lambda\l\{{\bm\Lambda}^\prime({\bm\Sigma}^\xi)^{-1}{\bm\Lambda}\r\}^{-1}\nn\\
&-T(\bm\Sigma^\xi)^{-1}
\bm\Lambda\l\{{\bm\Lambda}^\prime({\bm\Sigma}^\xi)^{-1}{\bm\Lambda}\r\}^{-1}\bm\Lambda^\prime(\bm\Sigma^\xi)^{-1}\wh{\bm\Gamma}^\xi(\bm\Sigma^\xi)^{-1}\bm\Lambda\l\{{\bm\Lambda}^\prime({\bm\Sigma}^\xi)^{-1}{\bm\Lambda}\r\}^{-1}\nn\\
&-T(\bm\Sigma^\xi)^{-1}
\bm\Lambda\l\{{\bm\Lambda}^\prime({\bm\Sigma}^\xi)^{-1}{\bm\Lambda}\r\}^{-1}\bm\Lambda^\prime(\bm\Sigma^\xi)^{-1}\bm\Lambda\wh{\bm\Gamma}^{F\xi}(\bm\Sigma^\xi)^{-1}\bm\Lambda\l\{{\bm\Lambda}^\prime({\bm\Sigma}^\xi)^{-1}{\bm\Lambda}\r\}^{-1}\nn\\
&-T(\bm\Sigma^\xi)^{-1}
\bm\Lambda\l\{{\bm\Lambda}^\prime({\bm\Sigma}^\xi)^{-1}{\bm\Lambda}\r\}^{-1}\bm\Lambda^\prime(\bm\Sigma^\xi)^{-1}\wh{\bm\Gamma}^{\xi F}\bm\Lambda^\prime(\bm\Sigma^\xi)^{-1}\bm\Lambda\l\{{\bm\Lambda}^\prime({\bm\Sigma}^\xi)^{-1}{\bm\Lambda}\r\}^{-1}\nn\\
&-T(\bm\Sigma^\xi)^{-1}\bm\Lambda\l\{{\bm\Lambda}^\prime({\bm\Sigma}^\xi)^{-1}{\bm\Lambda}\r\}^{-1}+ O\l(\frac T{\sqrt n}\r)\nn\\
=&\,T(\bm\Sigma^\xi)^{-1} \bm\Lambda+T(\bm\Sigma^\xi)^{-1} \bm\Lambda\l\{{\bm\Lambda}^\prime({\bm\Sigma}^\xi)^{-1}{\bm\Lambda}\r\}^{-1}+T(\bm\Sigma^\xi)^{-1} \bm\Lambda\wh{\bm\Gamma}^{F\xi}(\bm\Sigma^\xi)^{-1}\bm\Lambda\l\{{\bm\Lambda}^\prime({\bm\Sigma}^\xi)^{-1}{\bm\Lambda}\r\}^{-1}\nn\\
&+T(\bm\Sigma^\xi)^{-1}\wh{\bm\Gamma}^{\xi F}-T(\bm\Sigma^\xi)^{-1}\bm\Lambda-T(\bm\Sigma^\xi)^{-1}\bm\Lambda\l\{{\bm\Lambda}^\prime({\bm\Sigma}^\xi)^{-1}{\bm\Lambda}\r\}^{-1}-T(\bm\Sigma^\xi)^{-1}\wh{\bm\Gamma}^{F\xi}(\bm\Sigma^\xi)^{-1}\bm\Lambda\l\{{\bm\Lambda}^\prime({\bm\Sigma}^\xi)^{-1}{\bm\Lambda}\r\}^{-1}\nn\\
&-T(\bm\Sigma^\xi)^{-1}
\bm\Lambda\l\{{\bm\Lambda}^\prime({\bm\Sigma}^\xi)^{-1}{\bm\Lambda}\r\}^{-1}\bm\Lambda^\prime(\bm\Sigma^\xi)^{-1}\wh{\bm\Gamma}^{\xi F}-T(\bm\Sigma^\xi)^{-1}\bm\Lambda\l\{{\bm\Lambda}^\prime({\bm\Sigma}^\xi)^{-1}{\bm\Lambda}\r\}^{-1}\nn\\
&+ O\l(\frac T{\sqrt n}\r)+ O_{\mathrm P}\l(\frac {\sqrt T}{\sqrt n}\r)+ O\l(\frac T{n^{3/2}}\r)\nn\\
=&\,T(\bm\Sigma^\xi)^{-1}\wh{\bm\Gamma}^{\xi F}-T(\bm\Sigma^\xi)^{-1}
\bm\Lambda\l\{{\bm\Lambda}^\prime({\bm\Sigma}^\xi)^{-1}{\bm\Lambda}\r\}^{-1}\bm\Lambda^\prime(\bm\Sigma^\xi)^{-1}\wh{\bm\Gamma}^{\xi F}\nn\\
&-T(\bm\Sigma^\xi)^{-1}\bm\Lambda\l\{{\bm\Lambda}^\prime({\bm\Sigma}^\xi)^{-1}{\bm\Lambda}\r\}^{-1}+ O\l(\frac T{\sqrt n}\r)+ O_{\mathrm P}\l(\frac {\sqrt T}{\sqrt n}\r)+ O\l(\frac T{n^{3/2}}\r).\label{eq:finquituttobene}
\end{align}

Then, notice that $[(\bm\Sigma^\xi)^{-1}]_{i\cdot}\bm\xi_t= [(\bm\Sigma^\xi)^{-1}]_{ii} \xi_{it}= \frac{ \xi_{it}}{\sigma_i^{2}}$, and
$[(\bm\Sigma^\xi)^{-1}]_{i\cdot}\bm\Lambda = \frac{\bm\lambda_i^\prime}{\sigma_i^{2} }$, $i=1,\ldots, n$.
The following holds
\begin{align}
& \frac 1{\sigma_i^2}{\bm\lambda}_i^\prime \l\{{\bm\Lambda}^\prime({\bm\Sigma}^\xi)^{-1}{\bm\Lambda}\r\}^{-1}= O\l(\frac 1n\r),\label{eq:anc5}\\
&T\l[(\bm\Sigma^\xi)^{-1}\wh{\bm\Gamma}^\xi(\bm\Sigma^\xi)^{-1}\bm\Lambda\l\{{\bm\Lambda}^\prime({\bm\Sigma}^\xi)^{-1}{\bm\Lambda}\r\}^{-1}
\r]_{i\cdot} =  \frac T{\sigma_i^2}{\bm\lambda}_i^\prime \l\{{\bm\Lambda}^\prime({\bm\Sigma}^\xi)^{-1}{\bm\Lambda}\r\}^{-1}+ O_{\mathrm P}\l(\frac{\sqrt T}{n} \r)+ O\l(\frac{ T}{n^2} \r),\label{eq:anc6}\\
&T\l[
(\bm\Sigma^\xi)^{-1}\bm\Lambda \l\{{\bm\Lambda}^\prime({\bm\Sigma}^\xi)^{-1}{\bm\Lambda}\r\}^{-1}\bm\Lambda^\prime(\bm\Sigma^\xi)^{-1}\wh{\bm\Gamma}^\xi(\bm\Sigma^\xi)^{-1}\bm\Lambda\l\{{\bm\Lambda}^\prime({\bm\Sigma}^\xi)^{-1}{\bm\Lambda}\r\}^{-1}
\r]_{i\cdot}=\frac T{\sigma_i^2}{\bm\lambda}_i^\prime \l\{{\bm\Lambda}^\prime({\bm\Sigma}^\xi)^{-1}{\bm\Lambda}\r\}^{-1}\nn\\
&+ O_{\mathrm P}\l(\frac{\sqrt T}{n} \r)+ O\l(\frac{ T}{n^2} \r),\label{eq:anc7}\\
&T\bm\lambda_i^\prime\{\mbf I_r+{\bm\Lambda}^\prime({\bm\Sigma}^\xi)^{-1} {\bm\Lambda}\}^{-1}=T\bm\lambda_i^\prime\l\{{\bm\Lambda}^\prime({\bm\Sigma}^\xi)^{-1}{\bm\Lambda}\r\}^{-1}+ O\l(\frac T{n^{2}}\r),\label{eq:anc8}\\
&T{\bm\lambda}_i^\prime{\bm\Lambda}^\prime
(\bm\Sigma^\xi)^{-1}\bm\Lambda\l\{\mbf I_r+{\bm\Lambda}^\prime({\bm\Sigma}^\xi)^{-1}{\bm\Lambda}\r\}^{-1}=T\bm\lambda_i^\prime + O\l(\frac T{n}\r),\label{eq:anc9}
\end{align}
where we used Assumption \ref{ass:common}(a) and then \eqref{eq:anc5} follows from \eqref{eq:anc3}, \eqref{eq:anc6}  and \eqref{eq:anc7} follow from \eqref{eq:anc1} and \eqref{eq:anc5}, \eqref{eq:anc8} follows from \eqref{eq:anc3bis}, and \eqref{eq:anc9} follows from and \eqref{eq:anc4bis}.

Therefore, by using \eqref{eq:anc6}-\eqref{eq:anc9} in \eqref{eq:finquituttobene} we have that the $i$th row of $\bm S(\bm{\mathcal X};\bm{\varphi})$ is such that
\begin{align}
\bm s_i^\prime(\bm{\mathcal X};\bm{\varphi})=&\,\frac 1{\sigma_i^2}\sum_{t=1}^T {\xi}_{it}\mbf F_t^\prime-\frac 1{\sigma_i^2}\bm\lambda_i^\prime
\l\{{\bm\Lambda}^\prime({\bm\Sigma}^\xi)^{-1}{\bm\Lambda}\r\}^{-1}{\bm\Lambda}^\prime(\bm\Sigma^\xi)^{-1}\sum_{t=1}^T {\bm\xi}_{t}\mbf F_t^\prime\nn\\
&-\frac T{\sigma_i^2}{\bm\lambda}_i^\prime\l\{{\bm\Lambda}^\prime({\bm\Sigma}^\xi)^{-1}{\bm\Lambda}\r\}^{-1}+ O\l(\frac T{ n}\r)+ O_{\mathrm P}\l(\frac {\sqrt T}{n}\r)+ O\l(\frac T{ n^2}\r).\label{eq:SXi}
\end{align}
Hence, $\Vert\bm s_i^\prime(\bm{\mathcal X};\bm{\varphi})\Vert=O_{\mathrm P}(\sqrt T)$, because of Assumption \ref{ass:idio}(a), \eqref{eq:anc5} and since 
\begin{align}
\E\l[\l\Vert \sum_{t=1}^T \frac{\mbf F_t\bm\xi_t^\prime \bm\Lambda}{ nT}\r\Vert^2\r] = O\l(\frac 1{nT}\r) \;\text{ and }\; \E\l[\l\Vert \sum_{t=1}^T \frac{\mbf F_t\bm\xi_t^\prime }{\sqrt nT}\r\Vert^2\r] = O\l(\frac 1{\sqrt T}\r)
,\label{eq:2a3bis}
\end{align}
because of  \eqref{eq:1a3} in the proof of Proposition \ref{prop:load}(a) and by Lemma \ref{lem:LLN}.
Moreover, from \eqref{eq:SXi} and by noticing that $(\bm\Sigma^\xi)^{-1}{\bm\Lambda}$ has the same properties as ${\bm\Lambda}$ since $\Vert (\bm\Sigma^\xi)^{-1}\Vert = O(1)$ by Assumption \ref{ass:idio}(a), we get
\begin{align}
\frac 1{\sqrt T}\l\Vert
\bm s_i^\prime(\bm{\mathcal X};\bm{\varphi})-\frac 1{\sigma_i^2}\sum_{t=1}^T {\xi}_{it}\mbf F_t^\prime
\r\Vert\le &\,
\frac 1{C_\xi}\l\Vert 
{\bm\lambda}_i^\prime
\l\{{\bm\Lambda}^\prime({\bm\Sigma}^\xi)^{-1}{\bm\Lambda}\r\}^{-1}
\r\Vert\,
\l\Vert\frac 1{\sqrt T} {\bm\Lambda}^\prime(\bm\Sigma^\xi)^{-1}\sum_{t=1}^T {\bm\xi}_{t}\mbf F_t^\prime
\r\Vert\nn\\
&+ 
\frac{\sqrt T}{C_\xi} \l\Vert
{\bm\lambda}_i^\prime\l\{{\bm\Lambda}^\prime({\bm\Sigma}^\xi)^{-1}{\bm\Lambda}\r\}^{-1}
\r\Vert+ O\l(\frac {\sqrt T}{n}\r)+ O_{\mathrm P}\l(\frac {1}{n}\r)+ O\l(\frac {\sqrt T}{n^2}\r)\nn\\
=&\,  O\l(\frac 1n\r) O_{\mathrm P}\l(\sqrt {n}\r)+  O\l(\frac {\sqrt T}{n}\r)+O_{\mathrm P}\l(\frac 1{n}\r)+ O\l(\frac {\sqrt T}{n^2}\r)\nn\\
=&\, O_{\mathrm P}\l(\max\l( \frac 1{\sqrt n},\frac{\sqrt T} {n}\r)\r),\label{eq:GOL}
\end{align}
.

Finally, by computing the first derivatives of the log-likelihood \eqref{eq:LL0XFt}, we have
\begin{align}
\bm S(\bm{\mathcal X}|\bm{\mathcal F};\underline{\bm\varphi})&= (\underline{\bm\Sigma}^\xi)^{-1}\sum_{t=1}^T (\mbf x_t-\underline{\bm\Lambda}\mbf F_t)\mbf F_t^\prime.\label{eq:SXF}
\end{align}
And \eqref{eq:SXF} when computed in the true value of the parameters is
\begin{align}
\bm S(\bm{\mathcal X}|\bm{\mathcal F};{\bm\varphi})&= ({\bm\Sigma}^\xi)^{-1}\sum_{t=1}^T \bm\xi_t\mbf F_t^\prime.\label{eq:ancorabene}
\end{align}
From \eqref{eq:ancorabene}, the $i$th row of $\bm S(\bm{\mathcal X}|\bm{\mathcal F};{\bm\varphi})$ is then:
\beq
\bm s_i^\prime(\bm{\mathcal X}|\bm{\mathcal F};{\bm\varphi})=\frac 1{\sigma_i^2}\sum_{t=1}^T {\xi}_{it}\mbf F_t^\prime.\label{eq:SFXi}
\eeq
Hence, $\Vert \bm s_i^\prime(\bm{\mathcal X}|\bm{\mathcal F};{\bm\varphi})\Vert = O_{\mathrm P}(\sqrt T)$ because of \eqref{eq:2a3bis},  and, by using \eqref{eq:SFXi} in \eqref{eq:GOL}, for any given $i=1,\ldots, n$ we have:
\[
\frac 1{\sqrt T}\l\Vert
\bm s_i^\prime(\bm{\mathcal X};\bm{\varphi})-\bm s_i^\prime(\bm{\mathcal X}|\bm{\mathcal F};{\bm\varphi})
\r\Vert = O_{\mathrm P}\l(\max\l( \frac 1{\sqrt n},\frac{\sqrt T} {n}\r)\r),
\]
which proves part (a).\smallskip

Turning to part (b).  For any specific value of the parameters, say $\wt{\bm\varphi}$, let
\begin{align}
\bm H(\bm{\mathcal X};\wt{\bm\varphi}) &= \sum_{t=1}^T\l.\frac{\partial^2 \ell_t(\mbf x_t;\underline{\bm\varphi})}{\partial \text{vec}(\underline{\bm\Lambda})^\prime\partial \text{vec}(\underline{\bm\Lambda})}\r\vert_{\underline{\bm\varphi}={\wt{\bm\varphi}}} = 
\l.
\frac{\partial \text{vec}(\bm S(\bm{\mathcal X};\underline{\bm\varphi}))^\prime}{\partial \text{vec}(\underline{\bm\Lambda})^\prime}
\r\vert_{\underline{\bm\varphi}={\wt{\bm\varphi}}}
= \l(\ba{cccc}
\bm h_{11}(\bm{\mathcal X};\wt{\bm\varphi})&\ldots& \bm h_{1n}(\bm{\mathcal X};\wt{\bm\varphi})\\
\vdots&\ddots&\vdots\\
\bm h_{n1}(\bm{\mathcal X};\wt{\bm\varphi})&\ldots& \bm h_{nn}(\bm{\mathcal X};\wt{\bm\varphi})\\
\ea
\r),\nn\\
\bm H(\bm{\mathcal X}|\bm{\mathcal F};\wt{\bm\varphi}) &= \sum_{t=1}^T\l.\frac{\partial^2 \ell_t(\mbf x_t|\mbf F_t;\underline{\bm\varphi})}{\partial \text{vec}(\underline{\bm\Lambda})^\prime\partial \text{vec}(\underline{\bm\Lambda})}\r\vert_{\underline{\bm\varphi}={\wt{\bm\varphi}}} = 
\l.
\frac{\partial \text{vec}(\bm S(\bm{\mathcal X}|\bm{\mathcal F};\underline{\bm\varphi}))^\prime}{\partial \text{vec}(\underline{\bm\Lambda})^\prime}
\r\vert_{\underline{\bm\varphi}={\wt{\bm\varphi}}}
= \l(\ba{cccc}
\bm h_{11}(\bm{\mathcal X}|\bm{\mathcal F};\wt{\bm\varphi})&\ldots& \bm h_{1n}(\bm{\mathcal X}|\bm{\mathcal F};\wt{\bm\varphi})\\
\vdots&\ddots&\vdots\\
\bm h_{n1}(\bm{\mathcal X}|\bm{\mathcal F};\wt{\bm\varphi})&\ldots& \bm h_{nn}(\bm{\mathcal X}|\bm{\mathcal F};\wt{\bm\varphi})\\
\ea
\r),\nn
\end{align}
which are $nr\times nr$ matrices obtained from the matricization of the 4th order tensor of second derivatives, and where, for any given $ i=1,\ldots, n$,
\begin{align}
\bm h_{ii}(\bm{\mathcal X};\wt{\bm\varphi})&= \sum_{t=1}^T\l.\frac{\partial^2 \ell_t(\mbf x_t;\underline{\bm\varphi})}{\partial \underline{\bm\lambda}_i^\prime\partial\underline{\bm\lambda}_i}\r\vert_{\underline{\bm\varphi}={\wt{\bm\varphi}}} = 
\l.
\frac{\partial \bm s_i^\prime(\bm{\mathcal X};\underline{\bm\varphi})}{\partial \underline{\bm\lambda}_i^\prime}
\r\vert_{\underline{\bm\varphi}={\wt{\bm\varphi}}},\label{eq:defhess}\\
\bm h_{ii}(\bm{\mathcal X}|\bm{\mathcal F};\wt{\bm\varphi})&= \sum_{t=1}^T\l.\frac{\partial^2 \ell_t(\mbf x_t|\mbf F_t;\underline{\bm\varphi})}{\partial \underline{\bm\lambda}_i^\prime\partial \underline{\bm\lambda}_i}\r\vert_{\underline{\bm\varphi}={\wt{\bm\varphi}}} = 
\l.
\frac{\partial \bm s_i^\prime(\bm{\mathcal X}|\bm{\mathcal F};\underline{\bm\varphi})}{\partial \underline{\bm\lambda}_i^\prime}
\r\vert_{\underline{\bm\varphi}={\wt{\bm\varphi}}},\nn
\end{align}
which are $r\times r$ matrices.

We now use the following relation:
\begin{align}
\text{vec}\l(\mathrm d \bm S^\prime(\bm{\mathcal X};\underline{\bm\varphi})\r) &=\l(
\frac{\partial \text{vec}(\bm S(\bm{\mathcal X};\underline{\bm\varphi}))^\prime}{\partial \text{vec}(\underline{\bm\Lambda})^\prime}\r)^\prime \text{vec}(\mathrm d \underline{\bm \Lambda}^\prime)= 
\bm H(\bm{\mathcal X};\underline{\bm\varphi}) 
\text{vec}(\mathrm d \underline{\bm \Lambda}^\prime),\label{eq:SHL}
\end{align}
Then, by denoting $\underline{\bm P}=\l\{\mbf I_r+\underline{\bm\Lambda}^\prime(\underline{\bm\Sigma}^\xi)^{-1}\underline{\bm\Lambda}\r\}$, from \eqref{eq:SX} we have
\begin{align}\label{eq:diffS}
\text{vec}\l(\mathrm d \bm S^\prime(\bm{\mathcal X};\underline{\bm\varphi})\r)=&\, 
-T
\l\{
(\underline{\bm\Sigma}^\xi)^{-1}\underline{\bm \Lambda}\otimes \mbf I_r
\r\}
\text{vec}\l(\mathrm d \underline{\bm P}^{-1}\r)\nn\\
&-T
\l\{
(\underline{\bm\Sigma}^\xi)^{-1}\otimes \underline{\bm P}^{-1}
\r\}
\text{vec}(\mathrm d \underline{\bm \Lambda}^\prime)\nn\\
&+T
\l\{
(\underline{\bm\Sigma}^\xi)^{-1}\wh{\bm\Gamma}^x (\underline{\bm\Sigma}^\xi)^{-1}\underline{\bm \Lambda}\otimes \mbf I_r
\r\}
\text{vec}\l(\mathrm d \underline{\bm P}^{-1}\r)\nn\\
&+T
\l\{
(\underline{\bm\Sigma}^\xi)^{-1}\wh{\bm\Gamma}^x (\underline{\bm\Sigma}^\xi)^{-1}\otimes \underline{\bm P}^{-1}
\r\}
\text{vec}(\mathrm d \underline{\bm \Lambda}^\prime)\nn\\
&-T
\l\{
(\underline{\bm\Sigma}^\xi)^{-1}\underline{\bm \Lambda}\,\underline{\bm P}^{-1}\underline{\bm \Lambda}^\prime(\underline{\bm\Sigma}^\xi)^{-1}\wh{\bm\Gamma}^x (\underline{\bm\Sigma}^\xi)^{-1}\underline{\bm \Lambda}
\otimes 
\mbf I_r
\r\}
\text{vec}\l(\mathrm d \underline{\bm P}^{-1}\r)\nn\\
&-T
\l\{
(\underline{\bm\Sigma}^\xi)^{-1}\underline{\bm \Lambda}\,\underline{\bm P}^{-1}\underline{\bm \Lambda}^\prime(\underline{\bm\Sigma}^\xi)^{-1}\wh{\bm\Gamma}^x (\underline{\bm\Sigma}^\xi)^{-1}
\otimes 
\underline{\bm P}^{-1}
\r\}
\text{vec}(\mathrm d \underline{\bm \Lambda}^\prime)\nn\\
&-T
\l\{
(\underline{\bm\Sigma}^\xi)^{-1}\underline{\bm \Lambda}\,\underline{\bm P}^{-1}
\otimes 
\underline{\bm P}^{-1}\underline{\bm \Lambda}^\prime
(\underline{\bm\Sigma}^\xi)^{-1}\wh{\bm\Gamma}^x (\underline{\bm\Sigma}^\xi)^{-1}
\r\}
\bm C_{n,r}
\text{vec}(\mathrm d \underline{\bm \Lambda}^\prime)\nn\\
&-T
\l\{
(\underline{\bm\Sigma}^\xi)^{-1}\underline{\bm \Lambda}
\otimes
\underline{\bm P}^{-1}\underline{\bm \Lambda}^\prime
(\underline{\bm\Sigma}^\xi)^{-1}\wh{\bm\Gamma}^x (\underline{\bm\Sigma}^\xi)^{-1}\underline{\bm \Lambda}
\r\}
\text{vec}\l(\mathrm d \underline{\bm P}^{-1}\r)\nn\\
&-T
\l\{(\underline{\bm\Sigma}^\xi)^{-1}
\otimes
\underline{\bm P}^{-1}\underline{\bm \Lambda}^\prime
(\underline{\bm\Sigma}^\xi)^{-1}\wh{\bm\Gamma}^x (\underline{\bm\Sigma}^\xi)^{-1}\underline{\bm \Lambda}\underline{\bm P}^{-1}
\r\}
\text{vec}(\mathrm d \underline{\bm \Lambda}^\prime).
\end{align}
Moreover,
\begin{align}
\text{vec}\l(\mathrm d \underline{\bm P}^{-1}\r) &= -\l( \underline{\bm P}^{-1}\otimes  \underline{\bm P}^{-1} \r)
\l\{
\l[
\underline{\bm \Lambda}^\prime(\underline{\bm\Sigma}^\xi)^{-1}\otimes \mbf I_r
\r]
+
\l[
\mbf I_r \otimes \underline{\bm \Lambda}^\prime(\underline{\bm\Sigma}^\xi)^{-1}
\r]\bm C_{n,r}
\r\}
\text{vec}(\mathrm d \underline{\bm \Lambda}^\prime)\nn\\
&= - \l\{
\l[
\underline{\bm P}^{-1}\underline{\bm \Lambda}^\prime(\underline{\bm\Sigma}^\xi)^{-1}\otimes \underline{\bm P}^{-1}
\r]
+\l[
\underline{\bm P}^{-1}\otimes\underline{\bm P}^{-1}\underline{\bm \Lambda}^\prime(\underline{\bm\Sigma}^\xi)^{-1}
\r]\bm C_{n,r}
\r\}
\text{vec}(\mathrm d \underline{\bm \Lambda}^\prime),\label{eq:diffP}
\end{align}
where $\bm C_{n,r}$ is the $nr\times nr$ commutation matrix such that $\text{vec}(\mathrm d \underline{\bm \Lambda})=\bm C_{n,r}\text{vec}(\mathrm d \underline{\bm \Lambda}^\prime)$.
Therefore, from \eqref{eq:SHL}, \eqref{eq:diffP}, and \eqref{eq:diffS} we get
\begin{align}\label{eq:HX}
\bm H(\bm{\mathcal X};\underline{\bm\varphi}) =&\,
T\l[
(\underline{\bm\Sigma}^\xi)^{-1}\underline{\bm \Lambda}\,\underline{\bm P}^{-1} \underline{\bm \Lambda}^\prime (\underline{\bm\Sigma}^\xi)^{-1}
\otimes 
\underline{\bm P}^{-1}
\r]\quad \text{\it A.1}\nn\\
&-T\l[
(\underline{\bm\Sigma}^\xi)^{-1}\wh{\bm\Gamma}^x (\underline{\bm\Sigma}^\xi)^{-1}\underline{\bm \Lambda}\,\underline{\bm P}^{-1} \underline{\bm \Lambda}^\prime (\underline{\bm\Sigma}^\xi)^{-1}
\otimes 
\underline{\bm P}^{-1}
\r]\quad \text{\it A.2}\nn\\
&+T\l[
(\underline{\bm\Sigma}^\xi)^{-1}\underline{\bm \Lambda}\,\underline{\bm P}^{-1} \underline{\bm \Lambda}^\prime (\underline{\bm\Sigma}^\xi)^{-1}\wh{\bm\Gamma}^x (\underline{\bm\Sigma}^\xi)^{-1}\underline{\bm \Lambda}\,\underline{\bm P}^{-1} \underline{\bm \Lambda}^\prime (\underline{\bm\Sigma}^\xi)^{-1}
\otimes 
\underline{\bm P}^{-1}
\r]\quad \text{\it A.3}\nn\\
&+T\l[
(\underline{\bm\Sigma}^\xi)^{-1}\underline{\bm \Lambda}\,\underline{\bm P}^{-1} \underline{\bm \Lambda}^\prime (\underline{\bm\Sigma}^\xi)^{-1}
\otimes 
\underline{\bm P}^{-1}\underline{\bm \Lambda}^\prime(\underline{\bm\Sigma}^\xi)^{-1}\wh{\bm\Gamma}^x (\underline{\bm\Sigma}^\xi)^{-1}\underline{\bm \Lambda}\,\underline{\bm P}^{-1}
\r]\quad \text{\it A.4}\nn\\
&+T\l[
(\underline{\bm\Sigma}^\xi)^{-1}\underline{\bm \Lambda}\,\underline{\bm P}^{-1}
\otimes 
\underline{\bm P}^{-1}\underline{\bm \Lambda}^\prime(\underline{\bm\Sigma}^\xi)^{-1}
\r]
\bm C_{n,r}\quad \text{\it A.5}\nn\\
&-T\l[
(\underline{\bm\Sigma}^\xi)^{-1}\wh{\bm\Gamma}^x (\underline{\bm\Sigma}^\xi)^{-1}\underline{\bm \Lambda}\,\underline{\bm P}^{-1}
\otimes 
\underline{\bm P}^{-1}\underline{\bm \Lambda}^\prime(\underline{\bm\Sigma}^\xi)^{-1}
\r]
\bm C_{n,r}\quad \text{\it A.6}\nn\\
&+T\l[
(\underline{\bm\Sigma}^\xi)^{-1}\underline{\bm \Lambda}\,\underline{\bm P}^{-1}\underline{\bm \Lambda}^\prime(\underline{\bm\Sigma}^\xi)^{-1}\wh{\bm\Gamma}^x (\underline{\bm\Sigma}^\xi)^{-1}\underline{\bm \Lambda}\,\underline{\bm P}^{-1}
\otimes 
\underline{\bm P}^{-1}\underline{\bm \Lambda}^\prime(\underline{\bm\Sigma}^\xi)^{-1}
\r]
\bm C_{n,r}\quad \text{\it A.7}\nn\\
&+T\l[
(\underline{\bm\Sigma}^\xi)^{-1}\underline{\bm \Lambda}\,\underline{\bm P}^{-1}
\otimes 
\underline{\bm P}^{-1}\underline{\bm \Lambda}^\prime(\underline{\bm\Sigma}^\xi)^{-1}\wh{\bm\Gamma}^x (\underline{\bm\Sigma}^\xi)^{-1} \underline{\bm \Lambda}\,\underline{\bm P}^{-1} \underline{\bm \Lambda}^\prime(\underline{\bm\Sigma}^\xi)^{-1}
\r]
\bm C_{n,r}\quad \text{\it A.8}\nn\\
&-T\l[
(\underline{\bm\Sigma}^\xi)^{-1}
\otimes 
\underline{\bm P}^{-1} 
\r]\quad \text{\it B.1}\nn\\
&+T\l[
(\underline{\bm\Sigma}^\xi)^{-1}\wh{\bm\Gamma}^x (\underline{\bm\Sigma}^\xi)^{-1}
\otimes 
\underline{\bm P}^{-1} 
\r]\quad \text{\it B.2}\nn\\
&-T\l[
(\underline{\bm\Sigma}^\xi)^{-1} \underline{\bm \Lambda}\,\underline{\bm P}^{-1} \underline{\bm \Lambda}^\prime(\underline{\bm\Sigma}^\xi)^{-1}\wh{\bm\Gamma}^x (\underline{\bm\Sigma}^\xi)^{-1}
\otimes 
\underline{\bm P}^{-1} 
\r]\quad \text{\it B.3}\nn\\
&-T\l[
(\underline{\bm\Sigma}^\xi)^{-1} \underline{\bm \Lambda}\,\underline{\bm P}^{-1} 
\otimes 
\underline{\bm P}^{-1} \underline{\bm \Lambda}^\prime(\underline{\bm\Sigma}^\xi)^{-1}\wh{\bm\Gamma}^x (\underline{\bm\Sigma}^\xi)^{-1}
\r]\bm C_{n,r}\quad \text{\it B.4}\nn\\
&-T\l[
(\underline{\bm\Sigma}^\xi)^{-1}
\otimes 
\underline{\bm P}^{-1} \underline{\bm \Lambda}^\prime(\underline{\bm\Sigma}^\xi)^{-1}\wh{\bm\Gamma}^x (\underline{\bm\Sigma}^\xi)^{-1}\underline{\bm \Lambda}\,\underline{\bm P}^{-1} 
\r].\quad \text{\it B.5}
\end{align}
Let us consider \eqref{eq:HX} when computed in the true value of the parameters. By means of \eqref{eq:G4} we have: 
{
\begin{align}
\bm H(\bm{\mathcal X};\bm\varphi)=&\, 
T\l[(\bm\Sigma^\xi)^{-1}\bm\Lambda \bm P^{-1}\bm\Lambda^\prime(\bm\Sigma^\xi)^{-1}
\otimes
 \bm P^{-1} 
\r]\quad \text{\it A.1}
\nn\\
&-T\l[
(\bm\Sigma^\xi)^{-1}\bm\Lambda\bm\Lambda^\prime(\bm\Sigma^\xi)^{-1}\bm\Lambda\bm P^{-1}\bm\Lambda^\prime(\bm\Sigma^\xi)^{-1}
\otimes
 \bm P^{-1} 
\r]\quad \text{\it A.2.1}
\nn\\
&-T\l[
(\bm\Sigma^\xi)^{-1}\wh{\bm\Gamma}^\xi(\bm\Sigma^\xi)^{-1}\bm\Lambda\bm P^{-1}\bm\Lambda^\prime(\bm\Sigma^\xi)^{-1}
\otimes
 \bm P^{-1} 
\r]\quad \text{\it A.2.2}
\nn\\
&-T\l[
(\bm\Sigma^\xi)^{-1}\bm\Lambda\wh{\bm\Gamma}^{F\xi}(\bm\Sigma^\xi)^{-1}\bm\Lambda\bm P^{-1}\bm\Lambda^\prime(\bm\Sigma^\xi)^{-1}
\otimes
 \bm P^{-1} 
\r]\quad \text{\it A.2.3}
\nn\\
&-T\l[
(\bm\Sigma^\xi)^{-1}\wh{\bm\Gamma}^{\xi F}\bm\Lambda^\prime(\bm\Sigma^\xi)^{-1}\bm\Lambda\bm P^{-1}\bm\Lambda^\prime(\bm\Sigma^\xi)^{-1}
\otimes
 \bm P^{-1} 
\r]\quad \text{\it A.2.4}
\nn\\
&+T\l[
(\bm\Sigma^\xi)^{-1} \bm\Lambda\bm P^{-1}\bm\Lambda^\prime(\bm\Sigma^\xi)^{-1} \bm\Lambda\bm\Lambda^\prime(\bm\Sigma^\xi)^{-1}\bm\Lambda\bm P^{-1}\bm\Lambda^\prime(\bm\Sigma^\xi)^{-1}
\otimes 
\bm P^{-1} 
\r]\quad \text{\it A.3.1}
\nn\\
&+T\l[
(\bm\Sigma^\xi)^{-1} \bm\Lambda\bm P^{-1}\bm\Lambda^\prime(\bm\Sigma^\xi)^{-1} \wh{\bm\Gamma}^\xi(\bm\Sigma^\xi)^{-1}\bm\Lambda\bm P^{-1}\bm\Lambda^\prime(\bm\Sigma^\xi)^{-1}
\otimes 
\bm P^{-1} 
\r]\quad \text{\it A.3.2}
\nn\\
&+T\l[
(\bm\Sigma^\xi)^{-1} \bm\Lambda\bm P^{-1}\bm\Lambda^\prime(\bm\Sigma^\xi)^{-1} \bm\Lambda\wh{\bm\Gamma}^{F\xi}(\bm\Sigma^\xi)^{-1}\bm\Lambda\bm P^{-1}\bm\Lambda^\prime(\bm\Sigma^\xi)^{-1}
\otimes 
\bm P^{-1} 
\r]\quad \text{\it A.3.3}
\nn\\
&+T\l[
(\bm\Sigma^\xi)^{-1} \bm\Lambda\bm P^{-1}\bm\Lambda^\prime(\bm\Sigma^\xi)^{-1} \wh{\bm\Gamma}^{\xi F}\bm\Lambda^\prime(\bm\Sigma^\xi)^{-1}\bm\Lambda\bm P^{-1}\bm\Lambda^\prime(\bm\Sigma^\xi)^{-1}
\otimes 
\bm P^{-1} 
\r]\quad \text{\it A.3.4}
\nn\\
&+T\l[
(\bm\Sigma^\xi)^{-1} \bm\Lambda\bm P^{-1}\bm\Lambda^\prime(\bm\Sigma^\xi)^{-1}
\otimes 
\bm P^{-1} \bm\Lambda^\prime(\bm\Sigma^\xi)^{-1}\bm\Lambda\bm\Lambda^\prime(\bm\Sigma^\xi)^{-1}\bm\Lambda\bm P^{-1}
\r]\quad \text{\it A.4.1}
\nn\\
&+T\l[
(\bm\Sigma^\xi)^{-1} \bm\Lambda\bm P^{-1}\bm\Lambda^\prime(\bm\Sigma^\xi)^{-1}
\otimes 
\bm P^{-1} \bm\Lambda^\prime(\bm\Sigma^\xi)^{-1}\wh{\bm\Gamma}^\xi(\bm\Sigma^\xi)^{-1}\bm\Lambda\bm P^{-1}
\r]\quad \text{\it A.4.2}
\nn\\
&+T\l[
(\bm\Sigma^\xi)^{-1} \bm\Lambda\bm P^{-1}\bm\Lambda^\prime(\bm\Sigma^\xi)^{-1}
\otimes 
\bm P^{-1} \bm\Lambda^\prime(\bm\Sigma^\xi)^{-1}\bm\Lambda\wh{\bm\Gamma}^{F\xi}(\bm\Sigma^\xi)^{-1}\bm\Lambda\bm P^{-1}
\r]\quad \text{\it A.4.3}
\nn\\
&+T\l[
(\bm\Sigma^\xi)^{-1} \bm\Lambda\bm P^{-1}\bm\Lambda^\prime(\bm\Sigma^\xi)^{-1}
\otimes 
\bm P^{-1} \bm\Lambda^\prime(\bm\Sigma^\xi)^{-1}\wh{\bm\Gamma}^{\xi F}\bm\Lambda^\prime(\bm\Sigma^\xi)^{-1}\bm\Lambda\bm P^{-1}
\r]\quad \text{\it A.4.4}
\nn\\
&+T\l[
(\bm\Sigma^\xi)^{-1}\bm\Lambda\bm P^{-1}
\otimes 
\bm P^{-1} \bm\Lambda^\prime(\bm\Sigma^\xi)^{-1}
\r]\bm C_{n,r}\quad \text{\it A.5}
\nn\\
&-T\l[
(\bm\Sigma^\xi)^{-1}\bm\Lambda\bm\Lambda^\prime(\bm\Sigma^\xi)^{-1}\bm\Lambda\bm P^{-1}
\otimes 
\bm P^{-1} \bm\Lambda^\prime (\bm\Sigma^\xi)^{-1}
\r]\bm C_{n,r}\quad \text{\it A.6.1}
\nn\\
&-T\l[
(\bm\Sigma^\xi)^{-1}\wh{\bm\Gamma}^\xi(\bm\Sigma^\xi)^{-1}\bm\Lambda\bm P^{-1}
\otimes 
\bm P^{-1} \bm\Lambda^\prime (\bm\Sigma^\xi)^{-1}
\r]\bm C_{n,r}\quad \text{\it A.6.2}
\nn\\
&-T\l[
(\bm\Sigma^\xi)^{-1}\bm\Lambda\wh{\bm\Gamma}^{F\xi}(\bm\Sigma^\xi)^{-1}\bm\Lambda\bm P^{-1}
\otimes 
\bm P^{-1} \bm\Lambda^\prime (\bm\Sigma^\xi)^{-1}
\r]\bm C_{n,r}\quad \text{\it A.6.3}
\nn\\
&-T\l[
(\bm\Sigma^\xi)^{-1}\wh{\bm\Gamma}^{\xi F}\bm\Lambda^\prime(\bm\Sigma^\xi)^{-1}\bm\Lambda\bm P^{-1}
\otimes 
\bm P^{-1} \bm\Lambda^\prime (\bm\Sigma^\xi)^{-1}
\r]\bm C_{n,r}\quad \text{\it A.6.4}
\nn\\
&+T\l[
(\bm\Sigma^\xi)^{-1}\bm\Lambda\bm P^{-1}\bm\Lambda^\prime(\bm\Sigma^\xi)^{-1}\bm\Lambda\bm\Lambda^\prime(\bm\Sigma^\xi)^{-1}\bm\Lambda\bm P^{-1}
\otimes 
\bm P^{-1} \bm\Lambda^\prime (\bm\Sigma^\xi)^{-1}
\r]\bm C_{n,r}\quad \text{\it A.7.1}
\nn\\
&+T\l[
(\bm\Sigma^\xi)^{-1}\bm\Lambda\bm P^{-1}\bm\Lambda^\prime(\bm\Sigma^\xi)^{-1}\wh{\bm\Gamma}^\xi(\bm\Sigma^\xi)^{-1}\bm\Lambda\bm P^{-1}
\otimes 
\bm P^{-1} \bm\Lambda^\prime (\bm\Sigma^\xi)^{-1}
\r]\bm C_{n,r}\quad \text{\it A.7.2}
\nn\\
&+T\l[
(\bm\Sigma^\xi)^{-1}\bm\Lambda\bm P^{-1}\bm\Lambda^\prime(\bm\Sigma^\xi)^{-1}\bm\Lambda\wh{\bm\Gamma}^{F\xi}(\bm\Sigma^\xi)^{-1}\bm\Lambda\bm P^{-1}
\otimes 
\bm P^{-1} \bm\Lambda^\prime (\bm\Sigma^\xi)^{-1}
\r]\bm C_{n,r}\quad \text{\it A.7.3}
\nn\\
&+T\l[
(\bm\Sigma^\xi)^{-1}\bm\Lambda\bm P^{-1}\bm\Lambda^\prime(\bm\Sigma^\xi)^{-1}\wh{\bm\Gamma}^{\xi F}\bm\Lambda^\prime(\bm\Sigma^\xi)^{-1}\bm\Lambda\bm P^{-1}
\otimes
\bm P^{-1} \bm\Lambda^\prime (\bm\Sigma^\xi)^{-1}
\r]\bm C_{n,r}\quad \text{\it A.7.4}
\nn\\
&+T\l[
(\bm\Sigma^\xi)^{-1}\bm\Lambda\bm P^{-1}
\otimes 
\bm P^{-1} \bm\Lambda^\prime(\bm\Sigma^\xi)^{-1}\bm\Lambda\bm\Lambda^\prime(\bm\Sigma^\xi)^{-1}\bm\Lambda\bm P^{-1}\bm\Lambda^\prime(\bm\Sigma^\xi)^{-1}
\r]\bm C_{n,r}\quad \text{\it A.8.1}
\nn\\
&+T\l[
(\bm\Sigma^\xi)^{-1}\bm\Lambda\bm P^{-1}
\otimes 
\bm P^{-1} \bm\Lambda^\prime(\bm\Sigma^\xi)^{-1}\wh{\bm\Gamma}^\xi(\bm\Sigma^\xi)^{-1}\bm\Lambda\bm P^{-1}\bm\Lambda^\prime(\bm\Sigma^\xi)^{-1}
\r]\bm C_{n,r}\quad \text{\it A.8.2}
\nn\\
&+T\l[
(\bm\Sigma^\xi)^{-1}\bm\Lambda\bm P^{-1}
\otimes 
\bm P^{-1} \bm\Lambda^\prime(\bm\Sigma^\xi)^{-1}\bm\Lambda\wh{\bm\Gamma}^{F \xi}(\bm\Sigma^\xi)^{-1}\bm\Lambda\bm P^{-1}\bm\Lambda^\prime(\bm\Sigma^\xi)^{-1}
\r]\bm C_{n,r}\quad \text{\it A.8.3}
\nn\\
&+T\l[
(\bm\Sigma^\xi)^{-1}\bm\Lambda\bm P^{-1}
\otimes 
\bm P^{-1} \bm\Lambda^\prime(\bm\Sigma^\xi)^{-1}\wh{\bm\Gamma}^{\xi F}\bm\Lambda^\prime(\bm\Sigma^\xi)^{-1}\bm\Lambda\bm P^{-1}\bm\Lambda^\prime(\bm\Sigma^\xi)^{-1}
\r]\bm C_{n,r}\quad \text{\it A.8.4}
\nn\\
&-T\l[(\bm\Sigma^\xi)^{-1}
\otimes \bm P^{-1} 
\r]\quad \text{\it B.1}
\nn\\
&+T\l[
(\bm\Sigma^\xi)^{-1}\bm\Lambda\bm\Lambda^\prime(\bm\Sigma^\xi)^{-1}
\otimes 
\bm P^{-1} 
\r]\quad \text{\it B.2.1}
\nn\\
&+T\l[
(\bm\Sigma^\xi)^{-1}\wh{\bm\Gamma}^\xi(\bm\Sigma^\xi)^{-1}
\otimes 
\bm P^{-1} 
\r]\quad \text{\it B.2.2}
\nn\\
&+T\l[
(\bm\Sigma^\xi)^{-1}\bm\Lambda\wh{\bm\Gamma}^{F\xi}(\bm\Sigma^\xi)^{-1}
\otimes 
\bm P^{-1} 
\r]\quad \text{\it B.2.3}
\nn\\
&+T\l[
(\bm\Sigma^\xi)^{-1}\wh{\bm\Gamma}^{\xi F}\bm\Lambda^\prime(\bm\Sigma^\xi)^{-1}
\otimes 
\bm P^{-1} 
\r]\quad \text{\it B.2.4}
\nn\\
&[\ldots]\nn
\end{align}
\begin{align}
&[\ldots]\nn\\
&-T\l[
(\bm\Sigma^\xi)^{-1}\bm\Lambda\bm P^{-1}\bm\Lambda^\prime(\bm\Sigma^\xi)^{-1}\bm\Lambda\bm\Lambda^\prime(\bm\Sigma^\xi)^{-1}
\otimes 
\bm P^{-1} 
\r]\quad \text{\it B.3.1}
\nn\\
&-T\l[
(\bm\Sigma^\xi)^{-1}\bm\Lambda\bm P^{-1}\bm\Lambda^\prime(\bm\Sigma^\xi)^{-1}\wh{\bm\Gamma}^\xi(\bm\Sigma^\xi)^{-1}
\otimes 
\bm P^{-1} 
\r]\quad \text{\it B.3.2}
\nn\\
&-T\l[
(\bm\Sigma^\xi)^{-1}\bm\Lambda\bm P^{-1}\bm\Lambda^\prime(\bm\Sigma^\xi)^{-1}\bm\Lambda\wh{\bm\Gamma}^{F\xi}(\bm\Sigma^\xi)^{-1}
\otimes 
\bm P^{-1} 
\r]\quad \text{\it B.3.3}
\nn\\
&-T\l[
(\bm\Sigma^\xi)^{-1}\bm\Lambda\bm P^{-1}\bm\Lambda^\prime(\bm\Sigma^\xi)^{-1}\wh{\bm\Gamma}^{\xi F}\bm\Lambda^\prime(\bm\Sigma^\xi)^{-1}
\otimes 
\bm P^{-1} 
\r]\quad \text{\it B.3.4}
\nn\\
&-T\l[
(\bm\Sigma^\xi)^{-1}\bm\Lambda\bm P^{-1}
\otimes 
\bm P^{-1} \bm\Lambda^\prime(\bm\Sigma^\xi)^{-1}\bm\Lambda\bm\Lambda^\prime(\bm\Sigma^\xi)^{-1}
\r]\bm C_{n,r}\quad \text{\it B.4.1}
\nn\\
&-T\l[
(\bm\Sigma^\xi)^{-1}\bm\Lambda\bm P^{-1}
\otimes 
\bm P^{-1} \bm\Lambda^\prime(\bm\Sigma^\xi)^{-1}\wh{\bm\Gamma}^\xi(\bm\Sigma^\xi)^{-1}
\r]\bm C_{n,r}\quad \text{\it B.4.2}
\nn\\
&-T\l[
(\bm\Sigma^\xi)^{-1}\bm\Lambda\bm P^{-1}
\otimes 
\bm P^{-1} \bm\Lambda^\prime(\bm\Sigma^\xi)^{-1}\bm\Lambda\wh{\bm\Gamma}^{F\xi}(\bm\Sigma^\xi)^{-1}
\r]\bm C_{n,r}\quad \text{\it B.4.3}
\nn\\
&-T\l[
(\bm\Sigma^\xi)^{-1}\bm\Lambda\bm P^{-1}
\otimes 
\bm P^{-1} \bm\Lambda^\prime(\bm\Sigma^\xi)^{-1}\wh{\bm\Gamma}^{\xi F}\bm\Lambda^\prime(\bm\Sigma^\xi)^{-1}
\r]\bm C_{n,r}\quad \text{\it B.4.4}
\nn\\
&-T\l[
(\bm\Sigma^\xi)^{-1}
\otimes 
\bm P^{-1}  \bm\Lambda^\prime(\bm\Sigma^\xi)^{-1}\bm\Lambda  \bm\Lambda^\prime(\bm\Sigma^\xi)^{-1}\bm\Lambda\bm P^{-1}
\r]\quad \text{\it B.5.1}
\nn\\
&-T\l[
(\bm\Sigma^\xi)^{-1}
\otimes 
\bm P^{-1}  \bm\Lambda^\prime(\bm\Sigma^\xi)^{-1}\wh{\bm\Gamma}^\xi(\bm\Sigma^\xi)^{-1}\bm\Lambda\bm P^{-1}
\r]\quad \text{\it B.5.2}
\nn\\
&-T\l[
(\bm\Sigma^\xi)^{-1}
\otimes 
\bm P^{-1}  \bm\Lambda^\prime(\bm\Sigma^\xi)^{-1}\bm\Lambda\wh{\bm\Gamma}^{F\xi}(\bm\Sigma^\xi)^{-1}\bm\Lambda\bm P^{-1}
\r]\quad \text{\it B.5.3}
\nn\\
&-T\l[
(\bm\Sigma^\xi)^{-1}
\otimes 
\bm P^{-1}  \bm\Lambda^\prime(\bm\Sigma^\xi)^{-1}\wh{\bm\Gamma}^{\xi F}\bm\Lambda^\prime(\bm\Sigma^\xi)^{-1}\bm\Lambda\bm P^{-1}
\r].\quad \text{\it B.5.4}
\label{eq:43gatti}
\end{align}
}

By using \eqref{eq:anc1}-\eqref{eq:anc4} 30 terms of \eqref{eq:43gatti} cancel out  asymptotically, namely:
\ben
\item $\Vert \text{\it A.1}-\text{\it A.2.2}\Vert = O_{\mathrm P}(\sqrt T n^{-1})+ O_{\mathrm P}(T n^{-2})$; 
\item $\Vert \text{\it A.2.1}-\text{\it A.3.1}\Vert = O_{\mathrm P}(T n^{-1})$;
\item $\Vert \text{\it A.2.3}-\text{\it A.3.3}\Vert = O_{\mathrm P}(T n^{-1})$;
\item $\Vert \text{\it A.5}-\text{\it A.6.2}\Vert = O_{\mathrm P}(\sqrt T n^{-1})+ O_{\mathrm P}(T n^{-2})$;
\item $\Vert \text{\it A.6.1}-\text{\it A.7.1}\Vert = O_{\mathrm P}(T n^{-1})$;
\item $\Vert \text{\it A.6.3}-\text{\it A.7.3}\Vert = O_{\mathrm P}(T n^{-1})$;
\item $\Vert \text{\it A.2.4}-\text{\it B.2.4}\Vert = O_{\mathrm P}(T n^{-1})$;
\item $\Vert \text{\it B.2.1}-\text{\it B.3.1}\Vert = O_{\mathrm P}(T n^{-1})$;
\item $\Vert \text{\it A.3.2}-\text{\it B.3.2}\Vert = O_{\mathrm P}(\sqrt T n^{-1})+ O_{\mathrm P}(T n^{-2})$; 
\item $\Vert \text{\it B.2.3}-\text{\it B.3.3}\Vert = O_{\mathrm P}(T n^{-1})$;
\item $\Vert \text{\it A.3.4}-\text{\it B.3.4}\Vert = O_{\mathrm P}(T n^{-1})$;
\item $\Vert \text{\it A.8.1}-\text{\it B.4.1}\Vert = O_{\mathrm P}(T n^{-1})$;
\item $\Vert \text{\it A.8.2}-\text{\it B.4.2}\Vert = O_{\mathrm P}(T n^{-1})+O_{\mathrm P}(\sqrt T n^{-1})+ O_{\mathrm P}(T n^{-2})$;
\item $\Vert \text{\it A.8.4}-\text{\it B.4.4}\Vert = O_{\mathrm P}(T n^{-1})$;
\item $\Vert \text{\it B.2.2}-\text{\it B.5.2}\Vert = O_{\mathrm P}(T n^{-1})+O_{\mathrm P}(\sqrt T n^{-1})+ O_{\mathrm P}(T n^{-2})$.
\een
and by using again \eqref{eq:anc2} we are left with the following 13 terms (ordered differently than in the previous expression):
\begin{align}
\bm H(\bm{\mathcal X};\bm\varphi)=&\, 
-T \l[
(\bm\Sigma^{\xi})^{-1}
\otimes 
\mbf I_r
\r]\quad \text{\it B.5.1}\nn\\
&+T\l[
(\bm\Sigma^{\xi})^{-1}\bm\Lambda
\l\{\bm\Lambda^\prime(\bm\Sigma^{\xi})^{-1}\bm\Lambda \r\}^{-1} 
\bm\Lambda^\prime(\bm\Sigma^{\xi})^{-1} 
\otimes 
\mbf I_r
 \r]\quad \text{\it A.4.1}\nn\\
&+T\l[
(\bm\Sigma^{\xi})^{-1}\bm\Lambda
\l\{\bm\Lambda^\prime(\bm\Sigma^{\xi})^{-1}\bm\Lambda \r\}^{-1} 
\bm\Lambda^\prime(\bm\Sigma^{\xi})^{-1} 
\otimes 
\l\{\bm\Lambda^\prime(\bm\Sigma^{\xi})^{-1}\bm\Lambda \r\}^{-1} 
\r]\quad \text{\it A.4.2}\nn\\
&+T\l[
(\bm\Sigma^{\xi})^{-1}\bm\Lambda
\l\{\bm\Lambda^\prime(\bm\Sigma^{\xi})^{-1}\bm\Lambda \r\}^{-1} 
\bm\Lambda^\prime(\bm\Sigma^{\xi})^{-1} 
\otimes 
\wh{\bm\Gamma}^{F\xi} (\bm\Sigma^{\xi})^{-1}\bm\Lambda\l\{\bm\Lambda^\prime(\bm\Sigma^{\xi})^{-1}\bm\Lambda \r\}^{-1} 
\r]\quad \text{\it A.4.3}\nn\\
&+T\l[
(\bm\Sigma^{\xi})^{-1}\bm\Lambda
\l\{\bm\Lambda^\prime(\bm\Sigma^{\xi})^{-1}\bm\Lambda \r\}^{-1} 
\bm\Lambda^\prime(\bm\Sigma^{\xi})^{-1} 
\otimes 
\l\{\bm\Lambda^\prime(\bm\Sigma^{\xi})^{-1}\bm\Lambda \r\}^{-1} 
\bm\Lambda^\prime(\bm\Sigma^{\xi})^{-1} \wh{\bm\Gamma}^{\xi F}
\r]\quad \text{\it A.4.4}\nn\\
&-T\l[
(\bm\Sigma^{\xi})^{-1}
\otimes
\l\{\bm\Lambda^\prime(\bm\Sigma^{\xi})^{-1}\bm\Lambda \r\}^{-1} 
\r]\quad \text{\it B.1}\nn\\
&-T\l[
(\bm\Sigma^{\xi})^{-1}
\otimes
\wh{\bm\Gamma}^{F\xi}(\bm\Sigma^{\xi})^{-1}\bm\Lambda\l\{\bm\Lambda^\prime(\bm\Sigma^{\xi})^{-1}\bm\Lambda \r\}^{-1} 
\r]\quad \text{\it B.5.2}\nn\\
&-T\l[
(\bm\Sigma^{\xi})^{-1}
\otimes
\l\{\bm\Lambda^\prime(\bm\Sigma^{\xi})^{-1}\bm\Lambda \r\}^{-1} \bm\Lambda^\prime(\bm\Sigma^{\xi})^{-1}\wh{\bm\Gamma}^{\xi F}
\r]\quad \text{\it B.5.3}\nn\\
&-T\l[
(\bm\Sigma^{\xi})^{-1}\bm\Lambda\l\{\bm\Lambda^\prime(\bm\Sigma^{\xi})^{-1}\bm\Lambda \r\}^{-1}
\otimes
\wh{\bm\Gamma}^{F\xi}(\bm\Sigma^{\xi})^{-1}
\r]\bm C_{n,r}\quad \text{\it B.4.3}\nn\\
&-T\l[
(\bm\Sigma^{\xi})^{-1}\wh{\bm\Gamma}^{\xi F}
\otimes
\l\{\bm\Lambda^\prime(\bm\Sigma^{\xi})^{-1}\bm\Lambda \r\}^{-1}\bm\Lambda^\prime (\bm\Sigma^{\xi})^{-1}
\r]\bm C_{n,r}\quad \text{\it A.6.4}\nn\\
&+T\l[
(\bm\Sigma^{\xi})^{-1}\bm\Lambda\l\{\bm\Lambda^\prime(\bm\Sigma^{\xi})^{-1}\bm\Lambda \r\}^{-1}
\otimes
\l\{\bm\Lambda^\prime(\bm\Sigma^{\xi})^{-1}\bm\Lambda \r\}^{-1} \bm\Lambda^\prime(\bm\Sigma^{\xi})^{-1}
\r]\bm C_{n,r}\quad \text{\it A.7.1}\nn\\
&+T\l[
(\bm\Sigma^{\xi})^{-1}\bm\Lambda\l\{\bm\Lambda^\prime(\bm\Sigma^{\xi})^{-1}\bm\Lambda \r\}^{-1} \bm\Lambda^\prime(\bm\Sigma^{\xi})^{-1}\wh{\bm\Gamma}^{\xi F}
\otimes
\l\{\bm\Lambda^\prime(\bm\Sigma^{\xi})^{-1}\bm\Lambda \r\}^{-1} \bm\Lambda^\prime(\bm\Sigma^{\xi})^{-1}
\r]\bm C_{n,r}\quad \text{\it A.7.2}\nn\\
&+T\l[
(\bm\Sigma^{\xi})^{-1}\bm\Lambda\l\{\bm\Lambda^\prime(\bm\Sigma^{\xi})^{-1}\bm\Lambda \r\}^{-1}
\otimes 
\wh{\bm\Gamma}^{F\xi}(\bm\Sigma^{\xi})^{-1}\bm\Lambda\l\{\bm\Lambda^\prime(\bm\Sigma^{\xi})^{-1}\bm\Lambda \r\}^{-1} \bm\Lambda^\prime(\bm\Sigma^{\xi})^{-1}
\r]\bm C_{n,r}\quad \text{\it A.8.3}\nn\\
&+O\l(\frac {T}n\r)+O_{\mathrm P}\l(\frac {\sqrt T}n\r)+O\l(\frac {T}{n^2}\r).\label{eq:miticuzzo}
\end{align}
Therefore, 
by using again arguments as those in \eqref{eq:anc6}-\eqref{eq:anc9} in \eqref{eq:finquituttobene} we have that the $i$th $r\times r$ sub-matrix of $\bm H(\bm{\mathcal X};\bm{\varphi})$ is such that
\begin{align}
\bm h_{ii}&(\bm{\mathcal X};\bm\varphi)= -\frac T{\sigma_i^2} \mbf I_r\nn\\
& + \frac T{\sigma_i^4}{\bm\lambda}_i^\prime\l\{\bm\Lambda^\prime(\bm\Sigma^{\xi})^{-1}\bm\Lambda \r\}^{-1}{\bm\lambda}_i+\nn\\ 
& + \frac T{\sigma_i^4}{\bm\lambda}_i^\prime\l\{\bm\Lambda^\prime(\bm\Sigma^{\xi})^{-1}\bm\Lambda \r\}^{-1}{\bm\lambda}_i
\l\{
\l\{\bm\Lambda^\prime(\bm\Sigma^{\xi})^{-1}\bm\Lambda \r\}^{-1}+
\frac 1T\sum_{t=1}^T\mbf F_t{\bm\xi}_t^\prime(\bm\Sigma^{\xi})^{-1}\bm\Lambda\l\{\bm\Lambda^\prime(\bm\Sigma^{\xi})^{-1}\bm\Lambda \r\}^{-1}\r.\nn\\
&\;\;+\l.\l\{\bm\Lambda^\prime(\bm\Sigma^{\xi})^{-1}\bm\Lambda \r\}^{-1}\bm\Lambda^\prime(\bm\Sigma^{\xi})^{-1}\frac 1T\sum_{t=1}^T{\bm\xi}_t\mbf F_t^\prime
\r\}\nn\\
&- \frac T{\sigma_i^2}
\l\{
\l\{\bm\Lambda^\prime(\bm\Sigma^{\xi})^{-1}\bm\Lambda \r\}^{-1}+
\frac 1T\sum_{t=1}^T\mbf F_t{\bm\xi}_t^\prime(\bm\Sigma^{\xi})^{-1}\bm\Lambda\l\{\bm\Lambda^\prime(\bm\Sigma^{\xi})^{-1}\bm\Lambda \r\}^{-1}+
\l\{\bm\Lambda^\prime(\bm\Sigma^{\xi})^{-1}\bm\Lambda \r\}^{-1}\bm\Lambda^\prime(\bm\Sigma^{\xi})^{-1}\frac 1T\sum_{t=1}^T{\bm\xi}_t\mbf F_t^\prime
\r\}\nn\\
&- \frac T{\sigma_i^4}\l\{
\frac 1T \sum_{t=1}^T\mbf F_t{\xi}_{it}
\otimes 
{\bm\lambda}_i^\prime\l\{\bm\Lambda^\prime(\bm\Sigma^{\xi})^{-1}\bm\Lambda \r\}^{-1}
\r\}
- \frac T{\sigma_i^4}\l\{
\l\{\bm\Lambda^\prime(\bm\Sigma^{\xi})^{-1}\bm\Lambda \r\}^{-1}{\bm\lambda}_i
\otimes 
\frac 1T \sum_{t=1}^T{\xi}_{it}\mbf F_t^\prime
\r\}\nn\\
&+ \frac T{\sigma_i^4}
\Bigg\{
\l\{\bm\Lambda^\prime(\bm\Sigma^{\xi})^{-1}\bm\Lambda \r\}^{-1} {\bm\lambda}_i
\otimes 
{\bm\lambda}_i^\prime\l\{\bm\Lambda^\prime(\bm\Sigma^{\xi})^{-1}\bm\Lambda \r\}^{-1}
\Bigg\}\nn\\
&+ \frac T{\sigma_i^4}\Bigg\{
\l\{\bm\Lambda^\prime(\bm\Sigma^{\xi})^{-1}\bm\Lambda \r\}^{-1}{\bm\lambda}_i
\otimes
{\bm\lambda}_i^\prime\l\{\bm\Lambda^\prime(\bm\Sigma^{\xi})^{-1}\bm\Lambda \r\}^{-1}\bm\Lambda^\prime(\bm\Sigma^\xi)^{-1}\frac 1T\sum_{t=1}^T{\bm\xi}_t\mbf F_t^\prime
\Bigg\}\nn\\
&+ \frac T{\sigma_i^4}\Bigg\{
\frac 1T\sum_{t=1}^T\mbf F_t{\bm\xi}_t^\prime(\bm\Sigma^\xi)^{-1}\bm\Lambda\l\{\bm\Lambda^\prime(\bm\Sigma^{\xi})^{-1}\bm\Lambda \r\}^{-1}{\bm\lambda}_i
\otimes 
{\bm\lambda}_i^\prime\l\{\bm\Lambda^\prime(\bm\Sigma^{\xi})^{-1}\bm\Lambda \r\}^{-1}
\Bigg\}\nn\\
&+O\l(\frac{T}{n^2}\r)+O_{\mathrm P}\l(\frac{\sqrt T}{n^2}\r)+O\l(\frac{T}{n^4}\r).\label{eq:HXii}
\end{align}
Hence, $\Vert \bm h_{ii}(\bm{\mathcal X};\bm\varphi)\Vert=O_{\mathrm P}(T)$, because of Assumption \ref{ass:common}(a), \eqref{eq:anc5}, and \eqref{eq:2a3bis}. Moreover, from \eqref{eq:HXii} and by noticing that $(\bm\Sigma^\xi)^{-1}{\bm\Lambda}$ has the same properties as ${\bm\Lambda}$ since $\Vert (\bm\Sigma^\xi)^{-1}\Vert = O(1)$ by Assumption \ref{ass:idio}(a), we get
\begin{align}
\frac 1T&\l\Vert\bm h_{ii}(\bm{\mathcal X};\bm\varphi)- \l(-\frac T{\sigma_i^2} \mbf I_r\r)\r\Vert\le \frac{M_\Lambda^2}{C_\xi^2}
\l\Vert \l\{\bm\Lambda^\prime(\bm\Sigma^{\xi})^{-1}\bm\Lambda \r\}^{-1}\r\Vert\nn\\
& +
\frac{M_\Lambda^2}{C_\xi^2}
\l\Vert \l\{\bm\Lambda^\prime(\bm\Sigma^{\xi})^{-1}\bm\Lambda \r\}^{-1}\r\Vert^2
\Bigg\{
1 + 2\l\Vert\frac 1T\bm\Lambda^\prime(\bm\Sigma^\xi)^{-1}\sum_{t=1}^T{\bm\xi}_t\mbf F_t^\prime  \r\Vert
\Bigg\}\nn\\
&
\frac 1{C_\xi}
\l\Vert \l\{\bm\Lambda^\prime(\bm\Sigma^{\xi})^{-1}\bm\Lambda \r\}^{-1}\r\Vert
\Bigg\{
1 + 2\l\Vert\frac 1T\bm\Lambda^\prime(\bm\Sigma^\xi)^{-1}\sum_{t=1}^T{\bm\xi}_t\mbf F_t^\prime  \r\Vert
\Bigg\}\nn\\
&+ \frac{2M_\Lambda}{C_\xi^2} \l\Vert \l\{\bm\Lambda^\prime(\bm\Sigma^{\xi})^{-1}\bm\Lambda \r\}^{-1}\r\Vert\,\l\Vert\frac 1T\sum_{t=1}^T\mbf F_t{\xi}_{it}\r\Vert+
\frac{M_\Lambda^2}{C_\xi^2}
\l\Vert \l\{\bm\Lambda^\prime(\bm\Sigma^{\xi})^{-1}\bm\Lambda \r\}^{-1}\r\Vert^2\Bigg\{
1 + 2\l\Vert\frac 1T\bm\Lambda^\prime(\bm\Sigma^\xi)^{-1}\sum_{t=1}^T{\bm\xi}_t\mbf F_t^\prime  \r\Vert
\Bigg\}\nn\\
&+O\l(\frac{1}{n^2}\r)+O_{\mathrm P}\l(\frac{1}{n^2\sqrt T}\r)+O\l(\frac{1}{n^4}\r)\nn\\
=&\,O\l(\frac{1}{n}\r) +O\l(\frac{1}{n^2}\r) \Bigg\{1+O_{\mathrm P}\l(\frac{\sqrt n}{\sqrt T}\r)\Bigg\}+O\l(\frac{1}{n}\r) \Bigg\{1+O_{\mathrm P}\l(\frac{\sqrt n}{\sqrt T}\r)\Bigg\}+O\l(\frac{1}{n}\r)O_{\mathrm P}\l(\frac{1}{\sqrt T}\r)\nn\\
&+O\l(\frac{1}{n^2}\r) \Bigg\{1+O_{\mathrm P}\l(\frac{\sqrt n}{\sqrt T}\r)\Bigg\}+O\l(\frac{1}{n^2}\r)+O_{\mathrm P}\l(\frac{1}{n^2\sqrt T}\r)+O\l(\frac{1}{n^4}\r)= O_{\mathrm P}\l(\max\l(\frac{1}{n},\frac 1{\sqrt{nT}}\r)\r).\label{eq:GOL2} 
\end{align}

Finally, from \eqref{eq:SXF} we have
\begin{align}
\mathrm d \bm S^\prime(\bm{\mathcal X}|\bm{\mathcal F};\underline{\bm\varphi})&=\mathrm d\l(\sum_{t=1}^T\mbf F_t(\mbf x_t-\underline{\bm\Lambda}\mbf F_t)^\prime(\underline{\bm\Sigma}^\xi)^{-1} \r) = -\sum_{t=1}^T\mbf F_t\mbf F_t^\prime\l(\mathrm d\underline{\bm\Lambda}^\prime\r)(\underline{\bm\Sigma}^\xi)^{-1},\nn
\end{align}
so 
\beq
\text{vec}(\mathrm d \bm S^\prime(\bm{\mathcal X}|\bm{\mathcal F};\underline{\bm\varphi})) = - (\underline{\bm\Sigma}^\xi)^{-1}\otimes \l(\sum_{t=1}^T\mbf F_t\mbf F_t^\prime\r)\text{vec}(\mathrm d \underline{\bm \Lambda}^\prime),\nn
\eeq
and
\beq
\bm H(\bm{\mathcal X}|\bm{\mathcal F};\underline{\bm\varphi})  = - (\underline{\bm\Sigma}^\xi)^{-1}\otimes \l(\sum_{t=1}^T\mbf F_t\mbf F_t^\prime\r)=-T (\underline{\bm\Sigma}^\xi)^{-1}\otimes \mbf I_r,\label{eq:HXF}
\eeq
because of Assumption \ref{ass:ident}(b). And \eqref{eq:HXF} computed in the true value of the parameters is
\beq
\bm H(\bm{\mathcal X}|\bm{\mathcal F};{\bm\varphi})  = -T ({\bm\Sigma}^\xi)^{-1}\otimes \mbf I_r.\label{eq:ottimo}
\eeq
From \eqref{eq:ottimo} the $i$th $r\times r$ sub-matrix of $\bm H(\bm{\mathcal X}|\bm{\mathcal F};\bm{\varphi})$ is then:
\beq
\bm h_{ii}(\bm{\mathcal X}|\bm{\mathcal F};{\bm\varphi})  = -\frac T{\sigma_i^2}\mbf I_r.\label{eq:HXFi}
\eeq
Hence, $\Vert\bm h_{ii}(\bm{\mathcal X}|\bm{\mathcal F};{\bm\varphi}) \Vert=O_{\mathrm P}(T)$, and, by using \eqref{eq:HXFi} into \eqref{eq:GOL2}, for any given $i=1,\ldots, n$, we have:
\beq\nn
\frac 1T\l\Vert\bm h_{ii}(\bm{\mathcal X};\bm\varphi)- \bm h_{ii}(\bm{\mathcal X}|\bm{\mathcal F};{\bm\varphi})\r\Vert=O_{\mathrm P}\l(\max\l(\frac{1}{n},\frac 1{\sqrt{nT}}\r)\r),
\eeq
which proves part (b).\smallskip

For part (c) we provide two different equivalent proofs.  First, 
\begin{align}
\l\Vert
\bm s_{i}(\bm{\mathcal X};\wh{\bm{\varphi}}^{\text{\tiny QML,E}})-\bm s_{i}(\bm{\mathcal X}|\bm{\mathcal F};\wh{\bm{\varphi}}^{\text{\tiny QML,E}})
\r\Vert
\le&\, \l\Vert
\bm s_{i}(\bm{\mathcal X};{\bm{\varphi}})-\bm s_{i}(\bm{\mathcal X}|\bm{\mathcal F};{\bm{\varphi}})
\r\Vert\nn\\
&+\l\Vert
\bm h_{ii}(\bm{\mathcal X};{\bm{\varphi}})-\bm h_{ii}(\bm{\mathcal X}|\bm{\mathcal F};{\bm{\varphi}})
\r\Vert\, \l\Vert \wh{\bm\lambda}_i^{\text{\tiny QML,E}}-{\bm\lambda}_i \r\Vert+O_{\mathrm P}\l(\l\Vert \wh{\bm\lambda}_i^{\text{\tiny QML,E}}-{\bm\lambda}_i \r\Vert^2\r)\nn\\
=&\,O_{\mathrm P}\l(\max\l(\frac {T}n,\frac {\sqrt T}{\sqrt{n}}\r)\r)+O_{\mathrm P}\l(\max\l(\frac {T}n,\frac {\sqrt T}{\sqrt{n}}\r)\r)O_{\mathrm P}\l(\max\l(\frac {1}n,\frac {1}{\sqrt{T}}\r)\r).\nn
\end{align}
because of parts (a) and (b), and \eqref{eq:QMLPCOLS}, which follows from Theorem \ref{th:QML}, which, in turn, holds even if we use the simpler log-likelihood \eqref{eq:LL00} in place of the log-likelihood \eqref{eq:LL0}. 
As a consequence,
\[
\frac 1{\sqrt T}\l\Vert
\bm s_{i}(\bm{\mathcal X};\wh{\bm{\varphi}}^{\text{\tiny QML,E}})-\bm s_{i}(\bm{\mathcal X}|\bm{\mathcal F};\wh{\bm{\varphi}}^{\text{\tiny QML,E}})
\r\Vert= O_{\mathrm P}\l(\max\l(\frac {\sqrt T}n,\frac 1{\sqrt{n}}\r)\r).
\]
Alternatively, let $\wh{\bm\varphi}^{\text{\tiny OLS}}=(\mathrm{vec}(\bm\Lambda^{\text{\tiny OLS}})^\prime, \mathrm{vech}(\underline{\bm\Gamma}^\xi)^\prime)^\prime$ which is the vector of parameters made of the OLS estimator of the loadings and any generic value of the idiosyncratic covariance matrix satisfying Assumption \ref{ass:idio}. Then, notice that $\bm s_{i}(\bm{\mathcal X};\wh{\bm{\varphi}}^{\text{\tiny QML,E}})=\mbf 0_r$ by definition of QML estimator, and $\bm s_{i}(\bm{\mathcal X}|\bm{\mathcal F};\wh{\bm{\varphi}}^{\text{\tiny OLS}})=\mbf 0_r$ because the OLS estimator is the QML estimator when maximizing the conditional log-likelihood, and the OLS  estimator of the loadings does not depend on the estimator of the idiosyncratic covariance. Recall also that
\[
\l\Vert \wh{\bm\lambda}_i^{\text{\tiny QML,E}}-{\bm\lambda}_i^{\text{\tiny OLS}} \r\Vert\le \l\Vert \wh{\bm\lambda}_i^{\text{\tiny QML,E}}-\wh{\bm\lambda}_i \r\Vert+\l\Vert \wh{\bm\lambda}_i-{\bm\lambda}_i^{\text{\tiny OLS}} \r\Vert = O_{\mathrm P}\l(\max\l(\frac 1n,\frac 1{\sqrt{nT}}\r)\r),
\]
because of Theorem \ref{th:QML} and \eqref{eq:QMLPCOLS} which hold even if we use the log-likelihood \eqref{eq:LL00} in place of the log-likelihood \eqref{eq:LL0}. Moreover, from \eqref{eq:HX} it is straightforward to see that 
$\Vert\bm h_{ii}(\bm{\mathcal X}|\bm{\mathcal F};\underline{\bm{\varphi}})\Vert=O_{\mathrm P}(T)$, for any $\underline{\bm{\varphi}}\in\mathcal O_n$. Therefore,
\begin{align}
\l\Vert
\bm s_{i}(\bm{\mathcal X};\wh{\bm{\varphi}}^{\text{\tiny QML,E}})-\bm s_{i}(\bm{\mathcal X}|\bm{\mathcal F};\wh{\bm{\varphi}}^{\text{\tiny QML,E}})
\r\Vert
=&\, \l\Vert
\bm s_{i}(\bm{\mathcal X}|\bm{\mathcal F};\wh{\bm{\varphi}}^{\text{\tiny QML,E}})
\r\Vert\nn\\
\le&\, \l\Vert
\bm s_{i}(\bm{\mathcal X}|\bm{\mathcal F};\wh{\bm{\varphi}}^{\text{\tiny OLS}})
\r\Vert+ \l\Vert
\bm h_{ii}(\bm{\mathcal X}|\bm{\mathcal F};\wh{\bm{\varphi}}^{\text{\tiny OLS}})
\r\Vert\,
\l\Vert \wh{\bm\lambda}_i^{\text{\tiny QML,E}}-{\bm\lambda}_i^{\text{\tiny OLS}} \r\Vert\nn\\
&+ O_{\mathrm P}\l(\l\Vert \wh{\bm\lambda}_i^{\text{\tiny QML,E}}-{\bm\lambda}_i^{\text{\tiny OLS}} \r\Vert^2\r)\nn\\
\le &\, O_{\mathrm P}\l(T\r) O_{\mathrm P}\l(\max\l(\frac 1n,\frac 1{\sqrt{nT}}\r)\r)=O_{\mathrm P}\l(\max\l(\frac {T}n,\frac {\sqrt T}{\sqrt{n}}\r)\r).\nn
\end{align}
This proves part (c).\smallskip

For part (d), for any specific value of the parameters, say $\wt{\bm\varphi}$, define the $n^2r^2\times nr$ matrices of third derivatives
\[
\bm T(\bm{\mathcal X};\wt{\bm\varphi}) =\l.\frac{\partial \text{vec}(\bm H(\bm{\mathcal X};\underline{\bm\varphi}))^\prime}{\partial \text{vec}(\underline{\bm\Lambda})^\prime}
\r|_{\underline{\bm\varphi}=\wt{\bm\varphi}},\quad \bm T(\bm{\mathcal X}|\bm{\mathcal F};\wt{\bm\varphi}) =\l.\frac{\partial \text{vec}(\bm H(\bm{\mathcal X}|\bm{\mathcal F};\underline{\bm\varphi}))^\prime}{\partial \text{vec}(\underline{\bm\Lambda})^\prime}
\r|_{\underline{\bm\varphi}=\wt{\bm\varphi}},
\]
with components
\[
\bm t_{iii}(\bm{\mathcal X};\wt{\bm\varphi}) = \l.\frac{\partial \text{vec}(\bm h_{ii}(\bm{\mathcal X};\underline{\bm\varphi}))^\prime}{\partial \underline{\bm\lambda}_i^\prime}
\r|_{\underline{\bm\varphi}=\wt{\bm\varphi}},\quad \bm t_{iii}(\bm{\mathcal X}|\bm{\mathcal F};\wt{\bm\varphi}) = \l.\frac{\partial \text{vec}(\bm h_{ii}(\bm{\mathcal X}|\bm{\mathcal F};\underline{\bm\varphi}))^\prime}{\partial \underline{\bm\lambda}_i^\prime}
\r|_{\underline{\bm\varphi}=\wt{\bm\varphi}},\quad i=1,\ldots, n,
\]
which are $r^2\times r$ matrices. Then, from \eqref{eq:HX} the $i$th $r\times r$ sub-matrix of $\bm H(\bm{\mathcal X};\underline{\bm\varphi})$ is given by:
\begin{align}\label{eq:HXigen}
\bm h_{ii}(\bm{\mathcal X};\underline{\bm\varphi})=&\,
\frac T{\underline{\sigma}_i^4} 
\l[
\l\{\underline{\bm\lambda}_i^\prime \underline{\bm P}^{-1}\underline{\bm\lambda}_i\r\}
\underline{\bm P}^{-1}
\r]\nn\\
&-\frac T{\underline{\sigma}_i^4} 
\l[
\l\{[\wh{\bm\Gamma}^x]_{i\cdot}(\underline{\bm\Sigma}^\xi)^{-1}\underline{\bm\Lambda}\underline{\bm P}^{-1}\underline{\bm\lambda}_i
\r\}
\underline{\bm P}^{-1}
\r]\nn\\
&+\frac T{\underline{\sigma}_i^4} 
\l[
\l\{\underline{\bm \lambda}_i\underline{\bm P}^{-1} \underline{\bm \Lambda}^\prime (\underline{\bm\Sigma}^\xi)^{-1}\wh{\bm\Gamma}^x (\underline{\bm\Sigma}^\xi)^{-1}\underline{\bm \Lambda}\,\underline{\bm P}^{-1} \underline{\bm \lambda}_i
\r\} 
\underline{\bm P}^{-1}
\r]\nn\\
&+\frac T{\underline{\sigma}_i^4} 
\l[
\l\{\underline{\bm\lambda}_i^\prime \underline{\bm P}^{-1}\underline{\bm\lambda}_i
\r\}
\underline{\bm P}^{-1}\underline{\bm\Lambda}^\prime(\underline{\bm\Sigma}^\xi)^{-1}\wh{\bm\Gamma}^\xi(\underline{\bm\Sigma}^\xi)^{-1}
\underline{\bm\Lambda}\,\underline{\bm P}^{-1}
\r]\nn\\
&+\frac T{\underline{\sigma}_i^4} 
\l[
\underline{\bm P}^{-1}\underline{\bm\lambda}_i
\otimes
\underline{\bm\lambda}_i^\prime \underline{\bm P}^{-1} 
\r]\nn\\
&-\frac T{\underline{\sigma}_i^4} 
\l[
\underline{\bm P}^{-1}\underline{\bm\lambda}_i
\otimes
[\wh{\bm\Gamma}^x]_{i\cdot}(\underline{\bm\Sigma}^\xi)^{-1}\underline{\bm\Lambda}\,\underline{\bm P}^{-1}
\r]\nn\\
&+\frac T{\underline{\sigma}_i^4} 
\l[
\underline{\bm P}^{-1}\underline{\bm\lambda}_i
\otimes
\underline{\bm\lambda}_i^\prime \underline{\bm P}^{-1}\underline{\bm\Lambda}^\prime(\underline{\bm\Sigma}^\xi)^{-1}\wh{\bm\Gamma}^\xi(\underline{\bm\Sigma}^\xi)^{-1}\underline{\bm\Lambda}\,\underline{\bm P}^{-1}
\r]\nn\\
&+\frac T{\underline{\sigma}_i^4} 
\l[
\underline{\bm P}^{-1}\underline{\bm\Lambda}^\prime(\underline{\bm\Sigma}^\xi)^{-1}\wh{\bm\Gamma}^\xi(\underline{\bm\Sigma}^\xi)^{-1}\underline{\bm\Lambda}\underline{\bm P}^{-1}\underline{\bm\lambda}_i
\otimes
\underline{\bm\lambda}_i^\prime \underline{\bm P}^{-1}
\r]\nn\\
&-\frac T{\underline{\sigma}_i^2}
\l[
\underline{\bm P}^{-1}
\r] \nn\\
&+\frac T{\underline{\sigma}_i^4} 
\l[
[\wh{\bm\Gamma}^x]_{ii}
\underline{\bm P}^{-1}
\r]\nn\\
&-\frac T{\underline{\sigma}_i^4} 
\l[
\l\{\underline{\bm\lambda}_i^\prime\underline{\bm P}^{-1}\underline{\bm\Lambda}^\prime(\underline{\bm\Sigma}^\xi)^{-1}[\wh{\bm\Gamma}^x]_{\cdot i}
\r\}
\underline{\bm P}^{-1} 
\r]\nn\\
&-\frac T{\underline{\sigma}_i^4} 
\l[
\underline{\bm P}^{-1}\underline{\bm\Lambda}^\prime(\underline{\bm\Sigma}^\xi)^{-1}[\wh{\bm\Gamma}^x]_{\cdot i}
\otimes
\underline{\bm\lambda}_i^\prime\underline{\bm P}^{-1}
\r]\nn\\
&-\frac T{\underline{\sigma}_i^2}
\l[
\underline{\bm P}^{-1}\underline{\bm\Lambda}^\prime(\underline{\bm\Sigma}^\xi)^{-1}\wh{\bm\Gamma}^x(\underline{\bm\Sigma}^\xi)^{-1}\underline{\bm\Lambda}\underline{\bm P}^{-1}
\r],
\end{align}
where $[\wh{\bm\Gamma}^x]_{i\cdot}$ and $[\wh{\bm\Gamma}^x]_{\cdot i}$ are the $i$th row and column of $\wh{\bm\Gamma}^x$, respectively, and $[\wh{\bm\Gamma}^x]_{ii}$ is the $i$th term on its diagonal. 
Since, 
\begin{align}
&\l\Vert \underline{\bm\lambda}_i\r\Vert \le M_\Lambda,\quad \l\Vert \underline{\bm\Lambda}\r\Vert = O(\sqrt n),\nn\\
&\l\Vert (\underline{\bm\Sigma}^\xi)^{-1}\r\Vert \le \frac 1{C_\xi},\quad \l\vert \frac 1{\sigma_i^2}\r\vert\le \frac 1{C_\xi},\nn\\
&\l\Vert \underline{\bm P}^{-1}\r\Vert =\l\Vert \l\{\mbf I_r+\underline{\bm \Lambda}^\prime (\underline{\bm\Sigma}^\xi)^{-1}\underline{\bm \Lambda} \r\}^{-1}\r\Vert = O\l(\frac 1n\r),\nn\\
&\l\Vert \wh{\bm\Gamma}^x\r\Vert \le \l\Vert \wh{\bm\Gamma}^x-\bm\Gamma^x\r\Vert+\l\Vert {\bm\Gamma}^x\r\Vert = O_{\mathrm P}\l(\frac n{\sqrt T}\r) +O(n),\nn\\
& \l\Vert [\wh{\bm\Gamma}^x]_{i\cdot }\r\Vert = \l\Vert [\wh{\bm\Gamma}^x]_{i\cdot }-[{\bm\Gamma}^x]_{i\cdot }\r\Vert+\l\Vert [{\bm\Gamma}^x]_{i\cdot }\r\Vert= O_{\mathrm P}\l(\frac{\sqrt n}{\sqrt T}\r) +O(\sqrt n),\label{eq:varieOp}
\end{align}
because of Assumption \ref{ass:common}(a), Lemma \ref{lem:FTLN}(i), Assumption \ref{ass:idio}(a), Lemma \ref{lem:covarianze}(i), and Lemma \ref{lem:Gxi}(vi), respectively

From \eqref{eq:HXigen} and \eqref{eq:varieOp} it is clear that the leading term in $\bm h_{ii}(\bm{\mathcal X};\underline{\bm\varphi})$ is the last one, which is $O_{\mathrm P}(T)$ while all others are $o_{\mathrm P}(T)$, specifically,
\begin{align}\label{eq:approxHXigen}
\frac 1T \bm h_{ii}(\bm{\mathcal X};\underline{\bm\varphi})&= -\frac 1{\underline{\sigma}_i^2}
\l[
\underline{\bm P}^{-1}\underline{\bm\Lambda}^\prime(\underline{\bm\Sigma}^\xi)^{-1}\wh{\bm\Gamma}^x(\underline{\bm\Sigma}^\xi)^{-1}\underline{\bm\Lambda}\underline{\bm P}^{-1}
\r]+O_{\mathrm P}\l(\frac 1n\r)+O\l(\frac 1{n^2}\r)\nn\\
&=\frac 1T \bm h^*_{ii}(\bm{\mathcal X};\underline{\bm\varphi})+O_{\mathrm P}\l(\frac 1n\r)+O\l(\frac 1{n^2}\r), \;\text{say.}
\end{align}
From \eqref{eq:approxHXigen}  
\begin{align}
\text{vec}\l(\mathrm d \bm h_{ii}^{*\prime} (\bm{\mathcal X};\underline{\bm\varphi})\r) =&\,
-\frac T{\underline{\sigma}_i^2}\l\{
\underline{\bm P}^{-1}\underline{\bm\Lambda}^\prime(\underline{\bm\Sigma}^\xi)^{-1}\wh{\bm\Gamma}^x(\underline{\bm\Sigma}^\xi)^{-1}\underline{\bm\Lambda}
\otimes 
\mbf I_r
\r\}\text{vec}\l(\mathrm d\underline{\bm P}^{-1}\r)\nn\\
&-\frac T{\underline{\sigma}_i^2}\l\{
\underline{\bm P}^{-1}\underline{\bm\Lambda}^\prime(\underline{\bm\Sigma}^\xi)^{-1}\wh{\bm\Gamma}^x(\underline{\bm\Sigma}^\xi)^{-1}
\otimes
\underline{\bm P}^{-1}
\r\} \text{vec}\l(\mathrm d\underline{\bm\Lambda}^\prime\r)\nn\\
&-\frac T{\underline{\sigma}_i^2}\l\{
\underline{\bm P}^{-1}
\otimes
\underline{\bm P}^{-1}\underline{\bm\Lambda}^\prime(\underline{\bm\Sigma}^\xi)^{-1}\wh{\bm\Gamma}^x(\underline{\bm\Sigma}^\xi)^{-1}
\r\}\bm C_{n,r}\text{vec}\l(\mathrm d\underline{\bm\Lambda}^\prime\r)\nn\\
&-\frac T{\underline{\sigma}_i^2}\l\{\mbf I_r
\otimes 
\underline{\bm P}^{-1}\underline{\bm\Lambda}^\prime(\underline{\bm\Sigma}^\xi)^{-1}\wh{\bm\Gamma}^x(\underline{\bm\Sigma}^\xi)^{-1}\underline{\bm\Lambda}
\r\}\text{vec}\l(\mathrm d\underline{\bm P}^{-1}\r).\label{eq:pearljam}
\end{align}
And, since
\beq
\text{vec}\l(\mathrm d \bm h_{ii}^{*\prime} (\bm{\mathcal X};\underline{\bm\varphi})\r) =\l(
\frac{\partial \text{vec}(\bm h_{ii}^* (\bm{\mathcal X};\underline{\bm\varphi}))^\prime}
{\partial \text{vec}(\underline{\bm\Lambda})^\prime}
\r)^\prime \text{vec}\l(\mathrm d\underline{\bm\Lambda}^\prime\r),\nn
\eeq
from \eqref{eq:pearljam} and \eqref{eq:diffP},
\begin{align}
\l(
\frac{\partial \text{vec}(\bm h_{ii}^* (\bm{\mathcal X};\underline{\bm\varphi}))^\prime}
{\partial \text{vec}(\underline{\bm\Lambda})^\prime}
\r)^\prime=&\,
\frac T{\underline{\sigma}_i^2}\l[
\underline{\bm P}^{-1}\underline{\bm\Lambda}^\prime(\underline{\bm\Sigma}^\xi)^{-1}\wh{\bm\Gamma}^x(\underline{\bm\Sigma}^\xi)^{-1}\underline{\bm\Lambda}\,\underline{\bm P}^{-1}\underline{\bm\Lambda}^\prime(\underline{\bm\Sigma}^\xi)^{-1}
\otimes
\underline{\bm P}^{-1}
\r]\nn\\
&+\frac T{\underline{\sigma}_i^2}\l[
\underline{\bm P}^{-1}\underline{\bm\Lambda}^\prime(\underline{\bm\Sigma}^\xi)^{-1}\wh{\bm\Gamma}^x(\underline{\bm\Sigma}^\xi)^{-1}\underline{\bm\Lambda}\,\underline{\bm P}^{-1}
\otimes \underline{\bm P}^{-1}\underline{\bm\Lambda}^\prime(\underline{\bm\Sigma}^\xi)^{-1}
\r]\bm C_{n,r}\nn\\
&-\frac T{\underline{\sigma}_i^2}\l[
\underline{\bm P}^{-1}\underline{\bm\Lambda}^\prime(\underline{\bm\Sigma}^\xi)^{-1}\wh{\bm\Gamma}^x(\underline{\bm\Sigma}^\xi)^{-1}
\otimes
\underline{\bm P}^{-1}
\r]\nn\\
&-\frac T{\underline{\sigma}_i^2}\l[
\underline{\bm P}^{-1}
\otimes
\underline{\bm P}^{-1}\underline{\bm\Lambda}^\prime(\underline{\bm\Sigma}^\xi)^{-1}\wh{\bm\Gamma}^x(\underline{\bm\Sigma}^\xi)^{-1}
\r]\bm C_{n,r}\nn\\
&+\frac T{\underline{\sigma}_i^2}\l[ \underline{\bm P}^{-1}\underline{\bm\Lambda}^\prime(\underline{\bm\Sigma}^\xi)^{-1}
\otimes
\underline{\bm P}^{-1}\underline{\bm\Lambda}^\prime(\underline{\bm\Sigma}^\xi)^{-1}\wh{\bm\Gamma}^x(\underline{\bm\Sigma}^\xi)^{-1}\underline{\bm\Lambda}\,\underline{\bm P}^{-1}
\r]\nn\\
&+\frac T{\underline{\sigma}_i^2}\l[
\underline{\bm P}^{-1}
\otimes
\underline{\bm P}^{-1}\underline{\bm\Lambda}^\prime(\underline{\bm\Sigma}^\xi)^{-1}\wh{\bm\Gamma}^x(\underline{\bm\Sigma}^\xi)^{-1}\underline{\bm\Lambda}\,\underline{\bm P}^{-1}\underline{\bm\Lambda}^\prime(\underline{\bm\Sigma}^\xi)^{-1}
\r]\bm C_{n,r},\label{eq:duke}
\end{align}
which is an $r^2\times nr$ matrix. From \eqref{eq:duke} we have
\begin{align}
\frac{\partial \text{vec}(\bm h^*_{ii}(\bm{\mathcal X};\underline{\bm\varphi}))^\prime}{\partial \underline{\bm\lambda}_i^\prime}=&\,
\frac {2T}{\underline{\sigma}_i^4}\l[
\underline{\bm P}^{-1}\underline{\bm\Lambda}^\prime(\underline{\bm\Sigma}^\xi)^{-1}\wh{\bm\Gamma}^x(\underline{\bm\Sigma}^\xi)^{-1}\underline{\bm\Lambda}\,\underline{\bm P}^{-1}\underline{\bm\lambda}_i
\otimes
\underline{\bm P}^{-1}
\r]\nn\\
&+\frac {2T}{\underline{\sigma}_i^4}\l[
\underline{\bm P}^{-1}\underline{\bm\lambda}_i
\otimes 
\underline{\bm P}^{-1}\underline{\bm\Lambda}^\prime(\underline{\bm\Sigma}^\xi)^{-1}\wh{\bm\Gamma}^x(\underline{\bm\Sigma}^\xi)^{-1}\underline{\bm\Lambda}\,\underline{\bm P}^{-1}
\r]\nn\\
&-\frac {2T}{\underline{\sigma}_i^4}\l[
\underline{\bm P}^{-1}\underline{\bm\Lambda}^\prime(\underline{\bm\Sigma}^\xi)^{-1}[\wh{\bm\Gamma}^x]_{\cdot i}
\otimes
\underline{\bm P}^{-1}
\r].
\label{eq:dukei}
\end{align}
By using \eqref{eq:varieOp} into \eqref{eq:dukei} and because of \eqref{eq:approxHXigen}, we have 
\beq\label{eq:TXiii}
\l\Vert \text{vec}\l(\bm t_{iii}(\bm{\mathcal X};\bm\varphi)\r)\r\Vert = \l\Vert \bm t_{iii}(\bm{\mathcal X};\bm\varphi)\r\Vert = O_{\mathrm P}\l(\frac T{ n}\r).
\eeq
Moreover, from \eqref{eq:HXF} it immediately follows that $\text{vec}\l(\bm T(\bm{\mathcal X}|\bm{\mathcal F};{\bm\varphi})\r) = \mbf 0_{n^3r^3}$.
and, thus,
$\text{vec}\l(\bm t_{iii}(\bm{\mathcal X}|\bm{\mathcal F};{\bm\varphi})\r) = \mbf 0_{r^3} $, or, equivalently, 
\beq\label{eq:TXFiii}
\bm t_{iii}(\bm{\mathcal X}|\bm{\mathcal F};{\bm\varphi}) = \mbf 0_{r\times r\times r},
\eeq
i.e., a 3rd order tensor of zeros.

Finally, 
\begin{align}
\l\Vert
\bm h_{ii}(\bm{\mathcal X};\wh{\bm{\varphi}}^{\text{\tiny QML,E}})-\bm h_{ii}(\bm{\mathcal X}|\bm{\mathcal F};\wh{\bm{\varphi}}^{\text{\tiny QML,E}})
\r\Vert
\le&\, \l\Vert
\bm h_{ii}(\bm{\mathcal X};{\bm{\varphi}})-\bm h_{ii}(\bm{\mathcal X}|\bm{\mathcal F};{\bm{\varphi}})
\r\Vert\nn\\
&+\l\Vert
\bm t_{iii}(\bm{\mathcal X};{\bm{\varphi}})-\bm t_{iii}(\bm{\mathcal X}|\bm{\mathcal F};{\bm{\varphi}})
\r\Vert\, \l\Vert \wh{\bm\lambda}_i^{\text{\tiny QML,E}}-{\bm\lambda}_i \r\Vert+O_{\mathrm P}\l(\l\Vert \wh{\bm\lambda}_i^{\text{\tiny QML,E}}-{\bm\lambda}_i \r\Vert^2\r)\nn\\
=&\,O_{\mathrm P}\l(\max\l(\frac {T}n,\frac {\sqrt T}{\sqrt{n}}\r)\r)+O_{\mathrm P}\l(\frac {T}{n}\r)O_{\mathrm P}\l(\max\l(\frac {1}n,\frac {1}{\sqrt{T}}\r)\r),\nn
\end{align}
which follows from part (b), \eqref{eq:TXiii}, \eqref{eq:TXFiii}, and \eqref{eq:QMLPCOLS} in the proof of Corollary \ref{cor:QMLcons}, which, in turn, holds even if we use the  log-likelihood \eqref{eq:LL00} in place of thel log-likelihood \eqref{eq:LL0}. This completes the proof. \hfill $\Box$

\setcounter{equation}{0}

\section{Auxiliary propositions}\label{app:B}

\begin{prop}\label{prop:K00}
Under Assumptions \ref{ass:common} and \ref{ass:ident},
\begin{compactenum}[(a)]
\item $\frac{\bm\Lambda^\prime\bm\Lambda}{n} = \frac{\mbf M^\chi}{n}$, for all $n\in\mathbb N$;
\item $\bm\Lambda=\mbf V^\chi(\mbf M^\chi)^{1/2}$, for all $n\in\mathbb N$.
\end{compactenum}
\end{prop}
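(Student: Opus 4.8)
The plan is to reduce the statement to an elementary spectral-decomposition argument, with no asymptotic or probabilistic input at all. The first step is to observe that the identification constraint $T^{-1}\bm F^\prime\bm F=\mbf I_r$ in Assumption \ref{ass:ident}(b) entails $\bm\Gamma^F=\mbf I_r$ (the same reduction used around \eqref{eq:LL0}), so that $\bm\Gamma^\chi=\bm\Lambda\bm\Gamma^F\bm\Lambda^\prime=\bm\Lambda\bm\Lambda^\prime$. By Assumption \ref{ass:ident}(a) the $r\times r$ matrix $\bm\Lambda^\prime\bm\Lambda$ is diagonal, and by Assumption \ref{ass:common}(a) it is (for $n$ large, consistently with the standing convention $n>N$) positive definite, hence $\bm\Lambda$ has full column rank $r$; this is exactly what makes $\bm\Gamma^\chi$ have $r$ nonzero eigenvalues, so that $\mbf M^\chi$ and $\mbf V^\chi$ are well defined. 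Ordering the columns of $\bm\Lambda$ so that the diagonal entries of $\bm\Lambda^\prime\bm\Lambda$ are non-increasing, I would write $\bm\Lambda^\prime\bm\Lambda=\bm D=\text{dg}(d_1,\ldots,d_r)$ with $d_1\ge\cdots\ge d_r>0$.

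For part (a) the key identity is $\bm\Gamma^\chi\bm\Lambda=\bm\Lambda(\bm\Lambda^\prime\bm\Lambda)=\bm\Lambda\bm D$, which says that for each $j$ the $j$th column $\bm\lambda^{(j)}$ of $\bm\Lambda$ is an eigenvector of $\bm\Gamma^\chi$ with eigenvalue $d_j$. Moreover $\bm\lambda^{(i)\prime}\bm\lambda^{(j)}=D_{ij}=0$ for $i\ne j$, so these $r$ eigenvectors are mutually orthogonal and, $\bm\Lambda$ being full rank, they span $\mathrm{range}(\bm\Gamma^\chi)$. Hence the multiset $\{d_1,\ldots,d_r\}$ is precisely the nonzero spectrum of $\bm\Gamma^\chi$; comparing the (non-increasing) orderings yields $\bm D=\mbf M^\chi$, i.e.\ $\bm\Lambda^\prime\bm\Lambda=\mbf M^\chi$, and dividing by $n$ gives $n^{-1}\bm\Lambda^\prime\bm\Lambda=n^{-1}\mbf M^\chi$.

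For part (b), since $\bm\Lambda^\prime\bm\Lambda=\mbf M^\chi$ the columns of $\bm\Lambda$ satisfy $\Vert\bm\lambda^{(j)}\Vert^2=\mu_j^\chi$, so $\bm\lambda^{(j)}/\sqrt{\mu_j^\chi}$ is a unit eigenvector of $\bm\Gamma^\chi$ for the eigenvalue $\mu_j^\chi$, and the collection $\{\bm\lambda^{(j)}/\sqrt{\mu_j^\chi}\}_{j=1}^r$ is an orthonormal basis of $\mathrm{range}(\bm\Gamma^\chi)$. Taking $\mbf V^\chi$ to be this eigenvector matrix — which is the canonical choice, and is forced up to a column sign whenever the $\mu_j^\chi$ are distinct (Assumption \ref{ass:eval}, as imposed in the rest of the paper, with the sign then pinned down by Assumption \ref{ass:sign}) — one gets $\bm\lambda^{(j)}=\sqrt{\mu_j^\chi}\,\mbf v_j^\chi$ for every $j$, that is, $\bm\Lambda=\mbf V^\chi(\mbf M^\chi)^{1/2}$.

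The argument is exact, so there is no real obstacle; the only points that need a line of care are the justification of $\bm\Gamma^F=\mbf I_r$ from the identification constraint and the ordering/sign conventions that make $\mbf M^\chi$ and $\mbf V^\chi$ unambiguous, and both are standard bookkeeping rather than substantive difficulties.
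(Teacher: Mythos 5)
Your proof is correct, and part (a) is essentially the paper's argument (both rest on $\bm\Gamma^F=\mbf I_r$ reducing $\bm\Gamma^\chi$ to $\bm\Lambda\bm\Lambda^\prime$, whose nonzero spectrum coincides with that of the diagonal matrix $\bm\Lambda^\prime\bm\Lambda$). For part (b) you take a more direct route: you exhibit the columns of $\bm\Lambda$ as mutually orthogonal eigenvectors via $\bm\Gamma^\chi\bm\Lambda=\bm\Lambda(\bm\Lambda^\prime\bm\Lambda)$ and normalize, so that $\bm\Lambda=\mbf V^\chi(\mbf M^\chi)^{1/2}$ falls out immediately. The paper instead writes $\bm\Lambda\mbf K_*=\mbf V^\chi(\mbf M^\chi)^{1/2}$ for an unknown invertible $\mbf K_*$, computes $\mbf K_*$ by projection, proves $\mbf K_*\mbf K_*^\prime=\mbf I_r$, and then uses $\mbf K_*^\prime\bm\Lambda^\prime\bm\Lambda\mbf K_*=\mbf M^\chi$ together with part (a) to force $\mbf K_*=\mbf I_r$. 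Your version is shorter and makes the mechanism transparent; the paper's version introduces the rotation-matrix machinery ($\mbf K$, $\mbf K_*$) that it reuses in Propositions \ref{prop:L} and \ref{prop:KKK}, which is presumably why it is organized that way. Both arguments share the same residual sign indeterminacy in the choice of $\mbf V^\chi$ (the statement as written invokes only Assumptions \ref{ass:common} and \ref{ass:ident}, under which $\mbf K_*$, or your eigenvector basis, is pinned down only up to a diagonal $\pm 1$ matrix); you flag this explicitly and resolve it via Assumptions \ref{ass:eval} and \ref{ass:sign}, which is the honest reading and matches how the result is used downstream.
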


\noindent
\textbf{Proof of Proposition \ref{prop:K00}.} For part (a), first notice that Assumptions \ref{ass:common}(c) and \ref{ass:ident}(b), imply $\bm\Gamma^F=\mbf I_r$ which, in turn, implies $\bm\Gamma^\chi=\bm\Lambda\bm\Lambda^\prime$. Therefore, since the non-zero eigenvalues of $\frac{\bm\Gamma^\chi}n$ are the same as the $r$ eigenvalues of $\frac{\bm\Lambda^\prime\bm\Lambda}{n}$, which is diagonal by Assumption \ref{ass:ident}(a). Then, we must have, for all $n\in\mathbb N$, $\frac{\bm\Lambda^\prime\bm\Lambda}{n} = \frac{\mbf M^\chi}{n}$. Equivalently, for a given $T$, from Assumption \ref{ass:ident}(b), we have $\wh{\bm\Gamma}^\chi=\frac 1T \sum_{t=1}^T\bm\chi_t\bm\chi_t^\prime=\bm\Lambda\bm\Lambda^\prime$ which implies that $\frac{\bm\Lambda^\prime\bm\Lambda}{n} = \frac{\wh{\mbf M}^\chi}{n}$. However, since $\wh{\bm\Gamma}^\chi = {\bm\Gamma}^\chi$ it must be that $\frac{\wh{\mbf M}^\chi}{n}=\frac{{\mbf M}^\chi}{n}$. This proves part (a). \smallskip

For part (b), since $\bm\Gamma^\chi=\mbf V^\chi\mbf M^\chi\mbf V^{\chi\prime}$, it must be that 
\beq\label{eq:LKstar}
\bm\Lambda\mbf K_*= \mbf V^\chi(\mbf M^\chi)^{1/2},
\eeq
for some $r\times r$ invertible $\mbf K_*$. Now, since $\text{rk}(\frac{\bm\Lambda} {\sqrt n})=r$ for all $n>N$ (see the proof of Proposition \ref{prop:L}):
\beq\label{eq:kappaID}
\mbf K_*= (\bm\Lambda^\prime\bm\Lambda)^{-1}\bm\Lambda^\prime\mbf V^\chi(\mbf M^\chi)^{1/2}=(\mbf M^\chi)^{-1}\bm\Lambda^\prime\mbf V^\chi(\mbf M^\chi)^{1/2}, 
\eeq
which is also obtained by linear projection, and 
\beq\label{eq:kappainvID}
\mbf K^{-1}_*= (\mbf M^\chi)^{-1/2}{\mbf V^{\chi\prime}\bm\Lambda}.
\eeq
Notice that since $\mbf K_*$ is a special case of the matrix $\mbf K$ defined in \eqref{eq:kappa} in the proof of Proposition \ref{prop:L}, then $\mbf K_*$  
is finite and positive definite because of Lemma \ref{lem:KO1}(i) and \ref{lem:KO1}(ii), respectively, hence, $\mbf K^{-1}_*$ in \eqref{eq:kappainvID} is well defined. 

Moreover, from \eqref{eq:kappaID}, because of Assumption \ref{ass:ident}(b) and part (a):
\begin{align}
\mbf K_*\mbf K_*^\prime&=(\mbf M^\chi)^{-1}\bm\Lambda^\prime\mbf V^\chi\mbf M^\chi\mbf V^{\chi\prime}\bm\Lambda(\mbf M^\chi)^{-1}\nn\\
&=(\mbf M^\chi)^{-1}\bm\Lambda^\prime\bm\Gamma^\chi\bm\Lambda(\mbf M^\chi)^{-1}\nn\\
&=(\mbf M^\chi)^{-1}\bm\Lambda^\prime\bm\Lambda\bm\Lambda^\prime\bm\Lambda(\mbf M^\chi)^{-1}\nn\\
&=\mbf I_r.\label{eq:KKstar}
\end{align}
So because of \eqref{eq:KKstar}, we have that $\mbf K_*$ is an orthogonal matrix, i.e., $\mbf K_*=\mbf K_*^{-1}$.
Finally, by \eqref{eq:kappaID} we also have
\beq\label{eq:VKstar}
\mbf V^\chi = \bm\Lambda\mbf K_*(\mbf M^\chi)^{-1/2}
\eeq
and by substituting \eqref{eq:VKstar} into \eqref{eq:kappainvID}, because of \eqref{eq:KKstar},
\[
\mbf K_*^{-1} = (\mbf M^\chi)^{-1}\mbf K_*^\prime \bm\Lambda^\prime \bm\Lambda=(\mbf M^\chi)^{-1}\mbf K_*^{-1}\bm\Lambda^\prime \bm\Lambda,
\]
which is equivalent to:
\beq\label{eq:LKstarT}
\mbf K_*^{-1}\bm\Lambda^\prime \bm\Lambda \mbf K_*= \mbf M^\chi,
\eeq
and by part (a), we must have $\mbf K_*=\mbf I_r$. Alternatively, by multiplying on the right  both sides of \eqref{eq:LKstar} by their transposed:
\beq\label{eq:LKstarT2}
\mbf K_*^\prime\bm\Lambda^\prime \bm\Lambda \mbf K_*= \mbf M^\chi,
\eeq
since eigenvectors are normalized, and again by part (a), we must have $\mbf K_*=\mbf I_r$. So from \eqref{eq:LKstar} or \eqref{eq:LKstarT2} we prove part (b). This completes the proof. \hfill $\Box$\medskip

\begin{prop}\label{prop:L}
Under Assumptions \ref{ass:common} through \ref{ass:ind}, as $n,T\to\infty$
 $\min(n,\sqrt T)\l\Vert\frac{\wh{\bm\Lambda}-\bm\Lambda\bm{\mathcal H}}{\sqrt n}\r\Vert= O_{\mathrm P}(1)$,
where $\bm{\mathcal H}=(\bm\Lambda^\prime\bm\Lambda)^{-1}\bm\Lambda^\prime\mbf V^\chi (\mbf M^\chi)^{1/2}\mbf J$
and $\mbf J$ is an $r\times r$ diagonal matrix with entries $\pm 1$.
\end{prop}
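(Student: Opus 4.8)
`\textbf{Proof proposal for Proposition \ref{prop:L}.}`

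`The plan is to obtain the consistency rate $\min(n,\sqrt T)$ for $n^{-1/2}(\wh{\bm\Lambda}-\bm\Lambda\bm{\mathcal H})$ directly from the eigenvalue/eigenvector equation \eqref{eq:start}, without first establishing the asymptotic expansion of the individual rows. First I would recall the starting identity $\tfrac{\bm X^\prime\bm X}{nT}\wh{\bm\Lambda}=\wh{\bm\Lambda}\tfrac{\wh{\mbf M}^x}{n}$, and substitute the decomposition $\bm X=\bm F\bm\Lambda^\prime+\bm\Xi$ to obtain the four-term expansion analogous to \eqref{eq:start2}. Rearranging and using the definition of $\bm{\mathcal H}$ (the population analogue of $\wh{\mbf H}$ in \eqref{eq:acca}, namely the matrix built from $\bm\Lambda^\prime\bm\Lambda$, $\bm F^\prime\bm F/T=\mbf I_r$ and the limiting diagonal $\mbf M^\chi/n$) gives an identity of the schematic form
\[
\frac{\wh{\bm\Lambda}-\bm\Lambda\bm{\mathcal H}}{\sqrt n}
= \Big(\text{terms in }\tfrac{\bm F^\prime\bm\Xi}{nT},\ \tfrac{\bm\Xi^\prime\bm F}{nT},\ \tfrac{\bm\Xi^\prime\bm\Xi}{nT}\Big)\,\bm{\mathcal H}\,\Big(\tfrac{\wh{\mbf M}^x}{n}\Big)^{-1}
+\Big(\text{same stochastic terms}\Big)\,\frac{\wh{\bm\Lambda}-\bm\Lambda\bm{\mathcal H}}{\sqrt n}\,\Big(\tfrac{\wh{\mbf M}^x}{n}\Big)^{-1},
\]
exactly as in \eqref{eq:start4}, so that $n^{-1/2}(\wh{\bm\Lambda}-\bm\Lambda\bm{\mathcal H})$ appears on both sides and the bound becomes a fixed-point inequality.`

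`The key steps are then: (i) bound each of the stochastic building blocks using the auxiliary lemmas already available — $\Vert\bm F^\prime\bm\Xi/(\sqrt n T)\Vert=O_{\mathrm P}(\max(n^{-1/2},T^{-1/2}))$ type bounds (Lemma \ref{lem:aiuto}(i)--(iii)), $\Vert\bm\Lambda/\sqrt n\Vert=O(1)$ and $\Vert n^{-1}\bm\Lambda^\prime\bm\Lambda\Vert=O(1)$ from Lemma \ref{lem:FTLN}(i) and Assumption \ref{ass:common}(a), $\Vert(\wh{\mbf M}^x/n)^{-1}\Vert=O_{\mathrm P}(1)$ from Lemma \ref{lem:MO1}(iv), and $\Vert\bm{\mathcal H}\Vert=O_{\mathrm P}(1)$ from Lemma \ref{lem:HO1bis}; (ii) observe that the multiplicative factor in front of $n^{-1/2}(\wh{\bm\Lambda}-\bm\Lambda\bm{\mathcal H})$ on the right-hand side is $o_{\mathrm P}(1)$, so it can be absorbed, leaving $\min(n,\sqrt T)\Vert n^{-1/2}(\wh{\bm\Lambda}-\bm\Lambda\bm{\mathcal H})\Vert=O_{\mathrm P}(1)$; (iii) separately verify the structural claims about $\bm{\mathcal H}$: that it is well-defined, that $\bm{\mathcal H}=(\bm\Lambda^\prime\bm\Lambda)^{-1}\bm\Lambda^\prime\mbf V^\chi(\mbf M^\chi)^{1/2}\mbf J$, and that $\mbf J$ is a sign matrix. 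The latter uses Proposition \ref{prop:K00}: since $\bm\Lambda=\mbf V^\chi(\mbf M^\chi)^{1/2}$ and $n^{-1}\bm\Lambda^\prime\bm\Lambda=n^{-1}\mbf M^\chi$ is diagonal, the matrix $(\bm\Lambda^\prime\bm\Lambda)^{-1}\bm\Lambda^\prime\mbf V^\chi(\mbf M^\chi)^{1/2}$ reduces to $\mbf I_r$ up to the sign indeterminacy inherent in the eigenvectors $\wh{\mbf V}^x$, which is what $\mbf J$ records. I would also need to confirm $\mathrm{rk}(\bm\Lambda/\sqrt n)=r$ for $n>N$, which follows from Assumption \ref{ass:common}(a) (positive definiteness of $\bm\Sigma_\Lambda$) together with Assumption \ref{ass:eval}.`

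`The main obstacle I anticipate is step (iii) rather than the rate itself: one must carefully track the sign/rotation ambiguity through the eigendecomposition. The matrix $\wh{\mbf M}^x$ and eigenvectors $\wh{\mbf V}^x$ are only defined up to column signs (Assumption \ref{ass:eval} gives distinct eigenvalues so no further rotational freedom), and one has to show that the natural ``rotation'' $\wh{\mbf H}$ converges to a \emph{diagonal sign matrix} $\mbf J$, not to a general orthogonal matrix — this is where the identifying Assumption \ref{ass:ident} (and the argument already used in Proposition \ref{prop:K00} to show $\mbf K_*=\mbf I_r$) does the real work. The purely probabilistic part, by contrast, is routine bookkeeping with the auxiliary lemmas, and the self-referential inequality collapses immediately once the prefactor is shown to be $o_{\mathrm P}(1)$. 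A secondary technical point is making sure the rate $\min(n,\sqrt T)$ — and not a worse one — emerges: this requires using the sharper bounds on $\bm\Xi^\prime\bm F/(nT)$ (giving the $\sqrt T$ part) and on $\bm\Xi^\prime\bm\Xi\bm\Lambda/(n^{3/2}T)$ minus its mean (giving the $n$ part via Assumption \ref{ass:idio}(c-ii) and the weak cross-correlation bound of Lemma \ref{lem:Gxi}), which is precisely what Lemmas \ref{lem:aiuto}, \ref{lem:LLN}, and \ref{lem:FTLN} are designed to deliver.`
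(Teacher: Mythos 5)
There is a genuine gap in the route you propose, and it sits exactly where you locate the ``routine bookkeeping.'' The eigen-equation \eqref{eq:start} does give the expansion \eqref{eq:start4}, but the rotation that appears \emph{naturally} on its left-hand side is the data-dependent matrix $\wh{\mbf H}=(n^{-1}\bm\Lambda^\prime\wh{\bm\Lambda})(n^{-1}\wh{\mbf M}^x)^{-1}$ of \eqref{eq:acca}, not the deterministic $\bm{\mathcal H}$: the term $(nT)^{-1}\bm\Lambda\bm F^\prime\bm F\bm\Lambda^\prime\wh{\bm\Lambda}$ equals $\bm\Lambda\wh{\mbf H}(n^{-1}\wh{\mbf M}^x)$ by construction, so the identity reads $\wh{\bm\Lambda}-\bm\Lambda\wh{\mbf H}=(\cdots)\bm{\mathcal H}(n^{-1}\wh{\mbf M}^x)^{-1}+(\cdots)(\wh{\bm\Lambda}-\bm\Lambda\bm{\mathcal H})(n^{-1}\wh{\mbf M}^x)^{-1}$, with \emph{different} rotations on the two sides. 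Your schematic fixed-point equation, which has $\wh{\bm\Lambda}-\bm\Lambda\bm{\mathcal H}$ on both sides, is therefore not the identity the algebra delivers, and the absorption step does not close. You could instead run the fixed point entirely in $\wh{\bm\Lambda}-\bm\Lambda\wh{\mbf H}$ (that does close, and the $o_{\mathrm P}(1)$ prefactor argument is fine), but then you still have to bound $\Vert\wh{\mbf H}-\bm{\mathcal H}\Vert$, and $\wh{\mbf H}-\bm{\mathcal H}$ contains the term $n^{-1}\bm\Lambda^\prime(\wh{\bm\Lambda}-\bm\Lambda\bm{\mathcal H})$ — precisely the quantity Proposition \ref{prop:L} is trying to control. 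The only non-circular way to bridge $\wh{\mbf H}$ and $\bm{\mathcal H}$ is through the convergence of $\wh{\mbf V}^x$ to $\mbf V^\chi\mbf J$, i.e.\ an eigenvector perturbation bound; in the paper this bridging (Propositions \ref{prop:factor} and \ref{prop:H}) is carried out \emph{downstream} of Proposition \ref{prop:L}, not upstream of it.

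The paper's actual proof avoids the eigen-equation altogether. It writes $\wh{\bm\Lambda}=\wh{\mbf V}^x(\wh{\mbf M}^x)^{1/2}$ by \eqref{eq:estL} and shows via the span/projection identity \eqref{eq:VKH} that $\bm\Lambda\bm{\mathcal H}=\mbf V^\chi(\mbf M^\chi)^{1/2}\mbf J$, so that $n^{-1/2}(\wh{\bm\Lambda}-\bm\Lambda\bm{\mathcal H})$ is a difference of products of estimated and population eigenvectors and square-root eigenvalues. The rate then follows in three lines from Weyl's inequality for $\Vert n^{-1}(\wh{\mbf M}^x-\mbf M^\chi)\Vert$ (Lemma \ref{lem:covarianze}(iii)) and the Davis--Kahan/Yu--Wang--Samworth bound $\Vert\wh{\mbf V}^x-\mbf V^\chi\mbf J\Vert=O_{\mathrm P}(\max(n^{-1},T^{-1/2}))$ (Lemma \ref{lem:covarianze}(iv)), both of which rest on $\Vert n^{-1}(\wh{\bm\Gamma}^x-\bm\Gamma^\chi)\Vert$. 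Your part (iii) — the rank condition, the span argument behind $\bm{\mathcal H}$, the role of Proposition \ref{prop:K00} and the sign matrix $\mbf J$ — is essentially correct and matches the paper (with the caveat that Proposition \ref{prop:K00} presupposes Assumption \ref{ass:ident}, which is not among the hypotheses of Proposition \ref{prop:L}; the paper's construction of $\bm{\mathcal H}$ deliberately does not need it). But the probabilistic core of your plan would need to be replaced by the direct perturbation argument, or else supplemented by exactly the Davis--Kahan step that makes the fixed-point machinery redundant.
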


\begin{proof}
Notice that $\text{rk}\l(\frac{\bm\Lambda}{\sqrt n}\r)=r$ for all $n$, since $\text{rk}(\bm\Gamma^F)=r$ by Assumption \ref{ass:common}(b) and $\text{rk}\l(\frac{\bm\Gamma^\chi}n \r)=r$ by Lemma \ref{lem:Gxi}(iv).
Indeed, $\text{rk}\l(\frac{\bm\Gamma^\chi}n \r)\le \min (\text{rk}(\bm\Gamma^F),\text{rk}\l(\frac{\bm\Lambda}{\sqrt n}\r))$.
This holds for all $n>N$ and since eigenvalues are an increasing sequence in $n$. Therefore, $(\frac{\bm\Lambda^\prime\bm\Lambda}n)^{-1}$ is well defined for all $n$ and $\bm\Lambda$ admits a left inverse.

Second, since
\beq
\frac{\bm\Gamma^\chi}n = \mbf V^\chi\frac{\mbf M^\chi}n\mbf V^{\chi\prime}= \frac{\bm\Lambda}{\sqrt n}\bm\Gamma^F\frac{\bm\Lambda^\prime}{\sqrt n}. 
\eeq
 the columns of $\frac{\mbf V^\chi(\mbf M^\chi)^{1/2}}{\sqrt n}$ and the columns of $\frac{\bm\Lambda(\bm\Gamma^F)^{1/2}}{\sqrt n}$ must span the same space. 

So there exists an $r\times r$ invertible matrix $\mbf K$ such that
\beq\label{eq:kappaproj}
{\bm\Lambda(\bm\Gamma^F)^{1/2}}\mbf K= {\mbf V^\chi(\mbf M^\chi)^{1/2}}
\eeq
Therefore, from \eqref{eq:kappaproj}
\beq\label{eq:kappa}
\mbf K=(\bm\Gamma^F)^{-1/2} (\bm\Lambda^\prime\bm\Lambda)^{-1}\bm\Lambda^\prime\mbf V^\chi(\mbf M^\chi)^{1/2}
\eeq
which is also obtained by linear projection, and also
\beq\label{eq:kappainv}
\mbf K^{-1}= (\mbf M^\chi)^{-1/2}{\mbf V^{\chi\prime}\bm\Lambda}(\bm\Gamma^F)^{1/2}
\eeq
which are both finite and positive definite because of Lemma \ref{lem:KO1}.

Now, from \eqref{eq:kappaproj} and \eqref{eq:kappa}
\begin{align}\label{eq:VKH}
{\mbf V}^\chi ( {\mbf M}^\chi)^{1/2}= \bm\Lambda(\bm\Gamma^F )^{1/2}\mbf K
= \bm\Lambda(\bm\Lambda^\prime\bm\Lambda)^{-1}\bm\Lambda^\prime\mbf V^\chi(\mbf M^\chi)^{1/2}
\end{align}
Let
\beq\label{eq:mcH}
\bm{\mathcal H}=(\bm\Lambda^\prime\bm\Lambda)^{-1}\bm\Lambda^\prime\mbf V^\chi(\mbf M^\chi)^{1/2}\mbf J,
\eeq 
which is finite and positive definite because of Lemma \ref{lem:HO1bis}.

Now, 
because of Lemmas \ref{lem:covarianze}(iii), \ref{lem:covarianze}(iv), \ref{lem:MO1}(i), using \eqref{eq:estL}, \eqref{eq:VKH}, and \eqref{eq:mcH},
\begin{align}
\l\Vert\frac { \wh{\bm\Lambda} - \bm\Lambda\bm{\mathcal H}}{\sqrt n}\r\Vert=&\,
\l\Vert\wh{\mbf V}^x\l(\frac{\wh{\mbf M}^x}n\r)^{1/2} - {\mbf V}^\chi \l(\frac{ {\mbf M}^\chi}n\r)^{1/2} \mbf J\r\Vert=\ \l\Vert\wh{\mbf V}^x\l(\frac{\wh{\mbf M}^x}n\r)^{1/2} - {\mbf V}^\chi\mbf J \l(\frac{ {\mbf M}^\chi}n\r)^{1/2} \r\Vert\nn\\
\le&\, \l\Vert \wh{\mbf V}^x-{\mbf V}^\chi\mbf J\r\Vert\,\l\Vert\frac{\mbf M^\chi}{n} \r\Vert+ 
\l\Vert\frac 1 {\sqrt n}\l\{\l(\wh{\mbf M}^x\r)^{1/2}-\l(\mbf M^\chi\r)^{1/2}\r\}\r\Vert \,\l\Vert \mbf V^\chi\r\Vert\nn\\
&+ \l\Vert \wh{\mbf V}^x-{\mbf V}^\chi\mbf J\r\Vert\,
\l\Vert\frac 1 {\sqrt n}\l\{\l(\wh{\mbf M}^x\r)^{1/2}-\l(\mbf M^\chi\r)^{1/2}\r\}\r\Vert\nn\\
&= O_{\mathrm P}\l(\max\l(\frac 1 n,\frac 1{\sqrt T}\r)\r)+ O_{\mathrm P}\l(\max\l(\frac 1 {n^2},\frac 1{T}\r)\r), \nn
\end{align}
since $\Vert \mbf V^\chi\Vert=1$ because eigenvectors are normalized.
This completes the proof.
\end{proof}

\begin{prop}\label{prop:load}
Under Assumptions \ref{ass:common} through \ref{ass:ind} the terms in \eqref{eq:sviluppoLambda} are such that, as $n,T\to\infty$, 
\begin{compactenum}[(a)]
\item $\sqrt {nT}\l\Vert \text{\upshape (1.a)}\r\Vert = O_{\mathrm P}(1)$;
\item $\sqrt T \l\Vert \text{\upshape (1.b)}\r\Vert = O_{\mathrm P}(1)$;
\item $\min(n,\sqrt{nT})\l\Vert \text{\upshape (1.c)}\r\Vert = O_{\mathrm P}(1)$;
\item $\min(\sqrt{nT},T)\l\Vert \text{\upshape (1.d)}\r\Vert = O_{\mathrm P}(1)$;
\item $\min(\sqrt{nT},T)\l\Vert \text{\upshape (1.e)}\r\Vert = O_{\mathrm P}(1)$;
\item $\min(n,\sqrt{nT},T)\l\Vert \text{\upshape (1.f)}\r\Vert = O_{\mathrm P}(1)$;
\end{compactenum}
uniformly in $i$.
\end{prop}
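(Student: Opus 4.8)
Each of the six quantities in \eqref{eq:sviluppoLambda} is a multilinear form in the loadings $\{\bm\lambda_j\}$, the factors $\{\mbf F_t\}$, the idiosyncratic terms $\{\xi_{jt}\}$ and --- for (1.d)--(1.f) --- the rows of $\wh{\bm\Lambda}-\bm\Lambda\bm{\mathcal H}$. For each term the plan is to rewrite it in matrix form, split it into a product (or a sum) of factors of known order, and control those factors with four ingredients: (i) the deterministic bounds $\Vert\bm\lambda_i\Vert\le M_\Lambda$, $\Vert n^{-1}\bm\Lambda^\prime\bm\Lambda\Vert=O(1)$, $\Vert n^{-1/2}\bm\Lambda\Vert=O(1)$, $\Vert\bm{\mathcal H}\Vert=O(1)$, from Assumption \ref{ass:common}(a) and Lemmas \ref{lem:FTLN}(i) and \ref{lem:HO1bis}; (ii) the second-moment estimates $\E\Vert(nT)^{-1/2}\bm F^\prime\bm\Xi\Vert^2=O(1)$, $\E\Vert(nT)^{-1}\bm F^\prime\bm\Xi\bm\Lambda\Vert^2=O((nT)^{-1})$ and $\max_i\E\Vert T^{-1}\sum_{t=1}^T\mbf F_t\xi_{it}\Vert^2=O(T^{-1})$, which follow from the independence of $\bm F$ and $\bm\Xi$ (Assumption \ref{ass:ind}), the moment bounds on the factors (Assumption \ref{ass:common}(c)) and the standard second-order conditions on $\{\xi_{it}\}$ guaranteed by Assumption \ref{ass:idio}(b) and Lemma \ref{lem:Gxi}, and are of the type already recorded in Lemmas \ref{lem:aiuto} and \ref{lem:LLN}; (iii) for the purely idiosyncratic part, the decomposition $T^{-1}\sum_t\xi_{it}\xi_{jt}=\gamma^{(0)}_{ij}+\wh\gamma^{(0)}_{ij}$ with $\gamma^{(0)}_{ij}=\E[\xi_{it}\xi_{jt}]$, together with $n^{-1}\sum_j|\gamma^{(0)}_{ij}|=O(1)$, $n^{-1}\sum_j(\gamma^{(0)}_{ij})^2=O(n^{-1})$ from Assumption \ref{ass:idio}(a)--(b) at $k=0$, and $\E\Vert(nT)^{-1}\sum_{t,j}(\xi_{it}\xi_{jt}-\gamma^{(0)}_{ij})\bm\lambda_j\Vert^2=O((nT)^{-1})$, $n^{-1}\sum_j\E(\wh\gamma^{(0)}_{ij})^2=O(T^{-1})$ from the fourth-order summability of Assumption \ref{ass:idio}(c) and Lemma \ref{lem:Gxi}; and (iv) for (1.d)--(1.f), the rate $\Vert n^{-1/2}(\wh{\bm\Lambda}-\bm\Lambda\bm{\mathcal H})\Vert=O_{\mathrm P}(\min(n,\sqrt T)^{-1})$ from Proposition \ref{prop:L}, hence $n^{-1}\Vert\wh{\bm\Lambda}-\bm\Lambda\bm{\mathcal H}\Vert_F^2=O_{\mathrm P}(\min(n,\sqrt T)^{-2})$. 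All constants in (i)--(iii) are independent of $i$, which yields the uniformity.

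The term-by-term conclusions are then as follows. Since (1.a)$=\bm\lambda_i^\prime\,(nT)^{-1}\bm F^\prime\bm\Xi\bm\Lambda$, part (a) follows from (i) and (ii). Since (1.b)$=(T^{-1}\sum_t\mbf F_t\xi_{it})^\prime\,(n^{-1}\bm\Lambda^\prime\bm\Lambda)$, part (b) follows from (i) and (ii). For (1.c)$=n^{-1}\sum_j\gamma^{(0)}_{ij}\bm\lambda_j^\prime+(nT)^{-1}\sum_{t,j}(\xi_{it}\xi_{jt}-\gamma^{(0)}_{ij})\bm\lambda_j^\prime$, the first summand is $O(n^{-1})$ and the second is $O_{\mathrm P}((nT)^{-1/2})$ by (iii), so (1.c)$=O_{\mathrm P}(\max(n^{-1},(nT)^{-1/2}))=O_{\mathrm P}(\min(n,\sqrt{nT})^{-1})$, i.e. part (c). For (1.d)$=\bm\lambda_i^\prime\,(nT)^{-1}\bm F^\prime\bm\Xi(\wh{\bm\Lambda}-\bm\Lambda\bm{\mathcal H})$, writing it as $\bm\lambda_i^\prime\,\big(n^{-1/2}T^{-1}\bm F^\prime\bm\Xi\big)\big(n^{-1/2}(\wh{\bm\Lambda}-\bm\Lambda\bm{\mathcal H})\big)$ and applying (ii) and (iv) gives $\Vert(1.d)\Vert=O_{\mathrm P}(T^{-1/2}\min(n,\sqrt T)^{-1})$, and since $T^{1/2}\min(n,\sqrt T)=\min(nT^{1/2},T)\ge\min(\sqrt{nT},T)$ this is part (d); the identical argument, with (1.e) written as $(T^{-1}\sum_t\mbf F_t\xi_{it})^\prime\,(n^{-1/2}\bm\Lambda^\prime)\,(n^{-1/2}(\wh{\bm\Lambda}-\bm\Lambda\bm{\mathcal H}))$, gives part (e).

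For (1.f)$=n^{-1}\sum_j(\gamma^{(0)}_{ij}+\wh\gamma^{(0)}_{ij})(\wh{\bm\lambda}_j^\prime-\bm\lambda_j^\prime\bm{\mathcal H})$, apply Cauchy--Schwarz to each sum: the first is bounded by $\big(n^{-1}\sum_j(\gamma^{(0)}_{ij})^2\big)^{1/2}\big(n^{-1}\Vert\wh{\bm\Lambda}-\bm\Lambda\bm{\mathcal H}\Vert_F^2\big)^{1/2}=O_{\mathrm P}(n^{-1/2}\min(n,\sqrt T)^{-1})$, the second by $\big(n^{-1}\sum_j(\wh\gamma^{(0)}_{ij})^2\big)^{1/2}\big(n^{-1}\Vert\wh{\bm\Lambda}-\bm\Lambda\bm{\mathcal H}\Vert_F^2\big)^{1/2}=O_{\mathrm P}(T^{-1/2}\min(n,\sqrt T)^{-1})$, so (1.f)$=O_{\mathrm P}(\min(\sqrt n,\sqrt T)^{-1}\min(n,\sqrt T)^{-1})$; a short case analysis over $n\lessgtr\sqrt T$ and $n\lessgtr T$ shows $\min(\sqrt n,\sqrt T)\min(n,\sqrt T)\ge\min(n,T)=\min(n,\sqrt{nT},T)$, which is part (f). The main obstacle is the ``fluctuation'' estimate underlying (iii): bounding $\E\Vert(nT)^{-1}\sum_{t,j}(\xi_{it}\xi_{jt}-\gamma^{(0)}_{ij})(\cdot)_j\Vert^2$ at rate $O((nT)^{-1})$ and $n^{-1}\sum_j\E(\wh\gamma^{(0)}_{ij})^2$ at rate $O(T^{-1})$ requires expanding $\Cov(\xi_{it}\xi_{jt},\xi_{is}\xi_{ls})$ into products of second-order covariances plus a fourth-order cumulant and summing over the time and cross-section indices with the decay of Assumption \ref{ass:idio}(b) and the fourth-order summability of Assumption \ref{ass:idio}(c); matching these two-regime rates (and the bound from Proposition \ref{prop:L}) to the $\min$-expressions in the statement, uniformly in $i$, is where the care is needed.
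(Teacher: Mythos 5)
Your proposal is correct and follows essentially the same route as the paper's proof: the same term-by-term matrix factorizations, the same mean-plus-fluctuation split of $T^{-1}\sum_t\xi_{it}\xi_{jt}$ for (1.c) and (1.f), the same use of Proposition \ref{prop:load}'s companion result on $n^{-1/2}(\wh{\bm\Lambda}-\bm\Lambda\bm{\mathcal H})$ for (1.d)--(1.f), and the same rate arithmetic (your Cauchy--Schwarz phrasing of (1.f) is just the paper's bound $\Vert\bm\zeta_i^\prime\bm\Xi/(\sqrt n T)\Vert\,\Vert(\wh{\bm\Lambda}-\bm\Lambda\bm{\mathcal H})/\sqrt n\Vert$ written componentwise). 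The only cosmetic difference is that the paper invokes Assumption \ref{ass:idio}(c-ii) directly as a high-level variance bound, so the cumulant expansion you flag as the main obstacle is not actually needed.
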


\begin{proof}
For part (a), for any $i=1,\ldots,n$, by Assumption \ref{ass:common}(a),
\beq
\l\Vert\frac 1{nT}{\bm\lambda}_i^\prime\sum_{t=1}^T\sum_{j=1}^n\mbf F_t\xi_{jt}{\bm\lambda}_j^\prime\r\Vert \le \l\Vert \bm\lambda_i\r\Vert\,\l\Vert\frac 1{nT}\sum_{t=1}^T\sum_{j=1}^n\mbf F_t\xi_{jt}{\bm\lambda}_j^\prime \r\Vert\le M_\Lambda\l\Vert\frac 1{nT}\sum_{t=1}^T\sum_{j=1}^n\mbf F_t\xi_{jt}{\bm\lambda}_j^\prime \r\Vert.\label{eq:1a}
\eeq
Then, by Assumptions \ref{ass:common}(a) and \ref{ass:ind}
\begin{align}
\E\l[\l\Vert\frac 1{nT}\sum_{t=1}^T\sum_{j=1}^n\mbf F_t\xi_{jt}{\bm\lambda}_j^\prime \r\Vert^2\r] 
&\le \E\l[\l\Vert\frac 1{nT}\sum_{t=1}^T\sum_{j=1}^n\mbf F_t\xi_{jt}{\bm\lambda}_j^\prime \r\Vert_F^2\r] 
=\frac 1{n^2T^2} \sum_{k=1}^r\sum_{h=1}^r \E\l[\l(\sum_{t=1}^T F_{kt}\sum_{j=1}^n \xi_{jt}[\bm\Lambda_{jh}]\r)^2\r] \nn\\
&\le \frac{r^2}{n^2T^2} \max_{h,k=1,\ldots, r}\sum_{t=1}^T\sum_{s=1}^T \E\l[F_{kt} \l(\sum_{j=1}^n \xi_{jt}[\bm\Lambda_{jh}]\r) F_{ks}\l(\sum_{\ell=1}^n \xi_{\ell s}[\bm\Lambda_{\ell h}]\r) \r]\nn\\
&= \frac{r^2}{n^2T^2} \max_{h,k=1,\ldots, r}\sum_{t=1}^T\sum_{s=1}^T \E[F_{kt}  F_{ks}]\,\sum_{j=1}^n\sum_{\ell=1}^n \E\l[\xi_{jt} \xi_{\ell s} \r] [\bm\Lambda_{jh}][\bm\Lambda_{\ell h}]\nn\\
&\le \l\{\frac{r^2}{T}\max_{k=1,\ldots, r}\sum_{t=1}^T\sum_{s=1}^T \E\l[F_{kt}F_{ks}\r]\r\}
\,\l\{\max_{t,s=1\ldots,T} \frac{M_{\Lambda}^2}{n^2T}\sum_{j=1}^n\sum_{\ell=1}^n\l\vert \E[\xi_{jt}\xi_{\ell s}]\r\vert\r\}.\label{eq:1a2}
\end{align}
Now, by Cauchy-Schwarz inequality, for any $k=1,\ldots, r$, 
\beq\label{eq:1aCS}
\l\vert \frac{1}{T}\sum_{t=1}^T\sum_{s=1}^TF_{kt}F_{ks}\r\vert\le \l(\frac 1 T\sum_{t=1}^T F_{kt}^2\r)^{1/2}\l(\frac 1 T\sum_{s=1}^T F_{ks}^2\r)^{1/2},
\eeq
and, by Assumption \ref{ass:common}(b), using \eqref{eq:1aCS} and again Cauchy-Schwarz inequality, 
\begin{align}
\max_{k=1,\ldots, r}\frac{1}{T}\sum_{t=1}^T\sum_{s=1}^T\E[F_{kt}F_{ks}]&
=\max_{k=1,\ldots, r}\E\l[\frac{1}{T}\sum_{t=1}^T\sum_{s=1}^TF_{kt}F_{ks}\r]\le\max_{k=1,\ldots, r}\E\l[\l\vert \frac{1}{T}\sum_{t=1}^T\sum_{s=1}^TF_{kt}F_{ks}\r\vert\r]\nn\\
&\le \max_{k=1,\ldots, r} \E\l[\l(\frac 1 T\sum_{t=1}^T F_{kt}^2\r)^{1/2}\l(\frac 1 T\sum_{s=1}^T F_{ks}^2\r)^{1/2}\r]\nn\\
&\le \max_{k=1,\ldots, r} 
\l(\E\l[\l(\frac 1 T\sum_{t=1}^T F_{kt}^2\r)\r] \r)^{1/2}
\l(\E\l[\l(\frac 1 T\sum_{s=1}^T F_{ks}^2\r)\r] \r)^{1/2}\nn\\
&\le \max_{k=1,\ldots, r} 
\l(\frac 1 T\sum_{t=1}^T \E[F_{kt}^2] \r)^{1/2}
\l(\frac 1 T\sum_{s=1}^T \E[F_{ks}^2] \r)^{1/2}\nn\\
&=\max_{k=1,\ldots, r}  \frac{1}{T}\sum_{t=1}^T \E[F_{kt}^2]\le \max_{t=1,\ldots,T}\max_{k=1,\ldots, r}\E[F_{kt}^2]\nn\\
&= \max_{k=1,\ldots, r} \bm\eta_k^\prime \bm\Gamma_F\bm\eta_k \le \Vert\bm\Gamma^F\Vert\le M_F,\label{eq:1a21}
\end{align}
since $M_F$ is independent of $t$ and where $\bm\eta_k$ is an $r$-dimensional vector with one in the $k$th entry and zero elsewhere. And, because of Lemma \ref{lem:Gxi}(ii)
\beq
\max_{t,s=1,\ldots,T}\frac 1{n^2T}\sum_{j=1}^n\sum_{\ell=1}^n\l\vert \E[\xi_{jt}\xi_{\ell s}]\r\vert\le \frac{M_\xi}{nT},\label{eq:1a22}
\eeq
since $M_\xi$ is independent of $n$ and $t$.
By substituting \eqref{eq:1a21} and \eqref{eq:1a22} into \eqref{eq:1a2}, 
\beq
\E\l[\l\Vert\sum_{t=1}^T\sum_{j=1}^n\mbf F_t\xi_{jt}{\bm\lambda}_j^\prime \r\Vert^2\r] \le \frac{r^2M_FM_\Lambda^2M_\xi}{nT}.\label{eq:1a3}
\eeq 
By substituting \eqref{eq:1a3} into \eqref{eq:1a}, we prove part (a).\smallskip

For part (b), for any $i=1,\ldots,n$, because of Lemma \ref{lem:FTLN}(i),
\beq
\l\Vert\frac 1{nT} \sum_{t=1}^T \xi_{it}\mbf F_t^\prime\sum_{j=1}^n\bm\lambda_j\bm\lambda_j^\prime\r\Vert=
\l\Vert\frac 1{nT} \sum_{t=1}^T \xi_{it}\mbf F_t^\prime(\bm\Lambda^\prime\bm\Lambda)\r\Vert\le 
\l\Vert\frac 1{T} \sum_{t=1}^T \xi_{it}\mbf F_t\r\Vert\,\l\Vert
\frac{\bm\Lambda}{\sqrt n}
\r\Vert^2\le \l\Vert\frac 1{T} \sum_{t=1}^T \xi_{it}\mbf F_t\r\Vert M_\Lambda^2.\label{eq:1b}
\eeq
Then, by Assumptions \ref{ass:ind} and \ref{ass:idio}(b) and using \eqref{eq:1a21} 
\begin{align}
\E\l[\l\Vert\frac 1{T} \sum_{t=1}^T \xi_{it}\mbf F_t\r\Vert^2\r]&=\frac 1{T^2} \sum_{j=1}^r \E\l[\l(\sum_{t=1}^T \xi_{it}F_{jt}\r)^2\r]\nn\\
&\le \frac{r}{T^2}\max_{j=1,\ldots, r}\sum_{t=1}^T\sum_{s=1}^T \E[\xi_{it}F_{jt}\xi_{is}F_{js}]=
 \frac{r}{T^2}\max_{j=1,\ldots, r}\sum_{t=1}^T\sum_{s=1}^T \E[F_{jt}F_{js}]\,\E[\xi_{it}\xi_{is}]\nn\\
&\le \l\{\frac{r}{T}\max_{j=1,\ldots, r}\sum_{t=1}^T\sum_{s=1}^T \E[F_{jt}F_{js}]\r\}
\l\{\frac 1T \max_{t,s=1,\ldots, n} \l\vert\E[\xi_{it}\xi_{is}]\r\vert\r\}\nn\\
&\le rM_F\frac{ M_{ii}}T \max_{t,s=1,\ldots, n}\rho^{|t-s|} \le  \frac{rM_F M_{\xi}}T,\label{eq:1b2}
\end{align}
since $M_\xi$ is independent of $i$. Or, equivalently, by Lemma \ref{lem:Gxi}(iii) and Cauchy-Schwarz inequality 
\begin{align}
\E\l[\l\Vert\frac 1{T} \sum_{t=1}^T \xi_{it}\mbf F_t\r\Vert^2\r]
&\le  \frac{r}{T^2}\max_{j=1,\ldots, r}\sum_{t=1}^T\sum_{s=1}^T \E[F_{jt}F_{js}]\,\E[\xi_{it}\xi_{is}]\nn\\
&\le \l\{\frac{r}{T}\max_{j=1,\ldots, r}\max_{t,s=1,\ldots, n} \vert\E[F_{jt}F_{js}]\vert\r\}
\l\{\frac 1T  \sum_{t=1}^T\sum_{s=1}^T\l\vert\E[\xi_{it}\xi_{is}]\r\vert\r\}\nn\\
&\le \frac r T \max_{j=1,\ldots, r}\max_{t,s=1,\ldots, n} \E[F_{jt}^2] \frac{M_\xi(1+\rho)}{1-\rho}
\le \frac{r M_F M_{3\xi}}{T},\label{eq:1b3}
\end{align}
since $M_F$ is independent of $t$ and $M_{3\xi}$ is independent of $i$. Notice that $M_\xi\le M_{3\xi}$. Notice that \eqref{eq:1b3} is a special case of Lemma \ref{lem:LLN}(i). By substituting \eqref{eq:1b2}, or \eqref{eq:1b3}, into \eqref{eq:1b}, we prove part (b).
\smallskip

For part (c), for any $i=1,\ldots,n$,  because of Assumption \ref{ass:common}(a),
\begin{align}
\l\Vert\frac 1{nT} \sum_{t=1}^T\sum_{j=1}^n\xi_{it}\xi_{jt} \bm\lambda_j^\prime\r\Vert &=\l\{\sum_{k=1}^r \l(\frac 1{nT} \sum_{t=1}^T\sum_{j=1}^n\xi_{it}\xi_{jt} \lambda_{jk}\r)^2\r\}^{1/2}\le \sqrt rM_\Lambda\l\vert\frac 1{nT} \sum_{t=1}^T\sum_{j=1}^n\xi_{it}\xi_{jt} \r\vert\nn\\
&\le\sqrt r M_\Lambda\l\{ \l\vert\frac 1{nT} \sum_{t=1}^T\sum_{j=1}^n\l\{\xi_{it}\xi_{jt}-\E[\xi_{it}\xi_{jt}]\r\} \r\vert+\l\vert\frac 1{nT} \sum_{t=1}^T\sum_{j=1}^n\E[\xi_{it}\xi_{jt}] \r\vert\r\}.\label{eq:1c}
\end{align}
Then, by Assumption \ref{ass:idio}(b),
\begin{align}
\l\vert\frac 1{nT} \sum_{t=1}^T\sum_{j=1}^n\E[\xi_{it}\xi_{jt}] \r\vert\le 
\frac 1{nT}\sum_{t=1}^T\sum_{j=1}^n\l\vert\E[\xi_{it}\xi_{jt}]\r\vert\le 
\max_{t=1,\ldots,T}\frac 1n\sum_{j=1}^n\l\vert\E[\xi_{it}\xi_{jt}]\r\vert\le\frac 1n \sum_{j=1}^n M_{ij} \le \frac{M_\xi}n,\label{eq:1c2}
\end{align}
since $M_\xi$ is independent of $i$ and $t$. Moreover, by Assumption \ref{ass:idio}(c), 
\beq\label{eq:1c3}
\E\l[\l\vert\frac 1{nT} \sum_{t=1}^T\sum_{j=1}^n\l\{\xi_{it}\xi_{jt}-\E[\xi_{it}\xi_{jt}]\r\} \r\vert^2\r]\le \frac{K_\xi}{nT}.
\eeq
By substituting \eqref{eq:1c2} and \eqref{eq:1c3} into \eqref{eq:1c},  we prove part (c).
\smallskip

For part (d),  for any $i=1,\ldots,n$, because of Assumption \ref{ass:common}(a)
\begin{align}
\l\Vert\frac 1{nT}{\bm\lambda}_i^\prime\sum_{t=1}^T\sum_{j=1}^n\mbf F_t\xi_{jt}(\wh{\bm\lambda}_j^\prime-{\bm\lambda}_j^\prime\bm{\mathcal H})\r\Vert &\le M_\Lambda\l\Vert\frac 1{nT}\sum_{t=1}^T\sum_{j=1}^n\mbf F_t\xi_{jt}(\wh{\bm\lambda}_j^\prime-{\bm\lambda}_j^\prime\bm{\mathcal H}) \r\Vert\nn\\
&= M_\Lambda\l\Vert\frac {\bm F^\prime\bm \Xi (\wh{\bm\Lambda}-\bm\Lambda\bm{\mathcal H} )}{nT} \r\Vert\le M_\Lambda \l\Vert\frac {\bm F^\prime\bm \Xi}{\sqrt n T} \r\Vert\, \l\Vert\frac{\wh{\bm\Lambda}-\bm\Lambda\bm{\mathcal H}}{\sqrt n}\r\Vert.\label{eq:1d}
\end{align}
Then, by using Lemma \ref{lem:LLN} and Proposition \ref{prop:L} in \eqref{eq:1d},   we prove part (d).\smallskip

For part (e),  for any $i=1,\ldots,n$,
\begin{align}
\l\Vert\frac 1{nT} \sum_{t=1}^T \xi_{it}\mbf F_t^\prime\sum_{j=1}^n\bm\lambda_j(\wh{\bm\lambda}_j^\prime-{\bm\lambda}_j^\prime\bm{\mathcal H})\r\Vert&=
\l\Vert\frac 1{nT} \sum_{t=1}^T \xi_{it}\mbf F_t^\prime\bm\Lambda^\prime(\wh{\bm\Lambda}-\bm\Lambda\bm{\mathcal H})\r\Vert\nn\\ 
&\le \l\Vert\frac 1{T} \sum_{t=1}^T \xi_{it}\mbf F_t\r\Vert\,\l\Vert
\frac{\bm\Lambda}{\sqrt n}
\r\Vert \, \l\Vert\frac{\wh{\bm\Lambda}-\bm\Lambda\bm{\mathcal H}}{\sqrt n}\r\Vert .\label{eq:1e}
\end{align}
By substituting part (ii), Lemma \ref{lem:FTLN}(i), and part (a) of Proposition \ref{prop:L} into \eqref{eq:1e}, we prove part (e).\smallskip

Finally, for part (f),  for any $i=1,\ldots,n$, let $\bm\zeta_i=(\xi_{i1}\cdots \xi_{iT})^\prime$, then
\begin{align}
\l\Vert\frac 1{nT} \sum_{t=1}^T\sum_{j=1}^n\xi_{it}\xi_{jt} (\wh{\bm\lambda}_j^\prime-{\bm\lambda}_j^\prime\bm{\mathcal H})\r\Vert &= \l\Vert\frac{\bm\zeta_i^\prime\bm\Xi\l(\wh{\bm\Lambda}-\bm\Lambda\bm{\mathcal H}\r)}{nT} \r\Vert \le \l\Vert\frac{\bm\zeta_i^\prime\bm\Xi}{\sqrt n T}\r\Vert\, \l\Vert\frac{\wh{\bm\Lambda}-\bm\Lambda\bm{\mathcal H}}{\sqrt n}\r\Vert.\label{eq:1f}
\end{align}
Then, by the $C_r$-inequality with $r=2$,
\begin{align}
\l\Vert\frac{\bm\zeta_i^\prime\bm\Xi}{\sqrt n T}\r\Vert^2&=
\l\Vert\frac 1 {\sqrt n T} \sum_{t=1}^T \xi_{it} \bm\xi_t^\prime\r\Vert^2
 =\frac 1{n} \sum_{j=1}^n \l(\frac 1T\sum_{t=1}^T \xi_{it}\xi_{jt}\r)^2\nn\\
& =\frac 1{n} \sum_{j=1}^n \l(\frac 1T\sum_{t=1}^T \l\{\xi_{it}\xi_{jt}-\E[\xi_{it}\xi_{jt}]\r\}+\frac 1T\sum_{t=1}^T \E[\xi_{it}\xi_{jt}]\r)^2\nn\\
&\le\frac 2{n} \sum_{j=1}^n\l\{  \l(\frac 1T\sum_{t=1}^T \l\{\xi_{it}\xi_{jt}-\E[\xi_{it}\xi_{jt}]\r\}\r)^2+\l(\frac 1T\sum_{t=1}^T \E[\xi_{it}\xi_{jt}]\r)^2\r\}\label{eq:1f2}
\end{align}
By taking the expectation of \eqref{eq:1f2}, and because of Assumption \ref{ass:idio}(c) and Lemma \ref{lem:Gxi}(v),
\begin{align}
\E\l[\l\Vert\frac{\bm\zeta_i^\prime\bm\Xi}{\sqrt n T}\r\Vert^2\r]&=\frac 2{n} \sum_{j=1}^n \E\l[
\l(\frac 1T\sum_{t=1}^T \l\{\xi_{it}\xi_{jt}-\E[\xi_{it}\xi_{jt}]\r\}\r)^2
\r]+\frac 2{n} \sum_{j=1}^n \l(\frac 1T\sum_{t=1}^T \E[\xi_{it}\xi_{jt}]\r)^2\nn\\
&\le 2\max_{j=1,\ldots, n}\E\l[
\l(\frac 1T\sum_{t=1}^T \l\{\xi_{it}\xi_{jt}-\E[\xi_{it}\xi_{jt}]\r\}\r)^2
\r]+\frac{2}{n}\sum_{i=1}^n\l(\frac 1T\sum_{t=1}^T \E[\xi_{it}\xi_{jt}]\r)^2\nn\\
&\le \frac{2K_\xi}{T} + \frac{2}{nT^2}\sum_{i=1}^n\sum_{t=1}^T \E[\xi_{it}\xi_{jt}]\sum_{s=1}^T\E[\xi_{is}\xi_{js}]\nn\\
&\le \frac{2K_\xi}{T} + \max_{t=1,\ldots ,T}\frac{2}{n}\sum_{i=1}^n \l\vert\E[\xi_{it}\xi_{jt}]\r\vert \max_{s=1,\ldots,T}\max_{i,j=1,\ldots, n} \E[\xi_{is}\xi_{js}]\nn\\
&\le \frac{2K_\xi}{T} + \frac{2}n \sum_{i=1}^n M_{ij} \max_{i,j=1,\ldots, n} \bm\varepsilon_i^\prime\bm\Gamma^\xi\bm\varepsilon_j\le  \frac{2K_\xi}{T}+ \frac{2M_\xi}{n}\Vert \bm\Gamma^\xi\Vert\le \frac{2K_\xi}{T}+ \frac{2M_\xi M_{2\xi}}{n}, \label{eq:1f3}
\end{align}
since $K_\xi$ is independent of $j$ and $M_\xi$ is independent of $i$, $j$, $t$, and $s$ and where $\bm\varepsilon_i$ is an $n$-dimensional vector with one in the $i$th entry and zero elsewhere.  
By substituting \eqref{eq:1f3} and part (a) of Proposition \ref{prop:L} into \eqref{eq:1f}, we prove part (f). This completes the proof. 
\end{proof}

\begin{prop}\label{prop:KKK}
Under Assumptions \ref{ass:common} through \ref{ass:ind},
\begin{compactenum}[(a)]
\item $
\l\Vert \bm{\mathcal H}^\prime\l(\frac{\bm\Lambda^\prime\bm\Lambda}{n}\r)-\bm Q_0\r\Vert = o(1)$, as $n\to\infty$;
\item $\l\Vert \bm{\mathcal H}^\prime\l(\frac{\bm\Lambda^\prime\bm\Lambda}{n}\r)-\l(\frac{\wh{\bm\Lambda}^\prime\bm\Lambda}{n}\r)\r\Vert = o_{\mathrm P}(1)$, as $n,T\to\infty$.
%
\end{compactenum}
where
$\bm Q_0=\bm V_0\bm{\mathcal J}_0\bm\Upsilon_0^\prime  (\bm\Gamma^F)^{-1/2}$, and where 
$\bm{\mathcal J}_0$ is an $r\times r$ diagonal matrix with entries $\pm 1$, $\bm\Upsilon_0$ is the $r\times r$ matrix having as columns the normalized eigenvectors of $(\bm\Gamma^F)^{1/2}\bm\Sigma_\Lambda(\bm\Gamma^F)^{1/2}$, and $\bm V_0$ is the $r\times r$ matrix of corresponding eigenvalues sorted in descending order.
\end{prop}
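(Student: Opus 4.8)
The plan is to reduce both statements to results already available in the proof of Proposition \ref{prop:L}: part (b) is essentially immediate, while part (a) rests on recognizing the matrix $\mbf K$ of \eqref{eq:kappa} as an eigenvector matrix and letting $n\to\infty$.

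For part (b), I would write $\wh{\bm\Lambda}=(\wh{\bm\Lambda}-\bm\Lambda\bm{\mathcal H})+\bm\Lambda\bm{\mathcal H}$, so that
\[
\bm{\mathcal H}^\prime\Big(\frac{\bm\Lambda^\prime\bm\Lambda}{n}\Big)-\Big(\frac{\wh{\bm\Lambda}^\prime\bm\Lambda}{n}\Big)=-\,\frac{(\wh{\bm\Lambda}-\bm\Lambda\bm{\mathcal H})^\prime\bm\Lambda}{n},
\]
whose norm is bounded by $\Vert n^{-1/2}(\wh{\bm\Lambda}-\bm\Lambda\bm{\mathcal H})\Vert\,\Vert n^{-1/2}\bm\Lambda\Vert$. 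The first factor is $O_{\mathrm P}(1/\min(n,\sqrt T))=o_{\mathrm P}(1)$ by Proposition \ref{prop:L}, and the second is $O(1)$ by Assumption \ref{ass:common}(a) (see also Lemma \ref{lem:FTLN}(i)), which gives part (b).

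For part (a), I would first extract three facts about $\mbf K$ from the proof of Proposition \ref{prop:L}. From \eqref{eq:VKH} one reads off $\bm\Lambda(\bm\Gamma^F)^{1/2}\mbf K=\bm\Lambda(\bm\Lambda^\prime\bm\Lambda)^{-1}\bm\Lambda^\prime\mbf V^\chi(\mbf M^\chi)^{1/2}$; cancelling the full-column-rank factor $\bm\Lambda$ and using \eqref{eq:mcH} gives $\bm{\mathcal H}=(\bm\Gamma^F)^{1/2}\mbf K\mbf J$ (so $\bm{\mathcal H}^\prime=\mbf J\mbf K^\prime(\bm\Gamma^F)^{1/2}$). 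Post-multiplying $\bm\Lambda(\bm\Gamma^F)^{1/2}\mbf K=\mbf V^\chi(\mbf M^\chi)^{1/2}$ by its transpose, using $\mbf V^\chi\mbf M^\chi\mbf V^{\chi\prime}=\bm\Gamma^\chi=\bm\Lambda\bm\Gamma^F\bm\Lambda^\prime$ and cancelling $\bm\Lambda$ on both sides, gives $\mbf K\mbf K^\prime=\mbf I_r$; and pre-multiplying the $n^{-1/2}$-rescaled identity $n^{-1/2}\bm\Lambda(\bm\Gamma^F)^{1/2}\mbf K=\mbf V^\chi(n^{-1}\mbf M^\chi)^{1/2}$ by its transpose gives $\mbf K^\prime\bm C_n\mbf K=n^{-1}\mbf M^\chi$, with $\bm C_n:=(\bm\Gamma^F)^{1/2}(n^{-1}\bm\Lambda^\prime\bm\Lambda)(\bm\Gamma^F)^{1/2}$. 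Thus $\mbf K$ is an orthogonal eigenvector matrix of the $r\times r$ symmetric $\bm C_n$, with eigenvalues the descending diagonal entries of $n^{-1}\mbf M^\chi$. Combining these with $(\bm\Gamma^F)^{1/2}(n^{-1}\bm\Lambda^\prime\bm\Lambda)=\bm C_n(\bm\Gamma^F)^{-1/2}$ and $\mbf K^\prime\bm C_n=(n^{-1}\mbf M^\chi)\mbf K^\prime$ yields the key identity
\[
\bm{\mathcal H}^\prime\Big(\frac{\bm\Lambda^\prime\bm\Lambda}{n}\Big)=\mbf J\,\mbf K^\prime\bm C_n(\bm\Gamma^F)^{-1/2}=\mbf J\Big(\frac{\mbf M^\chi}{n}\Big)\mbf K^\prime(\bm\Gamma^F)^{-1/2}.
\]
Since $n^{-1}\bm\Lambda^\prime\bm\Lambda\to\bm\Sigma_\Lambda$ by Assumption \ref{ass:common}(a), $\bm C_n\to(\bm\Gamma^F)^{1/2}\bm\Sigma_\Lambda(\bm\Gamma^F)^{1/2}=\bm\Upsilon_0\bm V_0\bm\Upsilon_0^\prime$, whose eigenvalues — those of $\bm\Sigma_\Lambda\bm\Gamma^F$ — are distinct by Assumption \ref{ass:eval}. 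By continuity of eigenvalues, and of eigenvectors in the simple-spectrum case, $n^{-1}\mbf M^\chi\to\bm V_0$ and $\Vert\mbf K-\bm\Upsilon_0\bm{\mathcal J}_\ast\Vert\to0$ for some $r\times r$ diagonal $\pm1$ matrix $\bm{\mathcal J}_\ast$. Plugging this into the displayed identity, and using that diagonal matrices commute, the right-hand side is within $o(1)$ of $\bm V_0(\mbf J\bm{\mathcal J}_\ast)\bm\Upsilon_0^\prime(\bm\Gamma^F)^{-1/2}$, so $\bm{\mathcal J}_0:=\mbf J\bm{\mathcal J}_\ast$ gives the limit $\bm Q_0$, proving part (a).

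The one genuinely delicate step is this last convergence $\mbf K=\bm\Upsilon_0\bm{\mathcal J}_\ast+o(1)$ of the eigenvector matrix: it needs distinctness of the limiting eigenvalues (Assumption \ref{ass:eval}, via the spectrum of $\bm\Sigma_\Lambda\bm\Gamma^F$), and the $\pm1$ sign indeterminacy — which $\bm{\mathcal J}_0$ absorbs — has to be tracked consistently. Everything else, and part (b), is routine given Proposition \ref{prop:L}.
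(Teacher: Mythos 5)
Your proof is correct and follows essentially the same route as the paper: part (b) is the identical one-line bound via Proposition \ref{prop:L} and Lemma \ref{lem:FTLN}(i), and part (a) rests on the same key identity $\bm{\mathcal H}^\prime(n^{-1}\bm\Lambda^\prime\bm\Lambda)=\mbf J(n^{-1}\mbf M^\chi)\mbf K^{\prime}(\bm\Gamma^F)^{-1/2}$ together with the identification of $\mbf J\mbf K$ (up to a $\pm1$ sign matrix) as the orthonormal eigenvector matrix of $(\bm\Gamma^F)^{1/2}(n^{-1}\bm\Lambda^\prime\bm\Lambda)(\bm\Gamma^F)^{1/2}$. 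The only cosmetic difference is that you invoke qualitative continuity of eigenvectors under a simple limiting spectrum, whereas the paper gets the same $o(1)$ conclusion from the quantitative Davis--Kahan-type bound of Yu et al.; both are adequate here since Assumption \ref{ass:eval} guarantees distinct limiting eigenvalues.
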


\begin{proof} Start with part (a). From \eqref{eq:mcH} and \eqref{eq:kappainv}
\beq
\bm{\mathcal H}^\prime\l(\frac{\bm\Lambda^\prime\bm\Lambda}{n}\r) = \mbf J(\mbf M^\chi)^{1/2} \mbf V^{\chi\prime}\bm\Lambda(\bm\Lambda^\prime\bm\Lambda)^{-1} \l(\frac{\bm\Lambda^\prime\bm\Lambda}{n}\r) =\mbf J\l(\frac{\mbf M^\chi}n\r)^{1/2} \frac{\mbf V^{\chi\prime}\bm\Lambda}{\sqrt n}=\mbf J\l(\frac{\mbf M^\chi}n\r) \mbf K^{-1}(\bm\Gamma^F)^{-1/2}.\label{eq:propKKK}
\eeq
Then, from \eqref{eq:kappa}
\beq\label{eq:vchi}
\mbf V^{\chi} = \bm\Lambda (\bm\Gamma^F)^{1/2}\mbf K ({\mbf M^\chi})^{-1/2}
\eeq
thus, from \eqref{eq:vchi} and \eqref{eq:kappainv} 
\begin{align}
\mbf J\l(\frac{\mbf M^\chi}n\r) \mbf K^{-1} &=\mbf J \l(\frac{\mbf M^\chi}n\r) (\mbf M^\chi)^{-1/2}{\mbf V^{\chi\prime}\bm\Lambda}(\bm\Gamma^F)^{1/2}\nn\\
&=
\mbf J\l(\frac{\mbf M^\chi}n\r) ({\mbf M^\chi})^{-1}\mbf K^\prime(\bm\Gamma^F)^{1/2}\bm\Lambda^\prime \bm\Lambda(\bm\Gamma^F)^{1/2}\nn\\
&= \mbf J\mbf K^\prime(\bm\Gamma^F)^{1/2}\frac{\bm\Lambda^\prime \bm\Lambda}n(\bm\Gamma^F)^{1/2}.
\end{align}
And, from \eqref{eq:kappa} 
\begin{align}
\mbf J\mbf K\mbf K^\prime\mbf J &= \mbf J (\bm\Gamma^F)^{-1/2} (\bm\Lambda^\prime\bm\Lambda)^{-1}\bm\Lambda^\prime\mbf V^\chi(\mbf M^\chi)^{1/2}
(\mbf M^\chi)^{1/2}\mbf V^{\chi\prime}\bm\Lambda(\bm\Lambda^\prime\bm\Lambda)^{-1}(\bm\Gamma^F)^{-1/2} 
\mbf J\nn\\
&=\mbf J (\bm\Gamma^F)^{-1/2} (\bm\Lambda^\prime\bm\Lambda)^{-1}\bm\Lambda^\prime\bm\Gamma^\chi\bm\Lambda(\bm\Lambda^\prime\bm\Lambda)^{-1}(\bm\Gamma^F)^{-1/2} 
\mbf J\nn\\
&=\mbf J (\bm\Gamma^F)^{-1/2} (\bm\Lambda^\prime\bm\Lambda)^{-1}\bm\Lambda^\prime\bm\Lambda\bm\Gamma^F\bm\Lambda^\prime\bm\Lambda(\bm\Lambda^\prime\bm\Lambda)^{-1}(\bm\Gamma^F)^{-1/2} 
\mbf J\nn\\
&= \mbf J (\bm\Gamma^F)^{-1/2} \bm\Gamma^F(\bm\Gamma^F)^{-1/2} 
\mbf J=\mbf I_r.
\end{align}
Therefore, the columns of $\mbf J\mbf K$ are the normalized eigenvectors of $(\bm\Gamma^F)^{1/2}\frac{\bm\Lambda^\prime \bm\Lambda}n(\bm\Gamma^F)^{1/2}$ with eigenvalues $\frac{\mbf M^\chi}n$ (notice that $\mbf J\l(\frac{\mbf M^\chi}n\r)=\l(\frac{\mbf M^\chi}n\r)\mbf J$). Moreover, by Assumption \ref{ass:common}(a)
\beq
\lim_{n\to\infty} \l\Vert(\bm\Gamma^F)^{1/2}\frac{\bm\Lambda^\prime \bm\Lambda}n(\bm\Gamma^F)^{1/2}-(\bm\Gamma^F)^{1/2}\bm\Sigma_\Lambda(\bm\Gamma^F)^{1/2}\r\Vert=0.\label{eq:limB}
\eeq
Letting, $\bm V_0$ be the matrix of eigenvalues of $(\bm\Gamma^F)^{1/2}\bm\Sigma_\Lambda(\bm\Gamma^F)^{1/2}$ sorted in descending order, from \eqref{eq:limB} we also have (this is proved also in Lemma \ref{lem:Vzero}(i))
\beq\label{eq:MchiV0}
\lim_{n\to\infty} \l\Vert\frac{\mbf M^\chi}n - \bm V_0\r\Vert=0.
\eeq
Let $\bm\Upsilon_0$ be the normalized eigenvectors of $(\bm\Gamma^F)^{1/2}\bm\Sigma_\Lambda(\bm\Gamma^F)^{1/2}$
Hence, by continuity of eigenvectors, and since the eigenvalues $\frac{\mbf M^\chi}n$ are distinct because of Assumption \ref{ass:eval}, from \citet[Theorem 2]{yu15}, by Lemma \ref{lem:Vzero}(iii) and using \eqref{eq:limB} it follows that
\beq\label{eq:ups}
\lim_{n\to\infty} \l\Vert\mbf J\mbf K-\bm\Upsilon_0\bm{\mathcal J}_0\r\Vert \le \lim_{n\to\infty}\frac{2^{3/2}\sqrt r  \l\Vert(\bm\Gamma^F)^{1/2}\frac{\bm\Lambda^\prime \bm\Lambda}n(\bm\Gamma^F)^{1/2}-(\bm\Gamma^F)^{1/2}\bm\Sigma_\Lambda(\bm\Gamma^F)^{1/2}\r\Vert}{\mu_r(\bm V_0)}=0,
\eeq
where $\bm{\mathcal J}_0$ is an $r\times r$ diagonal matrix with entries $\pm 1$, which is in general different from $\mbf J$. Finally, since $\bm\Upsilon_0$ is an orthogonal matrix, we have $\bm\Upsilon_0^{-1}=\bm\Upsilon_0^\prime$, so from \eqref{eq:ups}
\beq\label{eq:upsinv}
\lim_{n\to\infty} \l\Vert\mbf K^{-1}\mbf J-\bm{\mathcal J}_0\bm\Upsilon_0^\prime\r\Vert =0.
\eeq
By using \eqref{eq:upsinv} and \eqref{eq:MchiV0} into \eqref{eq:propKKK} and since $\mbf K^{-1}\mbf J=\mbf J\mbf K^{-1}$, we have
\beq\nn
\lim_{n\to\infty}\l\Vert\bm{\mathcal H}^\prime\l(\frac{\bm\Lambda^\prime\bm\Lambda}{n}\r) -\bm V_0\bm{\mathcal J}_0\bm\Upsilon_0^\prime(\bm\Gamma^F)^{-1/2}\r\Vert=0.
\eeq
By defining $\bm Q_0=\bm V_0\bm{\mathcal J}_0\bm\Upsilon_0^\prime(\bm\Gamma^F)^{-1/2}$, we prove part (a).\smallskip

Part (b) follows from Proposition \ref{prop:L} and Lemma \ref{lem:FTLN}(i), and since
\beq
\l\Vert \frac{\wh{\bm\Lambda}^\prime\bm\Lambda}{n}-\bm {\mathcal H}^\prime\frac{\bm\Lambda^\prime\bm\Lambda}{n}\r\Vert \le 
\l\Vert \frac{\wh{\bm\Lambda}^\prime-\bm {\mathcal H}^\prime\bm\Lambda^\prime}{\sqrt n}\r\Vert\,\l\Vert 
\frac{\bm\Lambda}{\sqrt n}\r\Vert = o_{\mathrm P}(1).
\eeq
This completes the proof.
\end{proof}

\begin{prop}\label{prop:factor}
Under Assumptions \ref{ass:common} through \ref{ass:ind}, as $n,T\to\infty$,
 $\min(n,\sqrt{nT},T)\l\Vert \frac{(\bm\Lambda\wh{\mbf H}-\wh{\bm\Lambda})^\prime\wh{\bm\Lambda}}{n}\r\Vert=O_{\mathrm {P}}(1)$, where $\wh{\mbf H}$ is defined in \eqref{eq:acca}.
\end{prop}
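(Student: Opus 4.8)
Set $\bm A_1=(nT)^{-1}\l(\bm\Lambda\bm F^\prime\bm\Xi\bm\Lambda+\bm\Xi^\prime\bm F\bm\Lambda^\prime\bm\Lambda+\bm\Xi^\prime\bm\Xi\bm\Lambda\r)$ (an $n\times r$ matrix) and $\bm A_2=(nT)^{-1}\l(\bm\Lambda\bm F^\prime\bm\Xi+\bm\Xi^\prime\bm F\bm\Lambda^\prime+\bm\Xi^\prime\bm\Xi\r)$ (an $n\times n$ matrix), so that the expansion \eqref{eq:start4} used in the proof of Theorem \ref{th:CLTL} reads $\wh{\bm\Lambda}-\bm\Lambda\wh{\mbf H}=\bm A_1\bm{\mathcal H}(\wh{\mbf M}^x/n)^{-1}+\bm A_2(\wh{\bm\Lambda}-\bm\Lambda\bm{\mathcal H})(\wh{\mbf M}^x/n)^{-1}$. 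Since $\wh{\bm\Lambda}^\prime\wh{\bm\Lambda}=\wh{\mbf M}^x$ by \eqref{eq:estL}, transposing this identity and post-multiplying by $n^{-1}\wh{\bm\Lambda}$ gives
\[
\frac{(\bm\Lambda\wh{\mbf H}-\wh{\bm\Lambda})^\prime\wh{\bm\Lambda}}{n}=-\l(\frac{\wh{\mbf M}^x}n\r)^{-1}\l[\bm{\mathcal H}^\prime\bm A_1^\prime\frac{\wh{\bm\Lambda}}n+(\wh{\bm\Lambda}-\bm\Lambda\bm{\mathcal H})^\prime\bm A_2^\prime\frac{\wh{\bm\Lambda}}n\r].
\]
Because $\Vert(\wh{\mbf M}^x/n)^{-1}\Vert=O_{\mathrm P}(1)$ (Lemma \ref{lem:MO1}(iv)), $\Vert\bm{\mathcal H}\Vert=O(1)$ (Lemma \ref{lem:HO1bis}), $\Vert n^{-1/2}\bm\Lambda\Vert=O(1)$ (Lemma \ref{lem:FTLN}(i)), and $\Vert n^{-1/2}\wh{\bm\Lambda}\Vert=O_{\mathrm P}(1)$ (from $n^{-1}\wh{\bm\Lambda}^\prime\wh{\bm\Lambda}=n^{-1}\wh{\mbf M}^x$), and since $\min(n,\sqrt{nT},T)=\min(n,T)$ (the geometric mean lies between the two), it suffices to show that both bracketed matrices are $O_{\mathrm P}\l(\max(n^{-1},T^{-1})\r)$.

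For the first bracketed term I expand $\bm A_1^\prime n^{-1}\wh{\bm\Lambda}$ into three pieces. The piece with $\bm\Lambda\bm F^\prime\bm\Xi\bm\Lambda$ equals $\frac{\bm\Lambda^\prime\bm\Xi^\prime\bm F}{nT}\cdot\frac{\bm\Lambda^\prime\wh{\bm\Lambda}}n$ and is $O_{\mathrm P}((nT)^{-1/2})$ by the bound on $\Vert\bm F^\prime\bm\Xi\bm\Lambda/(nT)\Vert$ established in the proof of Proposition \ref{prop:load}(a) (eq.\ \eqref{eq:1a3}) and $\Vert\bm\Lambda^\prime\wh{\bm\Lambda}/n\Vert=O_{\mathrm P}(1)$; the pieces with $\bm\Xi^\prime\bm F\bm\Lambda^\prime\bm\Lambda$ and $\bm\Xi^\prime\bm\Xi\bm\Lambda$ are split once more via $\wh{\bm\Lambda}=\bm\Lambda\bm{\mathcal H}+(\wh{\bm\Lambda}-\bm\Lambda\bm{\mathcal H})$, where the $\bm\Lambda\bm{\mathcal H}$ part again reduces to $\bm F^\prime\bm\Xi\bm\Lambda/(nT)$, resp.\ to $\bm\Xi^\prime\bm\Xi\bm\Lambda/(nT)$ (the latter of order $\max(n^{-1/2},T^{-1/2})$ in Frobenius norm by summing the uniform row-wise estimates \eqref{eq:1c2}--\eqref{eq:1c3} of Proposition \ref{prop:load}(c) over the $n$ rows), while the $(\wh{\bm\Lambda}-\bm\Lambda\bm{\mathcal H})$ part is handled with $\Vert n^{-1/2}(\wh{\bm\Lambda}-\bm\Lambda\bm{\mathcal H})\Vert=O_{\mathrm P}(\max(n^{-1},T^{-1/2}))$ (Proposition \ref{prop:L}), $\Vert\bm\Xi^\prime\bm F\Vert=O_{\mathrm P}(\sqrt{nT})$ (Lemma \ref{lem:LLN}) and $\Vert\bm\Xi^\prime\bm\Xi\Vert=O_{\mathrm P}(nT)$ (Lemma \ref{lem:aiuto}(iii)). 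For the second bracketed term the prefactor $(\wh{\bm\Lambda}-\bm\Lambda\bm{\mathcal H})^\prime$ already contributes order $\sqrt n\,\max(n^{-1},T^{-1/2})$; combined with the pieces of $\bm A_2^\prime n^{-1}\wh{\bm\Lambda}$ --- bounded through $\Vert\bm\Xi^\prime\bm F\Vert$, $\Vert\bm\Lambda\Vert$, $\Vert\wh{\bm\Lambda}\Vert$, and, for the $\bm\Xi^\prime\bm\Xi$ piece, through a final substitution $\wh{\bm\Lambda}=\bm\Lambda\bm{\mathcal H}+(\wh{\bm\Lambda}-\bm\Lambda\bm{\mathcal H})$ so that the crude $\Vert\bm\Xi^\prime\bm\Xi/(nT)\Vert=O_{\mathrm P}(1)$ and $\Vert\bm\Xi^\prime\bm\Xi\bm\Lambda/(nT)\Vert=O_{\mathrm P}(\max(n^{-1/2},T^{-1/2}))$ suffice --- a short case split ($n\le T$ versus $n>T$) shows that every resulting product is $O_{\mathrm P}(\max(n^{-1},T^{-1}))$.

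The delicate step is extracting the correct rate from the two idiosyncratic-heavy factors $\bm F^\prime\bm\Xi\,\wh{\bm\Lambda}/(nT)$ and $\bm\Xi^\prime\bm\Xi\,\wh{\bm\Lambda}/(nT)$: bounding $\wh{\bm\Lambda}-\bm\Lambda\wh{\mbf H}$ only through $\wh{\mbf H}\to\bm J$ (Proposition \ref{prop:H}, rate $\min(\sqrt n,\sqrt T)$) gives at best $O_{\mathrm P}(\max(n^{-1/2},T^{-1/2}))$, which is too weak, so one must route through $\bm{\mathcal H}$ and the sharper rate of Proposition \ref{prop:L}, and crucially use that $\bm\Xi^\prime\bm\Xi\bm\Lambda/(nT)$ is genuinely small --- of order $\max(n^{-1/2},T^{-1/2})$ rather than $O_{\mathrm P}(1)$ --- because of the cross-sectional summability in Assumption \ref{ass:idio}(b). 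Everything else is routine matrix-norm bookkeeping using the auxiliary lemmas.
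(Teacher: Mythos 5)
Your proposal is correct and follows essentially the same route as the paper: both start from the expansion \eqref{eq:start4}, post-multiply by $\wh{\bm\Lambda}/n$, and bound the resulting pieces with Proposition \ref{prop:L}, the $O_{\mathrm P}((nT)^{-1/2})$ bound on $\bm F^\prime\bm\Xi\bm\Lambda/(nT)$, and the sharp bound on $\bm\Xi^\prime\bm\Xi\bm\Lambda/(n^{3/2}T)$ from Lemma \ref{lem:aiuto}(ii); the only cosmetic difference is that the paper first splits the right-hand factor as $\wh{\bm\Lambda}=\bm\Lambda\wh{\mbf H}+(\wh{\bm\Lambda}-\bm\Lambda\wh{\mbf H})$ (terms $\bm I$ and $\bm{II}$) whereas you split into $\bm\Lambda\bm{\mathcal H}+(\wh{\bm\Lambda}-\bm\Lambda\bm{\mathcal H})$ piece by piece. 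One bookkeeping slip: for the sub-term $\bm\Lambda^\prime\bm\Xi^\prime\bm\Xi(\wh{\bm\Lambda}-\bm\Lambda\bm{\mathcal H})/(n^2T)$ the crude $\Vert\bm\Xi^\prime\bm\Xi/(nT)\Vert=O_{\mathrm P}(1)$ you cite (and misattribute to Lemma \ref{lem:aiuto}(iii) rather than (i)) only yields $O_{\mathrm P}(\max(n^{-1},T^{-1/2}))$, which misses the $T^{-1}$ rate; you must instead use Lemma \ref{lem:aiuto}(i)'s sharper $O_{\mathrm P}(\max(n^{-1},T^{-1/2}))$ bound on $\Vert\bm\Xi^\prime\bm\Xi/(nT)\Vert$ itself (as the paper does for $\bm I_f$) or, equivalently, the bound on $\bm\Lambda^\prime\bm\Xi^\prime\bm\Xi/(n^{3/2}T)$ that you already derive, so the ingredients are all present in your write-up.
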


\begin{proof}We have,
\begin{align}
\l\Vert \frac{(\bm\Lambda\wh{\mbf H}-\wh{\bm\Lambda})^\prime\wh{\bm\Lambda}}{n}\r\Vert&\le 
\l\{\l\Vert
\frac{(\wh{\bm\Lambda}-\bm\Lambda\wh{\mbf H})^\prime\bm\Lambda\wh{\mbf H}}{n}\r\Vert+\l\Vert \frac{(\wh{\bm\Lambda}-\bm\Lambda\wh{\mbf H})^\prime (\wh{\bm\Lambda}-\bm\Lambda\wh{\mbf H})}{n}\r\Vert\r\}\nn\\
&=\l\{\bm I +\bm{II}\r\},\;\text{say.}\label{eq:2a}
\end{align}
First, consider $\bm I$ in \eqref{eq:2a}. From \eqref{eq:start4} 
\begin{align}
\bm I=&\,\frac {(\wh{\bm\Lambda}-\bm\Lambda\wh{\mbf H})^\prime \bm\Lambda \wh{\mbf H}}n\nn\\
=&\,  \l(\frac{\wh{\mbf M}^x}{n}\r)^{-1}\bm{\mathcal H}^\prime\frac{\bm\Lambda^\prime\bm\Lambda}{n}
\frac{\bm F^\prime\bm\Xi\bm\Lambda \wh{\mbf H}}{ nT}
+\l(\frac{\wh{\mbf M}^x}{n}\r)^{-1}\bm{\mathcal H}^\prime\frac{\bm\Lambda^\prime\bm\Xi^\prime\bm F}{nT}\frac{\bm\Lambda^\prime\bm\Lambda \wh{\mbf H}}{n} + \l(\frac{\wh{\mbf M}^x}{n}\r)^{-1}\bm{\mathcal H}^\prime \frac{\bm\Lambda^\prime\bm\Xi^\prime\bm\Xi\bm\Lambda \wh{\mbf H}}{n^2T}\nn\\
&+\l(\frac{\wh{\mbf M}^x}{n}\r)^{-1}\frac{(\wh{\bm\Lambda}-\bm\Lambda\bm{\mathcal H})^\prime}{\sqrt n}\frac{\bm \Xi^\prime\bm F\bm\Lambda^\prime}{nT}\frac{\bm\Lambda \wh{\mbf H}}{\sqrt n}
+\l(\frac{\wh{\mbf M}^x}{n}\r)^{-1}\frac{(\wh{\bm\Lambda}-\bm\Lambda\bm{\mathcal H})^\prime}{\sqrt n}\frac{\bm\Lambda\bm F^\prime\bm\Xi}{nT}\frac{\bm\Lambda \wh{\mbf H}}{\sqrt n}\nn\\
&+\l(\frac{\wh{\mbf M}^x}{n}\r)^{-1}\frac{(\wh{\bm\Lambda}-\bm\Lambda\bm{\mathcal H})^\prime}{\sqrt n}\frac{\bm\Xi^\prime\bm \Xi}{nT}\frac{\bm\Lambda \wh{\mbf H}}{\sqrt n}
= \bm I_a+\bm I_b+\bm I_c+\bm I_d+\bm I_e+\bm I_f, \;\text{say.} \label{eq:2a3}
\end{align}
Then, because of \eqref{eq:1a2} and \eqref{eq:1a3} in the proof of Proposition \ref{prop:load}(a),
\begin{align}
\E\l[\l\Vert \frac{\bm F^\prime\bm\Xi\bm\Lambda}{ nT}\r\Vert^2\r] =\frac 1{n^2T^2}\sum_{k=1}^r\sum_{h=1}^r \E\l[\l(\sum_{t=1}^T F_{kt}\sum_{j=1}^n \xi_{jt}\lambda_{jh}\r)^2\r] \le \frac{r^2M_FM_\Lambda^2M_\xi}{nT}.\label{eq:2a3bis}
\end{align}
Therefore, by Assumption \ref{ass:common}(a), Lemma \eqref{lem:MO1}(iv),  \ref{lem:HO1bis}(i), \ref{lem:HO1}(i), 
and using \eqref{eq:2a3bis} , we get
\begin{align}
\Vert \bm I_a\Vert \le \l\Vert \l(\frac{\wh{\mbf M}^x}{n}\r)^{-1}\r\Vert\,\l\Vert \bm{\mathcal H}\r\Vert\,\l\Vert\frac{\bm\Lambda^\prime\bm\Lambda}{n}\r\Vert\,
\l\Vert\frac{\bm F^\prime\bm\Xi\bm\Lambda }{ nT}\r\Vert\,\l\Vert \wh{\mbf H}\r\Vert=O_{\mathrm P}\l(\frac 1{\sqrt {nT}}\r),\label{eq:2aIa}\\
\Vert \bm I_b\Vert\le  \l\Vert \l(\frac{\wh{\mbf M}^x}{n}\r)^{-1}\r\Vert\,\l\Vert \bm{\mathcal H}\r\Vert\,\l\Vert \frac{\bm\Lambda^\prime\bm\Xi^\prime\bm F}{nT}\r\Vert \,\l\Vert \frac{\bm\Lambda^\prime\bm\Lambda}n \r\Vert\,\l\Vert \wh{\mbf H}\r\Vert 
=O_{\mathrm P}\l(\frac 1{\sqrt {nT}}\r).\label{eq:2aIb}
\end{align}
Moreover, because of Lemma \ref{lem:FTLN}(i), \ref{lem:LLN}(iii), \ref{lem:MO1}(iv), \ref{lem:HO1}(i), and \ref{lem:HO1bis}(i),
\begin{align}\label{eq:2aIc}
\Vert \bm I_c\Vert \le  \l\Vert\l(\frac{\wh{\mbf M}^x}{n}\r)^{-1}\r\Vert\,\l\Vert\bm{\mathcal H}\r\Vert\, 
\l\Vert\frac{\bm\Lambda}{\sqrt n}\r\Vert\,
\l\Vert\frac{\bm\Lambda^\prime\bm\Xi^\prime\bm\Xi }{n^{3/2}T}\r\Vert\,
\l\Vert\wh{\mbf H}\r\Vert=O_{\mathrm P}\l(\max\l(\frac 1n,\frac 1{\sqrt {nT}}\r)\r).
\end{align}
Similarly, because of Proposition \ref{prop:L}(a), Lemma \ref{lem:FTLN}(i),  \ref{lem:LLN}(i),  \ref{lem:LLN}(iv), \ref{lem:MO1}(iv), and \ref{lem:HO1}(i), 
\begin{align}
\Vert \bm I_d\Vert &\le \l\Vert\l(\frac{\wh{\mbf M}^x}{n}\r)^{-1}\r\Vert\,\l\Vert \frac{\wh{\bm\Lambda}-\bm\Lambda\bm{\mathcal H}}{\sqrt n}\r\Vert \,\l\Vert \frac{\bm \Xi^\prime\bm F\bm}{\sqrt nT}\r\Vert\, \l\Vert \frac{\bm\Lambda^\prime \bm\Lambda }{n}
\r\Vert\, \l\Vert \wh{\mbf H}\r\Vert=  O_{\mathrm P}\l(\max\l(\frac{1}{n\sqrt {T}},\frac 1{T}\r)\r), \label{eq:2aldd}\\ 
\Vert \bm I_e\Vert &\le \l\Vert\l(\frac{\wh{\mbf M}^x}{n}\r)^{-1}\r\Vert\,\l\Vert \frac{\wh{\bm\Lambda}-\bm\Lambda\bm{\mathcal H}}{\sqrt n}\r\Vert \,\l\Vert \frac{\bm \Xi^\prime\bm F\bm}{\sqrt nT}\r\Vert\, \l\Vert \frac{\bm\Lambda  }{\sqrt n}
\r\Vert^2\, \l\Vert \wh{\mbf H}\r\Vert=  O_{\mathrm P}\l(\max\l(\frac{1}{n\sqrt {T}},\frac 1{T}\r)\r),\label{eq:2aIde}
\end{align}
and, because of Proposition \ref{prop:L}(a), Lemma \ref{lem:FTLN}(i), \ref{lem:LLN}(iii), \ref{lem:MO1}(iv), and \ref{lem:HO1}(i),
\begin{align}\label{eq:2aIf}
\Vert \bm I_f\Vert \le \l\Vert\l(\frac{\wh{\mbf M}^x}{n}\r)^{-1}\r\Vert\,\l\Vert \frac{\wh{\bm\Lambda}-\bm\Lambda\bm{\mathcal H}}{\sqrt n}\r\Vert \,\l\Vert \frac{\bm \Xi^\prime \bm\Xi}{nT}\r\Vert\, \l\Vert \frac{\bm\Lambda  }{\sqrt n}
\r\Vert\, \l\Vert \wh{\mbf H}\r\Vert
=  O_{\mathrm P}\l(\max\l(\frac 1{n^2},\frac{1}{n\sqrt {T}},\frac 1{T}\r)\r).
\end{align}
By using \eqref{eq:2aIa}, \eqref{eq:2aIb}, \eqref{eq:2aIc}, \eqref{eq:2aldd}, \eqref{eq:2aIde}, and \eqref{eq:2aIf} into \eqref{eq:2a3}
\beq\label{eq:2a4}
\Vert\bm{I}\Vert = O_{\mathrm P}\l(\max\l(\frac 1 {n},\frac 1{\sqrt{nT}},\frac 1T\r)\r).
\eeq
Second, consider $\bm {II}$ in \eqref{eq:2a}. From Proposition \ref{prop:L2}(b),
\beq\label{eq:2a5}
\Vert\bm{II}\Vert  \le \frac 1 {n}\l\Vert \wh{\bm\Lambda}-{\bm\Lambda}\wh{\mbf H}\r\Vert^2 
= O_{\mathrm P}\l(\max\l(\frac{1}{n^2},\frac 1{T}\r)\r).
\eeq
And, by using \eqref{eq:2a4} and \eqref{eq:2a5} in \eqref{eq:2a}, because of Lemma \ref{lem:FTLN}(ii), and Lemma \ref{lem:HO1}(i), we complete the proof. 
\end{proof}
\begin{prop}\label{prop:H}
Under Assumptions \ref{ass:common} through \ref{ass:ind}, and Assumption \ref{ass:ident}, if $\sqrt T/n\to0$ and $\sqrt n/T\to 0$, as $n,T\to\infty$,
$\min(\sqrt{n},\sqrt T)\Vert \wh{\mbf H}-\bm J\Vert = o_{\mathrm P}(1)$;
where $\bm J$ is a diagonal $r\times r$ matrix with entries $\pm 1$.
\end{prop}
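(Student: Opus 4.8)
The plan is to recall the definition $\wh{\mbf H}=\l(\frac{\bm\Lambda^\prime\wh{\bm\Lambda}}{n}\r)\l(\frac{\wh{\mbf M}^x}{n}\r)^{-1}$ from \eqref{eq:acca} and show it converges to a signed permutation-free diagonal matrix $\bm J$ with entries $\pm 1$. The natural route is to first establish that $\wh{\mbf H}$ is asymptotically orthogonal, i.e. $\wh{\mbf H}^\prime\wh{\mbf H}\to\mbf I_r$, and then separately pin down the off-diagonal structure using the fact that $\wh{\mbf H}$ relates two matrices ($n^{-1}\bm\Lambda^\prime\bm\Lambda$ and $n^{-1}\wh{\mbf M}^x$) that are both asymptotically diagonal with \emph{distinct} diagonal entries by Assumptions \ref{ass:ident}(a) and \ref{ass:eval}. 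A matrix intertwining two diagonal matrices with distinct spectra, which is also asymptotically orthogonal, must itself be asymptotically a signed diagonal matrix — that is the conceptual heart of the argument.

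Concretely, I would proceed as follows. First, from Proposition \ref{prop:L} we have $\wh{\bm\Lambda}=\bm\Lambda\bm{\mathcal H}+R$ with $\Vert R/\sqrt n\Vert=O_{\mathrm P}(\max(n^{-1},T^{-1/2}))$, and from Proposition \ref{prop:KKK}(b) that $n^{-1}\wh{\bm\Lambda}^\prime\bm\Lambda = \bm{\mathcal H}^\prime(n^{-1}\bm\Lambda^\prime\bm\Lambda)+o_{\mathrm P}(1)$, while Proposition \ref{prop:KKK}(a) gives $\bm{\mathcal H}^\prime(n^{-1}\bm\Lambda^\prime\bm\Lambda)\to\bm Q_0$. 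Second, I would use $\eqref{eq:start}$, i.e. $\frac{\bm X^\prime\bm X}{nT}\wh{\bm\Lambda}=\wh{\bm\Lambda}\frac{\wh{\mbf M}^x}{n}$, together with the decomposition $\bm X^\prime\bm X = \bm\Lambda\bm F^\prime\bm F\bm\Lambda^\prime + (\text{cross and idiosyncratic terms})$ and Assumption \ref{ass:ident}(b) ($T^{-1}\bm F^\prime\bm F=\mbf I_r$), to show $\l(\frac{\bm\Lambda^\prime\bm\Lambda}{n}\r)\l(\frac{\bm\Lambda^\prime\wh{\bm\Lambda}}{n}\r) = \l(\frac{\bm\Lambda^\prime\wh{\bm\Lambda}}{n}\r)\frac{\wh{\mbf M}^x}{n} + o_{\mathrm P}(1)$ (the error controlled by Proposition \ref{prop:load}-type bounds and the rate assumptions $\sqrt T/n\to0$, $\sqrt n/T\to0$). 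This is precisely the intertwining relation: with $\bm A:=n^{-1}\bm\Lambda^\prime\bm\Lambda$ (diagonal, distinct entries, bounded away from $0$ and $\infty$ by Lemma \ref{lem:Gxi}(iv) and Assumptions \ref{ass:ident}(a), \ref{ass:eval}) and $\bm M:=n^{-1}\wh{\mbf M}^x$, we get $\bm A(\bm\Lambda^\prime\wh{\bm\Lambda}/n) \approx (\bm\Lambda^\prime\wh{\bm\Lambda}/n)\bm M$. Third, I would combine this with the near-orthogonality $\wh{\mbf H}^\prime\bm A\wh{\mbf H}\to\bm M$ (obtained by right-multiplying the analogous identity or from $\wh{\bm\Lambda}^\prime\wh{\bm\Lambda}=\wh{\mbf M}^x$ and Proposition \ref{prop:L}) to deduce that $\wh{\mbf H}$ converges to a matrix $\bm J$ that diagonalizes $\bm A$ while also satisfying $\bm J^\prime\bm J=\mbf I_r$; since $\bm A$ already has distinct diagonal entries, its orthogonal diagonalizers are exactly the signed permutation matrices, and the eigenvalue-ordering (both $\bm A$ and $\bm M$ sorted in decreasing order, using Weyl's inequality and Lemma \ref{lem:MO1} to show $\bm M\to\bm A$) forces the permutation to be trivial, leaving $\bm J$ a diagonal sign matrix. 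The rate $\min(\sqrt n,\sqrt T)$ comes from tracking the $O_{\mathrm P}(\max(n^{-1},T^{-1/2}))$ errors through these linear-algebra manipulations, with perturbation bounds for eigenvectors (Davis–Kahan / \citealp{yu15}, as already used in Proposition \ref{prop:KKK}) giving the $o_{\mathrm P}(1)$ once multiplied by $\min(\sqrt n,\sqrt T)$.

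The main obstacle I anticipate is \emph{not} the convergence $\wh{\mbf H}\to\bm J$ per se, but controlling it at the sharp rate $\min(\sqrt n,\sqrt T)$: one must show $\min(\sqrt n,\sqrt T)\Vert\wh{\mbf H}-\bm J\Vert=o_{\mathrm P}(1)$, which requires that the error terms in the intertwining identity and in the eigenvector perturbation be $o_{\mathrm P}(\max(n^{-1/2},T^{-1/2}))$ — strictly better than the $O_{\mathrm P}(n^{-1})+O_{\mathrm P}(T^{-1/2})$ bounds that come out naively, so the "$\max$" on the $\sqrt{}$ side is exactly $o_{\mathrm P}(1)$ but nothing faster. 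This is where the rate hypotheses $\sqrt T/n\to0$ and $\sqrt n/T\to0$ enter: they ensure the $n^{-1}$-type errors are $o(T^{-1/2})$ and the $T^{-1/2}$-type errors are $o(n^{-1/2})$ respectively, so that after multiplying by $\min(\sqrt n,\sqrt T)$ everything vanishes. Carefully bookkeeping which error term is dominated by which rate, and invoking the eigen-gap from Assumption \ref{ass:eval} (so the Davis–Kahan denominator $\mu_r-\mu_{r+1}$ stays bounded below) to upgrade the spectral perturbation bound, will be the delicate part; the rest is the routine linear algebra of signed permutation matrices intertwining diagonal matrices with distinct, correctly-ordered eigenvalues.
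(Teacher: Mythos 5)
Your route is essentially the paper's: the paper also (i) uses the identity $\wh{\bm\Lambda}^\prime\wh{\bm\Lambda}=\wh{\mbf M}^x$ together with Proposition \ref{prop:factor} to show $\wh{\mbf H}^\prime=\wh{\mbf H}^{-1}+o_{\mathrm P}(1)$ (asymptotic orthogonality), (ii) derives the intertwining relation $(\wh{\mbf M}^x/n)\,\wh{\mbf H}^\prime=\wh{\mbf H}^\prime(\bm\Lambda^\prime\bm\Lambda/n)+o_{\mathrm P}(1)$, and (iii) concludes that $\wh{\mbf H}$ is asymptotically a matrix of normalized eigenvectors of the diagonal matrix $n^{-1}\bm\Lambda^\prime\bm\Lambda$, whose entries are distinct by Assumptions \ref{ass:ident}(a) and \ref{ass:eval}, hence a sign matrix $\bm J$. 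So the conceptual core of your argument matches exactly; the paper simply packages the intertwining error into Proposition \ref{prop:factor} rather than re-deriving it from \eqref{eq:start}.

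One piece of your rate bookkeeping is wrong as stated and would sink the proof if taken literally. You claim that $\sqrt n/T\to 0$ makes ``the $T^{-1/2}$-type errors $o(n^{-1/2})$''; it does not ($n=T$ satisfies $\sqrt n/T\to 0$ but gives $T^{-1/2}=n^{-1/2}$), and indeed if a raw $O_{\mathrm P}(T^{-1/2})$ term survived in $\Vert\wh{\mbf H}-\bm J\Vert$ then $\min(\sqrt n,\sqrt T)\Vert\wh{\mbf H}-\bm J\Vert$ would be $O_{\mathrm P}(1)$, not $o_{\mathrm P}(1)$. The proof works because no such term survives: when the $O_{\mathrm P}(\max(n^{-1},T^{-1/2}))$ loading error from Proposition \ref{prop:L} is contracted against $\wh{\bm\Lambda}/\sqrt n$ (or $\bm\Lambda/\sqrt n$), the cross terms involve quantities like $\bm F^\prime\bm\Xi/(\sqrt n T)$ that are themselves $O_{\mathrm P}(T^{-1/2})$, so the resulting error is $O_{\mathrm P}(\max(n^{-1},(nT)^{-1/2},T^{-1}))$ — this is exactly the content of Proposition \ref{prop:factor}. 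Multiplying by $\min(\sqrt n,\sqrt T)$ then kills the $n^{-1}$ term via $\sqrt T/n\to 0$, the $(nT)^{-1/2}$ term unconditionally, and the $T^{-1}$ term via $\sqrt n/T\to 0$. You should replace your heuristic ``$T^{-1/2}$-type errors are $o(n^{-1/2})$'' with this explicit accounting; otherwise the argument is complete.
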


\begin{proof}
From Proposition \ref{prop:factor} and by using \eqref{eq:estL}
\beq
\frac{\wh{\bm\Lambda}^\prime\bm\Lambda\wh{\mbf H}}{n}=\frac{\wh{\bm\Lambda}^\prime(\bm\Lambda\wh{\mbf H}-\wh{\bm\Lambda}+\wh{\bm\Lambda})}{n}=\frac{\wh{\mbf M}^x}{n}+O_{\mathrm P}\l(\max\l(\frac 1n,\frac 1{\sqrt {nT}}, \frac 1T\r)\r).
\eeq
Or, equivalently,
\beq\label{eq:HinvLL}
\l(\frac{\wh{\mbf M}^x}{n}\r)^{-1}\frac{\wh{\bm\Lambda}^\prime\bm\Lambda}{n}=\wh{\mbf H}^{-1}+O_{\mathrm P}\l(\max\l(\frac 1n,\frac 1{\sqrt {nT}}, \frac 1T\r)\r).
\eeq
Moreover, by Assumption \ref{ass:ident} and using \eqref{eq:HinvLL} in  \eqref{eq:acca}  we have
\beq\label{eq:Horth}
\wh{\mbf H}^\prime = \l(\frac{\wh{\mbf M}^x}{n}\r)^{-1}\frac{\wh{\bm\Lambda}^\prime\bm\Lambda}{n}=\wh{\mbf H}^{-1}+O_{\mathrm P}\l(\max\l(\frac 1n,\frac 1{\sqrt {nT}}, \frac 1T\r)\r).
\eeq
Therefore, because of \eqref{eq:Horth}, as $n,T\to\infty$, $\wh{\mbf H}$ is an $r\times r$ orthogonal matrix thus it has eigenvalues $\pm 1$.

Moreover, because of  Proposition \ref{prop:factor}
\beq
\frac{\wh{\bm\Lambda}^\prime\bm\Lambda}{n}= \frac{(\wh{\bm\Lambda}-\bm\Lambda\wh{\mbf H}+\bm\Lambda\wh{\mbf H})^\prime\bm\Lambda}{n} =\frac{\wh{\mbf H}^\prime\bm\Lambda^\prime\bm\Lambda}{n}+ O_{\mathrm P}\l(\max\l(\frac 1n,\frac 1{\sqrt {nT}}, \frac 1T\r)\r).\label{eq:Horth2}
\eeq
Thus, from \eqref{eq:Horth} and \eqref{eq:Horth2} and by Lemma \ref{lem:MO1}(iv),
\beq
\wh{\mbf H}^\prime = \l(\frac{\wh{\mbf M}^x}{n}\r)^{-1}\frac{\wh{\bm\Lambda}^\prime\bm\Lambda}{n}=\l(\frac{\wh{\mbf M}^x}{n}\r)^{-1}\frac{\wh{\mbf H}^\prime\bm\Lambda^\prime\bm\Lambda}{n}+ O_{\mathrm P}\l(\max\l(\frac 1n,\frac 1{\sqrt {nT}}, \frac 1T\r)\r).\label{eq:Horth3}
\eeq
And from \eqref{eq:Horth3} it follows that
\beq
\l(\frac{\wh{\mbf M}^x}{n}\r)\wh{\mbf H}^\prime  = \wh{\mbf H}^\prime\frac{\bm\Lambda^\prime\bm\Lambda}{n}+ O_{\mathrm P}\l(\max\l(\frac 1n,\frac 1{\sqrt {nT}}, \frac 1T\r)\r).\label{eq:Horth4}
\eeq
So, because of \eqref{eq:Horth4}, as $n,T\to\infty$, the columns of  $\wh{\mbf H}$ are the eigenvectors of $\frac{\bm\Lambda^\prime\bm\Lambda}{n}$ with eigenvalues $\frac{\wh{\mbf M}^x}{n}$. The eigenvectors are normalized since $\wh{\mbf H}$ is orthogonal, as $n,T\to\infty$. Moreover, under Assumption \ref{ass:ident}, $\frac{\bm\Lambda^\prime\bm\Lambda}{n}$  is diagonal, so, as $n,T\to\infty$, $\wh{\mbf H}$ must be diagonal with eigenvalues $\pm 1$. By noticing that, if $\sqrt T/n\to0$ and $\sqrt n/T\to 0$, then $\max\l(\frac 1n,\frac 1{\sqrt {nT}}, \frac 1T\r)=o_{\mathrm P}\l(\max\l(\frac 1{\sqrt {n}}, \frac 1{\sqrt T}\r)\r)$, we complete the proof. 
\end{proof}
%

\setcounter{equation}{0}

\section{Auxiliary lemmata}\label{app:C}
\begin{lem}\label{lem:Gxi}
Under Assumptions \ref{ass:common} and \ref{ass:idio}:
\begin{compactenum} 
\item [(i)] for all $n\in\mathbb N$ and $T\in\mathbb N$, $\frac 1{nT}\sum_{i,j=1}^n\sum_{t,s=1}^T \vert\E_{}[\xi_{it}\xi_{js}]\vert \le M_{1\xi}$, for some finite positive real $M_{1\xi}$ independent of $n$ and $T$;
\item [(ii)] for all $n\in\mathbb N$ and $t\in\mathbb Z$, $\frac 1{n}\sum_{i,j=1}^n \vert\E_{}[\xi_{it}\xi_{jt}]\vert \le M_{2\xi}$, for some finite positive real $M_{2\xi}$ independent of $n$ and $t$;
\item [(iii)] for all $i\in\mathbb N$ and $T\in\mathbb N$, $\!\frac 1{T}\sum_{t,s=1}^T \vert\E_{}[\xi_{it}\xi_{is}]\vert \le M_{3\xi}$, $\!$for some finite positive real $M_{3\xi}$ independent of $i$ and $T$;
\item [(iv)] 
for all $j=1,\ldots,r$, $\underline C_j\!\le \lim\inf_{n\to\infty} \frac{ \mu_{j}^\chi}n\le\lim\sup_{n\to\infty} \frac{\mu_{j}^\chi}n\le\! \overline C_j$, $\!\!$
for some finite positive reals $\underline C_j$ and $\overline C_j$;
\item [(v)] for all $n\in\mathbb N$, $\mu_{1}^\xi  \le M_{2\xi}$, where $M_\xi$ is defined in part (ii);
\item [(vi)] for all $j=1,\ldots,r$, $\underline C_j\le \lim\inf_{n\to\infty} \frac{ \mu_{j}^x}n\le\lim\sup_{n\to\infty} \frac{\mu_{j}^x}n\le \overline C_j$, and for all $n\in\mathbb N$, $\mu_{r+1}^x \le M_\xi$, where $M_\xi$ is defined in Assumption \ref{ass:idio}(b).
\end{compactenum}
\end{lem}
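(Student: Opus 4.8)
The statement splits naturally into two blocks, and I would prove them in that order. Parts (i)--(iii) are purely second-moment summability bounds that follow from Assumption \ref{ass:idio} by direct summation. By Assumption \ref{ass:idio}(b), $\vert\E[\xi_{it}\xi_{js}]\vert\le\rho^{\vert t-s\vert}M_{ij}$ for all $i,j,t,s$ (take $k=t-s$), so for each fixed $i$ one has $\sum_{j=1}^n\vert\E[\xi_{it}\xi_{js}]\vert\le\rho^{\vert t-s\vert}(M_{ii}+M_\xi)$, where the diagonal term $M_{ii}$ is finite and (up to relabelling the constant) bounded uniformly in $i$, the variances being bounded by Assumption \ref{ass:idio}(a). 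For part (i) I then sum over $i$ and over $t,s$, using $\sum_{t,s=1}^T\rho^{\vert t-s\vert}\le T(1+\rho)/(1-\rho)$; after dividing by $nT$ the bound $M_{1\xi}$ is independent of $n$ and $T$. Part (ii) is the same computation with $s=t$ (no geometric factor, since $\rho^0=1$), giving $M_{2\xi}=M_{ii}+M_\xi$. Part (iii) fixes $i$ and sums only over $t,s$, using $\vert\E[\xi_{it}\xi_{is}]\vert\le\rho^{\vert t-s\vert}M_{ii}$ and the same geometric sum, giving $M_{3\xi}$ independent of $i$ and $T$.

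For part (iv), note that $\bm\Gamma^\chi=\E[\bm\chi_t\bm\chi_t^\prime]=\bm\Lambda\bm\Gamma^F\bm\Lambda^\prime$ has rank $r$, so its $r$ nonzero eigenvalues are precisely the eigenvalues of the $r\times r$ matrix $\mbf M^\chi$; and since $AB$ and $BA$ share nonzero eigenvalues, these equal the eigenvalues of $(\bm\Gamma^F)^{1/2}(\bm\Lambda^\prime\bm\Lambda)(\bm\Gamma^F)^{1/2}$. Dividing by $n$ and invoking Assumption \ref{ass:common}(a), $n^{-1}\bm\Lambda^\prime\bm\Lambda\to\bm\Sigma_{\Lambda}$, so $(\bm\Gamma^F)^{1/2}(n^{-1}\bm\Lambda^\prime\bm\Lambda)(\bm\Gamma^F)^{1/2}\to(\bm\Gamma^F)^{1/2}\bm\Sigma_{\Lambda}(\bm\Gamma^F)^{1/2}$, which is positive definite by Assumption \ref{ass:common}(a)--(b). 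By continuity of eigenvalues (Weyl's inequality, \citealp{MK04}), $n^{-1}\mu_j^\chi$ converges to the $j$th eigenvalue $V_{0,j}\in(0,\infty)$ of that limit; I take $\underline C_j$ and $\overline C_j$ to be a strictly positive lower bound and a finite upper bound on $V_{0,j}$ (equivalently, for $n$ large, $\mu_r(\cdot)\ge\tfrac12\mu_r((\bm\Gamma^F)^{1/2}\bm\Sigma_{\Lambda}(\bm\Gamma^F)^{1/2})>0$ and $\mu_1(\cdot)\le2\mu_1((\bm\Gamma^F)^{1/2}\bm\Sigma_{\Lambda}(\bm\Gamma^F)^{1/2})<\infty$, which is all that is needed).

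For part (v), $\bm\Gamma^\xi$ is symmetric positive semidefinite, so its top eigenvalue is at most its largest absolute row sum, $\mu_1^\xi\le\max_i\sum_{j=1}^n\vert\E[\xi_{it}\xi_{jt}]\vert\le M_{ii}+M_\xi=M_{2\xi}$, by the bound already used in part (ii). For part (vi), write $\bm\Gamma^x=\bm\Gamma^\chi+\bm\Gamma^\xi$ and apply Weyl's inequality: since $\bm\Gamma^\xi\succeq 0$, $\mu_j^\chi\le\mu_j^x\le\mu_j^\chi+\mu_1^\xi$ for all $j$. For $j\le r$, divide by $n$ and combine part (iv) with $\mu_1^\xi=O(1)$ from part (v), which yields the same $\underline C_j,\overline C_j$; for $j=r+1$, $\mu_{r+1}^\chi=0$ since $\bm\Gamma^\chi$ has rank $r$, so $\mu_{r+1}^x\le\mu_1^\xi\le M_{2\xi}$. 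The only step carrying real content is part (iv): the passage from $n^{-1}\bm\Lambda^\prime\bm\Lambda\to\bm\Sigma_{\Lambda}$ to the divergence rates of the $\mu_j^\chi$, which hinges on the $AB$/$BA$ eigenvalue identity and Weyl continuity; everything else is bookkeeping with geometric series and absolute row sums.
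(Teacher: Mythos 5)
Your proof is correct and follows essentially the same route as the paper's: parts (i)--(iii) by splitting the diagonal term (bounded via Assumption \ref{ass:idio}(a)) from the off-diagonal sum $\sum_{j\ne i}M_{ij}\le M_\xi$ and summing the geometric series, part (v) by the row-sum bound on $\Vert\bm\Gamma^\xi\Vert$, and part (vi) by Weyl's inequality together with $\mathrm{rk}(\bm\Gamma^\chi)=r$. The only variation is in part (iv), where you use the $AB$/$BA$ eigenvalue identity and continuity of eigenvalues of $(\bm\Gamma^F)^{1/2}(n^{-1}\bm\Lambda^\prime\bm\Lambda)(\bm\Gamma^F)^{1/2}$ (the route the paper itself takes in Lemma \ref{lem:Vzero}) rather than the product-eigenvalue inequality of \citet[Theorem 7]{MK04}; both deliver the required two-sided bounds, and yours in fact gives the slightly stronger conclusion that $n^{-1}\mu_j^\chi$ converges.
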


\begin{proof} Using Assumptions \ref{ass:idio}(a) and \ref{ass:idio}(b), we have:
\begin{align}
\frac 1{nT}\sum_{i,j=1}^n\sum_{t,s=1}^T \vert\E_{}[\xi_{it}\xi_{js}]\vert &=\frac 1{n}\sum_{i,j=1}^n\sum_{k=-(T-1)}^{T-1} \l(1-\frac{\vert k\vert}{T}\r) \vert\E_{}[\xi_{it}\xi_{j,t-k}]\vert\nn\\
&\le \frac 1n\sum_{i=1}^n \sum_{k=-\infty}^{\infty} \rho^{|k|}\sigma_i^2 +  \max_{j=1,\ldots, n}\sum_{j=1, j\ne i}^n\sum_{k=-\infty}^{\infty} \rho^{\vert k\vert} M_{ij}\nn\\
&\le \frac{C_\xi^\prime(1+\rho)}{1-\rho}+\frac{M_\xi(1+\rho)}{1-\rho}. \nn
\end{align}
Similarly, 
\begin{align}
\frac 1{n}\sum_{i,j=1}^n \vert\E_{}[\xi_{it}\xi_{jt}]\vert &\le \frac 1n\sum_{i=1}^n \sigma_i^2 + \max_{j=1,\ldots,n}\sum_{j=1, j\ne i}^n M_{ij}\le C_\xi^\prime+ M_\xi,\nn
\end{align}
and 
\begin{align}
\frac 1{T}\sum_{t,s=1}^T \vert\E_{}[\xi_{it}\xi_{is}]\vert &= \sum_{k=-(T-1)}^{T-1} \l(1-\frac{\vert k\vert}{T}\r) \vert\E_{}[\xi_{it}\xi_{i,t-k}]\vert\le \sum_{k=-\infty}^{\infty} \rho^{\vert k\vert} \sigma_i^2 \le  \frac{C_\xi^\prime (1+\rho)}{1-\rho}.\nn
\end{align}
Defining, $M_{1\xi}=\frac{(C_\xi^\prime+M_\xi)(1+\rho)}{1-\rho}$ and $M_{2\xi}=C_\xi^\prime+M_\xi$, and $M_{3\xi}=\frac{C_\xi^\prime(1+\rho)}{1-\rho}$, we prove parts (i), (ii), and (iii).\smallskip

For part (iv), by \citet[Theorem 7]{MK04}, for all $j=1,\ldots, r$, we have
\beq\label{eq:KUMAR}
\frac{\mu_r(\bm\Lambda^\prime\bm\Lambda)}n\mu_j(\bm\Gamma^F) \le \frac{\mu_{j}^\chi}n \le \frac{\mu_j(\bm\Lambda^\prime\bm\Lambda)}n\ \mu_1(\bm\Gamma^F).
\eeq
The proof then follows from Assumption \ref{ass:common}(a) which, by continuity of eigenvalues, implies that, for any $j=1,\ldots, r$, as $n\to\infty$
\[
\lim_{n\to\infty}\frac{\mu_j(\bm\Lambda^\prime\bm\Lambda)}n = \mu_j(\bm\Sigma_\Lambda).
\]
with 
$$
0<m_\Lambda^2\le \mu_r(\bm\Sigma_\Lambda)\le\mu_1(\bm\Sigma_\Lambda)\le M_\Lambda^2<\infty ,$$ 
and by Assumption \ref{ass:common}(b) and Assumption \ref{ass:common}(c) which imply 
$$
0<m_F\le \mu_r(\bm\Gamma^F)\le\mu_1(\bm\Gamma^F)\le M_F<\infty.
$$ 

For part (v), by Assumption \ref{ass:idio}(b):
\begin{align}
\Vert\bm\Gamma^\xi\Vert\le \max_{i=1,\ldots,n}\sum_{j=1}^n \vert \E[\xi_{it}\xi_{jt}]\vert \le 
\max_{i=1,\ldots, n}\sigma_i^2+ \max_{i=1,\ldots,n}\sum_{j=1,j\ne i}^n M_{ij}\le C_\xi^\prime + M_{\xi}.\nn
\end{align}
Part (vi) follows from parts (iv) and (v) and Weyl's inequality. This completes the proof. 
\end{proof}
\begin{lem}\label{lem:FTLN}
Under Assumptions \ref{ass:common} through \ref{ass:ind},  for all $t=1,\ldots, T$ and all $n,T\in\mathbb N$
\begin{compactenum}[(i)]
\item $\Vert\frac{\bm\Lambda}{\sqrt n}\Vert=O(1)$;
\item $\Vert \mbf F_t\Vert = O_{\mathrm {ms}}(1)$ and $\Vert\frac{\bm F}{\sqrt T}\Vert=O_{\mathrm {ms}}(1)$.
\end{compactenum}
\end{lem}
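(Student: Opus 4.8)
The plan is to prove Lemma \ref{lem:FTLN} by reducing each spectral norm to the trace of an associated positive semi-definite matrix and then invoking the moment bounds of Assumption \ref{ass:common}; in fact only Assumption \ref{ass:common} is needed, Assumptions \ref{ass:idio} and \ref{ass:ind} playing no role.

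For part (i) I would first observe that $\Vert n^{-1/2}\bm\Lambda\Vert^2=\mu_1(n^{-1}\bm\Lambda^\prime\bm\Lambda)$, and, since $n^{-1}\bm\Lambda^\prime\bm\Lambda$ is positive semi-definite, its largest eigenvalue is at most its trace, i.e. $\mu_1(n^{-1}\bm\Lambda^\prime\bm\Lambda)\le \text{tr}(n^{-1}\bm\Lambda^\prime\bm\Lambda)=n^{-1}\sum_{i=1}^n\Vert\bm\lambda_i\Vert^2$. The uniform bound $\Vert\bm\lambda_i\Vert\le M_\Lambda$ from Assumption \ref{ass:common}(a) then yields $\Vert n^{-1/2}\bm\Lambda\Vert^2\le M_\Lambda^2$, a constant independent of $n$; since this is deterministic it gives the $O(1)$ claim.

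For part (ii) the pointwise bound on $\mbf F_t$ would follow from $\E[\Vert\mbf F_t\Vert^2]=\text{tr}(\E[\mbf F_t\mbf F_t^\prime])=\text{tr}(\bm\Gamma^F)$, which is finite and independent of $t$ by Assumption \ref{ass:common}(b) (together with the finiteness of $r$ from Assumption \ref{ass:common}(d)), so $\Vert\mbf F_t\Vert=O_{\mathrm {ms}}(1)$. For the matrix $\bm F/\sqrt T$ I would reuse the same device: $\Vert T^{-1/2}\bm F\Vert^2=\mu_1(T^{-1}\bm F^\prime\bm F)$ with $T^{-1}\bm F^\prime\bm F=T^{-1}\sum_{t=1}^T\mbf F_t\mbf F_t^\prime$ positive semi-definite, whence $\E[\Vert T^{-1/2}\bm F\Vert^2]\le \E[\text{tr}(T^{-1}\bm F^\prime\bm F)]=T^{-1}\sum_{t=1}^T\E[\Vert\mbf F_t\Vert^2]=\text{tr}(\bm\Gamma^F)$, again uniformly in $T$, giving $\Vert T^{-1/2}\bm F\Vert=O_{\mathrm {ms}}(1)$ and completing the proof.

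There is no genuine obstacle here; the only point worth stressing is the elementary inequality $\mu_1(A)\le \text{tr}(A)$ for positive semi-definite $A$, which is what allows the factor-matrix norm $\bm F/\sqrt T$ to be controlled through the second moment of $\mbf F_t$ alone, thereby avoiding any appeal to the higher-order summability condition in Assumption \ref{ass:common}(c).
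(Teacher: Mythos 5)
Your proposal is correct and follows essentially the same route as the paper: the inequality $\mu_1(A)\le\mathrm{tr}(A)$ for positive semi-definite $A$ is exactly the spectral-versus-Frobenius norm bound the paper uses ($\Vert B\Vert^2=\mu_1(B^\prime B)\le\mathrm{tr}(B^\prime B)=\Vert B\Vert_F^2$), after which both arguments reduce to the uniform bounds $\Vert\bm\lambda_i\Vert\le M_\Lambda$ and $\Vert\bm\Gamma^F\Vert\le M_F$ from Assumption \ref{ass:common}. Your observation that only Assumption \ref{ass:common} is actually used is also consistent with the paper's proof.
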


\begin{proof}
By Assumption \ref{ass:common}(a), which holds for all $n\in\mathbb N$,
\beq
\sup_{n\in\mathbb N}\l\Vert\frac{\bm\Lambda}{\sqrt n}\r\Vert^2\le \sup_{n\in\mathbb N}\l\Vert\frac{\bm\Lambda}{\sqrt n}\r\Vert^2_F =\sup_{n\in\mathbb N}\frac 1n\sum_{j=1}^r\sum_{i=1}^n \lambda_{ij}^2 \le\sup_{n\in\mathbb N} \max_{i=1,\ldots, n} \Vert\bm\lambda_i\Vert^2\le M_\Lambda^2,\nn
\eeq
since $M_\Lambda$ is independent of $i$. This proves part (i).\smallskip

By Assumption \ref{ass:common}(b), which holds for all $T\in\mathbb N$ because of stationarity (see also part (i) of Lemma \ref{lem:covarianze}),
\begin{align}\label{eq:vettoreF}
\sup_{T\in\mathbb N}\max_{t=1,\ldots ,T}\E[\Vert\mbf F_t\Vert^2]&=
\sup_{T\in\mathbb N} \max_{t=1,\ldots ,T}\sum_{j=1}^r \E[F_{jt}^2]\le
r\sup_{T\in\mathbb N}\max_{t=1,\ldots ,T}\max_{j=1,\ldots,r} \E[F_{jt}^2] \nn\\
&\le r\sup_{T\in\mathbb N}\max_{t=1,\ldots ,T}\max_{j=1,\ldots,r} \bm\eta_j^\prime\bm\Gamma^F\bm\eta_j\le r \Vert\bm\Gamma^F\Vert\le
r M_F,
\end{align}
since $M_F$ is independent of $t$ and where $\bm\eta_j$ is an $r$-dimensional vector with one in the $j$th entry and zero elsewhere. Therefore, from \eqref{eq:vettoreF}:
\begin{align}
\sup_{T\in\mathbb N}\E\l[\l\Vert\frac{\bm F}{\sqrt T}\r\Vert^2\r]\le\sup_{T\in\mathbb N} \E\l[\l\Vert\frac{\bm F}{\sqrt T}\r\Vert^2_F\r] =\sup_{T\in\mathbb N}\frac 1T\sum_{j=1}^r\sum_{t=1}^T \E[[\bm F]_{tj}^2] \le\sup_{T\in\mathbb N} \max_{t=1,\ldots, T} \E[\Vert\mbf F_t\Vert^2]
\le r M_F.\nn
\end{align}
This proves part (ii) and it completes the proof. 
\end{proof}
\begin{lem}\label{lem:LLN}
Under Assumptions \ref{ass:common} through \ref{ass:ind}, for all $n,T\in\mathbb N$ 
$\sqrt T\l\Vert\frac{\bm F^\prime\bm \Xi}{\sqrt n T} \r\Vert = O_{\mathrm{ms}}(1)$.
\end{lem}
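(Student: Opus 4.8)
\textbf{Proof proposal for Lemma \ref{lem:LLN}.}

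The plan is to bound the mean-squared size of $\frac{\bm F^\prime\bm\Xi}{\sqrt{nT}}$ directly by expanding the Frobenius norm and using the summability conditions on the second moments of the factors and the idiosyncratic components together with Assumption \ref{ass:ind}. Write the $(k,j)$ entry of $\bm F^\prime\bm\Xi$ as $\sum_{t=1}^T F_{kt}\xi_{jt}$, for $k=1,\ldots,r$ and $j=1,\ldots,n$. Then
\[
\E\l[\l\Vert\frac{\bm F^\prime\bm\Xi}{\sqrt{nT}}\r\Vert^2\r]\le \E\l[\l\Vert\frac{\bm F^\prime\bm\Xi}{\sqrt{nT}}\r\Vert_F^2\r]=\frac1{nT}\sum_{k=1}^r\sum_{j=1}^n\E\l[\l(\sum_{t=1}^T F_{kt}\xi_{jt}\r)^2\r].
\]
This is exactly the quantity that appears, in a slightly different normalization, in the proof of Proposition \ref{prop:load}(b) (see \eqref{eq:1b2}--\eqref{eq:1b3}), so the main work is to reuse that estimate uniformly in $j$.

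The key steps, in order, are as follows. First, expand the square as a double sum $\sum_{t,s=1}^T \E[F_{kt}F_{ks}\xi_{jt}\xi_{js}]$ and use Assumption \ref{ass:ind} (independence of factors and idiosyncratic components) to factor it as $\sum_{t,s=1}^T \E[F_{kt}F_{ks}]\,\E[\xi_{jt}\xi_{js}]$. Second, bound the time sum by splitting off the factor part and the idiosyncratic part: by Cauchy--Schwarz and Assumption \ref{ass:common}(b), $\frac1T\sum_{t,s=1}^T\E[F_{kt}F_{ks}]\le M_F$ uniformly in $k$, exactly as in \eqref{eq:1a21}; while by Lemma \ref{lem:Gxi}(iii), $\frac1T\sum_{t,s=1}^T|\E[\xi_{jt}\xi_{js}]|\le M_{3\xi}$ uniformly in $j$. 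Combining these, each of the $rn$ summands is $O(T^{-1}\cdot M_F M_{3\xi})$ after the $\frac1{nT}$ prefactor is applied — more precisely, $\frac1{nT}\sum_{k,j}\E[(\sum_t F_{kt}\xi_{jt})^2]\le \frac1{nT}\cdot rn\cdot T\cdot M_F M_{3\xi}=r M_F M_{3\xi}$. Hence $\E[\Vert\frac{\bm F^\prime\bm\Xi}{\sqrt{nT}}\Vert^2]=O(1)$, which gives $\Vert\frac{\bm F^\prime\bm\Xi}{\sqrt{nT}}\Vert=O_{\mathrm{ms}}(1)$, and multiplying by $\sqrt T$ and dividing by $\sqrt T$ inside, i.e. noting $\frac{\bm F^\prime\bm\Xi}{\sqrt{nT}}\cdot\sqrt T=\frac{\bm F^\prime\bm\Xi}{\sqrt{n}}$ — wait, the statement already has the $\sqrt T$ outside and the normalization $\frac{1}{\sqrt n T}$ inside, so in fact $\sqrt T\cdot\frac{\bm F^\prime\bm\Xi}{\sqrt n T}=\frac{\bm F^\prime\bm\Xi}{\sqrt{nT}}$, and the bound just obtained is precisely what is claimed.

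There is essentially no serious obstacle here: the only mild care needed is to make sure every bound is uniform in $j$ (for the idiosyncratic autocovariance sum) and in $k$ (for the factor autocovariance sum), which is guaranteed because $M_{3\xi}$ in Lemma \ref{lem:Gxi}(iii) is independent of $i$ and $M_F$ in Assumption \ref{ass:common}(b) is independent of $t$. One could alternatively route the idiosyncratic bound through Assumption \ref{ass:idio}(b) directly (geometric decay, giving the factor $(1+\rho)/(1-\rho)$), as in \eqref{eq:1b2}; either way the argument is identical to the one already carried out for Proposition \ref{prop:load}(b), so the lemma is really just that computation stated as a standalone fact for later use (e.g. in Propositions \ref{prop:load}(d)--(e), \ref{prop:H}, and \ref{prop:factor}).
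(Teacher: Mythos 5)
Your proposal follows exactly the route of the paper's own proof: bound the spectral norm by the Frobenius norm, expand $\E[(\sum_t F_{kt}\xi_{jt})^2]$ as a double time-sum, factor it via Assumption \ref{ass:ind}, and control the factor and idiosyncratic parts separately, which is the same computation as \eqref{eq:1b2}--\eqref{eq:1b3} made uniform in $j$. The conclusion and the constant $rM_FM_{3\xi}$ match the paper's.

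One step is mis-stated, however. From $\frac1T\sum_{t,s=1}^T\E[F_{kt}F_{ks}]\le M_F$ (which is what \eqref{eq:1a21} gives) together with $\frac1T\sum_{t,s=1}^T|\E[\xi_{jt}\xi_{js}]|\le M_{3\xi}$ you cannot ``combine'' to get $\sum_{t,s}\E[F_{kt}F_{ks}]\,\E[\xi_{jt}\xi_{js}]\le T\,M_FM_{3\xi}$: a sum of products is not controlled by the product of normalized sums (if both arrays were concentrated on a few $(t,s)$ pairs, the left-hand side could be of order $T^2$ while both normalized sums stay bounded). The inequality you actually need --- and the one used in \eqref{eq:1b3} and in the paper's proof of this lemma --- pairs the \emph{pointwise} bound $\max_{t,s}|\E[F_{kt}F_{ks}]|\le\max_t\E[F_{kt}^2]\le M_F$ (Cauchy--Schwarz plus Assumption \ref{ass:common}(b)) with the \emph{summed} bound $\sum_{t,s}|\E[\xi_{jt}\xi_{js}]|\le TM_{3\xi}$ from Lemma \ref{lem:Gxi}(iii), giving
\[
\sum_{t,s=1}^T\E[F_{kt}F_{ks}]\,\E[\xi_{jt}\xi_{js}]\;\le\;\Bigl(\max_{t,s}|\E[F_{kt}F_{ks}]|\Bigr)\sum_{t,s=1}^T|\E[\xi_{jt}\xi_{js}]|\;\le\;M_F\,TM_{3\xi},
\]
uniformly in $k$ and $j$. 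With that substitution the rest of your argument is exactly the paper's and goes through unchanged.
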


\begin{proof}  By Assumption \ref{ass:ind}, Lemma \ref{lem:Gxi}(iii) and Cauchy-Schwarz inequality
\begin{align}
\E\l[ \l\Vert\frac {\bm F^\prime\bm \Xi}{\sqrt n T} \r\Vert^2\r] &=\E\l[ \l\Vert\frac 1{\sqrt nT}\sum_{t=1}^T \mbf F_t\bm\xi_t^\prime \r\Vert^2\r]\le \E\l[ \l\Vert\frac 1{\sqrt nT}\sum_{t=1}^T \mbf F_t\bm\xi_t^\prime \r\Vert_F^2\r]\nn\\
&=\frac 1{nT^2}\sum_{j=1}^r\sum_{i=1}^n \E\l[\l(\sum_{t=1}^T F_{jt}\xi_{it}\r)^2\r]\nn\\
&\le\frac r{T^2}\max_{j=1\ldots, r} \max_{i=1\ldots, n} \sum_{t=1}^T\sum_{s=1}^T \E[\xi_{it}F_{jt}\xi_{is}F_{js}]\nn\\
&=\frac r{T^2}\max_{j=1\ldots, r} \max_{i=1\ldots, n} \sum_{t=1}^T\sum_{s=1}^T \E[F_{jt}F_{js}]\, \E[\xi_{it}\xi_{is}]\nn\\
&\le \l\{\frac{r}{T}\max_{j=1,\ldots, r}\max_{t,s=1,\ldots, n} \vert\E[F_{jt}F_{js}]\vert\r\}
\l\{\frac 1T  \sum_{t=1}^T\sum_{s=1}^T\l\vert\E[\xi_{it}\xi_{is}]\r\vert\r\}\nn\\
&\le \frac r T \max_{j=1,\ldots, r}\max_{t,s=1,\ldots, n} \E[F_{jt}^2] \frac{M_\xi(1+\rho)}{1-\rho}\le\frac{rM_F M_{3\xi}}T,\nn
\end{align}
since $M_F$ is independent of $t$ and $s$ and $M_{3\xi}$ is independent of $i$. This completes the proof. 
\end{proof}

\begin{lem}\label{lem:aiuto}
Under Assumptions \ref{ass:common} through \ref{ass:ind}, for all $n,T\in\mathbb N$ 
\begin{compactenum}[(i)]
\item $\min(n,\sqrt T)\l\Vert\frac{\bm \Xi^\prime\bm \Xi}{nT} \r\Vert = O_{\mathrm{ms}}(1)$;
\item $\min(n,\sqrt {nT})\l\Vert\frac{\bm\Lambda^\prime\bm \Xi^\prime\bm \Xi}{n^{3/2}T} \r\Vert = O_{\mathrm{ms}}(1)$;
\item $\l\Vert\frac{\bm \Lambda^\prime\bm \Lambda}{n} \r\Vert = O(1)$.
\end{compactenum}
\end{lem}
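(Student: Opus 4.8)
The three parts are of very different difficulty. Part (iii) is immediate: by Lemma \ref{lem:FTLN}(i), $\Vert n^{-1/2}\bm\Lambda\Vert=O(1)$, so
\[
\l\Vert\frac{\bm\Lambda^\prime\bm\Lambda}{n}\r\Vert\le\l\Vert\frac{\bm\Lambda}{\sqrt n}\r\Vert^2=O(1).
\]
Parts (i) and (ii) are both proved by centering the idiosyncratic cross-products and splitting off a deterministic bias, the difference being that in (ii) one must also exploit the cross-sectional averaging induced by the loadings to get the sharper rate.

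For part (i), the plan is to write $\frac{\bm\Xi^\prime\bm\Xi}{nT}=\frac 1n\wh{\bm\Gamma}^\xi$ with $\wh{\bm\Gamma}^\xi=\frac 1T\bm\Xi^\prime\bm\Xi$ and use $\frac 1n\Vert\wh{\bm\Gamma}^\xi\Vert\le\frac 1n\Vert\wh{\bm\Gamma}^\xi-\bm\Gamma^\xi\Vert+\frac 1n\Vert\bm\Gamma^\xi\Vert$. The second term is deterministic and bounded by $M_{2\xi}/n$ by Lemma \ref{lem:Gxi}(v), hence $O(1)$ once multiplied by $\min(n,\sqrt T)\le n$. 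For the first term I would bound the operator norm by the Frobenius norm and take expectations,
\[
\E\l[\frac 1{n^2}\Vert\wh{\bm\Gamma}^\xi-\bm\Gamma^\xi\Vert_F^2\r]=\frac 1{n^2T}\sum_{i,j=1}^n\E\l[\l\vert\frac 1{\sqrt T}\sum_{t=1}^T\l\{\xi_{it}\xi_{jt}-\E[\xi_{it}\xi_{jt}]\r\}\r\vert^2\r],
\]
and use that each summand is $O(1)$ uniformly in $i,j$ — the $\sqrt T$-consistency of the idiosyncratic sample (auto)covariances — which follows by expanding $\Var(T^{-1}\sum_t\xi_{it}\xi_{jt})$ into the two products of second moments, controlled by Assumption \ref{ass:idio}(b) and Lemma \ref{lem:Gxi}, plus the fourth cumulant, controlled by the summability in Assumption \ref{ass:idio}(c). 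Hence $\frac 1n\Vert\wh{\bm\Gamma}^\xi-\bm\Gamma^\xi\Vert=O_{\mathrm{ms}}(T^{-1/2})$ and $\frac 1n\Vert\wh{\bm\Gamma}^\xi\Vert=O_{\mathrm{ms}}(\max(n^{-1},T^{-1/2}))$, which, multiplied by $\min(n,\sqrt T)$, is the claim.

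For part (ii), note first that the crude factorization $\frac{\bm\Lambda^\prime\bm\Xi^\prime\bm\Xi}{n^{3/2}T}=\frac{\bm\Lambda^\prime}{\sqrt n}\,\frac{\wh{\bm\Gamma}^\xi}{n}$ only delivers the slower rate $\max(n^{-1},T^{-1/2})$, so I would instead centre as $\frac{\bm\Lambda^\prime\bm\Xi^\prime\bm\Xi}{n^{3/2}T}=\frac{\bm\Lambda^\prime(\bm\Xi^\prime\bm\Xi-T\bm\Gamma^\xi)}{n^{3/2}T}+\frac{\bm\Lambda^\prime\bm\Gamma^\xi}{n^{3/2}}$, since $\E[\bm\Xi^\prime\bm\Xi]=T\bm\Gamma^\xi$. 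The deterministic term has norm $\le n^{-1}\Vert n^{-1/2}\bm\Lambda\Vert\,\Vert\bm\Gamma^\xi\Vert=O(n^{-1})$ by Lemmas \ref{lem:FTLN}(i) and \ref{lem:Gxi}(v). For the stochastic term I would pass again to the Frobenius norm; its $(h,j)$ entry is $\sum_{i=1}^n\lambda_{ih}\sum_{t=1}^T\{\xi_{it}\xi_{jt}-\E[\xi_{it}\xi_{jt}]\}$, so
\[
\E\l[\Vert\bm\Lambda^\prime(\bm\Xi^\prime\bm\Xi-T\bm\Gamma^\xi)\Vert_F^2\r]=\sum_{h=1}^r\sum_{j=1}^n\E\l[\l(\sum_{i=1}^n\sum_{t=1}^T\lambda_{ih}\l\{\xi_{it}\xi_{jt}-\E[\xi_{it}\xi_{jt}]\r\}\r)^2\r],
\]
and the key step is to prove that each inner expectation is $O(nT)$ uniformly in $h,j$. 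I would expand it as $\sum_{i,i^\prime,t,t^\prime}\lambda_{ih}\lambda_{i^\prime h}\Cov(\xi_{it}\xi_{jt},\xi_{i^\prime t^\prime}\xi_{jt^\prime})$ and decompose each covariance into the two products of second moments and the fourth cumulant: the term with $\E[\xi_{it}\xi_{i^\prime t^\prime}]\E[\xi_{jt}\xi_{jt^\prime}]$ is $O(nT)$ since $\sum_{t,t^\prime}|\E[\xi_{jt}\xi_{jt^\prime}]|=O(T)$ (Lemma \ref{lem:Gxi}(iii)) and $\sup_{t,t^\prime}\sum_{i,i^\prime}|\E[\xi_{it}\xi_{i^\prime t^\prime}]|=O(n)$ (Assumption \ref{ass:idio}(b), Lemma \ref{lem:Gxi}(ii)); the term with $\E[\xi_{it}\xi_{jt^\prime}]\E[\xi_{jt}\xi_{i^\prime t^\prime}]$ is only $O(T)$ because both cross-sectional sums are now $O(1)$; and the cumulant term is $O(nT)$ by the fourth-order cumulant summability of Assumption \ref{ass:idio}(c). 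Substituting $\E[\Vert\cdot\Vert_F^2]=O(rn\cdot nT)=O(n^2T)$ into the Frobenius bound gives $\frac 1{n^{3/2}T}\Vert\bm\Lambda^\prime(\bm\Xi^\prime\bm\Xi-T\bm\Gamma^\xi)\Vert=O_{\mathrm{ms}}((nT)^{-1/2})$, so the whole expression is $O_{\mathrm{ms}}(\max(n^{-1},(nT)^{-1/2}))$, which multiplied by $\min(n,\sqrt{nT})$ is $O_{\mathrm{ms}}(1)$.

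The main obstacle is exactly this last entrywise estimate in part (ii): one must show that weighting by the bounded loadings $\lambda_{ih}$ replaces the naive $O(n^2T)$ bound that a plain Cauchy--Schwarz over the cross-section would give by $O(nT)$, i.e.\ that the loadings effectively average out one power of $n$. This is where the weak serial and cross-sectional dependence (Assumption \ref{ass:idio}(b)) and the fourth-order cumulant summability (Assumption \ref{ass:idio}(c)) are both genuinely used; everything else — parts (i) and (iii) and the deterministic bias pieces — is routine.
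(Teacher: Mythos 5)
Your proof is correct and follows essentially the same route as the paper's: parts (i) and (ii) are both handled by splitting off the deterministic bias involving $\bm\Gamma^\xi$ (bounded via Lemma \ref{lem:Gxi}(v)) and controlling the centred term in mean square through the Frobenius norm, and part (iii) is the same one-line bound via Lemma \ref{lem:FTLN}(i). The only difference is in the key moment bound for part (ii): the paper pulls the loadings out with the uniform bound $M_\Lambda$ and invokes Assumption \ref{ass:idio}(c-ii) directly, whereas you keep the loadings inside the sum and re-derive the $O(nT)$ bound from the primitive second-moment and fourth-cumulant summability conditions --- a more detailed verification of the same estimate, arriving at the identical $O_{\mathrm{ms}}(\max(n^{-1},(nT)^{-1/2}))$ rate.
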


\begin{proof}
For part (i), first notice that, by Lemmas \ref{lem:Gxi}(v) and \ref{lem:covarianzeF}(ii)
\beq
\l\Vert\frac{\bm \Xi^\prime\bm \Xi}{nT} \r\Vert \le \l\Vert\frac{\bm \Xi^\prime\bm \Xi}{nT}-\frac{\bm\Gamma^\xi}{n} \r\Vert +\l\Vert\frac{\bm\Gamma^\xi}{n}\r\Vert
= \l\Vert\frac{\bm \Xi^\prime\bm \Xi}{nT}-\frac{\bm\Gamma^\xi}{n} \r\Vert +\frac{\mu_1^\xi}{n}=O_{\mathrm P}\l(\frac 1{\sqrt T}\r)+O\l(\frac 1{n}\r).
\eeq
Similarly, for part (ii) by Lemmas \ref{lem:FTLN}(i) and \ref{lem:Gxi}(v) 
\beq
\l\Vert\frac{\bm\Lambda^\prime\bm \Xi^\prime\bm \Xi}{n^{3/2}T} \r\Vert \le \l\Vert\frac{\bm\Lambda^\prime\bm \Xi^\prime\bm \Xi}{n^{3/2}T}-\frac{\bm\Lambda^\prime\bm\Gamma^\xi}{n^{3/2}} \r\Vert +\l\Vert \frac{\bm\Lambda}{\sqrt n}\r\Vert\,\l\Vert\frac{\bm\Gamma^\xi}{n}\r\Vert= 
\l\Vert\frac{\bm\Lambda^\prime\bm \Xi^\prime\bm \Xi}{n^{3/2}T}-\frac{\bm\Lambda^\prime\bm\Gamma^\xi}{n^{3/2}} \r\Vert +O\l(\frac 1{n}\r).\label{eq:rifacciamo2}
\eeq
Then, because of Assumption \ref{ass:idio}(c), 
\begin{align}
\E\l[\l\Vert\frac{\bm\Lambda^\prime\bm \Xi^\prime\bm \Xi}{n^{3/2}T}-\frac{\bm\Lambda^\prime\bm\Gamma^\xi}{n^{3/2}} \r\Vert ^2\r]&\le 
\E\l[\l\Vert\frac{\bm\Lambda^\prime\bm \Xi^\prime\bm \Xi}{n^{3/2}T}-\frac{\bm\Lambda^\prime\bm\Gamma^\xi}{n^{3/2}} \r\Vert ^2_F\r]\nn\\
&= \sum_{k=1}^r\sum_{j=1}^n\E\l[\l\vert\frac 1{n^{3/2}T}
\sum_{i=1}^n  \sum_{t=1}^T \l\{\lambda_{ik}\xi_{it}\xi_{jt}-\lambda_{ik}\E[\xi_{it}\xi_{jt}]\r\}
\r\vert^2
\r]\nn\\
&\le \frac{rM_\Lambda n}{n^2T}
\max_{j=1,\ldots, n}
\E\l[\l\vert\frac 1{\sqrt{nT}}
\sum_{i=1}^n  \sum_{t=1}^T \l\{\xi_{it}\xi_{jt}-\E[\xi_{it}\xi_{jt}]\r\}
\r\vert^2
\r]\le \frac{rM_\Lambda K_\xi}{nT},\label{eq:rifacciamo}
\end{align}
since $K_\xi$ is independent of $j$. By using \eqref{eq:rifacciamo} into \eqref{eq:rifacciamo2},
we prove part (ii).\smallskip 

Part (iii) follows from Lemma \ref{lem:FTLN}(i) since
\[
\l\Vert\frac{\bm \Lambda^\prime\bm \Lambda}{n} \r\Vert\le \l\Vert\frac{\bm \Lambda}{\sqrt n} \r\Vert^2\le M_\Lambda^2.
\]
This completes the proof. 
\end{proof}


\begin{lem}\label{lem:covarianzeF}
$\,$
\begin{compactenum}[(i)]
\item Under Assumption \ref{ass:common}, for all $T\in\mathbb N$, $\sqrt T\l\Vert\frac{\bm F^\prime\bm F}{T} -\bm\Gamma^F\r\Vert = O_{\mathrm{ms}}(1)$;
\item Under Assumption \ref{ass:idio}, for all $n,T\in\mathbb N$, $\sqrt T \l\Vert\frac{\bm \Xi^\prime\bm \Xi}{nT}-\frac{\bm\Gamma^\xi}{n} \r\Vert= O_{\mathrm{ms}}(1)$.
\end{compactenum}
\end{lem}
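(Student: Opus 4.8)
In both parts the plan is the same: bound the operator norm of the centred sample second‑moment matrix by its Frobenius norm, expand the squared Frobenius norm as a sum over entries, and control each entrywise mean square by the high‑level fourth‑moment conditions in Assumption \ref{ass:common}(c) (for part (i)) and Assumption \ref{ass:idio}(c) (for part (ii)), being careful to keep the constants uniform and to track the $n^{-1}$ normalisation in part (ii).

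\textbf{Part (i).} First I would write $T^{-1}\bm F'\bm F-\bm\Gamma^F=T^{-1}\sum_{t=1}^T(\mbf F_t\mbf F_t'-\E[\mbf F_t\mbf F_t'])$, so that its $(a,b)$ entry is $T^{-1}\sum_{t=1}^T(F_{at}F_{bt}-\E[F_{at}F_{bt}])$. Using $\Vert\cdot\Vert\le\Vert\cdot\Vert_F$ and linearity of expectation,
\[
\E\!\left[\left\Vert\frac{\bm F'\bm F}{T}-\bm\Gamma^F\right\Vert^2\right]\le\frac1T\sum_{a,b=1}^r\E\!\left[\left(\frac1{\sqrt T}\sum_{t=1}^T\bigl(F_{at}F_{bt}-\E[F_{at}F_{bt}]\bigr)\right)^2\right].
\]
Each of the $r^2$ summands is $O(1)$ uniformly in $T$ — this is exactly the $\sqrt T$‑consistency of the sample second moments of the factors, which follows from the finite fourth moments of Assumption \ref{ass:common}(c-i) together with the summability condition of Assumption \ref{ass:common}(c-ii) — and $r$ is finite by Assumption \ref{ass:common}(d). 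Hence the right‑hand side is $O(1/T)$, i.e. $\sqrt T\Vert T^{-1}\bm F'\bm F-\bm\Gamma^F\Vert=O_{\mathrm{ms}}(1)$.

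\textbf{Part (ii).} Here the $(i,j)$ entry of $(nT)^{-1}\bm\Xi'\bm\Xi-n^{-1}\bm\Gamma^\xi$ equals $(nT)^{-1}\sum_{t=1}^T(\xi_{it}\xi_{jt}-\E[\xi_{it}\xi_{jt}])$, so by the same Frobenius bound
\[
\E\!\left[\left\Vert\frac{\bm\Xi'\bm\Xi}{nT}-\frac{\bm\Gamma^\xi}{n}\right\Vert^2\right]\le\frac1{n^2T}\sum_{i,j=1}^n\E\!\left[\left(\frac1{\sqrt T}\sum_{t=1}^T\bigl(\xi_{it}\xi_{jt}-\E[\xi_{it}\xi_{jt}]\bigr)\right)^2\right].
\]
The point is that each of the $n^2$ summands must be bounded by a constant uniform in $i,j,n,T$; then the right‑hand side is $O(n^2/(n^2T))=O(1/T)$, which gives the claim. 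To get the uniform bound I would expand the square as $T^{-1}\sum_{t,u=1}^T\Cov(\xi_{it}\xi_{jt},\xi_{iu}\xi_{ju})$ and split the covariance into the product‑of‑autocovariances pieces $\E[\xi_{it}\xi_{iu}]\E[\xi_{jt}\xi_{ju}]$ and $\E[\xi_{it}\xi_{ju}]\E[\xi_{jt}\xi_{iu}]$ and the fourth‑cumulant piece. The first two are controlled uniformly by the geometric decay and cross‑sectional summability in Assumption \ref{ass:idio}(b) together with $\sigma_i^2\le C_\xi'$ from Assumption \ref{ass:idio}(a) (Lemma \ref{lem:Gxi}(ii)--(iii) packages the relevant sums), while the fourth‑cumulant contribution is controlled by the finite fourth moments of Assumption \ref{ass:idio}(c-i) and the cross‑sectional/temporal summability of Assumption \ref{ass:idio}(c-ii).

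\textbf{Main obstacle.} The hard part is precisely this last, uniform‑in‑$(i,j)$ control of the fourth‑order cross‑cumulant sum in part (ii): one must check that the constants produced do not depend on $i$, $j$ or $n$, which is exactly what the summability in Assumption \ref{ass:idio}(c-ii) is designed to deliver. Part (i) is then just the finite‑dimensional, single‑index analogue of the same computation — with the fixed finite $r$ in place of $n$ — and is routine once the part (ii) argument is in hand.
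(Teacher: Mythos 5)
Your proposal is correct and follows essentially the same route as the paper: bound the spectral norm by the Frobenius norm, expand into entrywise mean squares, and obtain a per-entry bound of order $1/T$ uniform in the indices from Assumptions \ref{ass:common}(c) and \ref{ass:idio}(c), with the finite $r$ (resp.\ the $n^{-2}$ normalisation against the $n^2$ entries) absorbing the dimension. The only difference is that where you re-derive the uniform per-pair bound in part (ii) via a covariance/fourth-cumulant decomposition controlled by Assumption \ref{ass:idio}(b)--(c), the paper simply reads that bound off Assumption \ref{ass:idio}(c-ii) directly as $K_\xi/T$.
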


\begin{proof}
Part (i) is direct consequence of Assumption \ref{ass:common}(c-ii), since
\begin{align}
\E\l[\l\Vert\frac{\bm F^\prime\bm F}{T}-\bm\Gamma^F\r\Vert^2\r]&=
\E\l[\l\Vert \frac 1T\sum_{t=1}^T \mbf F_t\mbf F_t^\prime-\bm\Gamma^F\r\Vert^2\r]\le 
\E\l[\l\Vert \frac 1T\sum_{t=1}^T \l\{\mbf F_t\mbf F_t^\prime-\bm\Gamma^F\r\}\r\Vert^2_F\r]\nn\\
&=\frac 1{T^2} \sum_{j=1}^r\sum_{k=1}^r\E\l[\l(\sum_{t=1}^T \l\{F_{jt}F_{kt}-\E[F_{jt}F_{kt}] \r\}\r)^2\r]
\le \frac {r^2C_F}{T},\nn
\end{align}
since $C_F$ is independent of $j$ and $k$. \smallskip

For part (ii),  by Assumption \ref{ass:idio}(c), letting $\gamma_{ij}^\xi$ be the $(i,j)$the entry of $\bm\Gamma^\xi$, we have
\begin{align}
\E\l[\l\Vert\frac{\bm \Xi^\prime\bm \Xi}{nT}-\frac{\bm\Gamma^\xi}{n} \r\Vert ^2\r]&=\E\l[\l\Vert \frac 1{nT} \sum_{t=1}^T\bm\xi_t\bm\xi_t^\prime -\frac{\bm\Gamma^\xi}n \r\Vert^2\r]\le  \E\l[\l\Vert \frac 1{nT} \sum_{t=1}^T\l\{\bm\xi_t\bm\xi_t^\prime -\bm\Gamma^\xi\r\} \r\Vert^2_F\r]\nn\\
&=\frac 1{n^2T^2}\sum_{i,j=1}^n\E\l[\l(\sum_{t=1}^T \l\{\xi_{it}\xi_{jt}-\gamma_{ij}^\xi\r\}\r)^2\r]\nn\\
&\le \max_{i,j=1,\ldots, n}\frac 1{T^2} \E\l[\l(\sum_{t=1}^T \l\{\xi_{it}\xi_{jt}-\gamma_{ij}^\xi\r\}\r)^2\r]
\le \frac {K_\xi}{T},\nn
\end{align}
since $K_\xi$ is independent of $i$ and $j$. This completes the proof. 
\end{proof}

\begin{lem}\label{lem:covarianze}
Under Assumptions \ref{ass:common} through \ref{ass:ind}, 
for all $n,T\in\mathbb N$ 
\begin{compactenum}
\item [(i)]  $\frac {\sqrt T} n\Vert \wh{\bm\Gamma}^x-\bm\Gamma^x\Vert = O_{\mathrm {ms}}(1)$;
\item [(ii)] $\frac {\min(n,\sqrt T)} n\Vert \wh{\bm\Gamma}^x-\bm\Gamma^\chi\Vert = O_{\mathrm {ms}}(1)$;
\item [(iii)] $\frac  {\min(n,\sqrt T)} n \Vert \wh{\mbf M}^x-\mbf M^\chi\Vert=O_{\mathrm {ms}}(1)$;
\item [(iv)] as $n\to\infty$, ${\min(n,\sqrt T)} \Vert\wh{\mbf V}^x-\mbf V^\chi\mbf J\Vert = O_{\mathrm {ms}}(1)$, where $\mbf J$ is $r\times r$ diagonal with entries $\pm 1$.
\end{compactenum}
\end{lem}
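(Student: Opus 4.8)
The plan is to establish (i)--(iii) by a triangle-inequality argument on the standard four-term decomposition of the sample covariance matrix, and (iv) by a Davis--Kahan type perturbation bound exploiting the fact that $\bm\Gamma^\chi=\bm\Lambda\bm\Gamma^F\bm\Lambda^\prime$ has exactly $r$ nonzero eigenvalues, all of exact order $n$.

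For part (i), substitute $\bm X=\bm F\bm\Lambda^\prime+\bm\Xi$ into $\wh{\bm\Gamma}^x=T^{-1}\bm X^\prime\bm X$ and use $\bm\Gamma^x=\bm\Lambda\bm\Gamma^F\bm\Lambda^\prime+\bm\Gamma^\xi$, which holds by Assumption \ref{ass:ind}, to write
\[
\wh{\bm\Gamma}^x-\bm\Gamma^x
=\bm\Lambda\l(\frac{\bm F^\prime\bm F}{T}-\bm\Gamma^F\r)\bm\Lambda^\prime
+\bm\Lambda\frac{\bm F^\prime\bm\Xi}{T}
+\frac{\bm\Xi^\prime\bm F}{T}\bm\Lambda^\prime
+\l(\frac{\bm\Xi^\prime\bm\Xi}{T}-\bm\Gamma^\xi\r).
\]
Dividing by $n$, taking the spectral norm and applying the triangle inequality, the four terms are controlled by: $\Vert n^{-1/2}\bm\Lambda\Vert^2\,\Vert T^{-1}\bm F^\prime\bm F-\bm\Gamma^F\Vert = O(1)\cdot O_{\mathrm{ms}}(T^{-1/2})$ by Lemma \ref{lem:FTLN}(i) and Lemma \ref{lem:covarianzeF}(i); $\Vert n^{-1/2}\bm\Lambda\Vert\,\Vert(\sqrt{n}T)^{-1}\bm F^\prime\bm\Xi\Vert = O(1)\cdot O_{\mathrm{ms}}(T^{-1/2})$ by Lemma \ref{lem:FTLN}(i) and Lemma \ref{lem:LLN} (and likewise for its transpose); and $\Vert(nT)^{-1}\bm\Xi^\prime\bm\Xi-n^{-1}\bm\Gamma^\xi\Vert = O_{\mathrm{ms}}(T^{-1/2})$ by Lemma \ref{lem:covarianzeF}(ii). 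Since in each summand at most one factor is random while the other is the deterministic $O(1)$ quantity $n^{-1/2}\bm\Lambda$, the $C_r$-inequality gives $n^{-1}\Vert\wh{\bm\Gamma}^x-\bm\Gamma^x\Vert=O_{\mathrm{ms}}(T^{-1/2})$, which is part (i).

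For part (ii), write $\wh{\bm\Gamma}^x-\bm\Gamma^\chi=(\wh{\bm\Gamma}^x-\bm\Gamma^x)+\bm\Gamma^\xi$ and combine part (i) with $n^{-1}\Vert\bm\Gamma^\xi\Vert=n^{-1}\mu_1^\xi\le n^{-1}M_{2\xi}$ from Lemma \ref{lem:Gxi}(v); this yields $n^{-1}\Vert\wh{\bm\Gamma}^x-\bm\Gamma^\chi\Vert=O_{\mathrm{ms}}(\max(n^{-1},T^{-1/2}))=O_{\mathrm{ms}}(1/\min(n,\sqrt T))$. For part (iii), since $\mbf M^\chi$ collects the $r$ nonzero (hence $r$ largest) eigenvalues of $\bm\Gamma^\chi$ and $\wh{\mbf M}^x$ the $r$ largest of $\wh{\bm\Gamma}^x$, Weyl's inequality gives the pathwise bound $\Vert\wh{\mbf M}^x-\mbf M^\chi\Vert=\max_{j\le r}\vert\mu_j(\wh{\bm\Gamma}^x)-\mu_j(\bm\Gamma^\chi)\vert\le\Vert\wh{\bm\Gamma}^x-\bm\Gamma^\chi\Vert$, so squaring and taking expectations transfers the rate of part (ii).

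Part (iv) is the only genuine obstacle. The plan is to invoke a Davis--Kahan $\sin\Theta$ bound in the form of \citet[Theorem 2]{yu15}, as already done elsewhere in the excerpt. The relevant eigengap is between the $r$th and $(r+1)$th eigenvalues of $\bm\Gamma^\chi$: because $\bm\Gamma^\chi$ has rank $r$ one has $\mu_{r+1}(\bm\Gamma^\chi)=0$, while $\mu_r^\chi\ge\underline C_r\,n$ for $n$ large by Lemma \ref{lem:Gxi}(iv); together with the separation of consecutive eigenvalues of $\bm\Gamma^\chi$ (guaranteed for each fixed $n$, and of exact order $n$ in the limit by Lemma \ref{lem:Gxi}(iv) and Assumption \ref{ass:eval}, which implies $\overline C_j<\underline C_{j-1}$), the relevant gap is of exact order $n$. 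The perturbation bound then produces an $r\times r$ diagonal sign matrix $\mbf J$ with $\Vert\wh{\mbf V}^x-\mbf V^\chi\mbf J\Vert\le C\,\Vert\wh{\bm\Gamma}^x-\bm\Gamma^\chi\Vert/n$ for some finite $C$ and all $n$ large; combining with part (ii) gives $\Vert\wh{\mbf V}^x-\mbf V^\chi\mbf J\Vert=O_{\mathrm{ms}}(1/\min(n,\sqrt T))$. The two points needing care are that the perturbation inequality must be read as a genuinely pathwise (deterministic) statement so the $O_{\mathrm{ms}}$ rate from part (ii) passes through upon squaring and taking expectations, and that the sign/rotation ambiguity of the eigenvectors must be handled so that $\mbf J$ is diagonal rather than a general orthogonal matrix, which is exactly where the separation of the eigenvalues of $\bm\Gamma^\chi$ is used.
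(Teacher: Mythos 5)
Your proof is correct and follows essentially the same route as the paper: the additive decomposition of $\wh{\bm\Gamma}^x-\bm\Gamma^x$ controlled term by term via Lemmas \ref{lem:FTLN}, \ref{lem:covarianzeF} and \ref{lem:LLN} for (i), adding $n^{-1}\Vert\bm\Gamma^\xi\Vert$ via Lemma \ref{lem:Gxi}(v) for (ii), Weyl's inequality for (iii), and the Yu--Wang--Samworth form of Davis--Kahan with the order-$n$ eigengap and Assumption \ref{ass:eval} for (iv). The only (harmless) difference is that you retain the cross terms $\bm\Lambda\bm F^\prime\bm\Xi/T$ and its transpose explicitly, which the paper's displayed decomposition in part (i) omits; since each is $O_{\mathrm{ms}}(T^{-1/2})$ after normalization by $n$, this does not change the rate.
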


\begin{proof}
For part (i),  from Lemma \ref{lem:covarianzeF}(i), \ref{lem:covarianzeF}(ii) and \ref{lem:FTLN}(i): 
\begin{align}
\E\l[\l\Vert\frac  1n\l( \wh{\bm\Gamma}^x-\bm\Gamma^x\r)\r\Vert^2\r]
& =
 \E\l[\l\Vert \frac 1n\l\{ \bm\Lambda\l( \frac 1T\sum_{t=1}^T \mbf F_t\mbf F_t^\prime-\bm\Gamma^F\r)\bm\Lambda^\prime 
 + \frac 1T\sum_{t=1}^T\bm\xi_t\bm\xi_t^\prime-\bm\Gamma^\xi\r\}\r\Vert^2
 \r]\nn\\
& \le 
\E\l[\l\Vert \frac 1n\l\{ \bm\Lambda\l( \frac 1T\sum_{t=1}^T \mbf F_t\mbf F_t^\prime-\bm\Gamma^F\r)\bm\Lambda^\prime 
\r\}\r\Vert^2\r]
 +\E\l[\l\Vert \frac 1n\l( \frac 1T\sum_{t=1}^T\bm\xi_t\bm\xi_t^\prime-\bm\Gamma^\xi\r)\r\Vert^2
 \r]\nn\\
&\le \l\Vert\frac{\bm\Lambda}{\sqrt n}\r\Vert^4 \, \E\l[\l\Vert \frac 1T\sum_{t=1}^T \mbf F_t\mbf F_t^\prime-\bm\Gamma^F\r\Vert^2\r]+\E\l[\l\Vert \frac 1n\l( \frac 1T\sum_{t=1}^T\bm\xi_t\bm\xi_t^\prime-\bm\Gamma^\xi\r)\r\Vert^2
 \r]\nn\\
&\le \frac{M_\Lambda^4 r^2 C_F+K_\xi}{T}.\nn
\end{align}
This proves part (i).\smallskip

Part (ii), follows from
\begin{align}
\E\l[\l\Vert\frac  1n\l( \wh{\bm\Gamma}^x-\bm\Gamma^\chi\r)\r\Vert^2\r] &\le \E\l[\l\Vert\frac  1n\l( \wh{\bm\Gamma}^x-\bm\Gamma^x\r)\r\Vert^2\r] +\l\Vert\frac  {\bm\Gamma^\xi}n\r\Vert^2 \le  \frac{M_\Lambda^4K_F+K_\xi}{T}+
\frac{\mu_1^\xi}{n^2}\le \frac{M_\Lambda^4K_F+K_\xi}{T}+\frac{M_{2\xi}}{n^2}\nn\\
&\le M_1\max \l(\frac 1{n^2},\frac 1T\r),\;\text{say,}\nn
\end{align}
 because of part (i) and Lemma \ref{lem:Gxi}(v) and where $M_1$ is a finite positive real independent of $n$ and $T$.\smallskip
 
For part (iii), for any $j=1,\ldots, r$, because of Weyl's inequality and part (ii), it holds that
\beq\label{eq:weyl}
\vert\wh\mu_j^x-\mu_j^\chi \vert\le \mu_1(\wh{\bm\Gamma}^x-\bm\Gamma^\chi)
=\l\Vert\wh{\bm\Gamma}^x-\bm\Gamma^\chi\r\Vert.
\eeq
Hence, from \eqref{eq:weyl} and part (ii),
\begin{align}
\E\l[ \l\Vert\frac 1n\l( \wh{\mbf M}^x-\mbf M^\chi\r)\r\Vert^2\r] &\le  
\E\l[ \l\Vert\frac 1n\l( \wh{\mbf M}^x-\mbf M^\chi\r)\r\Vert^2_F\r] =
\frac 1{n^2} \sum_{j=1}^r \E\l[\l(\wh\mu_j^x-\mu_j^\chi \r)^2\r]\nn\\
&\le \frac{r}{n^2} \E\l[\l(\wh\mu_1^x-\mu_1^\chi \r)^2\r]\le r\E\l[\l\Vert\frac 1n\l(\wh{\bm\Gamma}^x-\bm\Gamma^\chi\r)\r\Vert^2\r]\le \frac{rM_\Lambda^4K_F+K_\xi}{T}+\frac{rM_\xi}{n^2}\nn\\
&\le rM_1\max\l(\frac 1{n^2},\frac 1T\r).
\end{align}
This proves part (iii).\smallskip

For part (iv), because of  \citet[Theorem 2]{yu15}, which is a special case of Davis Kahn Theorem,  there exists an $r\times r$ diagonal matrix $\mbf J$  with entries $\pm 1$ such that
\beq
\Vert\wh{\mbf V}^x-\mbf V^\chi\mbf J\Vert\le \frac{2^{3/2}\sqrt r\l\Vert \wh{\bm\Gamma}^x-\bm\Gamma^\chi\r\Vert }{\min(\vert \mu_{0}^\chi-\mu_{1}^\chi \vert,\vert \mu_{r}^\chi-\mu_{r+1}^\chi \vert)},\label{eq:DK}
\eeq
where $\mu_0^\chi=\infty$. This holds provided the eigenvalues $\mu_j^\chi$ are distinct as required by Assumption \ref{ass:eval}.
Therefore, from \eqref{eq:DK},  part (ii) and Lemma \ref{lem:Gxi}(iv), and since $\mu_{r+1}^\chi =0$, as $n\to\infty$
\begin{align}
\min(n^2,T) \E\l[\Vert\wh{\mbf V}^x-\mbf V^\chi\mbf J\Vert^2\r]&\le \frac
{\min(n^2,T)  2^3 \frac r{n^2} \E\l[\l\Vert\wh{\bm\Gamma}^x-\bm\Gamma^\chi \r\Vert^2\r]}
{\frac 1{n^2}\l\{\min(\vert \mu_{0}^\chi-\mu_{1}^\chi \vert,\vert \mu_{r}^\chi-\mu_{r+1}^\chi \vert)\r\}^2} \le \frac{8rM_1}{\underline C_r^2}= M_2,\;\text{say,}\nn
\end{align}
where $M_2$ is a finite positive real independent of $n$ and $T$.
This proves part (iv). This completes the proof. 
\end{proof}

\begin{lem}\label{lem:covarianzerighe}
Let $\bm\varepsilon_i$ be the $n$-dimensional vector with one in entry $i$ and zero elsewhere. Under Assumptions \ref{ass:common} through \ref{ass:ind}, for all $i=1,\ldots, n$ and $T\in\mathbb N$ and as $n\to\infty$ 
\begin{compactenum}[(i)]
\item $\frac {\min(\sqrt n,\sqrt T)}{\sqrt n} \Vert\bm\varepsilon^\prime_i (\wh{\bm\Gamma}^x-\bm\Gamma^\chi)\Vert = O_{\mathrm {ms}}(1)$;
\item $\sqrt n\Vert\mbf v_i^\chi\Vert = O(1)$;
\item ${\min(\sqrt n,\sqrt T)} \sqrt n \Vert\wh{\mbf v}^{x\prime}_i-\mbf v^{\chi\prime}\mbf J\Vert = O_{\mathrm {ms}}(1)$.
\end{compactenum}
\end{lem}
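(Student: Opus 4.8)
\textbf{Proof plan for Lemma \ref{lem:covarianzerighe}.}

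The plan is to prove the three parts in order, since (iii) depends on (i) and (ii). For part (i), I would mimic the proof of Lemma \ref{lem:covarianze}(ii) but work with a single row. Write $\bm\varepsilon_i^\prime(\wh{\bm\Gamma}^x-\bm\Gamma^\chi) = \bm\varepsilon_i^\prime(\wh{\bm\Gamma}^x-\bm\Gamma^x) + \bm\varepsilon_i^\prime\bm\Gamma^\xi$, since $\bm\Gamma^x = \bm\Gamma^\chi+\bm\Gamma^\xi$ under Assumption \ref{ass:ident}(b) (which gives $\bm\Gamma^F=\mbf I_r$, hence $\bm\Gamma^\chi=\bm\Lambda\bm\Lambda^\prime$). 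Using the decomposition $\wh{\bm\Gamma}^x = \bm\Lambda(T^{-1}\sum_t\mbf F_t\mbf F_t^\prime)\bm\Lambda^\prime + \wh{\bm\Gamma}^\xi + \bm\Lambda\wh{\bm\Gamma}^{F\xi}+\wh{\bm\Gamma}^{\xi F}\bm\Lambda^\prime$ as in \eqref{eq:G4}, the $i$th row of $\wh{\bm\Gamma}^x-\bm\Gamma^x$ splits into: a term $\bm\lambda_i^\prime(T^{-1}\sum_t\mbf F_t\mbf F_t^\prime-\mbf I_r)\bm\Lambda^\prime$ controlled by Lemma \ref{lem:covarianzeF}(i) and $\Vert\bm\lambda_i\Vert\le M_\Lambda$, $\Vert n^{-1/2}\bm\Lambda\Vert=O(1)$; the $i$th row of $\wh{\bm\Gamma}^\xi-\bm\Gamma^\xi$ controlled via Assumption \ref{ass:idio}(c-ii) (summability of 4th order cumulants along the cross-section and time) which is exactly the bound used in Lemma \ref{lem:aiuto}(ii); and the mixed terms $\bm\lambda_i^\prime\wh{\bm\Gamma}^{F\xi}\bm\Lambda^\prime$ and $[\wh{\bm\Gamma}^{\xi F}\bm\Lambda^\prime]_{i\cdot}$ handled by Lemma \ref{lem:LLN} together with $\Vert n^{-1/2}\bm\Lambda\Vert=O(1)$. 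The residual $\bm\varepsilon_i^\prime\bm\Gamma^\xi$ has norm bounded by $\max_i\sum_j|\E[\xi_{it}\xi_{jt}]|\le C_\xi^\prime+M_\xi = O(1)$ by Assumption \ref{ass:idio}(a)--(b), which after dividing by $\sqrt n$ gives the $O(n^{-1/2})$ contribution — i.e. the $1/\sqrt n$ part of $\min(1/\sqrt n, 1/\sqrt T)$. Collecting these bounds in mean square gives the claimed rate.

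For part (ii), I would argue directly from the eigen-decomposition. By Proposition \ref{prop:K00}(b), $\bm\Lambda = \mbf V^\chi(\mbf M^\chi)^{1/2}$, so taking the $i$th row, $\bm\lambda_i^\prime = \mbf v_i^{\chi\prime}(\mbf M^\chi)^{1/2}$, hence $\mbf v_i^{\chi\prime} = \bm\lambda_i^\prime(\mbf M^\chi)^{-1/2}$. Then $\sqrt n\Vert\mbf v_i^\chi\Vert \le \Vert\bm\lambda_i\Vert\,\Vert(n^{-1}\mbf M^\chi)^{-1/2}\Vert \le M_\Lambda\,(\underline C_r)^{-1/2}(1+o(1))$ for $n$ large, using Assumption \ref{ass:common}(a) for the bound on $\Vert\bm\lambda_i\Vert$ and Lemma \ref{lem:Gxi}(iv) for the lower bound $\liminf_n n^{-1}\mu_r^\chi\ge\underline C_r>0$ on the smallest nonzero eigenvalue. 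This is $O(1)$, proving (ii).

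For part (iii), I would combine (i) and (ii) through the row-wise Davis--Kahn bound, exactly as in the proof of Lemma \ref{lem:covarianze}(iv) but keeping track of the $i$th row. Writing $\wh{\mbf v}_i^{x\prime} - \mbf v_i^{\chi\prime}\mbf J = \bm\varepsilon_i^\prime(\wh{\mbf V}^x - \mbf V^\chi\mbf J)$, one uses the perturbation identity for eigenvectors: since $\wh{\bm\Gamma}^x\wh{\mbf v}_j^x = \wh\mu_j^x\wh{\mbf v}_j^x$ and $\bm\Gamma^\chi\mbf v_j^\chi = \mu_j^\chi\mbf v_j^\chi$, the $i$th entry of the eigenvector difference can be expanded in terms of $\bm\varepsilon_i^\prime(\wh{\bm\Gamma}^x-\bm\Gamma^\chi)$ projected onto the eigenbasis, plus a term involving $\Vert\mbf v_i^\chi\Vert$ times the operator-norm eigenvector error $\Vert\wh{\mbf V}^x-\mbf V^\chi\mbf J\Vert$ from Lemma \ref{lem:covarianze}(iv), with the eigengap $\min(\vert\mu_0^\chi-\mu_1^\chi\vert,\vert\mu_r^\chi-\mu_{r+1}^\chi\vert)\asymp n$ in the denominator (distinctness from Assumption \ref{ass:eval}, divergence and $\mu_{r+1}^\chi=0$ from Lemma \ref{lem:Gxi}(iv)). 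The first contribution is $O_{\mathrm{ms}}(\sqrt n\cdot\min(1/\sqrt n,1/\sqrt T)/n) = O_{\mathrm{ms}}(\min(1/n, 1/\sqrt{nT})\cdot\sqrt n/\sqrt n)$ after multiplying by $\sqrt n$; the second is $\sqrt n\cdot O(n^{-1/2})\cdot\Vert\wh{\mbf V}^x-\mbf V^\chi\mbf J\Vert/1$, and by Lemma \ref{lem:covarianze}(iv) $\Vert\wh{\mbf V}^x-\mbf V^\chi\mbf J\Vert = O_{\mathrm{ms}}(\min(1/n,1/\sqrt T))$, which multiplied by the extra $\sqrt n$ gives precisely $O_{\mathrm{ms}}(\min(1,\sqrt{n/T})\cdot\text{stuff})$ matching $\min(\sqrt n,\sqrt T)\sqrt n\Vert\cdots\Vert = O_{\mathrm{ms}}(1)$ after renormalization. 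The main obstacle will be part (iii): the operator-norm Davis--Kahn bound alone is too crude because it does not exploit that the $i$th row of $\mbf v^\chi$ is itself small (of order $n^{-1/2}$), so one needs the sharper row-wise perturbation expansion — essentially a first-order $\ell_\infty$-type eigenvector perturbation argument — and must verify carefully that the cross term $\Vert\mbf v_i^\chi\Vert\cdot\Vert\wh{\mbf V}^x-\mbf V^\chi\mbf J\Vert$ does not dominate. Parts (i) and (ii) are routine given the earlier lemmata.
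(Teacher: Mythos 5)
Your part (i) is the paper's argument (decompose $\bm\varepsilon_i^\prime(\wh{\bm\Gamma}^x-\bm\Gamma^\chi)$ into the row of $\wh{\bm\Gamma}^x-\bm\Gamma^x$ plus $\bm\varepsilon_i^\prime\bm\Gamma^\xi$, bound the factor block via Lemma \ref{lem:covarianzeF}(i) and $\Vert\bm\lambda_i\Vert\le M_\Lambda$, the idiosyncratic block via the cumulant condition, and the deterministic residual via $\Vert\bm\Gamma^\xi\Vert\le M_{2\xi}$); you are in fact slightly more careful, since you keep the cross terms $\bm\Lambda\wh{\bm\Gamma}^{F\xi}$ and $\wh{\bm\Gamma}^{\xi F}\bm\Lambda^\prime$ that the paper's display silently drops. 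For part (ii) you take a genuinely different route, and it has a small but real flaw as written: the identity $\bm\Lambda=\mbf V^\chi(\mbf M^\chi)^{1/2}$ is Proposition \ref{prop:K00}(b), which requires the identification constraint of Assumption \ref{ass:ident}; the lemma is stated only under Assumptions \ref{ass:common} through \ref{ass:ind}, so that normalization is not available. The fix is either to use the identification-free SVD relation $\bm\Lambda(\bm\Gamma^F)^{1/2}\mbf K=\mbf V^\chi(\mbf M^\chi)^{1/2}$ with $\Vert\mbf K\Vert=O(1)$ (Lemma \ref{lem:KO1}), or to do what the paper does: bound $\Var(\chi_{it})=\sum_{j=1}^r\mu_j^\chi[\mbf V^\chi]_{ij}^2\ge\mu_r^\chi\Vert\mbf v_i^\chi\Vert^2\ge\underline C_r\, n\Vert\mbf v_i^\chi\Vert^2$ from below and by $\bm\lambda_i^\prime\bm\Gamma^F\bm\lambda_i\le M_\Lambda^2M_F$ from above, which never touches the normalization of $\bm\Lambda$.

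For part (iii) the comparison cuts the other way. The paper simply reapplies the Yu--Wang--Samworth bound \eqref{eq:DK} with $\Vert\bm\varepsilon_i^\prime(\wh{\bm\Gamma}^x-\bm\Gamma^\chi)\Vert$ substituted for $\Vert\wh{\bm\Gamma}^x-\bm\Gamma^\chi\Vert$ in the numerator and the same $O(n)$ eigengap in the denominator; combined with part (i) this gives the rate in one line. You correctly observe that the operator-norm Davis--Kahan statement does not by itself deliver such a row-wise ($\ell_{2,\infty}$-type) bound, and you propose a first-order eigenvector perturbation expansion whose two contributions (the projected row of the perturbation over the gap, and $\Vert\mbf v_i^\chi\Vert$ times the operator-norm eigenvector error from Lemma \ref{lem:covarianze}(iv)) you verify are each $O_{\mathrm{ms}}(1)$ after the $\min(\sqrt n,\sqrt T)\sqrt n$ normalization --- that accounting is right, and the cross term indeed does not dominate. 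The gap in your proposal is that this expansion is named but not executed: a rigorous version must control the $i$th entries of \emph{all} non-leading eigenvectors $\mbf v_k^\chi$, $k>r$, appearing in $\sum_{k\ne j}(\mu_j^\chi-\mu_k^\chi)^{-1}\mbf v_k^\chi\mbf v_k^{\chi\prime}(\wh{\bm\Gamma}^x-\bm\Gamma^\chi)\mbf v_j^\chi$ (for which part (ii) gives no information) as well as the second-order remainder, and this is precisely where the substance of the lemma lies. So either carry out that expansion in full, or adopt the paper's shortcut explicitly and justify the row-wise substitution in \eqref{eq:DK}.
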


\begin{proof}
 First notice that
\begin{align}
\max_{i=1,\ldots, n}\E\l[\l\Vert \frac 1{\sqrt n}\bm\varepsilon^\prime_i\l( \sum_{t=1}^T\bm\xi_t\bm\xi_t^\prime -\bm\Gamma^\xi\r) \r\Vert^2\r]
&\le\max_{i=1,\ldots, n}\frac 1{n}\sum_{j=1}^n\E\l[\l(\frac 1T\sum_{t=1}^T \xi_{it}\xi_{jt}-[\bm\Gamma^\xi]_{ij}\r)^2\r]\nn\\
&\le \max_{i,j=1,\ldots, n} \E\l[\l(\frac 1T\sum_{t=1}^T \xi_{it}\xi_{jt}-[\bm\Gamma^\xi]_{ij}\r)^2\r]
\le \frac {K_\xi}{T}\label{eq:Gammaxiappriga},
\end{align}
since $K_\xi$ is independent of $i$ and $j$. Therefore, from Lemma \ref{lem:covarianzeF}(i) and Lemma \ref{lem:FTLN}(i), and using \eqref{eq:Gammaxiappriga}, 
\begin{align}
\max_{i=1,\ldots, n} \E\l[\l\Vert\frac  1{\sqrt n}\bm\varepsilon^\prime_i\l( \wh{\bm\Gamma}^x-\bm\Gamma^x\r)\r\Vert^2\r]
=&\,
\max_{i=1,\ldots, n} \E\l[\l\Vert \frac 1{\sqrt n}\l\{ \bm\lambda_i^\prime \l( \frac 1T\sum_{t=1}^T \mbf F_t\mbf F_t^\prime-\bm\Gamma^F\r)\bm\Lambda^\prime 
 +\bm\varepsilon^\prime_i\l( \frac 1T\sum_{t=1}^T\bm\xi_t\bm\xi_t^\prime-\bm\Gamma^\xi\r)\r\}\r\Vert^2
 \r]\nn\\
\le&\,\max_{i=1,\ldots, n}\Vert \bm\lambda_i\Vert^2\, \l\Vert\frac{\bm\Lambda}{\sqrt n}\r\Vert^2  \, \E\l[\l\Vert \frac 1T\sum_{t=1}^T \mbf F_t\mbf F_t^\prime-\bm\Gamma^F\r\Vert^2\r]\nn\\
&+\max_{i=1,\ldots, n}\E\l[\l\Vert \frac 1{\sqrt n}\bm\varepsilon^\prime_i\l( \frac 1T\sum_{t=1}^T\bm\xi_t\bm\xi_t^\prime-\bm\Gamma^\xi\r)\r\Vert^2
 \r]\nn\\
\le&\, \frac{M_\Lambda^4 r^2 C_F+K_\xi}{T},\label{eq:rigacov}
\end{align}
since $M_\Lambda$ and $C_F$ are independent of $i$. Then, following the same arguments as Lemma \ref{lem:covarianze}(ii), because of \eqref{eq:rigacov} and Lemma \ref{lem:Gxi}(v):
\begin{align}
\max_{i=1,\ldots, n}\E\l[\l\Vert\frac  1{\sqrt n}\bm\varepsilon^\prime_i\l( \wh{\bm\Gamma}^x-\bm\Gamma^\chi\r)\r\Vert^2\r]& \le\max_{i=1,\ldots, n} \E\l[\l\Vert\frac  1{\sqrt n}\bm\varepsilon^\prime_i\l( \wh{\bm\Gamma}^x-\bm\Gamma^x\r)\r\Vert^2\r] +\max_{i=1,\ldots, n}\l\Vert\bm\varepsilon^\prime_i\frac  {\bm\Gamma^\xi}{\sqrt n}\r\Vert^2\nn\\
 &\le  \frac{M_\Lambda^4r^2C_F+K_\xi}{T}+\max_{i=1,\ldots, n}  \Vert \bm\varepsilon_i\Vert^2\, \l\Vert\frac{\bm\Gamma^\xi}{\sqrt n}\r\Vert^2\nn\\
 &=\frac{M_\Lambda^4r^2C_F+K_\xi}{T}+ \frac{\mu_1^\xi}{n} \le \frac{M_\Lambda^4r^2C_F+K_\xi}{T}+ \frac{M_{2\xi}}{n}\le M_1\max\l(\frac 1T,\frac 1n\r),\; \text{say,}\nn
 \end{align}
 since $\Vert \bm\varepsilon_i\Vert=1$ and where $M_1$ is a finite positive real independent of $n$ and $T$ defined in Lemma \ref{lem:covarianze}(ii). This proves part (i).\smallskip

For part (ii), notice that for all $i=1,\ldots, n$ we must have:
\beq\label{eqfinitevarchi}
\Var(\chi_{it}) =\bm\lambda_i^\prime\bm\Gamma^F\bm\lambda_i\le \Vert\bm\lambda_i\Vert^2\,\l\Vert\bm\Gamma^F\r\Vert
\le M_\Lambda^2 M_F,
\eeq
which is finite for all $i$ and $t$. So, since by Lemma \ref{lem:Gxi}(iv)
\[
\liminf_{n\to\infty}\max_{i=1,\ldots, n}\Var(\chi_{it})=\liminf_{n\to\infty}\max_{i=1,\ldots, n}\sum_{j=1}^r \mu_j^\chi [\mbf V^\chi]_{ij}^2\ge 
\liminf_{n\to\infty}\mu_r^\chi \max_{i=1,\ldots, n}\sum_{j=1}^r  [\mbf V^\chi]_{ij}^2 \ge \underline C_r n \max_{i=1,\ldots, n}\Vert\mbf v_i^\chi\Vert^2,
\]
then, because of \eqref{eqfinitevarchi}, we must have, as $n\to\infty$, 
\[
\underline C_r n \max_{i=1,\ldots, n}\Vert\mbf v_i^\chi\Vert^2\le  M_\Lambda^2 M_F
\]
which implies that, as $n\to\infty$,  
\[
n \max_{i=1,\ldots, n}\Vert\mbf v_i^\chi\Vert^2 \le M_V, 
\]
for some finite positive real $M_V$ independent of $n$. This proves part (ii).\smallskip

Finally, using the same arguments in Lemma \ref{lem:covarianze}(iv), from \eqref{eq:DK},  part (i) and Lemma \ref{lem:Gxi}(iv), and since $\mu_0^\chi=\infty$ and $\mu_{r+1}^\chi =0$, as $n\to\infty$
\begin{align}
\max_{i=1,\ldots, n}
\min(n,T)\E\l[\Vert\sqrt n(\wh{\mbf v}_i^{x\prime}-\mbf v_i^{\chi\prime}\mbf J)\Vert^2\r]
&=\max_{i=1,\ldots, n}\min(n,T)\E\l[\Vert\sqrt n\bm\varepsilon_i^\prime(\wh{\mbf V}^x-\mbf V^\chi\mbf J)\Vert^2\r]\nn\\
&\le\max_{i=1,\ldots, n}\ \frac
{\min(n,T) 2^3 \frac r{n^2}  n \E\l[\l\Vert\bm\varepsilon_i^\prime(\wh{\bm\Gamma}^x-\bm\Gamma^\chi) \r\Vert^2\r]}
{\frac 1{n^2}\l\{\min(\vert \mu_{0}^\chi-\mu_{1}^\chi \vert,\vert \mu_{r}^\chi-\mu_{r+1}^\chi \vert)\r\}^2}\nn\\
&=\max_{i=1,\ldots, n}\ \frac
{\min(n,T) 2^3 \frac r{n} \E\l[\l\Vert\bm\varepsilon_i^\prime(\wh{\bm\Gamma}^x-\bm\Gamma^\chi) \r\Vert^2\r]}
{\frac 1{n^2}\l\{\min(\vert \mu_{0}^\chi-\mu_{1}^\chi \vert,\vert \mu_{r}^\chi-\mu_{r+1}^\chi \vert)\r\}^2}\nn\\
&\le \frac{8r M_1}{\underline C_r^2}= M_2,\nn
\end{align}
where $M_2$ is a finite positive real independent of $n$ and $T$  defined in Lemma \ref{lem:covarianze}(iv).
This proves part (iii) and completes the proof. 
\end{proof}

\begin{lem}\label{lem:MO1} Under Assumptions \ref{ass:common} through \ref{ass:ind}, for all $n,T\in\mathbb N$,
\begin{compactenum}[(i)]
\item $\Vert\frac{{\mbf M}^\chi}{n}\Vert =O(1)$;
\item $\Vert(\frac{{\mbf M}^\chi}{n})^{-1}\Vert=O(1)$;
\item $\Vert\frac{\wh{\mbf M}^x}{n}\Vert=O_{\mathrm{P}}(1)$;
\item $\Vert(\frac{\wh{\mbf M}^x}{n})^{-1}\Vert=O_{\mathrm{P}}(1)$.
\end{compactenum}
\end{lem}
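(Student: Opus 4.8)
\textbf{Proof proposal for Lemma \ref{lem:MO1}.}
The plan is to prove the four bounds in the natural order (i)--(iv), since the statements for $\wh{\mbf M}^x$ will follow from those for $\mbf M^\chi$ combined with the consistency result in Lemma \ref{lem:covarianze}(iii). First I would handle parts (i) and (ii), which concern the deterministic population quantity $\mbf M^\chi$. Recall that $\mbf M^\chi$ is the diagonal matrix of the $r$ nonzero eigenvalues $\mu_j^\chi$ of $\bm\Gamma^\chi$. Since $\mbf M^\chi$ is diagonal with positive entries, $\Vert n^{-1}\mbf M^\chi\Vert = \max_{j=1,\ldots,r} n^{-1}\mu_j^\chi$ and $\Vert (n^{-1}\mbf M^\chi)^{-1}\Vert = \max_{j=1,\ldots,r} (n^{-1}\mu_j^\chi)^{-1} = (\min_{j=1,\ldots,r} n^{-1}\mu_j^\chi)^{-1}$. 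By Lemma \ref{lem:Gxi}(iv), for each $j=1,\ldots,r$ we have $\underline C_j \le \liminf_n n^{-1}\mu_j^\chi \le \limsup_n n^{-1}\mu_j^\chi \le \overline C_j$ with $\underline C_j, \overline C_j$ finite and strictly positive; hence $n^{-1}\mu_j^\chi$ is bounded above by $\max_j \overline C_j < \infty$ and bounded below by $\min_j \underline C_j > 0$, for all $n$ large enough (and for all $n\in\mathbb N$ since eigenvalues form an increasing sequence in $n$, as noted in the proof of Proposition \ref{prop:L}). This gives parts (i) and (ii).

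For part (iii), I would use Weyl's inequality together with Lemma \ref{lem:covarianze}(iii). Specifically, for any $j=1,\ldots,r$,
\[
\frac{\wh\mu_j^x}{n} \le \frac{\mu_j^\chi}{n} + \left\vert \frac{\wh\mu_j^x - \mu_j^\chi}{n}\right\vert \le \frac{\mu_j^\chi}{n} + \left\Vert \frac{\wh{\mbf M}^x - \mbf M^\chi}{n}\right\Vert,
\]
and since $\Vert n^{-1}(\wh{\mbf M}^x - \mbf M^\chi)\Vert = O_{\mathrm P}(\max(n^{-1}, T^{-1/2})) = o_{\mathrm P}(1)$ by Lemma \ref{lem:covarianze}(iii) while $n^{-1}\mu_j^\chi = O(1)$ by part (i), we get $\Vert n^{-1}\wh{\mbf M}^x\Vert = \max_j n^{-1}\wh\mu_j^x = O_{\mathrm P}(1)$.

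Part (iv) is the one requiring a little more care, since it is a bound on an inverse and we must rule out that the smallest estimated eigenvalue degenerates. I would again apply Weyl's inequality in the form $\vert n^{-1}\wh\mu_r^x - n^{-1}\mu_r^\chi\vert \le \Vert n^{-1}(\wh{\mbf M}^x - \mbf M^\chi)\Vert = O_{\mathrm P}(\max(n^{-1}, T^{-1/2}))$, which is $o_{\mathrm P}(1)$. By part (ii), $n^{-1}\mu_r^\chi \ge \min_j \underline C_j > 0$ uniformly in $n$, so $n^{-1}\wh\mu_r^x \ge \min_j \underline C_j - o_{\mathrm P}(1)$, which is bounded away from zero with probability tending to one. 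Since $\wh{\mbf M}^x$ is diagonal, $\Vert (n^{-1}\wh{\mbf M}^x)^{-1}\Vert = (\min_{j=1,\ldots,r} n^{-1}\wh\mu_j^x)^{-1} = (n^{-1}\wh\mu_r^x)^{-1} = O_{\mathrm P}(1)$, which proves part (iv) and completes the proof. The only genuine obstacle is establishing that $n^{-1}\wh\mu_r^x$ stays uniformly bounded below in probability, and this is dispatched precisely by the $o_{\mathrm P}(1)$ perturbation bound of Lemma \ref{lem:covarianze}(iii) against the uniform-in-$n$ lower bound of Lemma \ref{lem:Gxi}(iv); no subtle high-dimensional argument about the spectrum of $\wh{\bm\Gamma}^x$ beyond its top $r$ eigenvalues is needed here. \hfill $\Box$
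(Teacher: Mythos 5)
Your proof is correct and follows essentially the same route as the paper: parts (i)--(ii) from the eigenvalue bounds of Lemma \ref{lem:Gxi}(iv), part (iii) by combining part (i) with the perturbation bound of Lemma \ref{lem:covarianze}(iii), and part (iv) by noting that the same perturbation bound keeps $n^{-1}\wh\mu_r^x$ bounded away from zero with probability tending to one. Your write-up of part (iv) is in fact slightly more explicit than the paper's one-line argument, but the content is identical.
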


\begin{proof}
 Parts (i) and (ii) follow directly from Lemma \ref{lem:Gxi}(iv), indeed,
\[
\l\Vert\frac{{\mbf M}^\chi}{n}\r\Vert =\frac{\mu_1^\chi}{n}\le \overline C_1,
\]
and
\[
\l\Vert\l(\frac{{\mbf M}^\chi}{n}\r)^{-1}\r\Vert = \frac n{\mu_r^\chi}\le \frac 1{\underline C_r}.
\]
Both statements hold for all $n\in\mathbb N$ since the eigenvalues are an increasing sequence in $n$.

For part (iii), because of part (i) and Lemma \ref{lem:covarianze}(iii),
\begin{align}
\l\Vert\frac{\wh{\mbf M}^x}{n}\r\Vert\le 
\l\Vert\frac{{\mbf M}^\chi}{n}\r\Vert+
\l\Vert\frac{\wh{\mbf M}^x}{n}-\frac{{\mbf M}^\chi}{n}\r\Vert\le \overline C_1+O_{\mathrm P}\l(\max\l(\frac1{\sqrt n},\frac 1{\sqrt T}\r)\r).\nn
\end{align}
For part (iv) just notice that, because of Lemma \ref{lem:covarianze}(iii) and part (ii), then  $\frac{\wh{\mbf M}^x}{n}$ is positive definite with probability tending to one as $n,T\to\infty$.
This completes the proof. 
\end{proof}

\begin{lem}\label{lem:Vzero} Under Assumptions \ref{ass:common} through \ref{ass:ind}, 
\begin{compactenum}[(i)]
\item $\Vert\frac{\mbf M^\chi}{n}-\bm V_0 \Vert=o(1)$, as $n\to\infty$ and $\Vert \bm V_0\Vert = O(1)$;
\item $\Vert\frac{\wh{\mbf M}^x}{n}-\bm V_0 \Vert=o_{\mathrm {ms}}(1)$, as $n,T\to\infty$;
\item $\Vert(\frac{\mbf M^\chi}{n})^{-1}-\bm V_0^{-1} \Vert=o(1)$, as $n\to\infty$ and $\Vert \bm V_0^{-1}\Vert = O(1)$
\item $\Vert(\frac{\wh{\mbf M}^x}{n})^{-1}-\bm V_0^{-1} \Vert=o_{\mathrm {ms}}(1)$, as $n,T\to\infty$,
\end{compactenum}
where $\bm V_0$ is $r\times r$ diagonal with entries the eigenvalues of $\bm\Sigma_\Lambda\bm\Gamma^F$ sorted in descending order.
\end{lem}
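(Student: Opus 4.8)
\textbf{Proof plan for Lemma~\ref{lem:Vzero}.}
The four statements are all elementary consequences of Assumption~\ref{ass:common}(a), Assumption~\ref{ass:eval}, Lemma~\ref{lem:Gxi}, and Lemma~\ref{lem:covarianze}(iii), so the plan is to assemble them in sequence rather than to prove anything new. For part~(i), I would start from the fact, already used in the proof of Proposition~\ref{prop:KKK}(a) (see \eqref{eq:limB}), that $\lim_{n\to\infty}\Vert (\bm\Gamma^F)^{1/2}\,n^{-1}{\bm\Lambda^\prime\bm\Lambda}\,(\bm\Gamma^F)^{1/2}-(\bm\Gamma^F)^{1/2}\bm\Sigma_\Lambda(\bm\Gamma^F)^{1/2}\Vert=0$, which follows from Assumption~\ref{ass:common}(a) and submultiplicativity of the spectral norm. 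By Proposition~\ref{prop:K00}(a), $n^{-1}{\mbf M^\chi}=n^{-1}{\bm\Lambda^\prime\bm\Lambda}$, and its diagonal entries are exactly the $r$ nonzero eigenvalues of $n^{-1}{\bm\Gamma^\chi}$, hence of $(\bm\Gamma^F)^{1/2}\,n^{-1}{\bm\Lambda^\prime\bm\Lambda}\,(\bm\Gamma^F)^{1/2}$; similarly the entries of $\bm V_0$ are the eigenvalues of $(\bm\Gamma^F)^{1/2}\bm\Sigma_\Lambda(\bm\Gamma^F)^{1/2}$ (which has the same nonzero spectrum as $\bm\Sigma_\Lambda\bm\Gamma^F$). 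Continuity of eigenvalues (Weyl's inequality) then upgrades the matrix convergence above to $\Vert n^{-1}{\mbf M^\chi}-\bm V_0\Vert\to 0$; this is precisely \eqref{eq:MchiV0}, so part~(i) is already recorded in the paper. The bound $\Vert \bm V_0\Vert=O(1)$ follows because the eigenvalues of $\bm\Sigma_\Lambda\bm\Gamma^F$ lie between $m_\Lambda^2 m_F$ and $M_\Lambda^2 M_F$, as noted in the proof of Lemma~\ref{lem:Gxi}(iv).

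For part~(ii), I would write $\Vert n^{-1}{\wh{\mbf M}^x}-\bm V_0\Vert\le \Vert n^{-1}{\wh{\mbf M}^x}-n^{-1}{\mbf M^\chi}\Vert+\Vert n^{-1}{\mbf M^\chi}-\bm V_0\Vert$. The first term is $O_{\mathrm{ms}}(\max(n^{-1},T^{-1/2}))=o_{\mathrm{ms}}(1)$ by Lemma~\ref{lem:covarianze}(iii), and the second is $o(1)$ by part~(i); hence the sum is $o_{\mathrm{ms}}(1)$ as $n,T\to\infty$.

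For parts~(iii) and~(iv), I would use the standard perturbation bound for inverses of diagonal positive‑definite matrices: if $\bm A,\bm B$ are invertible then $\bm A^{-1}-\bm B^{-1}=\bm A^{-1}(\bm B-\bm A)\bm B^{-1}$, so $\Vert \bm A^{-1}-\bm B^{-1}\Vert\le \Vert \bm A^{-1}\Vert\,\Vert \bm B^{-1}\Vert\,\Vert \bm A-\bm B\Vert$. Taking $\bm A=n^{-1}{\mbf M^\chi}$ and $\bm B=\bm V_0$, part~(iii) follows from part~(i) once we know $\Vert (n^{-1}{\mbf M^\chi})^{-1}\Vert=O(1)$ (Lemma~\ref{lem:MO1}(ii)) and $\Vert\bm V_0^{-1}\Vert=O(1)$; the latter holds because the smallest eigenvalue of $\bm\Sigma_\Lambda\bm\Gamma^F$ is bounded below by $m_\Lambda^2 m_F>0$, exactly as in the proof of Lemma~\ref{lem:Gxi}(iv). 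For part~(iv) I take $\bm A=n^{-1}{\wh{\mbf M}^x}$ (which is positive definite with probability tending to one, Lemma~\ref{lem:MO1}(iv)) and $\bm B=\bm V_0$; then $\Vert \bm A^{-1}\Vert=O_{\mathrm{P}}(1)$ by Lemma~\ref{lem:MO1}(iv), $\Vert \bm B^{-1}\Vert=O(1)$ by part~(iii), and $\Vert \bm A-\bm B\Vert=o_{\mathrm{ms}}(1)$ by part~(ii), so the product is $o_{\mathrm{ms}}(1)$. There is no real obstacle here: the only mild care needed is to keep track of ``$O_{\mathrm{ms}}$'' versus ``$O_{\mathrm{P}}$'' (mean‑square versus in‑probability) when multiplying the bounded‑inverse factors against the vanishing perturbation, and to invoke Lemma~\ref{lem:MO1}(iv) to justify invertibility of $\wh{\mbf M}^x$ in parts~(ii) and~(iv). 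This completes the proof.
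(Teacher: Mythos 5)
Your proof is correct and follows essentially the same route as the paper's: the identification of the entries of $n^{-1}\mbf M^\chi$ with the eigenvalues of $(\bm\Gamma^F)^{1/2}n^{-1}\bm\Lambda^\prime\bm\Lambda(\bm\Gamma^F)^{1/2}$ plus Assumption \ref{ass:common}(a) and continuity of eigenvalues for (i), the triangle inequality with Lemma \ref{lem:covarianze}(iii) for (ii), and the resolvent identity $\bm A^{-1}-\bm B^{-1}=\bm A^{-1}(\bm B-\bm A)\bm B^{-1}$ together with Lemma \ref{lem:MO1} for (iii)--(iv). Two minor quibbles: you do not need Proposition \ref{prop:K00}(a) (which rests on Assumption \ref{ass:ident}, not among this lemma's hypotheses) because the entries of $\mbf M^\chi$ are the nonzero eigenvalues of $\bm\Gamma^\chi$ by definition, and, as you yourself flag, controlling $(n^{-1}\wh{\mbf M}^x)^{-1}$ only in probability strictly delivers an $o_{\mathrm P}(1)$ rather than $o_{\mathrm{ms}}(1)$ statement in (iv) --- a looseness the paper shares, since its own proof stops after part (iii).
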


\begin{proof}
 For part (i), first notice that the $r$ non-zero eigenvalues of $\frac{\bm\Gamma^\chi}{n}$ are also the $r$ eigenvalues of $(\bm\Gamma^F)^{1/2}\frac{\bm\Lambda^\prime\bm\Lambda}{n}(\bm\Gamma^F)^{1/2}$ which in turn are also the entries of $\bm V_0$. The proof follows from continuity of eigenvalues and since, because of Assumption \ref{ass:common}(a), as $n\to\infty$, 
\[
\l\Vert(\bm\Gamma^F)^{1/2}\frac{\bm\Lambda^\prime\bm\Lambda}{n}(\bm\Gamma^F)^{1/2}
-(\bm\Gamma^F)^{1/2}\bm\Sigma_\Lambda(\bm\Gamma^F)^{1/2}
\r\Vert=o(1).\nn
\]
Part (ii) is a consequence of part (i) and Lemma \ref{lem:MO1}(iii). \smallskip
 For part (iii) notice that
 \[
 \l\Vert\l(\frac{\mbf M^\chi}{n}\r)^{-1}-\bm V_0^{-1} \r\Vert \le\l \Vert\l(\frac{\mbf M^\chi}{n}\r)^{-1}\r\Vert\,
 \l\Vert\frac{\mbf M^\chi}{n}-\bm V_0 \r\Vert
\, \l\Vert
\bm V_0^{-1}
 \r\Vert,
 \] 
 then the proof follows from part (i), Lemma \ref{lem:MO1}(ii), and since  $\bm V_0$ is positive definite since $\bm\Sigma_\Lambda$ and $\bm\Gamma^F$ are positive definite by Assumptions \ref{ass:common}(a) and \ref{ass:common}(b), respectively. This completes the proof. 
 \end{proof}
 
\begin{lem}\label{lem:KO1} Under Assumptions  \ref{ass:common} through \ref{ass:ind}, as $n\to\infty$,
\begin{compactenum}[(i)]
\item $\Vert{\mbf K}\Vert=O(1)$;
\item $\Vert{\mbf K}^{-1}\Vert=O(1)$.
\end{compactenum}
\end{lem}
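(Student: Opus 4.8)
The plan is to bound the spectral norms of $\mbf K$ and $\mbf K^{-1}$ directly from the closed-form expressions \eqref{eq:kappa} and \eqref{eq:kappainv} obtained in the proof of Proposition \ref{prop:L}, namely $\mbf K=(\bm\Gamma^F)^{-1/2}(\bm\Lambda^\prime\bm\Lambda)^{-1}\bm\Lambda^\prime\mbf V^\chi(\mbf M^\chi)^{1/2}$ and $\mbf K^{-1}=(\mbf M^\chi)^{-1/2}\mbf V^{\chi\prime}\bm\Lambda(\bm\Gamma^F)^{1/2}$. These are well defined for all $n>N$: $(\bm\Lambda^\prime\bm\Lambda)^{-1}$ exists because $\mathrm{rk}(\bm\Lambda)=r$ (see the proof of Proposition \ref{prop:L}), $(\bm\Gamma^F)^{\pm1/2}$ exists because $\bm\Gamma^F$ is positive definite by Assumption \ref{ass:common}(b), and $(\mbf M^\chi)^{\pm1/2}$ is real because the entries of $\mbf M^\chi$ are strictly positive by Lemma \ref{lem:Gxi}(iv). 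The key point is simply that the two factors of order $\sqrt n$ coming from $\bm\Lambda$ and $(\mbf M^\chi)^{1/2}$ are exactly cancelled by the factor of order $1/n$ (resp. of order $1$ after taking inverses) coming from $(\bm\Lambda^\prime\bm\Lambda)^{-1}$.

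For part (i), I would use submultiplicativity of the spectral norm to write
\[
\Vert\mbf K\Vert \le \Vert(\bm\Gamma^F)^{-1/2}\Vert\,\Vert(\bm\Lambda^\prime\bm\Lambda)^{-1}\Vert\,\Vert\bm\Lambda\Vert\,\Vert\mbf V^\chi\Vert\,\Vert(\mbf M^\chi)^{1/2}\Vert,
\]
and bound each term: $\Vert(\bm\Gamma^F)^{-1/2}\Vert=\mu_r(\bm\Gamma^F)^{-1/2}=O(1)$ by Assumption \ref{ass:common}(b); $\Vert(\bm\Lambda^\prime\bm\Lambda)^{-1}\Vert=\mu_r(\bm\Lambda^\prime\bm\Lambda)^{-1}=O(n^{-1})$ because $\mu_r(n^{-1}\bm\Lambda^\prime\bm\Lambda)\to\mu_r(\bm\Sigma_\Lambda)>0$ by Assumption \ref{ass:common}(a) and continuity of eigenvalues; $\Vert\bm\Lambda\Vert=O(\sqrt n)$ by Lemma \ref{lem:FTLN}(i); $\Vert\mbf V^\chi\Vert=1$ since its columns are normalized eigenvectors; and $\Vert(\mbf M^\chi)^{1/2}\Vert=(\mu_1^\chi)^{1/2}=O(\sqrt n)$ by Lemma \ref{lem:MO1}(i). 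Multiplying, $\Vert\mbf K\Vert=O(1)\,O(n^{-1})\,O(\sqrt n)\,O(\sqrt n)=O(1)$.

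For part (ii), the same argument applied to $\mbf K^{-1}$ gives $\Vert\mbf K^{-1}\Vert\le\Vert(\mbf M^\chi)^{-1/2}\Vert\,\Vert\mbf V^{\chi\prime}\Vert\,\Vert\bm\Lambda\Vert\,\Vert(\bm\Gamma^F)^{1/2}\Vert$, where $\Vert(\mbf M^\chi)^{-1/2}\Vert=(\mu_r^\chi)^{-1/2}=O(n^{-1/2})$ by Lemma \ref{lem:MO1}(ii), $\Vert\mbf V^{\chi\prime}\Vert=1$, $\Vert\bm\Lambda\Vert=O(\sqrt n)$ by Lemma \ref{lem:FTLN}(i), and $\Vert(\bm\Gamma^F)^{1/2}\Vert=\mu_1(\bm\Gamma^F)^{1/2}\le M_F^{1/2}=O(1)$ by Assumption \ref{ass:common}(b); hence $\Vert\mbf K^{-1}\Vert=O(n^{-1/2})\,O(\sqrt n)\,O(1)=O(1)$.

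I do not expect a genuine obstacle here; the lemma is a bookkeeping consequence of estimates already in place, and the only thing to be careful about is recording that all quantities are well defined for $n>N$ and that the powers of $n$ cancel exactly. As a consistency check one may note that, using $\bm\Gamma^\chi=\bm\Lambda\bm\Gamma^F\bm\Lambda^\prime=\mbf V^\chi\mbf M^\chi\mbf V^{\chi\prime}$, a short computation gives $\mbf K\mbf K^\prime=(\bm\Gamma^F)^{-1/2}(\bm\Lambda^\prime\bm\Lambda)^{-1}\bm\Lambda^\prime\bm\Gamma^\chi\bm\Lambda(\bm\Lambda^\prime\bm\Lambda)^{-1}(\bm\Gamma^F)^{-1/2}=\mbf I_r$, so that $\mbf K$ is in fact orthogonal and $\Vert\mbf K\Vert=\Vert\mbf K^{-1}\Vert=1$ exactly; but the explicit bounds above already suffice for the $O(1)$ statements claimed in the lemma.
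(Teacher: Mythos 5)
Your proof is correct, but it takes a different route from the paper's. The paper proves this lemma by invoking the convergence $\Vert\mbf K-\mbf J\bm\Upsilon_0\Vert=o(1)$ established in the proof of Proposition \ref{prop:KKK}(a) via the Yu--Wang--Samworth eigenvector perturbation bound, and then uses the fact that the limit $\bm\Upsilon_0$ is a (finite, orthogonal) matrix of eigenvectors of $(\bm\Gamma^F)^{1/2}\bm\Sigma_\Lambda(\bm\Gamma^F)^{1/2}$; part (ii) additionally leans on Assumption \ref{ass:eval} to argue invertibility of the limit. Your argument is more elementary and self-contained: you bound $\Vert\mbf K\Vert$ and $\Vert\mbf K^{-1}\Vert$ directly by submultiplicativity from the closed forms \eqref{eq:kappa} and \eqref{eq:kappainv}, checking that the $O(\sqrt n)$ factors from $\bm\Lambda$ and $(\mbf M^\chi)^{1/2}$ cancel against $\Vert(\bm\Lambda^\prime\bm\Lambda)^{-1}\Vert=O(n^{-1})$ (resp.\ $\Vert(\mbf M^\chi)^{-1/2}\Vert=O(n^{-1/2})$); every individual bound you cite is available from Assumption \ref{ass:common}, Lemma \ref{lem:FTLN}(i), and Lemma \ref{lem:MO1}, so no eigenvector-convergence machinery is needed and no asymptotic-distinctness condition on eigenvalues enters. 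Your closing ``consistency check'' that $\mbf K\mbf K^\prime=\mbf I_r$ is in fact exactly the computation the paper performs inside the proof of Proposition \ref{prop:KKK}(a) (the display showing $\mbf J\mbf K\mbf K^\prime\mbf J=\mbf I_r$), and it yields the sharpest statement $\Vert\mbf K\Vert=\Vert\mbf K^{-1}\Vert=1$ for every $n>N$, which is arguably the cleanest proof of the lemma; either of your two arguments suffices.
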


\begin{proof}
 For part (a), from \eqref{eq:ups} in the proof of Proposition \ref{prop:KKK}(a) it follows that
\beq\label{eq:ups9}
 \l\Vert\mbf K-\mbf J\bm\Upsilon_0\r\Vert =o(1),
\eeq
where $\bm\Upsilon_0$ is the $r\times r$ matrix of normalized eigenvectors of $(\bm\Gamma^F)^{1/2}\bm\Sigma_\Lambda(\bm\Gamma^F)^{1/2}$ and $\mbf J$ is a diagonal matrix with entries $\pm 1$. Part (i) follows from the fact that $\Vert \mbf J\bm\Upsilon_0\Vert = O(1)$, since $(\bm\Gamma^F)^{1/2}\bm\Sigma_\Lambda(\bm\Gamma^F)^{1/2}$ is finite. Likewise part (ii) follows from the fact that $\mbf J$ is obviously positive definite and $\bm\Upsilon_0$ is also positive definite because the eigenvalues of $(\bm\Gamma^F)^{1/2}\bm\Sigma_\Lambda(\bm\Gamma^F)^{1/2}$ are distinct by Assumption \ref{ass:eval}. This completes the proof. 
\end{proof}

\begin{lem}\label{lem:HO1bis} Under Assumptions  \ref{ass:common} through \ref{ass:ind}, as $n\to\infty$,
$\Vert{\bm{\mathcal H}}\Vert=O(1)$.
\end{lem}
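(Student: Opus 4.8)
\textbf{Proof proposal for Lemma \ref{lem:HO1bis}.} The plan is to go back to the explicit definition of $\bm{\mathcal H}$ given in Proposition \ref{prop:L}, namely
\[
\bm{\mathcal H}=(\bm\Lambda^\prime\bm\Lambda)^{-1}\bm\Lambda^\prime\mbf V^\chi(\mbf M^\chi)^{1/2}\mbf J,
\]
and bound each factor separately using the submultiplicativity of the spectral norm. Since $\mbf J$ is a diagonal sign matrix we have $\Vert\mbf J\Vert=1$, so it suffices to control $\Vert(\bm\Lambda^\prime\bm\Lambda)^{-1}\Vert$, $\Vert\bm\Lambda\Vert$, $\Vert\mbf V^\chi\Vert$, and $\Vert(\mbf M^\chi)^{1/2}\Vert$. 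Writing $\bm{\mathcal H}=n\,(\bm\Lambda^\prime\bm\Lambda/n)^{-1}\,(\bm\Lambda^\prime/\sqrt n)\,\mbf V^\chi\,n^{-1/2}(\mbf M^\chi)^{1/2}$ keeps the normalizations transparent: the three powers of $n$ combine as $n\cdot n^{-1/2}\cdot n^{-1/2}=1$, so no divergence arises from the scaling.

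The key steps, in order, are: (i) $\Vert(\bm\Lambda^\prime\bm\Lambda/n)^{-1}\Vert=O(1)$, which follows from Proposition \ref{prop:K00}(a), giving $\bm\Lambda^\prime\bm\Lambda/n=\mbf M^\chi/n$, together with Lemma \ref{lem:MO1}(ii) which states $\Vert(\mbf M^\chi/n)^{-1}\Vert=O(1)$; (ii) $\Vert\bm\Lambda/\sqrt n\Vert=O(1)$ directly from Lemma \ref{lem:FTLN}(i); (iii) $\Vert\mbf V^\chi\Vert=1$ because $\mbf V^\chi$ has orthonormal columns (this is used repeatedly elsewhere in the excerpt, e.g. in the proof of Proposition \ref{prop:L}); (iv) $\Vert n^{-1/2}(\mbf M^\chi)^{1/2}\Vert=\Vert(\mbf M^\chi/n)^{1/2}\Vert=O(1)$ from Lemma \ref{lem:MO1}(i), since the spectral norm of a positive semidefinite matrix's square root is the square root of its spectral norm. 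Multiplying these four bounds gives $\Vert\bm{\mathcal H}\Vert=O(1)$, which is the claim.

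Alternatively — and even more cheaply — one can invoke Proposition \ref{prop:K00}(b), which under Assumptions \ref{ass:common} and \ref{ass:ident} gives $\bm\Lambda=\mbf V^\chi(\mbf M^\chi)^{1/2}$. Substituting this into the definition of $\bm{\mathcal H}$ and using $\mbf V^{\chi\prime}\mbf V^\chi=\mbf I_r$ yields $\bm\Lambda^\prime\bm\Lambda=\mbf M^\chi$ and hence $\bm{\mathcal H}=(\mbf M^\chi)^{-1}(\mbf M^\chi)^{1/2}\mbf V^{\chi\prime}\mbf V^\chi(\mbf M^\chi)^{1/2}\mbf J=\mbf J$, so in fact $\Vert\bm{\mathcal H}\Vert=1$ exactly under the identification constraints; the $O(1)$ bound is then immediate. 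I would present the first (norm-by-norm) argument as the main proof since it does not rely on the sign matrix $\mbf J$ being pinned down, and mention the exact identity $\bm{\mathcal H}=\mbf J$ as a remark.

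There is essentially no obstacle here: every ingredient is a previously established auxiliary lemma, and the only point requiring a moment's care is bookkeeping the powers of $n$ so that the normalized quantities $\bm\Lambda^\prime\bm\Lambda/n$, $\bm\Lambda/\sqrt n$, $\mbf M^\chi/n$ all appear with their established $O(1)$ bounds and the overall scaling cancels. The statement is a short consequence of Propositions \ref{prop:K00}, Lemma \ref{lem:FTLN}(i), and Lemma \ref{lem:MO1}, and the proof is two or three lines. \hfill $\Box$
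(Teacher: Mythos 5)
Your argument is correct in substance but takes a genuinely different route from the paper. The paper does not bound the projection formula directly: it uses the representation $\bm{\mathcal H}=(\bm\Gamma^F)^{1/2}\mbf K\mbf J$ obtained in the proof of Proposition \ref{prop:L} (equations \eqref{eq:VKH} and \eqref{eq:mcH}) and then invokes Lemma \ref{lem:KO1}(i) for $\Vert\mbf K\Vert=O(1)$, which itself rests on the convergence of $\mbf K$ to the orthogonal matrix $\mbf J\bm\Upsilon_0$. Your factor-by-factor bound on $(\bm\Lambda^\prime\bm\Lambda)^{-1}\bm\Lambda^\prime\mbf V^\chi(\mbf M^\chi)^{1/2}\mbf J$ is more elementary and avoids the eigenvector-perturbation machinery behind Lemma \ref{lem:KO1}; it buys a self-contained two-line proof at the cost of not exhibiting the structural identity that the paper reuses elsewhere. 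Your closing remark that $\bm{\mathcal H}=\mbf J$ exactly under the identification constraints is correct and is consistent with Proposition \ref{prop:H}.

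Two small points to fix. First, the displayed normalization is off by a factor of $n$: since $(\bm\Lambda^\prime\bm\Lambda)^{-1}=n^{-1}(\bm\Lambda^\prime\bm\Lambda/n)^{-1}$, $\bm\Lambda^\prime=\sqrt n\,(\bm\Lambda^\prime/\sqrt n)$ and $(\mbf M^\chi)^{1/2}=\sqrt n\,(\mbf M^\chi/n)^{1/2}$, the compensating powers are $n^{-1}\cdot n^{1/2}\cdot n^{1/2}=1$ and the correct identity is simply $\bm{\mathcal H}=(\bm\Lambda^\prime\bm\Lambda/n)^{-1}(\bm\Lambda^\prime/\sqrt n)\,\mbf V^\chi(\mbf M^\chi/n)^{1/2}\mbf J$, with no leading $n$; as written, your expression equals $n\,\bm{\mathcal H}\mbf J$. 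The conclusion survives because the four normalized factors are each $O(1)$, but the bookkeeping as stated would give $O(n)$. Second, the lemma is stated under Assumptions \ref{ass:common} through \ref{ass:ind} only, whereas your step (i) routes through Proposition \ref{prop:K00}(a), which requires the identification Assumption \ref{ass:ident}. You should instead obtain $\Vert(\bm\Lambda^\prime\bm\Lambda/n)^{-1}\Vert=O(1)$ directly from Assumption \ref{ass:common}(a): $n^{-1}\bm\Lambda^\prime\bm\Lambda\to\bm\Sigma_\Lambda$ positive definite, so its smallest eigenvalue is bounded away from zero for $n$ large (this is exactly the observation opening the proof of Proposition \ref{prop:L}). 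With those two repairs the proof is complete.
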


\begin{proof}
 From  \eqref{eq:VKH} and \eqref{eq:mcH}  in the proof of Proposition \ref{prop:L} $\bm{\mathcal H}=(\bm\Gamma^F )^{1/2}\mbf K\mbf J$. Then, the proof follows immediately from Assumption \ref{ass:common}(b), Lemma \ref{lem:KO1}(i), and since $\mbf J$ is obviously finite and positive definite.
 \end{proof}

\begin{lem}\label{lem:HO1} Under Assumptions  \ref{ass:common} through \ref{ass:ind}, as $n,T\to\infty$,
$\Vert\wh{\mbf H}\Vert=O_{\mathrm {P}}(1)$.
\end{lem}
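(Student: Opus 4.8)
The plan is to bound $\wh{\mbf H}$ using its definition in \eqref{eq:acca}, namely $\wh{\mbf H}=\l(\frac{\bm\Lambda^\prime\wh{\bm\Lambda}}{n}\r)\l(\frac{\wh{\mbf M}^x}{n}\r)^{-1}$, which already uses Assumption \ref{ass:ident}(b). By submultiplicativity of the spectral norm,
\[
\Vert \wh{\mbf H}\Vert \le \l\Vert \frac{\bm\Lambda^\prime\wh{\bm\Lambda}}{n}\r\Vert \,\l\Vert \l(\frac{\wh{\mbf M}^x}{n}\r)^{-1}\r\Vert.
\]
The second factor is $O_{\mathrm P}(1)$ directly from Lemma \ref{lem:MO1}(iv), so the whole task reduces to showing $\l\Vert \frac{\bm\Lambda^\prime\wh{\bm\Lambda}}{n}\r\Vert = O_{\mathrm P}(1)$.

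For that, I would add and subtract $\bm\Lambda\bm{\mathcal H}$ (or $\bm\Lambda\wh{\mbf H}$) and split
\[
\l\Vert \frac{\bm\Lambda^\prime\wh{\bm\Lambda}}{n}\r\Vert \le \l\Vert \frac{\bm\Lambda^\prime(\wh{\bm\Lambda}-\bm\Lambda\bm{\mathcal H})}{n}\r\Vert + \l\Vert \frac{\bm\Lambda^\prime\bm\Lambda}{n}\r\Vert\,\l\Vert\bm{\mathcal H}\r\Vert.
\]
The first term is bounded by $\l\Vert\frac{\bm\Lambda}{\sqrt n}\r\Vert\,\l\Vert\frac{\wh{\bm\Lambda}-\bm\Lambda\bm{\mathcal H}}{\sqrt n}\r\Vert$, which is $O(1)\cdot O_{\mathrm P}(1)$ by Lemma \ref{lem:FTLN}(i) and Proposition \ref{prop:L} (in fact it is $o_{\mathrm P}(1)$). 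The second term is $O(1)$ by Lemma \ref{lem:aiuto}(iii) together with Lemma \ref{lem:HO1bis}. Combining, $\l\Vert \frac{\bm\Lambda^\prime\wh{\bm\Lambda}}{n}\r\Vert = O_{\mathrm P}(1)$, and hence $\Vert\wh{\mbf H}\Vert = O_{\mathrm P}(1)$.

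There is no real obstacle here: every ingredient is an already-established auxiliary result, and the proof is a two-line norm estimate. The only thing to be a little careful about is that $\bm{\mathcal H}$ is the population rotation matrix from Proposition \ref{prop:L} rather than $\wh{\mbf H}$ itself, so the decomposition must be through $\bm\Lambda\bm{\mathcal H}$ (for which Proposition \ref{prop:L} gives the rate) and not through $\bm\Lambda\wh{\mbf H}$; alternatively, one could invoke Proposition \ref{prop:H} to replace $\wh{\mbf H}$ by the diagonal sign matrix $\bm J$ under the extra rate conditions, but the route via $\bm{\mathcal H}$ avoids needing $n^{-1}\sqrt T\to 0$ and $\sqrt n/T\to 0$ and so is preferable since Lemma \ref{lem:HO1} is stated without those restrictions.
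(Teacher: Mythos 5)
Your proof is correct and takes essentially the same route as the paper's: submultiplicativity of the norm, Lemma \ref{lem:MO1}(iv) for $\Vert(n^{-1}\wh{\mbf M}^x)^{-1}\Vert$, and the decomposition of $\wh{\bm\Lambda}$ through $\bm\Lambda\bm{\mathcal H}$ using Proposition \ref{prop:L}, Lemma \ref{lem:FTLN}(i) and Lemma \ref{lem:HO1bis}. The only (immaterial) difference is that the paper retains the factor $\Vert T^{-1}\bm F^\prime\bm F\Vert$ from the first expression in \eqref{eq:acca} and bounds it via Lemma \ref{lem:FTLN}(ii), since Assumption \ref{ass:ident}(b) — which you invoke to drop that factor — is not among the lemma's stated hypotheses (Assumptions \ref{ass:common} through \ref{ass:ind}).
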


\begin{proof}
From \eqref{eq:acca}, by Proposition \ref{prop:L}(a), Lemma \ref{lem:LLN}(v), \ref{lem:MO1}(iv) \ref{lem:HO1bis}(i), we have
\begin{align}
\l\Vert\wh{\mbf H}\r\Vert &\le \l\Vert\frac{\bm F^\prime\bm F}{T}\r\Vert\,
\l\Vert\frac{\bm\Lambda^\prime\wh{\bm\Lambda}}{n}\r\Vert\,
\l\Vert\l(\frac{\wh{\mbf M}^x}{n}\r)^{-1}\r\Vert\le \l\Vert\frac{\bm F}{\sqrt T}\r\Vert^2\, \l\Vert\frac{\bm\Lambda}{\sqrt n}\r\Vert\, \l\Vert\frac{\wh{\bm\Lambda}}{\sqrt n}\r\Vert\,\l\Vert\l(\frac{\wh{\mbf M}^x}{n}\r)^{-1}\r\Vert\nn\\
&\le\l\Vert\frac{\bm F}{\sqrt T}\r\Vert^2\, \l\Vert\frac{\bm\Lambda}{\sqrt n}\r\Vert^2\, \l\Vert\bm{\mathcal H}\r\Vert\, \l\Vert\l(\frac{\wh{\mbf M}^x}{n}\r)^{-1}\r\Vert+ \l\Vert\frac{\bm F}{\sqrt T}\r\Vert^2\, \l\Vert\frac{\bm\Lambda}{\sqrt n}\r\Vert\, \l\Vert\frac{\wh{\bm\Lambda}-\bm\Lambda\bm{\mathcal H}}{\sqrt n}\r\Vert\,\l\Vert\l(\frac{\wh{\mbf M}^x}{n}\r)^{-1}\r\Vert\nn\\
&= 
O_{\mathrm P}(1)+ O_{\mathrm P}\l(\max\l(\frac 1{\sqrt n},\frac 1{\sqrt T}\r)\r),\nn
\end{align}
which  completes the proof.
\end{proof}

\begin{lem}\label{lem:woodbury}
Under Assumptions \ref{ass:common} through \ref{ass:ind} and letting $\bm\Sigma^\xi$ be the $n\times n$ diagonal matrix with diagonal entries the diagonal entries of $\bm\Gamma^\xi$, as $n\to\infty$,
$$
n^{-1}\l\Vert\{\mbf I_r+{\bm\Lambda}^\prime({\bm\Sigma}^\xi)^{-1} {\bm\Lambda}\}^{-1}\l\{{\bm\Lambda}^\prime({\bm\Sigma}^\xi)^{-1}{\bm\Lambda}\r\}-\mbf I_r\r\Vert= O\l(1\r).
$$
\end{lem}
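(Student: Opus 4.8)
The plan is to sidestep the Woodbury identity and reduce the statement to a single operator‑norm bound. Write $\bm A:=\bm\Lambda^\prime(\bm\Sigma^\xi)^{-1}\bm\Lambda$ and $\bm P:=\mbf I_r+\bm A$. Since by Assumption \ref{ass:idio}(a) we have $\sigma_i^2\in[C_\xi,C_\xi^\prime]$, the matrix $(\bm\Sigma^\xi)^{-1}$ is symmetric positive definite, so $\bm A$ is symmetric positive semidefinite and $\bm P$ has all eigenvalues $\ge 1$, hence is invertible for every $n$. The one computation that does all the work is the elementary identity
\[
\bm P^{-1}\bm A-\mbf I_r=\bm P^{-1}(\bm P-\mbf I_r)-\mbf I_r=(\mbf I_r-\bm P^{-1})-\mbf I_r=-\bm P^{-1},
\]
so that
\[
\l\Vert\{\mbf I_r+\bm\Lambda^\prime(\bm\Sigma^\xi)^{-1}\bm\Lambda\}^{-1}\{\bm\Lambda^\prime(\bm\Sigma^\xi)^{-1}\bm\Lambda\}-\mbf I_r\r\Vert=\l\Vert\bm P^{-1}\r\Vert,
\]
and it remains only to show $\Vert\bm P^{-1}\Vert=O(n^{-1})$.

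Because $\bm P$ is symmetric positive definite, $\Vert\bm P^{-1}\Vert=1/\mu_r(\bm P)=1/\{1+\mu_r(\bm A)\}\le 1/\mu_r(\bm A)$, so the task is a lower bound of exact order $n$ on $\mu_r(\bm A)$. First, $(\bm\Sigma^\xi)^{-1}\succeq(\max_i\sigma_i^2)^{-1}\mbf I_n\succeq(C_\xi^\prime)^{-1}\mbf I_n$ by Assumption \ref{ass:idio}(a), hence $\bm A\succeq(C_\xi^\prime)^{-1}\bm\Lambda^\prime\bm\Lambda$ and $\mu_r(\bm A)\ge(C_\xi^\prime)^{-1}\mu_r(\bm\Lambda^\prime\bm\Lambda)$. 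Second, from $\Vert\bm\Gamma^F\Vert\le M_F$ (Assumption \ref{ass:common}(b)) we get $\bm\Gamma^F\preceq M_F\mbf I_r$, hence $\bm\Gamma^\chi=\bm\Lambda\bm\Gamma^F\bm\Lambda^\prime\preceq M_F\bm\Lambda\bm\Lambda^\prime$; by Weyl's monotonicity of eigenvalues and the fact that $\bm\Lambda\bm\Lambda^\prime$ and $\bm\Lambda^\prime\bm\Lambda$ have the same $r$ nonzero eigenvalues, $\mu_r^\chi=\mu_r(\bm\Gamma^\chi)\le M_F\,\mu_r(\bm\Lambda\bm\Lambda^\prime)=M_F\,\mu_r(\bm\Lambda^\prime\bm\Lambda)$, i.e. $\mu_r(\bm\Lambda^\prime\bm\Lambda)\ge\mu_r^\chi/M_F$. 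Finally, Lemma \ref{lem:Gxi}(iv) gives $\liminf_{n\to\infty}\mu_r^\chi/n\ge\underline C_r>0$, so there is $N_0$ with $\mu_r^\chi\ge(\underline C_r/2)\,n$ for all $n\ge N_0$. Combining the three steps, $\mu_r(\bm A)\ge\underline C_r\,n/(2M_FC_\xi^\prime)$ for all large $n$, whence $\Vert\bm P^{-1}\Vert\le 2M_FC_\xi^\prime/(\underline C_r\,n)=O(n^{-1})$.

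Putting these together yields $\Vert\{\mbf I_r+\bm\Lambda^\prime(\bm\Sigma^\xi)^{-1}\bm\Lambda\}^{-1}\{\bm\Lambda^\prime(\bm\Sigma^\xi)^{-1}\bm\Lambda\}-\mbf I_r\Vert=O(n^{-1})$, which implies the displayed bound $n^{-1}\Vert\cdots\Vert=O(1)$ a fortiori, and is in fact the sharper rate actually invoked through \eqref{eq:anc2}. There is no genuinely hard step here: the identity $\bm P^{-1}\bm A-\mbf I_r=-\bm P^{-1}$ trivializes the problem, and the only point requiring a little care is translating the pervasiveness condition — which Lemma \ref{lem:Gxi}(iv) states for the eigenvalues $\mu_r^\chi$ of $\bm\Gamma^\chi$ rather than of $\bm\Lambda^\prime\bm\Lambda$ — into a linear‑in‑$n$ lower bound on $\mu_r(\bm\Lambda^\prime\bm\Lambda)$; this is handled by the Loewner‑order inequality $\bm\Gamma^\chi\preceq M_F\bm\Lambda\bm\Lambda^\prime$ coming from $\Vert\bm\Gamma^F\Vert\le M_F$ (equivalently, by the multiplicative eigenvalue inequalities in \citet[Theorem 7]{MK04}). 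Note that neither Assumption \ref{ass:eval} nor the identification constraint in Assumption \ref{ass:ident} is used, consistently with the hypotheses of the lemma.
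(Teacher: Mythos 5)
Your proof is correct and follows essentially the same route as the paper's: the identity $\{\mbf I_r+\bm A\}^{-1}\bm A-\mbf I_r=-\{\mbf I_r+\bm A\}^{-1}$ (the paper derives it through the resolvent-type expansion \eqref{eq:inverse}, you get it in one line from $\bm A=\bm P-\mbf I_r$) reduces everything to showing $\mu_r(\bm\Lambda^\prime(\bm\Sigma^\xi)^{-1}\bm\Lambda)\gtrsim n$, which both arguments obtain from the pervasiveness of $\mu_r^\chi$ in Lemma \ref{lem:Gxi}(iv). The one substantive difference is how you pass from $\mu_r^\chi$ to $\mu_r(\bm\Lambda^\prime\bm\Lambda)$: the paper invokes Proposition \ref{prop:K00}(a), which gives $\bm\Lambda^\prime\bm\Lambda=\mbf M^\chi$ but rests on the identification constraints of Assumption \ref{ass:ident} (and \ref{ass:common}(c)/\ref{ass:ident}(b) forcing $\bm\Gamma^F=\mbf I_r$), which are not among the lemma's stated hypotheses; your Loewner-order step $\bm\Gamma^\chi\preceq M_F\,\bm\Lambda\bm\Lambda^\prime$ needs only $\Vert\bm\Gamma^F\Vert\le M_F$ and so stays within Assumptions \ref{ass:common}--\ref{ass:ind} exactly as the lemma is stated — a small but genuine improvement in hygiene. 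You also correctly observe that the argument in fact delivers $\Vert\{\mbf I_r+\bm\Lambda^\prime(\bm\Sigma^\xi)^{-1}\bm\Lambda\}^{-1}\{\bm\Lambda^\prime(\bm\Sigma^\xi)^{-1}\bm\Lambda\}-\mbf I_r\Vert=O(n^{-1})$, which is the rate actually used in \eqref{eq:anc2}; the paper's own final display gives the same $O(n^{-1})$ bound, so the normalization $n^{-1}\Vert\cdot\Vert=O(1)$ in the lemma's statement is evidently a typo for $n\Vert\cdot\Vert=O(1)$ (and the paper's $C_\xi$ in the denominator of \eqref{eq:THEEND} should be $C_\xi^\prime$, as in your constant).
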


\begin{proof}
First notice that for any two invertible matrices $\bm K$ and $\bm H$ we have
\begin{align}
(\bm H+ \bm K)^{-1}&= (\bm H + \bm K)^{-1}- \bm K^{-1} + \bm K^{-1}= (\bm H+ \bm K)^{-1}(\bm K - (\bm H + \bm K))\bm K^{-1}+ \bm K^{-1}\nn\\
&= (\bm H + \bm K)^{-1}(-\bm H)\bm K^{-1} + \bm K^{-1}= \bm K^{-1}- (\bm H + \bm K)^{-1}\bm H\bm K^{-1}.\label{eq:inverse}
\end{align}
Then, setting $\bm K=\bm \Lambda^\prime(\bm\Sigma^\xi)^{-1}\bm\Lambda $ and $\bm H=\mbf I_r$ from \eqref{eq:inverse} we have
\begin{align}
\{\mbf I_r+{\bm\Lambda}^\prime({\bm\Sigma}^\xi)^{-1} {\bm\Lambda}\}^{-1}=\{\bm \Lambda^\prime(\bm\Sigma^\xi)^{-1}\bm\Lambda \}^{-1}-\{\mbf I_r+{\bm\Lambda}^\prime({\bm\Sigma}^\xi)^{-1} {\bm\Lambda}\}^{-1}\{\bm \Lambda^\prime(\bm\Sigma^\xi)^{-1}\bm\Lambda \}^{-1}
.\label{eq:abcinv00}
\end{align}
which implies
\beq\label{eq:abcinv}
\{\mbf I_r+{\bm\Lambda}^\prime({\bm\Sigma}^\xi)^{-1} {\bm\Lambda}\}^{-1}\{\bm \Lambda^\prime(\bm\Sigma^\xi)^{-1}\bm\Lambda \} = \mbf I_r-\{\mbf I_r+{\bm\Lambda}^\prime({\bm\Sigma}^\xi)^{-1} {\bm\Lambda}\}^{-1}.
\eeq
Then, by Weyl's inequality:
\beq
\nu_r(\mbf I_r+{\bm\Lambda}^\prime({\bm\Sigma}^\xi)^{-1} {\bm\Lambda}) \ge 1+\nu_r({\bm\Lambda}^\prime({\bm\Sigma}^\xi)^{-1} {\bm\Lambda})\ge \nu_r({\bm\Lambda}^\prime({\bm\Sigma}^\xi)^{-1} {\bm\Lambda}).\label{eq:weyllunga}
\eeq
From, \eqref{eq:weyllunga}, we have
\begin{align}
\Vert \{\mbf I_r+{\bm\Lambda}^\prime({\bm\Sigma}^\xi)^{-1} {\bm\Lambda}\}^{-1}\Vert&=\frac 1{\nu_r(\mbf I_r+{\bm\Lambda}^\prime({\bm\Sigma}^\xi)^{-1} {\bm\Lambda})}\le \frac 1{\nu_r({\bm\Lambda}^\prime({\bm\Sigma}^\xi)^{-1} {\bm\Lambda})}.
\label{eq:abcinv2}
\end{align}
Now, the $r$ eigenvalues of ${\bm\Lambda}^\prime({\bm\Sigma}^\xi)^{-1} {\bm\Lambda}$ are also the $r$ non-zero eigenvalues of $\bm \Lambda\bm \Lambda^\prime \bm ({\bm\Sigma}^\xi)^{-1}$, and the $r$ non-zero eigenvalues of $\bm \Lambda\bm \Lambda^\prime$ are the $r$ eigenvalues of $\bm \Lambda^\prime\bm \Lambda$.
Therefore, because of \citet[Theorem 7]{MK04} and Proposition \ref{prop:K00}(a), and Lemma \ref{lem:Gxi}(iv):
\begin{align}
\nu_r({\bm\Lambda}^\prime({\bm\Sigma}^\xi)^{-1} {\bm\Lambda})&\ge   \nu_r (\bm \Lambda\bm \Lambda^\prime)\nu_r(({\bm\Sigma}^\xi)^{-1})=\frac {\nu_r (\bm \Lambda^\prime\bm \Lambda)}{\nu_1({\bm\Sigma}^\xi)} =  \frac {\mu_r^\chi}{\max_{i=1,\ldots, n}\sigma_i^2}\ge \frac{n\underline C_r}{C_\xi}.\label{eq:THEEND}
\end{align}
By substituting \eqref{eq:THEEND} into \eqref{eq:abcinv2} we have
\beq\label{eq:abcinv3}
\Vert \{\mbf I_r+{\bm\Lambda}^\prime({\bm\Sigma}^\xi)^{-1} {\bm\Lambda}\}^{-1}\Vert\le  \frac{C_\xi}{n\underline C_r}.
\eeq
Therefore, from \eqref{eq:abcinv} and \eqref{eq:abcinv3}, 
\[
\l\Vert \{\mbf I_r+{\bm\Lambda}^\prime({\bm\Sigma}^\xi)^{-1} {\bm\Lambda}\}^{-1}\{\bm \Lambda^\prime(\bm\Sigma^\xi)^{-1}\bm\Lambda \} - \mbf I_r\r\Vert =\l\Vert \{\mbf I_r+{\bm\Lambda}^\prime({\bm\Sigma}^\xi)^{-1} {\bm\Lambda}\}^{-1}\r\Vert\le \frac{C_\xi}{n\underline C_r},
\]
which completes the proof. 
\end{proof}
$\,$
\end{appendix}

\end{document}